\newtheorem{satz}{Theorem}
\newtheorem{proof}{Proof}
\newtheorem{lemma}{Lemma}
\newtheorem{proofL}{Proof of Lemma}
\begin{document}

\begin{titlepage}
    \vspace*{10ex}
    \huge
    \begin{center}
     \bf{Renormalization and Effective Actions \\ for General Relativity} 
    \end{center}
    \vspace{12ex}
    \Large
    \begin{center}
     Dissertation \\ zur Erlangung des Doktorgrades \\ des Fachbereichs Physik \\ der Universit\"at Hamburg            \end{center}
    \vspace{12ex}
    \begin{center}
     vorgelegt von \\ Falk Neugebohrn \\ aus Filderstadt
    \end{center}
    \vspace{2ex}
    \begin{center}
     Hamburg 2007
    \end{center}
\end{titlepage}

\thispagestyle{empty}
\begin{tabular}[bl!]{ll}
\vspace{16cm}\\
Gutachter der Dissertation:             & Prof.~Dr. Gerhard Mack\\
                                        & Prof.~Dr. Klaus Fredenhagen\\
Gutachter der Disputation:              & Prof.~Dr. Gerhard Mack\\
                                        & Prof.~Dr. Jan Louis\\
Datum der Disputation:                  & 3. April 2007\\
Vorsitzender des Pr\"ufungsausschusses:  & Prof.~Dr. Jochen Bartels\\
Vorsitzender des Promotionsausschusses: & Prof.~Dr. G\"unter Huber\\
Departmentleiter:                       & Prof.~Dr. Robert Klanner\\
Dekan der Fakult\"at f\"ur Mathematik,  &  \\
 Informatik und Naturwissenschaften:    &  Prof.~Dr. Arno Fr\"uhwald
\end{tabular}

\clearpage
\pagenumbering{roman}

~
\vspace{2cm} 

\begin{center}\bf Abstract \end{center}

Quantum gravity is analyzed from the viewpoint of the renormalization group. The analysis is based on methods introduced by J. Polchinski concerning the perturbative renormalization with flow equations. In the first part of this work, the program of renormalization with flow equations is reviewed and then extended to effective field theories that have a finite UV cutoff. This is done for a scalar field theory by imposing additional renormalization conditions for some of the nonrenormalizable couplings. It turns out that one so obtains a statement on the predictivity of the effective theory at scales far below the UV cutoff. In particular, nonrenormalizable theories can be treated without problems in the proposed framework. In the second part, the standard covariant BRS quantization program for Euclidean Einstein gravity is applied. A momentum cutoff regularization is imposed and the resulting violation of the Slavnov-Taylor identities is discussed. Deriving Polchinski's renormalization group equation for Euclidean quantum gravity, the predictivity of effective quantum gravity at scales far below the Planck scale is investigated with flow equations. A fine-tuning procedure for restoring the violated Slavnov-Taylor identities is proposed and it is argued that in the effective quantum gravity context, the restoration will only be accomplished with finite accuracy. Finally, the no-cutoff limit of Euclidean quantum gravity is analyzed from the viewpoint of the Polchinski method. It is speculated whether a limit with nonvanishing gravitational constant might exist where the latter would ultimatively be determined by the cosmological constant and the masses of the elementary particles.

\clearpage

~
\vspace{2cm}

\begin{center}\bf Zusammenfassung\end{center}

Die Quantentheorie der Gravitation wird untersucht vom Blickwinkel der Renor-mierungsgruppe. Die Analyse basiert auf Methoden von J. Polchinski bez\"uglich der pertubativen Renormierung mit Flussgleichungen. Im ersten Teil der Arbeit wird das Programm der Renormierung mit Flussgleichungen vorgestellt und erweitert auf den Fall effektiver Feldtheorien mit endlichem UV Cutoff. Dieses wird durchgef\"uhrt f\"ur eine skalare Feldtheorie durch Einf\"uhrung zus\"atzlicher Renormierungsbedingungen f\"ur einige der nichtrenormierbaren Kopplungskonstanten. Es stellt sich heraus, dass man auf solche Weise eine Aussage \"uber die Prediktivit\"at der Theorie auf Skalen weit unterhalb des UV Cutoffs bekommt. Insbesondere k\"onnen nichtrenormierbare Theorien in dem entwickelten Rahmen problemlos behandelt werden. Im zweiten Teil wird die \"ubliche kovariante BRS Quantisierung der Gravitation durchgefü\"uhrt. Eine Cutoff-Regularisierung wird eingesetzt und die daraus resultierende Verletzung der Slavnov-Taylor Identit\"aten diskutiert. Nach der Herleitung der Polchinski-Renormierungsgruppengleichung f\"ur Euklidische Quantengravitation folgt eine Analyse der Prediktivit\"at effektiver Quantengravitation auf Skalen weit unterhalb der Planckskala mit Hilfe der Flussgleichungen. Eine ''fine-tuning''-Prozedur zur Wiederherstellung der Slavnov-Taylor Identit\"aten wird vorgestellt und es wird dargelegt, dass im Kontext der effektiven Quantengravitation die Wiederherstelllung nur mit endlicher Genauigkeit gelingen kann. Zuletzt folgt eine Analyse des No-Cutoff Limes der Quantengravitation im Rahmen der Polchinski-Methode. Es wird spekuliert, dass ein solcher Limes f\"ur nichtverschwindendene Werte der Gravitationskonstante existieren k\"onnte, wobei letztere dann bestimmt w\"are durch die kosmologische Konstante und die Massen der Elementarteilchen.
\clearpage

\bibliographystyle{plain}
\tableofcontents

\chapter{Introduction}
\pagenumbering{arabic}
One of the most intriguing unsolved problems in contemporary physics is the unification of Einstein's theory of gravitation, general relativity, with quantum mechanics. It is well-known that the traditional approach to such a quantum theory of gravity, i.e. the covariant quantization program which treats general relativity as a flat-space field theory, leads to severe conceptual difficulties. In particluar, it turns out that the theory is perturbatively nonrenormalizable \cite{Hooft} \cite{Goro}. This means an infinite number of free parameters has to be fixed.  For this reason, Einstein gravity was long considered a failure as a quantum field theory, and different strategies have been pursued. Among those are string theory and loop quantum gravity. However, until today none of these approaches has been fully successful.

In the recent years, there have been some renewed efforts concerning the quantization of general relativity along the lines of ''conventional'' quantum field theory. 

On the one hand, there have been attemps to treat quantum gravity as an effective field theory \cite{Dono1} in the framework of perturbation theory. An effective theory, unlike a truly fundamental one, cannot be valid up to arbitrary scales. For the case of quantum gravity, the limiting scale is the Planck energy ($\sim 1.2 \cdot 10^{19}$ GeV ) as indicated by the size of the gravitational constant. In the effective field theory approach, the nonrenormalizability of general relativity is not an issue, and it has been shown that quantum corrections to Newton's potential as well as the Schwarzschild and Kerr metrics can be derived \cite{BB2} \cite{BB3}. However, one has to content oneself with predictions of finite accuracy, and one does not get new insights on the physics above the Planck scale.

On the other hand, the nonperturbative properties of (Euclidean) quantum gravity have been studied using the renormalization group flow equations \cite{MR1}. Evidence for the existence of a  non-Gaussian fixed point in the space of couplings has been found which would render the theory nonperturbatively renormalizable along the lines of Weinberg's ''aymptotic safety'' scenario \cite{Wein2}. In particular, one may hope that only a finite number of free parameters has to be fixed, corresponding to a finite number of dimensions of the UV critical surface \cite{MR2}. Applications to cosmology, black hole physics and the structure of spacetime have been discussed \cite{MR3} \cite{MR4} \cite{MR5}.  However, in order to perform the analysis, the space of actions has to be truncated. Thus, one cannot completely rule out the possibility that the non-Gaussian fixed point is only an artefact of the truncation. 

One could ask if there is any relation between the effective field theory approach (which implicitly relies on the Gaussian fixed point) and the nonperturbative renormalization group analysis. In particular, the latter (being the more general one) should have a certain domain where the former is valid.

Asking so one is led to a paper by Polchinski \cite{Pol} concering the perturbative renormalization of scalar $\phi^4$ theory by means of the renormalization group flow equations. In Polchinski's proof of renormalizability, no Feynman diagrams are needed and the cumbersome analysis of overlapping divergences employing Zimmerman's forest formula is avoided. Instead, the task is accomplished by bounding inductively the solutions of the renormalization group flow equations, which are a system of first order differential equations. It is crucial for the analysis that the couplings are small, i.e. that one is in the vicinity of the Gaussian fixed point. The method has turned out to be particularly simple and also transparent from a conceptual point of view.

It is therefore not surprising that Polchinski's work has stimulated various contributions over the following years.
 Among those are a simplified and mathemetically rigorous version of Polchinski's original proof employing physical renormalization conditions \cite{KKS}, extensions to composite operator renormalization \cite{KK3} \cite{KK4} and Symanzik's improved actions \cite{Wiec}, and a proof of the perturbative renormalizability of QED via flow equations \cite{KK5}. Finally, also the perturbative renormalization of spontaneously broken Yang-Mills theory has been accomplished \cite{KM} in the framework of the flow equations.

The reader notices that an analysis of quantum gravity along the lines of the Polchinski method is still missing. This is what the present work is about.

Since Polchinski's concepts implicitly rely on the Gaussian fixed point (as does perturbation theory), it seems likely that for such an analysis one will again be confronted with the perturbative nonrenormalizability of general relativity. However, it will be shown that it is possible to extend the method of renormalization via flow equations to the case of a (perturbatively) nonrenormalizable theory if one treats it as an effective field theory that has a finite cutoff. In fact, it will turn out that in doing so, one obtains a statement on the amount of predictivity the effective field theory has at scales far below the cutoff. One aim of the present work is therefore to investigate the predictivity of effective quantum gravity at energies that are small as compared to the Planck scale.

Applying a renormalization group analysis to quantum gravity, one has to surmount another serious obstacle. The cutoff regularization that has to be imposed in oder to obtain the flow equations inevitably violates the local gauge invariance of any gauge theory. For the case of general relativity, this means the symmetry under general coordinate transformations will be broken. Since on the quantum level gauge invariance is expressed in terms of BRS invariance and Slavnov-Taylor identities for Green's functions, one will end up with violated Slavnov-Taylor identities. It has been shown \cite{KM} that the analogous problem appearing in the renormalization of Yang-Mills theory with flow equations can be cured by a so-called \textit{fine-tuning} procedure which ultimatively aims at the restoration of the Slavnov-Taylor identities in the no-cutoff limit. In the present work, a similar procedure for effective quantum gravity will be proposed. The important difference to the Yang-Mills case lies in the fact that since a finite cutoff has to be retained, the restoration of the Slavnov-Taylor identities  will only be accomplished with \textit{finite accuracy}.

Although the results that are obtained from our analysis are already known from the ''conventional'' treatment of Einstein gravity as an effective field theory  \cite{Dono1} using dimensional regularization etc., in our opinion the investigation with flow equations has the benefit to be particularly systematic and transparent while employing the ''modern'' language of the renormalization group. In fact, the methods that are developed in this work for investigating effective field theories with flow equations should be applicable to \textit{any} effective field theory, not just gravity. Comparing to a full nonperturbative renormalization group treatment $\grave{a}$ la \cite{MR1}, our results will only be valid near the Gaussian fixed point. However, we do not need to truncate the space of actions, and it will be shown that for generic initial conditions the effective Lagrangian of quantum gravity is attracted towards a finite dimensional submanifold in the space of possible Lagrangians at scales far below the Planck scale. 

Finally, it has turned out that the Polchinski analysis of quantum gravity allows for an interesting speculation concerning the no-cutoff limit of the theory. In the framework of the flow equations, a nonrenormalizable theory is defined as a theory with field and symmetry content such that \textit{no} renormalizable interactions, except for kinetic and mass terms, are permitted. In the no-cutoff limit, such theories become free when an analysis $\grave{a}$ la Polchinski is applied. We find that quantum gravity without a cosmological constant is nonrenormalizable in the described sense, as it has already been conjectured by S. Weinberg in his book on quantum field theory \cite{Wein1}. However, for \textit{nonzero} values of the cosmological constant the situation is different because some renormalizable interactions are introduced. This gives rise to the speculation whether quantum gravity with cosmological constant has a no-cutoff limit with \textit{nonvanishing} gravitational constant. Moreover, since in the no-cutoff limit the nonrenormalizable running couplings are determined by the renormalizable ones, the value of the gravitational constant should then be determined by the cosmological constant. It is furthermore pointed out that a related situation may be produced by coupling massive fields to gravity, leading to the speculation of a gravitational constant that is determined by the cosmological constant and the masses of the elementary particles. The latter might indicate a deeper relation between the Higgs mechanism and the gravitational force.

\medskip
The structure of the thesis goes as follows. In chapter ($\ref{OvM}$), we will give an introduction into the key concepts of renormalization with flow equations. Furthermore, the strategy for extending the method to effective field theories is outlined, ultimatively leading to estimates concerning the predictivity of effective theories at scales far below the cutoff. The overview in chapter  ($\ref{OvM}$) should serve as some kind of map through the inductive proofs carried out in perturbation theory in the next two chapters.

The perturbative renormalization of scalar field theory via flow equations is reviewed in full detail in chapter ($\ref{RenFlow}$). We proceed along the lines of \cite{KKS} who presented an improved and considerably shortened version of Polchinski's original proof \cite{Pol}. As compared to their version, we include the following generalizations. Instead of $\phi^4$, we will consider $\phi^3 + \phi^4$ theory\footnote{Thus, we do not restrict ourselves to couplings assigned to operators with even numbers of fields.}. Moreover, nonvanishing bare values of the nonrenormalizable couplings are allowed for from the very beginning, and an alternative proof of the uniqueness of the no-cutoff limit is employed. These generalizations will turn out essential when the concepts are applied to effective quantum gravity in chapter ($\ref{PredGrav}$).

In chapter ($\ref{EffFlow}$), the extension of the Polchinski method to effective field theories is developed rigorously for a scalar field theory following the strategy described in chapter ($\ref{OvM}$). To do so, the vertex functions of an effective potential being the solution of the Polchinski renormalization group equation are expanded in generalized\footnote{This means perturbation theory in the renormalized renormalizable and some bare nonrenormalizbale couplings.} perturbation theory. New bounds for the vertex functions are established in an attempt to unify the results of \cite{Wiec} and \cite{KKS}, and additional renormalization conditions for some of the nonrenormalizable couplings are imposed at an arbitrary renormalization scale. These will be referred to as ''improvement conditions''. We prove that there exist small initial conditions for the nonrenormalizable couplings at the bare scale such that appropriately chosen improvement conditions can be met. The main advantage of our approach as compared to \cite{Wiec} is that the case of vanishing renormalizable couplings does not pose any problems and thus also nonrenormalizable theories can be treated. Finally, it will be established that the improvement conditions lead to an enhanced predictivity of the effective field theory at scales far below the cutoff.  

In chapter ($\ref{PolGR}$) we turn to quantum gravity. The standard covariant BRS quantization procedure for Euclidean Einstein gravity is reviewed, as it can be found for instance in \cite{Stelle} and \cite{CLM}. Our dynamical variable is a perturbation of the metric density $\sqrt{g} \ g^{\mu \nu}$ around flat space. Unlike the authors mentioned, we allow for nonvanishing values of the cosmological constant\footnote{The problems that arise due to the cosmological term when gravity is treated as a flat-space QFT are discussed, as well as possible resolutions. }.  
A momentum cutoff regularization for the generating functional of quantum gravity is introduced and the resulting violation of the gauge invariance and hence the Slavnov-Taylor-identities (STI) is discussed. We review a fine-tuning procedure that has been shown \cite{KM} to cure the analogous problem occuring in the perturbative renormalization of Yang-Mills theory via flow equations, and propose first implications of a similar procedure for quantum gravity. Finally, the Polchinski renormalization group equation for Euclidean quantum gravity is derived.

The concepts developed in chapters ($\ref{RenFlow}$) and ($\ref{EffFlow}$) are applied to Euclidean quantum gravity in chapter ($\ref{PredGrav}$). A bare action containing all field invariants that are permitted by general coordinate invariance is introduced, and the relation to higher derivative gravity is discussed. It is argued that in the effective field theory approach the known unitarity problems \cite{Stelle} will not appear. As a first step of the analysis, we disregard the violation of the Slavnov-Taylor identities and establish bounds for vertex functions of the gravity effective potential in generalized perturbation theory. It is shown that by introducing appropriate notations, we may proceed in close analogy to the case of the scalar field theory considered in chapters ($\ref{RenFlow}$) and ($\ref{EffFlow}$).  A set of (for the time being) arbitrary renormalization and improvement conditions is imposed. By inverting the renormalization group trajectory, it is argued that the improvement conditions force the cutoff of effective quantum gravity to be the Planck scale. Finally, we establish that the family of theories described by the arbitrary renormalization and improvement conditions is predictive at scales far below the Planck scale with finite accuracy. 

We then proceed to the restoration of the STI. Introducing bare regularized BRS variations, the violated Slavnov-Taylor identities (vSTI) are worked out.  Bounds for vertex functions carrying the  nonlinear BRS variations as operator insertions are established, and we note that a crucial difference to the Yang-Mills case lies in the fact that the gravity BRS fields contain \textit{nonrenormalizable parts}. By imposing renormalization and improvement conditions for the BRS variations, it is proven that the dependence of these vertex functions on the bare initial conditions is suppressed at scales far below the Planck scale\footnote{This is similar to the statements concerning the predictivity of the effective theory.}. It turns out that the violation of the STI can be described in terms of vertex functions carrying a space-time integrated operator insertion having canonical dimension $5$. It is therefore argued that the STI can be restored to \textit{finite accuracy} if {one particular} set of arbitrary renormalization and improvement conditions for the couplings and BRS variations can be determined such that the relevant and leading irrelevant parts of the vertex functions describing the violation of the STI are driven small at scales far below the Planck scale. Here, ''small'' means the order of accuracy to which the theory is predictive. 

In the last section of chapter ($\ref{PredGrav}$), we consider the no-cutoff limit $\Lambda_0 \rightarrow \infty$ of quantum gravity from the viewpoint of the analysis with flow equations. The vertex functions of the gravity effective potential are expanded solely in the renormalizable couplings, and their boundedness and convergence is established in the limit $\Lambda_0 \rightarrow \infty$. Applying the same program to the vertex functions carrying the nonlinear BRS variations as operator insertions, we observe that the nonrenormalizable parts of the gravity BRS fields will go away in the no-cutoff limit if smallness of the bare BRS couplings is imposed. It is shown that if the latter constraint is dropped, convergence of the BRS vertex functions may still be proven. Proceeding with the restoration of the STI, we argue that for zero renormalized cosmological constant $\Lambda_K=0$ the theory will become free as $\Lambda_0 \rightarrow \infty$, and that the latter statement is compatible with gauge invariance. It is speculated whether a \textit{nonzero} cosmological constant $\Lambda_K \ne 0$ might lead to a \textit{nonvanishing} value of the gravitational constant in the no-cutoff limit, and it is pointed out that the gravitational coupling should then become determined by the cosmological constant. Finally, we observe that a similar effect might be obtained by coupling massive fields to gravity, leading to speculations if the gravitational constant is given in terms of the cosmological constant and the masses of the elementary particles as $\Lambda_0 \rightarrow \infty$.

In chapter ($\ref{Outlook}$) we conclude with a discussion and an outlook. Some relations of our results to the ''conventional'' treatment \cite{Dono1} of quantum gravity as an effective field theory as well to the nonperturbative analysis employing the renormalization group flow equations \cite{MR1} are given.


\newcommand{\ta}{\otimes_{\mathcal{A}}}
\newcommand{\fund}{(\frac{1}{2},0)}
\newcommand{\coco}{(0,\frac{1}{2})}
\newcommand{\A}{\mathcal{A}}

\chapter{Overview of the Method} \label{OvM}
In this chapter we will give an introduction into renormalization with flow equations and into effective field theories from the viewpoint of the renormalization group. Both topics will be investigated in full detail in perturbation theory in chapters ($\ref{RenFlow}$) and ($\ref{EffFlow}$). This overview serves as some kind of map through the perturbative woods.

\begin{section}[The Wilson/Polchinski RGE]{The Wilson/Polchinski Renormalization Group Equations}  \label{RGInt}

Let $W(J)$ be the generating functional of an Euclidean Quantum Field Theory (QFT) involving a scalar field $\phi$ and the action $S(\phi)$,
\begin{equation}
W(J)= \int \mathcal{D} \phi e^{  S(\phi) + \langle J, \phi  \rangle }.  \label{W}
\end{equation}
$J$ denotes an external source, and the scalar product $\langle, \rangle$ refers to some position or momentum space integration. The functional ($\ref{W}$) contains all information of the QFT. Suppose that we are only interested in the physics below some scale $\Lambda$. We split up the fields according to their momentum degrees of freedom:
\begin{eqnarray}
\phi &=& \phi_L + \phi_H \nonumber \\
&\phi_H: &  k^2 > \Lambda^2 \nonumber \\
&\phi_L:& k^2 < \Lambda^2 .
\end{eqnarray}
In addition, we restrict the source term\footnote{This is not an essential ingredient. See Appendix ($\ref{LZ}$) for details.} to momenta below $\Lambda$, 
\begin{equation}
J=0 \ \ \text{for} \ \ k^2 > \Lambda^2.
\end{equation}
If we integrate out $\phi_H$, we are left with a functional integral with an upper frequency cutoff $\Lambda$ and an {effective action} $S_e(\phi_L)$:
\begin{eqnarray}
W(J) &=& \int \mathcal{D} \phi_L \mathcal{D} \phi_H  e^{ S(\phi_L 
+ \phi_H ) + \langle J, \phi_L \rangle } \nonumber \\ &=& \int \mathcal{D} \phi_L e^{S_e(\phi_L) + \langle J, \phi_L \rangle }  \label{intout}
\end{eqnarray}
where
\begin{equation}
e^{ S_e(\phi_L)} := \int \mathcal{D} \phi_H  e^{ S(\phi_L + \phi_H )}.
\end{equation}
Generally, $S_e$ will contain all possible interactions of the fields $\phi_L$ and their derivatives as a compensation for the removal of the Fourier modes $\phi_H$.

The change of $S_e$ while integrating out field modes is described by a renormalization group differential equation (RGE). This is the {Wilson equation} for the effective action:
\begin{equation}
- \Lambda \frac{d}{d \Lambda} S_e(\Lambda) = \mathcal{F}(S_e(\Lambda)) . \label{WRG}
\end{equation}
In the following, we will consider Polchinski's version \cite{Pol} of the RGE ($\ref{WRG}$).  Therefore we write\footnote{In chapter ($\ref{RenFlow}$), we will employ a slightly refined version of ($\ref{Spol}$), see eq. ($\ref{SpolM}$). }
\begin{eqnarray}
S_e(\phi,  \Lambda) = -\frac{1}{2} \langle \phi, \Delta^{-1}_\Lambda \phi \rangle + L(\phi, \Lambda)  \label{Spol}
\end{eqnarray}
where $\Delta_\Lambda$ is the propagator of the QFT multiplied with some cutoff function $K(k^2/ \Lambda^2)$, and $L(\phi, \Lambda)$ is a (not necessarily local) interaction term. The cutoff function is taken to be
\begin{eqnarray}
K(z) = \left\{ \begin{array}{c c c}   1 & ,&  0 \le z \le 1 \\ \mbox{smooth} & ,& 1 < z < 4   \\  0 & ,& 4 \le z . \end{array} \right.  \label{cutf}
\end{eqnarray}
Hence, $K(k^2/ \Lambda^2)$ suppresses the propagation of field modes with momenta $k^2>\Lambda$. If we demand invariance of $W(J)$ under a change of the cutoff $\Lambda$ as is implied by eq. ($\ref{intout}$),
\begin{eqnarray}
 \Lambda \frac{d}{d \Lambda} W(J) \stackrel{!}{=} 0,
\end{eqnarray}
it turns out that the effective potential $L(\phi, \Lambda)$ must satisfy \textit{Polchinski's equation}. In coordinate space it reads
\begin{eqnarray}
- \Lambda \frac{d}{d \Lambda} L  = \frac{1}{2} \int_x \int_y  \left( \Lambda \frac{d}{d \Lambda} \Delta_\Lambda \right) \left(\frac{\delta L}{\delta \phi(x)} \frac{\delta L}{\delta \phi(y)}   + \frac{\delta^2 L}{\delta \phi(x) \delta\phi(y)} \right) . \nonumber \\ \label{pol}
\end{eqnarray}
Eq. ($\ref{pol}$) has a simple graphical interpretation: as modes are removed from the propagator, compensating terms must be added in the effective potential  $L(\phi, \Lambda)$.

Let $\mathcal{O}_i(x, \phi)$ be local composite field operators and let $D_{\mathcal{O}_i}$ denote their canonical dimensions\footnote{See section ($\ref{GENR}$) for the determination of the canonical dimension of a field.}. We define initial conditions (a ''bare potential'') at some scale $\Lambda_0$ which from now on will be referred to as the UV cutoff of the QFT:
\begin{equation}
L^0(\phi, \Lambda_0)= \sum_{i} \rho^0_i \int_x \mathcal{O}_i(x, \phi).    \label{barepot}
\end{equation}
The coefficients $ \rho^0_i$ are called ''bare couplings''. They have canonical dimensions
\begin{equation}
D_{\rho^0_i} = d - D_{\mathcal{O}_i}
\end{equation}
where $d$ is the number of space-time dimensions. Solving the Polchinski RGE ($\ref{pol}$) employing the initial conditions ($\ref{barepot}$) now yields a trajectory
\begin{equation}
[\Lambda, \Lambda_0] \rightarrow L(\phi, \Lambda, \Lambda_0, \rho_i^0)  \label{traj}
\end{equation}
which, in turn, leads to a generating functional depending on the initial conditions\footnote{Sometimes we will just write sloppily $W(J, \Lambda_0)$ instead of $W(J, \Lambda_0, \rho_i^0)$. } at $\Lambda_0$:
\begin{equation}
W(J, \Lambda_0, \rho_i^0)= \int \mathcal{D} \phi e^{ -\frac{1}{2} \langle \phi, \Delta^{-1}_\Lambda \phi \rangle + L(\phi, \Lambda, \Lambda_0, \rho_i^0) + \langle J, \phi \rangle }.  \label{WP}
\end{equation}
It is always possible to perform a (position-space) derivative expansion of the effective potential $L(\phi, \Lambda)$ into local composite field operators and a nonlocal remainder term  $R^{(s)}$,
\begin{eqnarray}
L(\phi, \Lambda) &=& \sum_{D_{\rho_i} \ge -s} \rho_i(\Lambda) \int_x \mathcal{O}_i(x, \phi) +  R^{(s)}(\phi) . \label{effpot}
\end{eqnarray} 
$\rho_i(\Lambda)$ are {running coupling constants} with associated canonical dimensions $D_{\rho_i} = d - D_{\mathcal{O}_i}
$, and $s \in \mathbb{N}$ is some index. See Appedix ($\ref{MDV}$) for more details on the expansion ($\ref{effpot}$).

In the following, we will often need dimensionless running coupling constants which are defined by
\begin{equation}
\lambda_i(\Lambda) := \Lambda^{-D_{\rho_i}} \rho_i(\Lambda). \label{dlcoup}
\end{equation}

Furthermore, it will turn out necessary to consider small deviations 
\begin{equation}
\delta L (\Lambda):= L(\Lambda) - \overline{L}(\Lambda)
\end{equation}
from a solution $ \overline{L}(\Lambda)$ of the RGE ($\ref{pol}$). They obey a {linearized Polchinski RGE}:
\begin{eqnarray}
- \Lambda \frac{d}{d \Lambda} \delta L &=& \frac{1}{2} \int_x \int_y  \left( \Lambda \frac{d}{d \Lambda} \Delta_\Lambda \right)  \left(2 \frac{\delta \overline{L}}{\delta \phi(x)} \frac{\delta }{\delta \phi(y)}   + \frac{\delta^2}{\delta \phi(x) \delta\phi(y)} \right) \delta L \nonumber \\ &:=& M (\delta L) . \label{poll}
\end{eqnarray}

\end{section}

\begin{section}{Renormalization via Flow Equations}  \label{RenFlowOver}
Let us begin by recalling what renormalizability of a Quantum Field Theory means. Therefore, we have to distinguish between renormalizable (or \textit{relevant}\footnote{We do not distinguish between marginal and relevant couplings.}) couplings $\rho_a$ and nonrenormalizable (or \textit{irrelevant}) couplings $\rho_n$ by their associated canonical dimensions $D_{\rho_i}$:
\begin{eqnarray}
\rho_a :& & D_{\rho_a} \ge 0 \nonumber \\
\rho_n :& &D_{\rho_n} < 0.
\end{eqnarray}
The significance of this distinction will become clear throughout this section\footnote{The meaning of relevant, marginal and irrelevant directions in the space of actions near a fixed point is explained in Appendix ($\ref{LRG}$)}. Having employed an UV momentum cutoff regularization as proposed in the last section, the renormalizability of a QFT can be summarized as follows.

\bigskip
\noindent
\textbf{Renormalizability}: \textit{While taking the UV cutoff  $\Lambda_0$ to infinity holding the renormalizable couplings fixed at some renormalization scale $\Lambda_R < \Lambda_0$, all other quantities of the theory must converge to limits as inverse powers of the UV cutoff}.

\bigskip
Note that this in particular means that the running nonrenormalizable couplings become determined by the renormalizable ones in the no-cutoff limit. We will come back to this important point in more detail. If the field and symmetry content of a theory is such that it does not permit \textit{any} renormalizable interactions, that is no couplings except for kinetic and mass terms, it is called \textit{nonrenormalizable} \cite{Wein1}. As we will discuss at the end of section ($\ref{EffFlowOver}$), the running nonrenormalizable couplings then vanish in the no-cutoff limit.

In the following, we will go through the essential steps of the renormalization programme for the scalar field theory introduced in the last section employing the method of flow equations proposed by Polchinski \cite{Pol}. We begin by choosing the initial conditions ($\ref{barepot}$), i.e. the bare couplings $\rho_i^0$, to lay on an {initial surface} in the space of couplings whose coordinates are the bare renormalizable couplings: 
\begin{equation}
\rho_n^0= \rho_n^0 (\rho_a^0). \label{inis}
\end{equation}
Hence, the dimension of the initial surface amounts to the number of renormalizable couplings.  A particular simple, but not necessary choice for the surface ($\ref{inis}$) would be to take $\rho_n^0= 0$. 

The solution of eq. ($\ref{pol}$) now becomes
\begin{equation}
L=L(\phi, \Lambda, \Lambda_0, \rho_a^0).
\end{equation}
As mentioned above, renormalization requires to fix the renormalizable couplings $\rho_a$ at some renormalization scale $\Lambda_R$. This is done by specifying {renormalization conditions} for them:
\begin{equation}
\rho_a(\Lambda_R, \Lambda_0, \rho_b^0) = \rho^R_a .  \label{rcond}
\end{equation}
The renormalization conditions imply that
\begin{eqnarray}
\Lambda_0 \frac{d}{d \Lambda_0} \rho_a(\Lambda_R, \Lambda_0, \rho_b^0) &=& \Lambda_0 \left( \frac{\partial \rho_a }{\partial \Lambda_0} +  \frac{\partial \rho_a}{\partial \rho_b^0}  \frac{\partial \rho_b^0}{\partial \Lambda_0}         \right)   \nonumber \\ &{=}& 0. \label{fix}
\end{eqnarray}
This leads to an {implicit definition} of the bare ''coordinates'' $\rho_a^0$ as functions of the renormalizable couplings, the UV cutoff and the renormalization scale:
\begin{equation}
\rho^0_a= \rho^0_a(\Lambda_R, \Lambda_0, \rho^R_a). \label{impli}
\end{equation}
Eq. ($\ref{impli}$) means that we do not have to know about the initial values $\rho_a^0$ at the bare scale $\Lambda_0$ in order to solve the Polchinski RGE ($\ref{pol}$), but instead may use the renormalization conditions $\rho^R_a$ at some arbitrary renormalization scale $\Lambda_R$ as an input\footnote{Together with the definition of the initial surface ($\ref{inis}$). However, it will turn out that within certain limits, its exact form has no influence on the (renormalized) theory in the limit $\Lambda_0 \rightarrow \infty$.}. This is a good thing, because typically the bare scale is not accessible to any measurements. 

If $\partial \rho_a / {\partial \rho_b^0}$ is invertible, eq. ($\ref{fix}$) yields
\begin{eqnarray}
\frac{\partial \rho_b^0}{\partial \Lambda_0} = - \left( \frac{\partial \rho_a}{\partial \rho_b^0} \right)^{-1} \frac{\partial \rho_a }{\partial \Lambda_0} .
\end{eqnarray}
Therefore, at $\Lambda=\Lambda_R$  the quantity
\begin{eqnarray}
V(\Lambda) := \Lambda_0 \left( \frac{\partial L }{\partial \Lambda_0} -   \frac{\partial L }{\partial \rho_b^0}  \left( \frac{\partial \rho_a}{\partial \rho_b^0} \right)^{-1} \frac{\partial \rho_a }{\partial \Lambda_0} \right)  \label{v} 
\end{eqnarray}
is the total derivative of $L(\Lambda_R)$ with respect to $\Lambda_0$ holding the $\rho_a(\Lambda_R)$ fixed:
\begin{eqnarray}
V(\Lambda_R)&=& \left. \Lambda_0 \frac{d }{d \Lambda_0} L \right|_{\rho_a(\Lambda_R)=\text{fixed}} . \label{vt}
\end{eqnarray}
Note that the last term on the RHS of ($\ref{v}$) can be  interpreted such that the part of the original $\Lambda_0$-dependence of $L$ which is due to the renormalizable couplings is subtracted of. Thus $V(\Lambda_R)$ points towards the {irrelevant directions} in the space of couplings. This becomes even more obvious if we think of $L$ as given by a derivative expansion ($\ref{effpot}$) with $s=\infty$. Then $V(\Lambda)$ can be expressed in terms of running couplings $\rho_i(\Lambda)$:
\begin{eqnarray}
V_i(\Lambda) = \Lambda_0 \left( \frac{\partial \rho_i }{\partial \Lambda_0} -   \frac{\partial \rho_i }{\partial \rho_b^0}  \left( \frac{\partial \rho_a}{\partial \rho_b^0} \right)^{-1} \frac{\partial \rho_a }{\partial \Lambda_0} \right)  . \label{v_i} 
\end{eqnarray}
Clearly, we have $V_a(\Lambda)=0$, which in turn means that all relevant components of ($\ref{v}$) vanish. 

As is explained in Appendix ($\ref{LRG}$), the nonrenormalizable directions in the space of couplings correspond to negative eigenvalues of the linearized RG transformation in the zero-coupling limit. Hence, deviations in these directions get damped away in the infrared as powers of $\Lambda/\Lambda_0$. Since there is nothing discontiuous about the RG transformation as the couplings are changed, the latter will still be true for sufficiently small couplings\footnote{That is in the vicinity of the Gaussian fixpoint.}. Thus, we expect that the quantity $V(\Lambda_R)$ should be driven small as powers of $\Lambda_R/\Lambda_0$ for small enough  couplings.

To prove this, we have to find a RGE for $V(\Lambda)$. Since $\Lambda_0 \frac{\partial }{\partial\Lambda_0} L$ and $\frac{\partial }{\partial \rho^0_a} L$ both satisfy the linearized Polchinski RGE ($\ref{poll}$), we obtain
\begin{eqnarray}
- \Lambda \frac{d}{d \Lambda}  V &=& M (V) -  \frac{\partial L}{\partial \rho_b^0}  \left( \frac{\partial \rho_a}{\partial \rho_b^0} \right)^{-1}  M_{a} (V) \nonumber \\ &:=& N(V). \label{RG_v}
\end{eqnarray}
In the next chapter, we will investigate in detail an equation similar to ($\ref{RG_v}$) in perturbation theory. In order to get an idea of the results that can be expected of such an analysis, we may approximate the operator $N$ by the canonical dimension $D_{\rho_l}$ of the \textit{least} irrelevant coupling $\rho_l$ of the QFT:
\begin{equation}
N(V) \approx  D_{\rho_l} \ V .   \label{Neasy}
\end{equation}
This is justified for sufficiently small couplings by the reasoning above. By virtue of ($\ref{Neasy}$), we may now easily integrate eq. ($\ref{RG_v}$). As integration limits we choose $\Lambda_R$ and $\Lambda_0$ and arrive at
\begin{equation}
V(\Lambda_R) \approx V(\Lambda_0) \left( \frac{\Lambda_R}{\Lambda_0} \right)^{-D_{\rho_l}} . \label{vis}
\end{equation}
For some proper definition of a norm $|| \ ||$, the local composite operators of the effective potential ($\ref{effpot}$) can be estimated as  $|| \int_x \mathcal{O}_i(x, \phi) || \sim \Lambda^{-D_{\rho_i}}$. We assume that
\begin{equation}
||V(\Lambda_0)|| \le 1 \label{smallv0}
\end{equation}
corresponding to sufficiently small initial values of the nonrenormalizable couplings\footnote{Remember that only $V_n(\Lambda) \ne 0$.}:
\begin{equation}
\rho^0_n \le \Lambda_0^{D_{\rho_n}}, \ \ \ D_{\rho_n}< 0.   \label{sini}
\end{equation}
Note that in terms of the dimensionless couplings defined in eq. ($\ref{dlcoup}$), eq. ($\ref{sini}$) means $\lambda_n(\Lambda_0) \le 1$. With eq. ($\ref{vt}$) we thus may write
\begin{equation}
 || \left. \Lambda_0 \frac{d }{d \Lambda_0} L \right|_{\rho_a=\text{fixed}}|| \le  \left( \frac{\Lambda_R}{\Lambda_0} \right)^{ -D_{\rho_l}}
\end{equation}
which again can be integrated with respect to $\Lambda_0$. The result is
\begin{eqnarray}
|| L(\Lambda_R, \Lambda'_0) -  L(\Lambda_R, \Lambda''_0) || &\le&  \left( \frac{\Lambda_R}{\Lambda'_0} \right)^{-D_{\rho_l}}-\left( \frac{\Lambda_R}{\Lambda''_0} \right)^{-D_{\rho_l}} .  \label{conv}
\end{eqnarray}
From Cauchy's criterion now follows the existence of a no-cutoff-limit limit\footnote{We adopt the notation $L^{cont}$ from \cite{Wiec}.}  $L^{cont}(\Lambda_R) = L(\Lambda_R, \infty)$  which implies \textit{renormalizability} as we have defined it at the beginning of this section. For having fixed the renormalizable couplings via the renormalization conditions ($\ref{rcond}$), all other quantities of the theory (contained in $L$) converge to limits as inverse powers of the UV cutoff $\Lambda_0$. This result can also be expressed on the level of the effective action:  
\begin{equation}
|| S_e(\Lambda_R,\Lambda_0, \rho_a^0(\Lambda_R, \Lambda_0, \rho^R_a))- S_e^{cont}( \Lambda_R, \rho^R_a) || \le \left( \frac{\Lambda_R}{\Lambda_0} \right)^{-D_{\rho_l}} \label{convS}
\end{equation}
where $S_e^{cont}( \Lambda_R, \rho^R_a):= \lim_{\Lambda_0 \rightarrow \infty} S_e(\Lambda_R,\Lambda_0, \rho_a^0(\Lambda_R, \Lambda_0, \rho^R_a))$. Note that once the effective action is known, we may calculate the generating functional $W(J)$ of the QFT via ($\ref{intout}$) without needing to worry about divergences, because $S_e^{cont}( \Lambda_R, \rho^R_a)$ has an effective cutoff $\Lambda_R$.

We would like to stress that it is \textit{not} being said that the running nonrenormalizable couplings $\rho_n(\Lambda, \Lambda_0)$ go to zero when $\Lambda_0 \rightarrow \infty$, but that their values $\rho_n^{cont}(\Lambda):= \lim_{\Lambda_0 \rightarrow \infty} \rho_n(\Lambda, \Lambda_0)$ are {determined} by the renormalizable couplings $\rho_a(\Lambda)$ in this limit. This can be understood intuitively by noting that while integrating out field modes, the renormalizable couplings generate new contributions to the nonrenormalizable ones.

Furthermore, it is crucial that the initial values of the nonrenormalizable couplings are small in the sense of eq. ($\ref{sini}$).  Keeping $\rho^0_n$ of inverse powers of some smaller scale $\Lambda_D < \Lambda_0$ would destroy the convergence of $S_e$ to its no-cutoff limit $S_e^{cont}$ since $V(\Lambda_0)$ would be blown up .

The results stated above, in particular the no-cutoff limits $L^{cont}$ and $S_e^{cont}$, are {independent} of the exact choice of the initial surface in eq. ($\ref{inis}$) as long as the initial values of the nonrenormalizable couplings are taken to be sufficiently small. This can be seen as follows. Consider a given initial surface $\rho_n^0= \rho_n^0 (\rho_a^0)$. An easy way to change its ''shape'' is employing a parametrization
\begin{eqnarray}
\rho_n^0 \rightarrow \tilde{\rho}_n^0 := t \rho_n^0 , \ \ \ t \in [0,1].  \label{shape}
\end{eqnarray}
The case $t=0$ corresponds to $\rho_n^0=0$, whereas $t=1$ yields again the original surface $\rho_n^0 (\rho_a^0)$. Moreover, we also allow for the ''coordinates'' $\rho_a^0$, i.e. the bare renormalizable couplings,  to depend on the parameter $t$:
\begin{eqnarray}
\rho_a^0= \rho_a^0(t). 
\end{eqnarray}
The total dependence of the potential $L$ on $t$, $\Lambda$ and $\Lambda_0$ now reads
\begin{eqnarray}
L &=& L \left(\Lambda, \Lambda_0,\rho_a^0(t),  \tilde{\rho}_n^0(t,\rho_a^0(t) )\right) \nonumber \\ & \equiv &  L \left(\Lambda, \Lambda_0,\rho_a^0(t), t \right). \label{tdep}
\end{eqnarray}
In analogy to the definition of $V(\Lambda)$ we define the quantity
\begin{eqnarray}
W(\Lambda) :=   \frac{\partial L }{\partial t}  -   \frac{\partial L }{\partial \rho_b^0}  \left( \frac{\partial \rho_a}{\partial \rho_b^0} \right)^{-1} \frac{\partial \rho_a }{\partial t} .  \label{w} 
\end{eqnarray}
At $\Lambda= \Lambda_R$,  $W$ is the total derivative of $L$ with respect to the parameter $t$ holding the renormalizable couplings $\rho_a(\Lambda_R)$ fixed:
\begin{eqnarray}
W(\Lambda_R)&=& \left. \frac{d}{d t} L \right|_{\rho_a(\Lambda_R)=\text{fixed}} . \label{wt}
\end{eqnarray}
Thus, $W(\Lambda_R)$ describes the impact of a change of the shape of the initial surface $\grave{a}$ la ($\ref{shape}$) on $L$ while at the same time the coordinates $\rho_a^0$ are adjusted\footnote{This means that we move to another point $(\rho_a^0(t), \tilde{\rho}_n^0(t, {\rho}_a^0 (t))) $ on the initial surface.} such that their renormalized values $\rho_a^R$ remain fixed at the renormalization scale $\Lambda_R$.

Since $W(\Lambda)$ has been defined in analogy to $V(\Lambda)$, it obeys the RGE ($\ref{RG_v}$): 
\begin{eqnarray}
- \Lambda \frac{d}{d \Lambda}  W &=& N(W). \label{RG_w}
\end{eqnarray}
We may proceed as in the analysis of $V(\Lambda)$ and arrive at an equation corresponding to eq. ($\ref{vis}$):
\begin{equation}
W(\Lambda_R)  \approx W(\Lambda_0) \left( \frac{\Lambda_R}{\Lambda_0} \right)^{-D_{\rho_l}}  \label{wis}
\end{equation}
where $D_{\rho_l}$ still denotes the canonical dimension of the least irrelevant coupling of the QFT.  As an initial condition for W at the UV cutoff $\Lambda_0$, we again assume that
\begin{eqnarray}
|| W(\Lambda_0) || \le 1.
\end{eqnarray}
This is justified for small initial values of the nonrenormalizable couplings $\grave{a}$ la eq. ($\ref{sini}$). With eq. ($\ref{wt}$) we arrive at
\begin{equation}
|| \left.  \frac{d}{d t} L \right|_{\rho_a(\Lambda_R)=\text{fixed}} || \le  \left( \frac{\Lambda_R}{\Lambda_0} \right)^{ -D_{\rho_l}} .
\end{equation}
Integrating over $t$ with integration limits $0$ and $1$ yields the result
\begin{eqnarray}
|| L(\Lambda_R, \Lambda_0, \rho_a^0(\Lambda_R, \Lambda_0,\rho_a^R, 1 ),1)- L(\Lambda_R,\Lambda_0,\rho_a^0(\Lambda_R, \Lambda_0,\rho_a^R, 0), 0) || \le \left( \frac{\Lambda_R}{\Lambda_0} \right)^{-D_{\rho_l}} . \nonumber \\ \label{ind}
\end{eqnarray}
Since eq. ($\ref{ind}$) is valid for \textit{any} initial surface $ \rho_n^0( \rho_a^0)= \tilde{\rho}_n^0(1,\rho_a^0(1) )$ as long as it satisfies eq. ($\ref{sini}$), we conclude with the triangle inequality that for two different initial surfaces $ \rho_{n}^{0, A}( \rho_a^{0, A})$ and $ \rho_{n}^{0,B}( \rho_a^{0, B})$ which are in accordance with eq. ($\ref{sini}$)
\begin{eqnarray}
|| L(\Lambda_R, \Lambda_0, \rho_a^{0, A}(\Lambda_R, \Lambda_0,\rho_a^R))- L(\Lambda_R,\Lambda_0,\rho_a^{0, B}(\Lambda_R, \Lambda_0,\rho_a^R)) || \le \left( \frac{\Lambda_R}{\Lambda_0} \right)^{-D_{\rho_l}}. \nonumber \\ \label{indf}
\end{eqnarray}
Eq. ($\ref{indf}$) shows that the no-cutoff limit $L^{cont}$ is {independent} of the choice of the initial surface as long as the initial values for the nonrenormalizable couplings are small in the sense of eq. ($\ref{sini}$). 

Note that for finite $\Lambda_0$, we can also interpret eq. ($\ref{indf}$) in the following way. At the scale $\Lambda_R$, the ignorance about the exact values of the bare couplings $ \rho_{{n}}^{0}$ amounts to an indetermination of the potential $L(\Lambda_R)$ of the order of $\left( {\Lambda_R}/{\Lambda_0} \right)^{-D_{\rho_l}}$. This ''effective field theory'' interpretation will become important in the next section.

\end{section}

\begin{section}[Effective Field Theories with Flow Equations]{Effective Field Theories from the Viewpoint of the Renormalization Group} \label{EffFlowOver}

In addition to the renormalizable couplings $\rho_a$, one could think of fixing one or more of the nonrenormalizable couplings $\rho_n$ at the renormalization scale $\Lambda_R$. This defines additional renormalization conditions which we will refer to\footnote{Following \cite{Wiec}.} as \textit{improvement conditions}. We will give the motivations for pursuing such a strategy shortly. Before, let us note that the nonrenormalizable couplings cannot be set to arbitrary values since they are already determined by the $\rho_a$ in the limit $\Lambda_0 \rightarrow \infty$:
\begin{equation}
\lim_{\Lambda_0 \rightarrow \infty}  \rho_n(\Lambda_R,\Lambda_0, \rho_a^0(\Lambda_R, \Lambda_0, \rho^R_a)) = \rho_n^{cont}( \Lambda_R, \rho^R_a).
\end{equation}
This has been shown in the last section, and in fact it is the reason why the $\rho_n$ are called \textit{nonrenormalizable}. To be more precise, for small initial values $\rho_n^0(\rho^0_a)$ in the sense of eq. ($\ref{sini}$),
\begin{equation*}
\rho^0_n \le \Lambda_0^{D_{\rho_n}}, \ \ \ D_{\rho_n}< 0,
\end{equation*}
a running dimensionless nonrenormalizable coupling $\lambda_n$ satisfies at $\Lambda=\Lambda_R$
\begin{equation}
|| \lambda_n(\Lambda_R, \Lambda_0, \rho_a^0(\Lambda_R, \Lambda_0, \rho^R_a))- \lambda_n^{cont}(\Lambda_R, \rho^R_a) || \le \left( \frac{\Lambda_R}{\Lambda_0} \right)^{-D_{\rho_l}}  \label{convNRC}
\end{equation}
where  $ \lambda_n^{cont}(\Lambda_R, \rho^R_a)= \lim_{\Lambda_0 \rightarrow \infty} \lambda_n(\Lambda_R, \Lambda_0, \rho_a^0(\Lambda_R, \Lambda_0, \rho^R_a))$ and $D_{\rho_l}$ is again the canonical dimension of the least irrelevant coupling of the QFT. Eq. ($\ref{convNRC}$)  follows from eqns. ($\ref{conv}$), ($\ref{convS}$) and a derivative expansion ($\ref{effpot}$). It seems reasonable and will be proven in perturbation theory in chapter ($\ref{EffFlow}$) that the inversion is also true: 

Let $\lambda_n^{NR}$ be some real numbers which satisfy
\begin{equation}
|| \lambda_n^{NR}- \lambda_n^{cont}(\Lambda_R, \rho^R_a) || \le \left( \frac{\Lambda_R}{\Lambda_0} \right)^{-D_{\rho_l}}.  \label{cond1}
\end{equation}
Then there exist initial values $\rho^0_n(\rho^0_a) $ which are small $\grave{a}$ la eq. ($\ref{sini}$) such that the improvement conditions
\begin{equation}
\lambda_n(\Lambda_R, \Lambda_0, \rho_a^0(\Lambda_R, \Lambda_0, \rho^R_a)) =\lambda_n^{NR}    \label{cond2}
\end{equation}
can be met.

By fixing some of the couplings  $\rho_n$ to values $\rho_n^{NR}$ at the renormalization scale $\Lambda_R$, the following two different aims may be pursued.
\begin{itemize} 
\item The {rate of convergence} of the effective action $S_e(\Lambda, \Lambda_0)$ to its no-cutoff limit $S_e^{cont}(\Lambda)$ as $\Lambda_0 \rightarrow \infty$ can be {improved} by keeping one or more of the nonrenormalizable couplings $\rho_n(\Lambda_R, \Lambda_0)$ fixed at their ''{continuum values}'' $\rho_n^{cont}(\Lambda_R)$. This has been shown by C. Wieczerkowski \cite{Wiec}.

\item One or more of the nonrenormalizable couplings $\rho_n$ can be fixed at the renormalization scale  $\Lambda_R$ to values $\rho_n^{NR}$ that {differ} from the values $\rho_n^{cont}(\Lambda_R)$ they would take in the limit $\Lambda_0 \rightarrow \infty $ if no improvement conditions were specified for them\footnote{The notation $\rho_n^{cont}(\Lambda_R)$ always refers to the case without improvement conditions for the $\rho_n(\Lambda_R)$.}:
\begin{equation}
\rho_n^{NR} \ne \rho_n^{cont}(\Lambda_R). 
\end{equation}
A possible motivation for this may be experimental input, since in some QFTs it might turn out that the values of some nonrenormalizable couplings  $\rho_n(\Lambda_R)$ measured by experiments do not coincide with their respective $\rho_n^{cont}(\Lambda_R)$. Remember that the latter can in principle be calculated\footnote{In perturbation theory in the renormalizable couplings $\rho_a(\Lambda_R)$. } if the $\rho_a(\Lambda_R)$ are known. In fact, this scenario seems to apply to the case of quantum Einstein gravity - see chapter ($\ref{PredGrav}$). As follows from eq. ($\ref{convNRC}$), we then {cannot} send  the UV cutoff $\Lambda_0$ to infinity if the initial values $\rho_n^0$ for all nonrenormalizable couplings are supposed to be small. We are therefore dealing with an \textit{effective field theory} which, unlike a fundamental theory, cannot be valid up to arbitrary scales. However, we will show that even if we are forced to keep the bare scale $\Lambda_0$ finite, we are still left with a theory that is {predictive} at scales $\Lambda << \Lambda_0$ with {finite accuracy}.

\end{itemize}
We will now briefly review the first point, and then outline the strategy for the second one.

Let again $D_{\rho_l}$ denote the canonical dimension of the least irrelevant coupling $\rho_l$ of the QFT, and let $s$ be some integer number which is henceforth called ''improvement index''. Our aim is to introduce improvement conditions for those nonrenormalizable couplings $\rho_n$ that have canonical dimension $0 > D_{\rho_n} > D_{\rho_l}-s$, and to study the impact of the improvement conditions on the convergence properties ($\ref{convS}$) of the effective action. To distinguish the couplings for which renormalization and improvement conditions have been specified from the others, we refine our notation to the effect that couplings with canonical dimensions $D_{\rho_i} > (D_{\rho_l}-s)$ are denoted by $\rho_{\tilde{a}}$, and those with $D_{\rho_i} \le (D_{\rho_l}-s)$ by $\rho_{\tilde{n}}$. The initial surface in the space of couplings is now taken to be
\begin{equation}
\rho^0_{\tilde{n}}= \rho^0_{\tilde{n}}(\rho^0_{\tilde{a}}) . \label{inis2}
\end{equation}
Its dimension amounts to the numer of renormalizable couplings plus the number of nonrenormalizable couplings which have canonical dimensions $ 0 > D_{\rho_{\tilde{a}}} > (D_{\rho_l}-s)$.
Consequently, the solution of the Polchinski RGE ($\ref{pol}$) becomes
\begin{equation}
L=L(\phi, \Lambda, \Lambda_0, \rho_{\tilde{a}}^0).
\end{equation}
The {renormalization and improvement conditions} are defined as follows:
\begin{eqnarray}
\rho_{\tilde{a}}(\Lambda_R, \Lambda_0, \rho_{\tilde{a}}^0) = \left\{ \begin{array}{l} \rho^R_{\tilde{a}}, \ \ \ \ \ D_{\rho_{\tilde{a}}}  \ge 0 \\ \rho^{cont}_{\tilde{a}}(\Lambda_R, \rho_{\tilde{a}}^R), \ \ \   0 > D_{\rho_{\tilde{a}}} > (D_{\rho_l}-s) . \end{array} \right.   \nonumber \\
\end{eqnarray}
In analogy to eq. ($\ref{impli}$), they lead to an implicit definition of the bare couplings
\begin{equation}
\rho_{\tilde{a}}^0=\rho_{\tilde{a}}^0(\Lambda_R,\Lambda_0,\rho_{\tilde{a}}^R, \rho^{cont}_{\tilde{a}}(\Lambda_R, \rho_{\tilde{a}}^R) ). 
\end{equation}
A repetition of the analysis in the last section using an extended version of the quantity $V(\Lambda)$ of eq.($\ref{v}$) ,
\begin{eqnarray}
V(\Lambda) = \Lambda_0 \left( \frac{\partial L }{\partial \Lambda_0} -   \frac{\partial L }{\partial \rho_{\tilde{b}}^0}  \left( \frac{\partial \rho_{\tilde{a}}}{\partial \rho_{\tilde{b}}^0} \right)^{-1} \frac{\partial \rho_{\tilde{a}} }{\partial \Lambda_0} \right),  \label{vext} 
\end{eqnarray}
then yields {$s$-improved convergence} of the effective action:
\begin{equation}
|| S_e(\Lambda_R,\Lambda_0, \rho_{\tilde{a}}^0(\Lambda_R, \Lambda_0, \rho^R_{\tilde{a}}, \rho^{cont}_{\tilde{a}}(\Lambda_R, \rho_{\tilde{a}}^R) ))- S_e^{cont}(\Lambda_R,\rho^R_{\tilde{a}}) || \sim \left( \frac{\Lambda_R}{\Lambda_0} \right)^{-(D_{\rho_l} -s) }  .
\end{equation}
We will now focus on the case where nonrenormalizable couplings are fixed to values that \textit{differ} from the ones they would take in the limit $\Lambda_0 \rightarrow \infty$ if no improvement conditions were specified for them, $\rho_n^{NR} \ne \rho_n^{cont}(\Lambda_R)$. We define renormalization and improvement conditions
\begin{eqnarray}
\rho_{\tilde{a}}(\Lambda_R, \Lambda_0, \rho_{\tilde{a}}^0) = \left\{ \begin{array}{l} \rho^R_{\tilde{a}}, \ \ \  \ D_{\rho_{\tilde{a}}}  \ge 0 \\ \rho^{NR}_{\tilde{a}}, \ \  \  0 > D_{\rho_{\tilde{a}}} > (D_{\rho_l}-s)  \end{array} \right. \nonumber .
\end{eqnarray}
The $\rho^{NR}_{\tilde{a}}$ are taken such that their dimensionless counterparts $\lambda^{NR}_{\tilde{a}}(\Lambda):= \Lambda^{-D_{\rho_{\tilde{a}}}} \rho^{NR}_{\tilde{a}}$ satisfy
\begin{equation}
|| \lambda^{NR}_{\tilde{a}}(\Lambda_R) - \lambda^{cont}_{\tilde{a}}(\Lambda_R, \rho_{\tilde{a}}^R) || \le 
\left( \frac{\Lambda_R}{\Lambda_{D}} \right)^{-D_{\rho_l}}  \label{cond1a}
\end{equation}
where $\Lambda_D > \Lambda_R$ is some scale. We demand that the initial values $\rho_{\tilde{a}}^0, \ \ 0 > D_{\rho_{\tilde{a}}} > (D_{\rho_l}-s)$, remain small\footnote{One might ask the question whether abandoning the requirement of small initial values for the nonrenormalizable couplings $\rho_{\tilde{a}}^0, \ \ 0 > D_{\rho_{\tilde{a}}} > (D_{\rho_l}-s)$, while keeping it for the $\rho_{\tilde{n}}^0, \ \ \Delta_{\tilde{n}} \le (D_{\rho_l}-s)$ , would allow to fix nonrenormalizable couplings $\rho_{\tilde{a}}(\Lambda_R)$ at values $\rho_{\tilde{a}}^{NR} \ne \rho_{\tilde{a}}^{cont}(\Lambda_R)$, where the $\rho_{\tilde{a}}^{NR}$ satisfy eq. ($\ref{cond1a}$), \textit{even for an UV cutoff} $\Lambda_0 \ge \Lambda_D$. At least if one works in perturbation theory in the nonrenormalizable couplings the answer is that this is \textit{not} possible because then, counterterms for \textit{all} couplings in order to cancel the arising divergences are needed. The counterterms are the initial values $\rho_{\tilde{a}}^0$ and $\rho_{\tilde{n}}^0$. Thus, if any divergence occurs that needs large $\rho_{\tilde{a}}^0$ to be cancelled, there will be others that require large $\rho_{\tilde{n}}^0$.} as defined in eq. ($\ref{sini}$). Thus, the UV cutoff of the theory is now restricted to  
\begin{eqnarray}
\Lambda_0 \le \Lambda_D \label{restr}
\end{eqnarray}
as follows from the discussion at the beginning of this section, leading to eqns. ($\ref{cond1}$) and ($\ref{cond2}$). This fact can also be interpreted in the way that a measurement of the values of the nonrenormalizable couplings at the scale $\Lambda_R$  \textit{defines} the UV cutoff scale of the theory to be $\Lambda_0= \Lambda_D$. In the following, we will employ this view.

At scales $\Lambda << \Lambda_D$, the theory remains predictive to finite accuracy. This can be seen as follows. As we have done in eq. ($\ref{shape}$), we use a parametrization to change the shape of the initial surface ($\ref{inis2}$):
\begin{eqnarray}
\rho_{\tilde{n}}^0 \rightarrow \tilde{\rho}_{\tilde{n}}^0 := t \rho_{\tilde{n}}^0 , \ \ \ t \in [0,1].  \label{shape2}
\end{eqnarray}
Allowing also the coordinates $\rho_{\tilde{a}}^0$ to depend on the parameter $t$, the total $t$-dependence of the potential $L$ becomes, in analogy to eq. ($\ref{tdep}$),
\begin{eqnarray}
L &=&   L \left(\Lambda, \Lambda_D,\rho_{\tilde{a}}^0(t), t \right).
\end{eqnarray}
We extend the definition of the quantity $W(\Lambda)$ of eq. ($\ref{w}$) to 
\begin{eqnarray}
W(\Lambda) =   \frac{\partial L }{\partial t}  -   \frac{\partial L }{\partial \rho_{\tilde{b}}^0}  \left( \frac{\partial \rho_{\tilde{a}}}{\partial \rho_{\tilde{b}}^0} \right)^{-1} \frac{\partial \rho_{\tilde{a}} }{\partial t}   \label{wext} 
\end{eqnarray}
and repeat the corresponding steps that follow in section ($\ref{RenFlowOver}$). In doing so, we employ the assumption that the initial values $\rho_{\tilde{n}}^0$ are small. The result is
\begin{eqnarray}
|| L(\Lambda_R, \Lambda_D, \rho_{\tilde{a}}^0(\Lambda_R, \Lambda_D,\rho_{\tilde{a}}^R, 1 ),1)- L(\Lambda_R,\Lambda_D,\rho_{\tilde{a}}^0(\Lambda_R, \Lambda_D,\rho_{\tilde{a}}^R, 0), 0) || \le \left( \frac{\Lambda_R}{\Lambda_D} \right)^{-(D_{\rho_l}-s)} \nonumber . \\ \label{ind2}
\end{eqnarray}
Again, we conclude that for two different sets of initial values $ \rho_{\tilde{n}}^{0, A}( \rho_{\tilde{a}}^{0, A})$ and $ \rho_{\tilde{n}}^{0,B}( \rho_{\tilde{a}}^{0, B})$ which are in accordance with eq. ($\ref{sini}$) we have
\begin{eqnarray}
|| L(\Lambda_R, \Lambda_D, \rho_{\tilde{a}}^{0, A}(\Lambda_R, \Lambda_D,\rho_{\tilde{a}}^R))- L(\Lambda_R,\Lambda_D,\rho_{\tilde{a}}^{0, B}(\Lambda_R, \Lambda_D,\rho_{\tilde{a}}^R)) || \le \left( \frac{\Lambda_R}{\Lambda_D} \right)^{-(D_{\rho_l}-s)}. \nonumber \\ \label{indf2}
\end{eqnarray}
Eq.  ($\ref{indf2}$) shows that at the scale $\Lambda= \Lambda_R$ the ignorance about the exact values of the bare couplings $ \rho_{\tilde{n}}^{0}$ amounts to an indetermination of the potential $L(\Lambda_R)$ of the order of $\left( {\Lambda_R}/{\Lambda_D} \right)^{-(D_{\rho_l}-s)}$. If we know about the  potential $L$, we know about the effective action $S_e$ which in turn allows us to determine the generating functional $W(J)$ of the QFT, see eq. ($\ref{intout}$).  Hence, the knowledge of $L(\Lambda_R)$ to an accuracy of $\left( {\Lambda_R}/{\Lambda_D} \right)^{-(D_{\rho_l}-s)}$ leads to a QFT that is predictive with an accuracy of $\left( {\Lambda_R}/{\Lambda_D} \right)^{-(D_{\rho_l}-s)}$.

Comparing eq. ($\ref{indf2}$) to eq. ($\ref{indf}$) where renormalization conditions only for the renormalizable couplings have been specified, we furthermore see that the introduction of the improvement conditions has led to an enhanced predictivity of the effective field theory at scales $\Lambda << \Lambda_0$.

\bigskip
Let us summarize the results of this section.
\begin{itemize}
\item The deviation of the values to which the nonrenormalizable couplings are fixed, $\rho_{\tilde{a}}(\Lambda_R, \Lambda_0, \rho_{\tilde{a}}^0)=\rho^{NR}_{\tilde{a}}, \ \ 0 > D_{\rho_{\tilde{a}}} > (D_{\rho_l}-s)$, from the values $\rho^{cont}_{\tilde{a}}$ they would take in the limit $\Lambda_0 \rightarrow \infty$ if no improvement conditions were specified for them, defines an UV cutoff scale $\Lambda_D$.

\item Although we have to keep the UV cutoff $\Lambda_0$ below the scale $\Lambda_D$, for $\Lambda_0 = \Lambda_D$ the {theory remains predictive to an accuracy of $(\Lambda_R/\Lambda_D)^{(-(D_{\rho_l}-s))}$} at $\Lambda = \Lambda_R$.

\item As we fix more nonrenormalizable couplings at the renormalization scale, the improvement index $s$ increases and we achieve more predictivity.

\end{itemize}
This is the {paradigm of effective field theories} from the viewpoint of the renormalization group. In particular, it may be applied to the case of a nonrenormalizable theory which has been defined at the beginning of section ($\ref{RenFlowOver}$) as a theory that does not allow for any renormalizable couplings, except for kinetic and mass terms. The Fermi theory of weak interactions is a well-known example for such a theory, as well as Einstein gravity without a cosmological constant- see chapter ($\ref{PredGrav}$). In a nonrenormalizable theory we have 
\begin{equation}
\rho_n^{cont}(\Lambda)=0
\end{equation}
because there are no renormalizable interactions which generate new contributions to the nonrenormalizable ones while integrating out field modes. Thus, the latter die out according to their canonical dimensions in the limit $\Lambda_0 \rightarrow \infty$. However, one important point of this section was to show that ''a nonrenormalizable theory is just as good as a renormalizable theory for computations, provided one is satisfied with finite accuracy.''\cite{Mano}  

\end{section}

\chapter{Renormalization via Flow Equations} \label{RenFlow}

The perturbative renormalization of scalar field theory with flow equations is reviewed. We proceed along the lines of G. Keller, C. Kopper, M. Salmhofer \cite{KKS} who presented an improved and considerably shortened version of Polchinski's original proof \cite{Pol}. Compared to \cite{KKS}, our version contains the following generalizations.  We do not restrict ourselves to couplings assigned to operators with even numbers of fields.\footnote{We do not require the theory to be invariant under $\phi \rightarrow - \phi$.} This amounts to considering $\phi^3 + \phi^4$ theory instead of solely $\phi^4$. Moreover, we allow for nonvanishing values of the nonrenormalizable couplings at the UV cutoff scale from the very beginning and employ an alternative proof of the uniqueness of the no-cutoff limit. The latter will turn out crucial for investigating the predictivity of an effective field theory in the chapter ($\ref{EffFlow}$). One can regard the present chapter as a rigorous version of section  ($\ref{RenFlowOver}$).

\begin{section}{Renormalization of scalar $\phi^3 + \phi^4$ field theory}

\begin{subsection}{RG inequalities for vertex functions} \label{RGIs}

The central aim of this chapter will be to prove the boundedness and convergence of solutions $L(\phi, \Lambda, \Lambda_0)$ of the Polchinski equation ($\ref{pol}$) in the no-cutoff limit $\Lambda_0 \rightarrow \infty$ while renormalization conditions for the renormalizable couplings are imposed at some renormalization scale $\Lambda_R$. This program will be carried out for a scalar field theory in perturbation theory in the (renormalized) renormalizable couplings. As we have pointed out in the last chapter, the potential $L(\Lambda)$ corresponds to an effective action $S_e(\Lambda)$ which in turn leads via eq. ($\ref{WP}$) to the determination of a generating functional $W(J)$ of a QFT. Remember that since  $S_e(\Lambda)$ contains an effective cutoff $\Lambda$, we do not have to worry about possible divergences any more once finite bounds for $L(\phi, \Lambda, \Lambda_0)$ have been established for the limit $\Lambda_0 \rightarrow \infty$. 

In the following we will always work in momentum space. We begin by rewriting the effective action ($\ref{Spol}$):
\begin{eqnarray} 
S_e(\phi, \Lambda)= \int \frac{ d^4k}{(2 \pi)^4} \left(  -\frac{1}{2} \phi(k) (k^2+ B) K^{-1}(k^2/\Lambda^2) \phi(-k) + A \frac{(2 \pi)^4}{2} \delta(k) \phi(k) \right) + L(\phi, \Lambda). \nonumber \\ \label{SpolM}
\end{eqnarray}
$K$ is the cutoff function defined in eq. ($\ref{cutf}$). Note that in ($\ref{SpolM}$), we have included terms linear and bilinear in the fields associated with couplings $A$ and $B$ into the ''free'' part of the action. It will turn out that  the definition of the effective potential $L$ following from ($\ref{SpolM}$) is more suitable for the upcoming proofs of boundedness and convergence, as compared to the one associated with eq. ($\ref{Spol}$) of chapter ($\ref{OvM}$). 

The redefinition of $L$ leads to an additional term in the Polchinski renormalization group equation, as compared to the original version ($\ref{pol}$). By requiring invariance of the generating functional ($\ref{WP}$) under a change of the scale $\Lambda$ we obtain (in momentum space)
\begin{eqnarray}
- \Lambda \frac{d}{d \Lambda} L  = \frac{1}{2} \int d^4k   (2 \pi)^4  \Lambda \frac{d}{d \Lambda} \Delta_\Lambda  \left(\frac{\delta L}{\delta \phi(k)} \frac{\delta L}{\delta \phi(-k)}   + \frac{\delta^2 L}{\delta \phi(k) \delta\phi(-k)} + A \delta(k) \frac{\delta L}{\delta \phi(k)}    \right) \nonumber \\ \label{polm}
\end{eqnarray}
where the regularized propagator is given by 
\begin{equation}
\Delta_\Lambda=\frac{K(k^2/\Lambda^2)}{k^2+B}.
\end{equation}
At this point, we will not prove the modified RGE ($\ref{polm}$), but instead refer the reader to the proof of Theorem ($\ref{PolQG}$) where an analogous RGE for Euclidean quantum gravity is established.

Furthermore we would like to point out that we do not require the renormalization constant $B$ of the ''mass  term'' in ($\ref{SpolM}$) to be positive. This poses no problems to the analysis of the effective potential $L(\phi, \Lambda)$ as long as we keep $\Lambda^2 > |B|$, as can be seen from the RGE ($\ref{polm}$). Due to the properties ($\ref{cutf}$) of the cutoff function, $\Lambda \frac{d}{d \Lambda} \Delta_\Lambda  $ has compact support $\Lambda< k < 4 \Lambda$. Thus, only momenta $k>\Lambda$ contribute to the integral in ($\ref{polm}$) and therefore to a solution $L(\phi,\Lambda, \Lambda_0)$ of the RGE\footnote{One can see $\Lambda$ and $\Lambda_0$ as an IR and UV momentum cutoffs for $L(\phi,\Lambda, \Lambda_0)$ respectively.}.

We will now focus on the analysis of the effective potential $L(\phi, \Lambda)$. Therefore, we  expand $L(\phi, \Lambda)$ in powers of the fields $\phi$:
\begin{eqnarray}
L(\phi, \Lambda) = \sum_{n=1}^{\infty} \frac{1}{n!} \int \frac{d^4 k_1 ... d^4 k_{n}}{(2 \pi )^{4n-4}} L_{n} (k_1,...,k_{n}, \Lambda) \delta^4 \big( \sum_i k_i \big) \phi(k_1)... \phi(k_{n}) . \label{Lexp}
\end{eqnarray}
The expansion coefficients $L_{n} (k_1,...,k_{n}, \Lambda)$ are henceforth called \textit{vertex functions}\footnote{They are {not} identical to the conventional vertex functions of the 1 particle irreducible (PI) effective action but instead are related to the connected amputated Green's functions. See Appendix ($\ref{LZ}$) for details.}. 

As it is explained in Appedix ($\ref{D}$), the (momentum-space) fields $\phi(k)$ have canonical dimension $D_{\phi(k)}=-3$. Thus from eq. ($\ref{Lexp}$) it follows that the canonical dimension of the vertex functions is given by
\begin{eqnarray}
D_{L_n} = 4-n.   \label{DL}
\end{eqnarray}
The Polchinski RGE ($\ref{polm}$) can be reformulated in terms of the $L_{n} (k_1,...,k_{n}, \Lambda)$  and then be generalized to a RGE for momentum derivatives of vertex functions. The resulting equation is
\begin{eqnarray}
\frac{d}{d \Lambda} \partial^p L_{n}(k_1,...k_n, \Lambda) &=&  - \sum_{p_1,p_2,p_3: \ \sum p_i =p} \sum_{l=1}^{n} \Big(  \partial^{p_1} \frac{d}{d \Lambda} \Delta_\Lambda (K, \Lambda) \ \partial^{p_2} L_{l}(k_1,...,k_{l-1}, K, \Lambda )  \nonumber \\ && \partial^{p_3} L_{n+2-l }(k_l, ..., k_n, -K, \Lambda) \Big) + \frac{1}{2} \binom{n}{l-1}  \ \text{permutations} \nonumber \\ && - \frac{1}{2} \int \frac{d^4 k}{(2 \pi )^4}  \partial^p L_{n+2}(k_1, ..., k_n, k, -k, \Lambda) \frac{d}{d \Lambda}  \Delta_\Lambda (k, \Lambda) \nonumber \\ && - \frac{1}{2} A \ \partial^p L_{n+1}(k_1, ..., k_n, 0, \Lambda) \frac{d}{d \Lambda}  \Delta_\Lambda (0, \Lambda) . \label{RGE0}
\end{eqnarray}
Here, we have defined $K:= \sum k_i$ and employed the notation
\begin{equation}
\partial^p := \partial^{\mu_1}_{i_1,j_1}...\partial^{\mu_p}_{i_p,j_p}
\end{equation}
where
\begin{equation}
\partial^\mu_{i,j}:= \frac{\partial}{\partial k_i^\mu} -\frac{\partial}{\partial k_j^\mu}
\end{equation}
takes care of the fact that $L_{n}(k_1,...,k_n, \Lambda)$ makes only sense for $\sum_i k_i=0$.\footnote{Write $L_{n}(k_1,...k_{n-1},K_n, \Lambda)$ with $K_n:= -\sum_{i=1}^{n-1} k_i$. Then $\frac{d}{dk_i} L_n = \frac{\partial}{\partial k_i} L_n + \frac{\partial}{\partial K_n} L_n \frac{\partial K_n}{\partial k_i}= \frac{\partial}{\partial k_i} L_n - \frac{\partial}{\partial K_n} L_n $. }. 

Using Taylor's theorem we expand the vertex functions $L_{n} (k_1,...,k_{n}, \Lambda)$ around $k_i=0$. The expansions up to $\mathcal{O}(k^0)$ and to  $\mathcal{O}(k^2)$ read\footnote{Terms with odd powers of momenta vanish because of the invariance of $L_n$ under the orthogonal group, see Appendix ($\ref{MDV}$) .}
\begin{eqnarray}
L_{n} (k_1,...,k_{n}, \Lambda) &=& L_{n} (0,...,0, \Lambda) + \sum_{i_1,i_2=1}^{n-1} k_{i_1}^{\mu_1} k_{i_2}^{\mu_2} \int_0^1 d \tau (1-\tau) \partial^{\mu_1}_{i_1,n} \partial^{\mu_2}_{i_2,n} L_{n} (\tilde{k}_1,...,\tilde{k}_{n}, \Lambda)|_{\tilde{k_i}=\tau k_i} \nonumber \\ \label{TE0} 
\end{eqnarray}
\begin{eqnarray}
L_{n} (k_1,...,k_{n}, \Lambda) &=& L_{n} (0,...,0, \Lambda) +  \frac{1}{2} \sum_{i_1,i_2=1}^{n-1}  k_{i_1}^{\mu_1} k_{i_2}^{\mu_2} \ \partial^{\mu_1}_{i_1,n} \partial^{\mu_2}_{i_2,n} L_{n} (\tilde{k}_1,...,\tilde{k}_{n}, \Lambda)|_{\tilde{k_i}=0} \nonumber \\ && + \frac{1}{3!} \sum_{i_1,...,i_4=1}^{n-1} k_{i_1}^{\mu_1} ... k_{i_4}^{\mu_4} \int_0^1 d \tau (1-\tau)^3 \partial^{\mu_1}_{i_1,n} ... \partial^{\mu_4}_{i_4,n} L_{n} (\tilde{k}_1,...,\tilde{k}_{n}, \Lambda)|_{\tilde{k_i}=\tau k_i} . \nonumber \\ \label{TE1}
\end{eqnarray}
The last terms in  ($\ref{TE0}$), ($\ref{TE1}$) are remainder terms, respectively. The expansion coefficients in ($\ref{TE1}$) give rise to the definition of running coupling constants\footnote{See Appendix ($\ref{MDV}$) for details on the definition of $\rho_3(\Lambda)$.} $\rho_a(\Lambda)$:
\begin{eqnarray} 
\rho_1(\Lambda) &:=& L_1(0,\Lambda) \label{rcc1} \\
\rho_2(\Lambda) &:=& L_2(0,0,\Lambda) \\
\rho_3(\Lambda) \ \delta^{\mu \nu} &:=& \partial^\mu_{1,2} \partial^\nu_{1,2} L_2(k_1,k_2,\Lambda)|_{k_1=k_2=0} \\
\rho_4(\Lambda) &:=& L_3(0,0,0,\Lambda) \label{rcc4} \\
\rho_5(\Lambda) &:=& L_4(0,0,0,0,\Lambda). \label{rcc5}
\end{eqnarray}
These are equivalent to the coupling constants introduced in the position-space derivative expansion ($\ref{effpot}$), as is shown in Appendix  ($\ref{MDV}$). The corresponding dimensionless couplings $\lambda_a(\Lambda)$ are defined as in  eq.  ($\ref{dlcoup}$). All couplings $\rho_a$ are renormalizable, i.e. their mass dimensions $D_{\rho_a}$ satisfy
\begin{equation}
D_{\rho_a} \ge 0
\end{equation}
as follows from eq. ($\ref{DL}$).
In an analogous way we may define nonrenormalizable coupling constants $\rho_n$ with $D_{\rho_n} < 0$. This will be done in chapter ($\ref{EffFlow}$). We now impose renormalization conditions at some renormalization scale $\Lambda_R< \Lambda$:
\begin{eqnarray} 
\rho_1(\Lambda_R) &=& 0 \label{rc1} \\
\rho_2(\Lambda_R) &=& 0 \\
\rho_3(\Lambda_R) &=& 0 \\
\rho_4(\Lambda_R) &=& \rho_4^R \label{c3} \\
\rho_5(\Lambda_R) &=& \rho_5^R  . \label{c4} 
\end{eqnarray}
Note that the $\rho_i(\Lambda_R), \ i=1...3,$ are associated with the constants $A$ and $B$ appearing in the ''free'' part of the effective action  ($\ref{SpolM}$) and the field strength renormalization. For the dimensionless couplings, we introduce the notation\footnote{If we write $\lambda_a^R$, we always mean $\lambda_a^R(\Lambda)$, and not $\lambda_a^R(\Lambda_R)$. }
\begin{eqnarray}
\lambda_a^R(\Lambda) &:=& \Lambda^{-D_{\rho_a}}  \rho_a^R .  \label{dCR}
\end{eqnarray}
The vertex functions $L_{n} (k_1,...,k_{n}, \Lambda)$ are being evaluated in perturbation theory in the renormalized renormalizable couplings $\rho_4^R$ and $\rho_5^R$:
\begin{eqnarray}
L_{n} (k_1,...,k_{n}, \Lambda) = \sum_{r_1, r_2=0}^{\infty} (\rho_4^R)^{r_1} (\rho_5^R)^{r_2} L_{n}^{(r_1,r_2)} (k_1,...,k_{n}, \Lambda).  \label{Pert}
\end{eqnarray}
Note that since the expansion parameter $\rho_4^R$ is dimensionful, the canonical dimensions of the expansion coefficients $L_{n}^{(r_1,r_2)}$ become dependent of the order in perturbation theory in $\rho_4^R$. 

We will work mostly with dimensionless vertex functions. These follow from eq. ($\ref{DL}$) as\footnote{The dimensionless couplings $\lambda_a(\Lambda)$ defined in ($\ref{dlcoup}$) can also be introduced by replacing $L_n$ in the definitions ($\ref{rcc1}$)-($\ref{rcc5}$) by $A_n$ and by multiplying the result with powers of $\Lambda$ that correspond to the number of derivatives present. }
\begin{equation}
 A_{n} (k_1,...,k_{n}, \Lambda) :=  \Lambda^{n-4} L_{n} (k_1,...,k_{n}, \Lambda)  . \label{dV0}
\end{equation}
Accordingly, we may expand
\begin{equation}
A_{n} (k_1,...,k_{n}, \Lambda) = \sum_{r_1, r_2=0}^{\infty} (\lambda_4^R)^{r_1} (\lambda_5^R)^{r_2} A_{n}^{(r_1,r_2)} (k_1,...,k_{n}, \Lambda) \label{Apert}
\end{equation}
where
\begin{equation}
 A_{n}^{(r_1,r_2)} (k_1,...,k_{n}, \Lambda) :=  \Lambda^{n+ r_1 -4} L_{n}^{(r_1,r_2)} (k_1,...,k_{n}, \Lambda) . \label{dV}
\end{equation}
The perturbative expansion ($\ref{Apert}$) makes only sense if the dimensionless couplings are small. Therefore we impose as an additional constraint to the renormalization conditions ($\ref{rc1}$)-($\ref{c4}$)
\begin{equation}
\lambda_a^R(\Lambda) \le 1.  \label{smallR}
\end{equation}
Note that because of the definition ($\ref{dCR}$), this in particular means $\rho_4^R \le \Lambda_R$ for $\Lambda \ge \Lambda_R$ and thus
\begin{equation}
\lambda_4^R(\Lambda) \le \frac{\Lambda_R}{\Lambda} \ . \label{smallRExp}
\end{equation}
Moreover, we remember that we have included terms linear and bilinear in the fields asscociated with ''couplings'' $A$ and $B$ into the ''free'' part of the action in eq. ($\ref{SpolM}$). From the RGE ($\ref{RGE0}$) then follows that for increasing order $(r_1, r_2)$ in perturbation theory in $\rho_4^R$, $\rho_5^R$ of the vertex functions $ A_{n}^{(r_1,r_2)} (\Lambda) $ there will be graphs involving increasing powers of $A$'s. Thus, for $\Lambda^{-3} A \ge 1$ the vertex functions will become arbitrarily large at high orders $(r_1, r_2)$. In order to avoid such behaviour and remembering the discussion at the beginning of this section concerning the values of $B$, we impose as conditions for $B$, $A$ 
\begin{eqnarray}
|\Lambda^{-2} B | &\le& 1 \label{smallm} \\
|\Lambda^{-3} A | &\le&  \label{smallg} 1.
\end{eqnarray}
To $0 th$ order in perturbation theory in $\lambda_4^R$,  $\lambda_5^R$ the vertex functions vanish:
\begin{eqnarray}
 A_{n}^{(0,0)} (k_1,...,k_{n}, \Lambda) = 0. \label{zero}
\end{eqnarray}
Finally, we define an overall order in perturbation theory via
\begin{equation}
r:= r_1 + r_2. 
\end{equation}
Due to the compact support of $dK/ d \Lambda$ it is easy to see that there are constants $C$ and $D_n$ such that
\begin{eqnarray}
\int \frac{d^4k}{(2 \pi)^4}  \lVert \Lambda^3 \frac{d}{d \Lambda} \Delta_\Lambda  \rVert  &<& C \Lambda^4 \label{q1} \\ \lVert \frac{\partial^n}{\partial k^n} \Lambda^3 \frac{d}{d \Lambda} \Delta_\Lambda \rVert &<& D_n \Lambda^{-n} \label{q2}
\end{eqnarray}
where the notation $\lVert \ \rVert$ is defined by
\begin{equation}
\lVert f(k_1, ..., k_{n}) \rVert = \max_{k_i^2 \le 4 \Lambda} |f(k_1,...,k_{n} ,\Lambda)   |     \label{norm}
\end{equation}
for some function $f$ of one or more momenta. We now rewrite the  RGE ($\ref{RGE0}$) in terms of the dimensionless vertex functions  ($\ref{dV0}$) and express the resulting equation in perturbation theory in $\lambda_4^R$ and $\lambda_5^R$. Applying the bounds ($\ref{q1}$) and ($\ref{q2}$) as well as the condition ($\ref{smallg}$) for the renormalization constant $A$, we arrive at the key RG inequality 
\begin{eqnarray}
\lVert \frac{d}{d \Lambda} \Lambda^{4-n-r_1} \partial^p A^{(r_1, r_2)}_{n}(\Lambda)  \rVert &\le& c_{n,p} \ \Lambda^{3-n-r_1} \Bigg( \lVert \partial^p A^{(r_1, r_2)}_{n+2}(\Lambda) \rVert + \lVert \partial^p A^{(r_1, r_2)}_{n+1}(\Lambda) \rVert  \nonumber \\ && +  \sum_{...} \Lambda^{-p_1} \lVert \partial^{p_2} A_{l}^{(s_1, s_2)}(\Lambda) \rVert \lVert \partial^{p_3} A^{(r_1-s_1, r_2-s_2)}_{n+2-l}(\Lambda) \rVert \Bigg)  \nonumber \\ \label{RGI1}
\end{eqnarray}
where we introduced the abbreviation
\begin{equation}
 \sum_{...} \ := \sum_{p_1,p_2,p_3: \ \sum p_i =p} \ \sum_{l=1}^{n} \ \sum_{\substack{s_1, s_2=0 \\ 1 \le s \le r-1 }}^{r_1, r_2}  .
\end{equation} 
The inequality ($\ref{RGI1}$) is the starting point to deduce four more inequalities which turn out to be crucial to show boundedness and convergence of the vertex functions in the limit $\Lambda_0 \rightarrow \infty$:
\begin{enumerate}
\item We integrate eq. ($\ref{RGI1}$) \textit{down} from the scale $\Lambda_0$ to $\Lambda$. Using the triangle inequality, we get
\begin{eqnarray}
\hspace{-0.5cm} \Vert \Lambda^{4-n-r_1} \partial^p A^{(r_1, r_2)}_{n}(\Lambda) \rVert   &\le& \lVert \Lambda_0^{4-n-r_1} \partial^p A^{(r_1, r_2)}_{n}(\Lambda_0) \rVert \nonumber \\ && + \ c_{n,p}  \int_{\Lambda}^{\Lambda_0} ds \ s^{3-n-r_1} \Bigg( \lVert \partial^p A^{(r_1, r_2)}_{n+2}(s) \rVert +  \lVert \partial^p A^{(r_1, r_2)}_{n+1}(s) \rVert  \nonumber \\ && + \sum_{...} s^{-p_1} \lVert \partial^{p_2} A_{l}^{(s_1, s_2)}(s) \rVert \lVert \partial^{p_3} A^{(r_1-s_1, r_2-s_2)}_{n+2-l}(s) \rVert \Bigg)  . \nonumber \\ \label{RGI2}
\end{eqnarray} 
Eq. ($\ref{RGI2}$) is now being differentiated with respect to $\Lambda_0$. With regard to this we note that
\begin{eqnarray}
g(x,x_0) := \int_{x}^{x_0} h(s, x_0) ds = \int_{x}^{\infty} h(s, x_0) \Theta(x_0-s) ds 
\end{eqnarray}
for any function $h(s, x_0)$ and the step function $\Theta(x_0-s)$. Therefore
\begin{eqnarray}
\frac{d}{dx_0} g(x,x_0) &=& \int_{x}^{\infty} \left(\frac{d}{dx_0}  h(s, x_0) \right) \Theta(x_0-s) ds  + \int_{x}^{\infty} h(s, x_0) \delta(x_0-s) ds  \nonumber \\ &=& h(x_0, x_0) + \int_{x}^{x_0} \left(\frac{d}{dx_0}  h(s, x_0) \right) ds.
\end{eqnarray}
Consequently, we arrive at
\begin{eqnarray}
\hspace{-0.5cm} \lVert \frac{d}{d \Lambda_0} \Lambda^{4-n-r_1} \partial^p A^{(r_1, r_2)}_{n}(\Lambda) \rVert   &\le&   \lVert \frac{d}{d \Lambda_0} \Lambda_0^{4-n-r_1} \partial^p  A^{(r_1, r_2)}_{n}(\Lambda_0) \rVert \nonumber \\ &&  +  c_{n,p}  \  \Lambda_0^{3-n-r_1} \Bigg( \lVert \partial^p A^{(r_1, r_2)}_{n+2}(\Lambda_0) \rVert + \lVert \partial^p A^{(r_1, r_2)}_{n+1}(\Lambda_0) \rVert \nonumber \\  && +  \sum_{...} \Lambda_0^{-p_1} \lVert \partial^{p_2} A_{l}^{(s_1, s_2)}(\Lambda_0) \rVert \lVert \partial^{p_3} A^{(r_1-s_1, r_2-s_2)}_{n+2-l}(\Lambda_0) \rVert \Bigg)  \ \nonumber \\  &&+ c_{n,p} \int_{\Lambda}^{\Lambda_0} ds \ s^{3-n-r_1} \Bigg( \lVert  \frac{d}{d \Lambda_0} \partial^p A^{(r_1, r_2)}_{n+2}(s) \rVert \nonumber \\ && \qquad \qquad \qquad \qquad \qquad \qquad + \lVert  \frac{d}{d \Lambda_0} \partial^p A^{(r_1, r_2)}_{n+1}(s) \rVert  \nonumber \\  && + 2 \sum_{...} s^{-p_1} \lVert  \frac{d}{d \Lambda_0} \partial^{p_2} A_{l}^{(s_1, s_2)}(s) \rVert \lVert \partial^{p_3} A^{(r_1-s_1, r_2-s_2)}_{n+2-l}(s) \rVert \Bigg) .   \nonumber \\ \label{RGI3}
\end{eqnarray} 
\item We integrate eq. ($\ref{RGI1}$) \textit{up} from the scale $\Lambda_R$ to $\Lambda$ and make use of the triangle inequality:
\begin{eqnarray}
\hspace{-0.5cm} \lVert \Lambda^{4-n-r_1} \partial^p A^{(r_1, r_2)}_{n}(\Lambda)\rVert  &\le& \lVert \Lambda_R^{4-n-r_1} \partial^p A^{(r_1, r_2)}_{n}(\Lambda_R) \rVert  \nonumber \\ && + \ c_{n,p} \int_{\Lambda_R}^{\Lambda} ds \ s^{3-n-r_1} \Bigg( \lVert \partial^p A^{(r_1, r_2)}_{n+2}(s) \rVert + \lVert \partial^p A^{(r_1, r_2)}_{n+1}(s) \rVert  \nonumber \\ && + \sum_{...} s^{-p_1} \lVert \partial^{p_2} A_{l}^{(s_1, s_2)}(s) \rVert \lVert \partial^{p_3} A^{(r_1-s_1, r_2-s_2)}_{n+2-l}(s) \rVert \Bigg) \nonumber \\ \label{RGI4}
\end{eqnarray} 
Differentiating with respect to $\Lambda_0$ now yields
\begin{eqnarray}
\hspace{-0.5cm} \lVert \frac{d}{d \Lambda_0} \Lambda^{4-n-r_1} \partial^p A^{(r_1, r_2)}_{n}(\Lambda) \rVert &\le& \lVert \frac{d}{d \Lambda_0} \Lambda_R^{4-n-r_1} \partial^p A^{(r_1, r_2)}_{n}(\Lambda_R) \rVert \nonumber \\ && +  c_{n,p} \int_{\Lambda_R}^{\Lambda} ds \ s^{3-n-r_1} \Bigg( \lVert \frac{d}{d \Lambda_0} \partial^p A^{(r_1, r_2)}_{n+2}(s) \rVert \nonumber \\ && \qquad \qquad \qquad \qquad \qquad \qquad  + \lVert \frac{d}{d \Lambda_0} \partial^p A^{(r_1, r_2)}_{n+1}(s) \rVert \nonumber \\ && + 2 \sum_{...} s^{-p_1} \lVert \frac{d}{d \Lambda_0} \partial^{p_2} A_{l}^{(s_1, s_2)}(s) \rVert \lVert \partial^{p_3} A^{(r_1-s_1, r_2-s_2)}_{n+2-l}(s) \rVert \Bigg) . \nonumber \\ \label{RGI5}
\end{eqnarray} 

\end{enumerate}

\end{subsection}

\begin{subsection}{Boundedness and convergence of the vertex functions} \label{BaC}
We begin by proving in perturbation theory in $\lambda_4^R$ and $\lambda_5^R $ the boundedness of the norms $\lVert  \partial^p A^{(r_1, r_2)}_{n}(\Lambda) \rVert$ of the vertex functions as the UV cutoff $\Lambda_0 \rightarrow \infty$. Our analysis is based on the work of Keller, Kopper and Salmhofer \cite{KKS} who presented improved bounds for the vertex functions as compared to the original version of Polchinski \cite{Pol}. We generalize their treatment to the case of $\phi^3 + \phi^4$ theory.

\begin{satz}[Boundedness I] \label{BoundTh} Given the renormalization conditions ($\ref{rc1}$)-($\ref{c4}$) and assuming initial conditions
\begin{eqnarray}
||\partial^p A_{n}^{(r_1, r_2)}(p_1,...,p_{n}, \Lambda_0)|| \le \Lambda_0^{-p}  \left( \frac{\Lambda_0}{\Lambda_R} \right)^{r_1} Pln \left( \frac{\Lambda_0}{\Lambda_R} \right) \label{ini}
\end{eqnarray} 
for $n+p \ge 5$, to order $r_1, r_2$ in perturbation theory in $\lambda_4^R$ and $\lambda_5^R $
\begin{eqnarray}
||\partial^p A_{n}^{(r_1, r_2)}(p_1,...,p_{n}, \Lambda)|| \le \Lambda^{-p} \left( \frac{\Lambda}{\Lambda_R} \right)^{r_1} \left(  Pln\left( \frac{\Lambda}{\Lambda_R} \right) +  \frac{\Lambda}{\Lambda_0}  Pln \left( \frac{\Lambda_0}{\Lambda_R} \right) \right) \ \label{Bound}
\end{eqnarray}
where $Pln(z)$ denotes some polynomial in $ln(z)$ whose coefficients are taken to be nonnegative and $\Lambda_R \le \Lambda \le \Lambda_0$.
\end{satz}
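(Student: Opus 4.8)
The plan is to prove the bound ($\ref{Bound}$) by a triple induction in the spirit of Polchinski and of Keller--Kopper--Salmhofer, organised so that the flow is always integrated \emph{towards} the scale where data are prescribed: \emph{down} from $\Lambda_0$ for the irrelevant vertex functions and \emph{up} from $\Lambda_R$ for the relevant ones. The outer induction is on the total order $r=r_1+r_2$; the base case $r=0$ is ($\ref{zero}$), where all $A^{(0,0)}_n$ vanish and ($\ref{Bound}$) holds trivially. For the step one fixes $(r_1,r_2)$ with $r_1+r_2=r$ and assumes ($\ref{Bound}$) at every lower order. Since the vertices are polynomial ($\phi^3+\phi^4$) and ($\ref{Pert}$) is an expansion in powers of the couplings, only finitely many leg numbers occur at a given order, $A^{(r_1,r_2)}_n\equiv0$ for $n>n_{\max}(r_1,r_2)$; this anchors a \emph{downward} induction on $n$. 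For fixed $n$ one handles the derivative index $p$, treating first all \emph{irrelevant} cases $n+p\ge5$ and only afterwards the \emph{relevant} ones $n+p\le4$.

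For the irrelevant components $n+p\ge5$ I would start from the integrated-down inequality ($\ref{RGI2}$), insert the initial condition ($\ref{ini}$) for the boundary term at $\Lambda_0$, and bound every term under the $s$-integral by the induction hypotheses: $\partial^pA^{(r_1,r_2)}_{n+1}$ and $\partial^pA^{(r_1,r_2)}_{n+2}$ are controlled by the downward $n$-induction (they are automatically irrelevant), while each product $\partial^{p_2}A^{(s_1,s_2)}_{l}\,\partial^{p_3}A^{(r_1-s_1,r_2-s_2)}_{n+2-l}$ has both factors of strictly lower order. It then remains to carry out the elementary estimates
\begin{equation*}
\int_\Lambda^{\Lambda_0}\!ds\,s^{a}\,Pln\!\left(\tfrac{s}{\Lambda_R}\right)\quad\text{and}\quad\int_\Lambda^{\Lambda_0}\!ds\,s^{a}\,\tfrac{s}{\Lambda_0}\,Pln\!\left(\tfrac{\Lambda_0}{\Lambda_R}\right),
\end{equation*}
with the power $a$ (built from $3-n-r_1$, the $-p_i$ and the factors $\Lambda^{r_1\pm s_1}$) being $\le-2$ in the irrelevant sector. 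The first integral converges as $\Lambda_0\to\infty$ and reproduces the $\Lambda^{-p}(\Lambda/\Lambda_R)^{r_1}Pln(\Lambda/\Lambda_R)$ piece up to an enlarged polynomial; the boundary term at $\Lambda_0$ together with the second integral combine into precisely the $\Lambda^{-p}(\Lambda/\Lambda_R)^{r_1}(\Lambda/\Lambda_0)Pln(\Lambda_0/\Lambda_R)$ piece of ($\ref{Bound}$). The tadpole contribution is absorbed using ($\ref{q1}$) and the $A$-insertion using ($\ref{smallg}$), all numerical prefactors going into the constant $c_{n,p}$ of ($\ref{RGI1}$).

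For the relevant components $n+p\le4$ (which, after using orthogonal-group invariance, are just the Taylor coefficients defining $\rho_1,\dots,\rho_5$) I would use the Taylor formula ($\ref{TE0}$), respectively ($\ref{TE1}$), to split $A^{(r_1,r_2)}_n$ into its relevant Taylor part at zero momenta plus a remainder carrying enough extra momentum factors to be irrelevant; the remainder is already bounded by the previous step. The relevant Taylor coefficient satisfies an ordinary differential equation in $\Lambda$ obtained from ($\ref{RGE0}$), which I would integrate \emph{up} from $\Lambda_R$ via ($\ref{RGI4}$), taking the boundary value from the renormalization conditions ($\ref{rc1}$)--($\ref{c4}$) (zero for $\rho_1,\rho_2,\rho_3$, and $\rho_4^R,\rho_5^R$ for $\rho_4,\rho_5$, which obey $\lambda^R_a(\Lambda)\le1$ by ($\ref{smallR}$), with the sharper ($\ref{smallRExp}$) for $\lambda^R_4$). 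Now $\int_{\Lambda_R}^{\Lambda}ds\,s^{a}(\dots)$ has $a\ge-1$, so it produces only a polynomial in $\ln(\Lambda/\Lambda_R)$ and, through the irrelevant vertex functions under the integral, a term of order $(\Lambda/\Lambda_0)Pln(\Lambda_0/\Lambda_R)$ — again exactly the structure of ($\ref{Bound}$).

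The real work, I expect, is making the two integral estimates close \emph{exactly}: one must keep the ``running'' contribution $Pln(\Lambda/\Lambda_R)$ and the ``UV-sensitive'' contribution $(\Lambda/\Lambda_0)Pln(\Lambda_0/\Lambda_R)$ separated throughout, since they transform differently under $\int ds$, and one must check that the degrees of the log-polynomials, the powers of $\Lambda/\Lambda_R$ generated by the dimensionful expansion parameter $\rho_4^R$ (the bookkeeping of $r_1$ versus $s_1$), and the mismatch between $n+2-l$ legs and the order $r-s$ all conspire so that the hypothesis is reproduced with the same exponents and merely a larger, still finite, polynomial $Pln$. Verifying that the downward $n$-induction terminates — i.e. that $A^{(r_1,r_2)}_n$ vanishes beyond $n_{\max}(r_1,r_2)$ and that every term on the right-hand side of ($\ref{RGE0}$) is of a leg number or order already treated — is the combinatorial bookkeeping underpinning the whole scheme.
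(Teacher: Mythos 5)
Your proposal follows essentially the same route as the paper's proof: the same double induction in the total order $r=r_1+r_2$ and (downward) in the leg number $n$ anchored by $A_n^{(r_1,r_2)}=0$ for $n>2r+2$, the same split into irrelevant cases $n+p\ge5$ integrated down from $\Lambda_0$ via (\ref{RGI2}) and relevant cases $n+p\le4$ integrated up from $\Lambda_R$ via (\ref{RGI4}) with Taylor reconstruction, and the same elementary log-polynomial integral estimates (the paper's Lemma (\ref{l})). No substantive difference or gap.
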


\begin{proof} Since this proof will serve as a blueprint for the proofs of the upcoming Theorems ($\ref{ConvTh}$)-($\ref{PreTh}$), it will be performed in full detail. The proof is done via induction in both the overall order in perturbation theory, $r=r_1+r_2$, and the number of external legs, $n$, of the vertex functions $A_n^{(r_1, r_2)}$. This induction scheme is motivated by the fact that the terms appearing in the key renormalization group inequality (RGI) ($\ref{RGI1}$) satisfy
\begin{eqnarray}
 r_{LHS} > r_{RHS}  \ \ \vee \ \ r_{LHS}= r_{RHS} \ \wedge \ n_{LHS}< n_{RHS}.
\end{eqnarray}
Furthermore we note that
\begin{eqnarray}
A_n^{(r_1, r_2)}(\Lambda)=0 \ \ \text{for} \ \ n > 2r+2
\end{eqnarray}
and remember eq.($\ref{zero}$)
\begin{eqnarray*}
A_n^{(0,0)}(\Lambda)   =  0 .
\end{eqnarray*}
Thus, we are led to the following induction scheme:
\begin{itemize}
\item \textbf{Induction start}: ($\ref{Bound}$) holds true for
\begin{eqnarray}
\{(r,n): r=0  \ \wedge \ n \in  \mathbb{N} \} \vee \{ (r,n): r \ge 1 \ \wedge \ n > 2r+2  \}
\end{eqnarray}
\item \textbf{Induction hypothesis}: ($\ref{Bound}$) holds true for
\begin{eqnarray}
\{(r,n): r< r_0 \ \wedge \ n \in  \mathbb{N} \} \vee \{ (r,n): r = r_0 \ \wedge \ n > n_0  \}
\end{eqnarray}
\item \textbf{Induction step}: ($\ref{Bound}$) holds true for
\begin{eqnarray}
\{ (r,n): r = r_0 \ \wedge \ n = n_0  \ \forall n_0 \in \mathbb{N} \} .
\end{eqnarray}
\end{itemize}  
We will prove the induction step using the RGIs ($\ref{RGI2}$) and ($\ref{RGI4}$). To do so, we will need the following
\begin{lemma} \label{l}
Let $P^{(n)}$ denote some polynomial of degree $n$ with positive coefficients, and let $a \le b, \ a,b \in \mathbb{R}^{+}$ and $m \in \mathbb{Z}$. Then
\begin{eqnarray}
\int_a^b dx \ P^{(n)}(\ln x) x^{m} & \le & \left\{ \begin{array}{c c c}   \left[ - x^{m+1} P^{(n)}(\ln x) \right]_a^b & ,& m \le -2 \\ \left[  P^{(n+1)}(\ln x) \right]_a^b  &,  & m=-1 \\  \left[  x^{m+1} P^{(n)}(\ln x) \right]_a^b &, & m \ge 0 \ . \end{array} \right.
\end{eqnarray}
\end{lemma}

\begin{proofL} Using integration by parts one shows that
\begin{eqnarray} 
\int dx \ (\ln x)^n x^{m} = \left\{ \begin{array}{c c c}   \frac{(\ln x)^n }{1+m} x^{m+1} - \frac{n}{1+m} \int dx \ (\ln x)^{n-1} x^{m}  & ,& m \ne -1 \\ \frac{1}{1+n} (\ln x)^{n+1} .  &,  & m=-1 \ . \end{array} \right.  
\end{eqnarray} 
Thus Lemma ($\ref{l}$) follows immediately for $m=-1$ and recursively for $m \le -2$. For $m \ge 0$ the recursion yields a polynomial in $\ln x$ that may have negative coefficients. Let $\tilde{P}^{(n)}$ denote some polynomial with coefficients of arbitrary sign, and $P^{(n)}$ the corresponding polynomial with all coefficients made positive. Then
\begin{equation}
 \left[ x^{m+1} \tilde{P}^{(n)}(\ln x) \right]_a^b  \le   \left[ x^{m+1} P^{(n)}(\ln x) \right]_a^b 
\end{equation}
for $a \le b$, $m \ge -1$. Thus Lemma ($\ref{l}$) follows for $m \ge 0$.
\begin{flushright}
$\Box$
\end{flushright}
\end{proofL}
We now may proceed with the proof of Theorem ($\ref{BoundTh}$) .
\begin{enumerate}

\item \textbf{The case} $p+n \ge 5$:

We start with the RGI ($\ref{RGI2}$). On the RHS, we plug in eq. ($\ref{ini}$) as the initial condition at $\Lambda_0$ and eq. ($\ref{Bound}$) as the induction hypothesis. The result is
\begin{eqnarray}
\Vert \Lambda^{4-n- r_1} \partial^p A^{(r_1, r_2)}_{n}(\Lambda) \rVert   &\le&  \Lambda_R^{-r_1} \Lambda_0^{4-n -p}  Pln \left( \frac{\Lambda_0}{\Lambda_R} \right) \nonumber \\  & &  +  \underbrace{ \Lambda_R^{-r_1} \int_\Lambda^{\Lambda_0} ds \  s^{3-n-p} \left(  Pln \left( \frac{s}{\Lambda_R} \right) +  \frac{s}{\Lambda_0}   Pln \left( \frac{\Lambda_0}{\Lambda_R} \right) \right)}_{:=I_1(\Lambda, \Lambda_0)}. \nonumber \\ \label{i1}
\end{eqnarray}
With Lemma ($\ref{l}$) it follows that
\begin{eqnarray}
\hspace{-0.7em} I_1(\Lambda, \Lambda_0) & \le & \Lambda_R^{-r_1} \left[ -s^{4-n-p}   Pln\left( \frac{s}{\Lambda_R} \right) \right.  +  \left. \left\{ \begin{array}{c c c}   Pln \left( \frac{\Lambda_0}{\Lambda_R} \right) \Lambda_0^{-1}\ln(s) & ,& p+n =5 \\ -Pln \left( \frac{\Lambda_0}{\Lambda_R} \right)  \Lambda_0^{-1 }s^{5-n-p} &, &  p+n \ge 6  \end{array} \right. \right]_\Lambda^{\Lambda_0} . \nonumber \\
\end{eqnarray}
Plugging this into ($\ref{i1}$) and multiplying the whole equation with $\Lambda^{n+r_1-4}$ yields the bound ($\ref{Bound}$) for $n+p \ge 5$.

\pagebreak
\item \textbf{The case} $n + p \le 4$

We start with the RGI ($\ref{RGI4}$) and plug in  eq. ($\ref{Bound}$) as the induction hypothesis:
\begin{eqnarray}
\lVert \Lambda^{4-n-r_1} \partial^p A^{(r_1, r_2)}_{n}(\Lambda)\rVert  &\le& \lVert \Lambda_R^{4-n-r_1} \partial^p A^{(r_1, r_2)}_{n}(\Lambda_R) \rVert \nonumber \\ & &  + \Lambda_R^{-r_1} \int_{\Lambda_R}^{\Lambda} ds \  s^{3-n-p} \left( Pln\left( \frac{s}{\Lambda_R} \right) +  \frac{s}{\Lambda_0}   Pln \left( \frac{\Lambda_0}{\Lambda_R} \right) \right) . \nonumber \\ 
\end{eqnarray} 
The renormalization conditions ($\ref{rc1}$)-($\ref{c4}$) serve as initial conditions at $\Lambda_R$. We begin with $n=4$, $p=0$ and vanishing external momenta $k_i=0$: 
\begin{eqnarray}
\hspace{-1em} \lVert \Lambda^{-r_1} A^{(r_1, r_2)}_{4}(0,...,0, \Lambda)\rVert  &\le& \delta^{r_1 0} \delta^{r_2 1} + \underbrace{ \Lambda_R^{-r_1} \int_{\Lambda_R}^{\Lambda} ds \  s^{-1} \left(  Pln\left( \frac{s}{\Lambda_R} \right) +  \frac{s}{\Lambda_0}   Pln \left( \frac{\Lambda_0}{\Lambda_R} \right) \right)}_{:=I_2(\Lambda, \Lambda_0)}. \nonumber \\ \label{i2}
\end{eqnarray} 
With Lemma ($\ref{l}$) it follows that
\begin{eqnarray}
I_2(\Lambda, \Lambda_0) \le \Lambda_R^{-r_1} \left[ \left(  Pln \left( \frac{s}{\Lambda_R} \right) + \frac{s}{\Lambda_0}   Pln \left( \frac{\Lambda_0}{\Lambda_R} \right) \right)\right]_{\Lambda_R}^{\Lambda}.
\end{eqnarray}
We thus arrive at the bound ($\ref{Bound}$) for $A^{(r_1, r_2)}_{4}(0,...,0, \Lambda)$. Taylor's theorem ($\ref{TE0}$) allows us to reconstruct $A^{(r_1, r_2)}_{4}(k_1,...,k_4, \Lambda)$ where the remainder term is already bounded since it corresponds to the case $n+p \ge 5$.

We now proceed to $n=3$, $p=0$ and vanishing external momenta $k_i=0$: 
\begin{eqnarray}
\lVert \Lambda^{1 -r_1} A^{(r_1, r_2)}_{3}(0,0,0, \Lambda)\rVert  &\le& \delta^{r_1 1} \delta^{r_2 0} + \Lambda_R^{-r_1} \int_{\Lambda_R}^{\Lambda} ds \big(...\big) .
\end{eqnarray} 
With Lemma ($\ref{l}$) we solve the integral and establish the bound ($\ref{Bound}$) for $A^{(r_1, r_2)}_{3}(0,0,0, \Lambda)$. Again, the generalization to $A^{(r_1, r_2)}_{3}(k_1,k_2,k_3, \Lambda)$ follows with Taylor's theorem ($\ref{TE0}$).

In the same manner, we treat the case $n=2$, $p \le 2$, $k_i=0$:
\begin{eqnarray}
\lVert \Lambda^{2-r_1} A^{(r_1, r_2)}_{2}(0,0, \Lambda) \rVert & \le &  \Lambda_R^{-r_1} \int_{\Lambda_R}^{\Lambda} ds \ s \big(...\big) \\
\lVert \Lambda^{2-r_1} \partial^2 A^{(r_1, r_2)}_{2}(k_1,k_2, \Lambda)_{k_1=k_2=0} \rVert & \le & \Lambda_R^{-r_1} \int_{\Lambda_R}^{\Lambda} ds \ s^{-1} \big(...\big) .
\end{eqnarray}

\pagebreak
Thus ($\ref{Bound}$) follows with Lemma ($\ref{l}$)  for $A^{(r_1, r_2)}_{2}(0,0, \Lambda)$ and $\partial^2 A^{(r_1, r_2)}_{2}(k_1,k_2, \Lambda)_{k_i=0} $. These are the first two coefficients of a Taylor expansion ($\ref{TE1}$) of $A^{(r_1, r_2)}_{2}(k_1,k_2, \Lambda)$ around $k_i=0$. Since  the remainder term corresponds to the case $p+n \ge 5$, the bound ($\ref{Bound}$) is established for  $A^{(r_1, r_2)}_{2}(k_1,k_2, \Lambda)$. 

Finally, we treat the case $n=1$, $p=0$. Here, it suffices to prove ($\ref{Bound}$) for $A^{(r_1, r_2)}_{1}(0, \Lambda)$ since $A_1$ is only defined for $k=0$. 
\end{enumerate}
We thus have established the induction step for all $n_0 \in \mathbb{N}$.
\begin{flushright}
$\Box$
\end{flushright}
\end{proof}
The bounds ($\ref{Bound}$) for the vertex fuctions $A_{n}^{(r_1, r_2)}(\Lambda)$ would still allow an oscillatory dependence on the UV cutoff $\Lambda_0$. Therefore we will also prove convergence of the vertex functions as $\Lambda_0 \rightarrow \infty$. In Polchinki's original version \cite{Pol}, this step amounts to analyzing the quantity $V(\Lambda)$ that has been defined in eq. ($\ref{v}$) in the last chapter. However, as Keller, Kopper and Salmhofer have shown \cite{KKS}, a major shortcut in the proof can be achieved by pursuing a slightly different path. Instead of defining $V(\Lambda)$ as the total derivative of the effective potential $L(\Lambda)$ with respect to $\Lambda_0$ holding the renormalizable couplings fixed and \textit{then} analyzing associated vertex functions $V_n(\Lambda)$ by integrating a renormalization group inequality that can be deduced out of the RGE ($\ref{RG_v}$), they start by integrating the RGIs for the vertex functions $L_n(\Lambda)$ of the effective potential $L(\Lambda)$ and \textit{then} differentiate the result with respect to $\Lambda_0$. The fixing of the renormalizable couplings at the renormalization scale is incorporated through the initial conditions for the $L_n(\Lambda)$ at $\Lambda=\Lambda_R$.

We will follow the latter path and extend the analysis \cite{KKS} to the case of $\phi^3+ \phi^4$ theory.

\begin{satz}[Convergence] \label{ConvTh} Let there be renormalization conditions ($\ref{rc1}$)-($\ref{c4}$). Assume that to order $r_1, r_2$ in perturbation theory in $\lambda_4^R$ and $\lambda_5^R $ 
\begin{eqnarray}
||\partial^p A_{n}^{(r_1, r_2)}(p_1,...,p_{n}, \Lambda)|| \le \Lambda^{-p}  \left( \frac{\Lambda}{\Lambda_R} \right)^{r_1} Pln \left( \frac{\Lambda_0}{\Lambda_R} \right), \label{BoundW}
\end{eqnarray}
and that for $n+p \ge 5$
\begin{eqnarray}
|| \Lambda_0 \frac{d}{d \Lambda_0} \partial^p A_{n}^{(r_1, r_2)}(p_1,...,p_{n}, \Lambda_0)|| \le \Lambda_0^{-p}  \left( \frac{\Lambda_0}{\Lambda_R} \right)^{r_1} Pln \left( \frac{\Lambda_0}{\Lambda_R} \right) . \label{ini2}
\end{eqnarray}
Then
\begin{eqnarray}
||\Lambda_0 \frac{d}{d \Lambda_0} \partial^p A_{n}^{(r_1, r_2)}(p_1,...,p_{n}, \Lambda)|| \le \Lambda^{-p}  \left( \frac{\Lambda}{\Lambda_R} \right)^{r_1} \frac{\Lambda}{\Lambda_0} Pln \left( \frac{\Lambda_0}{\Lambda_R} \right)  \label{ConvE}
\end{eqnarray}
where $\Lambda_R \le \Lambda \le \Lambda_0$.

\end{satz}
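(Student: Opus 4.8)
The plan is to run exactly the double induction used in the proof of Theorem \ref{BoundTh}: induction on the overall perturbative order $r = r_1 + r_2$ and, at fixed $r$, downward on the number of external legs $n$, now applied to the bound (\ref{ConvE}) for the cutoff derivatives. The induction start is immediate, since for $r=0$ or $n > 2r+2$ the vertex functions vanish identically (by (\ref{zero}) and $A_n^{(r_1,r_2)}(\Lambda)=0$ for $n>2r+2$), hence so do their $\Lambda_0$-derivatives. For the induction step I would use the two inequalities already produced in Subsection \ref{RGIs} by differentiating the integrated key RGI with respect to $\Lambda_0$: inequality (\ref{RGI3}), obtained by integrating \emph{down} from $\Lambda_0$ to $\Lambda$, in the irrelevant sector $n+p\ge 5$, and inequality (\ref{RGI5}), obtained by integrating \emph{up} from $\Lambda_R$ to $\Lambda$, in the relevant sector $n+p\le 4$. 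In both, every factor that carries a $\frac{d}{d\Lambda_0}$ is estimated by the induction hypothesis (\ref{ConvE}) -- which is what supplies the extra $\Lambda/\Lambda_0$ -- while the undifferentiated factors are estimated by the boundedness bound (\ref{BoundW}) (itself a consequence of Theorem \ref{BoundTh}); the resulting $s$-integrals are then carried out with Lemma \ref{l} just as before.

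For $n+p\ge 5$ I would start from (\ref{RGI3}) and multiply through by $\Lambda_0$. Inserting (\ref{ConvE}) for the differentiated vertex function and (\ref{BoundW}) for the other factor, the integral term collapses to $\frac{c_{n,p}}{\Lambda_0}\Lambda_R^{-r_1}\int_\Lambda^{\Lambda_0} ds\, s^{4-n-p}\,Pln(s/\Lambda_R)$, which by Lemma \ref{l} is bounded by $\Lambda^{5-n-p}\Lambda_0^{-1}\Lambda_R^{-r_1}Pln(\Lambda_0/\Lambda_R)$ -- precisely the right-hand side of (\ref{ConvE}) multiplied by $\Lambda^{4-n-r_1}$. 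The boundary contributions at $\Lambda_0$ (the term $\frac{d}{d\Lambda_0}[\Lambda_0^{4-n-r_1}\partial^p A_n(\Lambda_0)]$ and the $h(x_0,x_0)$-type term $c_{n,p}\Lambda_0^{3-n-r_1}(\cdots)$ coming from differentiating $\Theta(\Lambda_0-s)$) are estimated using (\ref{ini2}), (\ref{BoundW}) and the RGI (\ref{RGI1}) evaluated at $\Lambda=\Lambda_0$; after multiplication by $\Lambda_0$ they are all of size $\Lambda_0^{4-n-p}\Lambda_R^{-r_1}Pln(\Lambda_0/\Lambda_R) = \Lambda_0^{-1}\,\Lambda_0^{5-n-p}\Lambda_R^{-r_1}Pln \le \Lambda_0^{-1}\Lambda^{5-n-p}\Lambda_R^{-r_1}Pln$, the last step because $5-n-p\le 0$ and $\Lambda\le\Lambda_0$. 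Dividing back by $\Lambda^{4-n-r_1}$ yields (\ref{ConvE}) for $n+p\ge 5$.

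For $n+p\le 4$ I would integrate (\ref{RGI5}) up from $\Lambda_R$. The decisive point is that the renormalization conditions (\ref{rc1})--(\ref{c4}) fix the relevant Taylor coefficients of $A_n^{(r_1,r_2)}$ at $\Lambda_R$ to $\Lambda_0$-independent numbers (these are exactly $\rho_1,\ldots,\rho_5$, i.e. the cases $n+p\le 4$), so the boundary term $\frac{d}{d\Lambda_0}[\Lambda_R^{4-n-r_1}\partial^p A_n(\Lambda_R)]$ vanishes and only $\frac{c_{n,p}}{\Lambda_0}\Lambda_R^{-r_1}\int_{\Lambda_R}^{\Lambda} ds\, s^{4-n-p}\,Pln(s/\Lambda_R)$ survives; for $n+p\le 4$ the exponent is nonnegative, so Lemma \ref{l} gives the bound $\Lambda^{5-n-p}\Lambda_0^{-1}\Lambda_R^{-r_1}Pln(\Lambda_0/\Lambda_R)$, again exactly the required form. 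One then reconstructs the full momentum dependence of each such vertex function from Taylor's theorem (\ref{TE0})/(\ref{TE1}), the remainder terms being of type $n+p\ge 5$ and therefore already controlled. This closes the induction.

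The main obstacle is not conceptual but lies in the careful bookkeeping of the $\Lambda_0$-boundary terms in (\ref{RGI3}): one has to verify that differentiating both the substituted quantity $\Lambda_0^{4-n-r_1}\partial^p A_n(\cdot,\Lambda_0)$ and the upper integration limit produces only contributions that, after multiplication by $\Lambda_0$, sit at order $\Lambda_0^{4-n-p}$ rather than $\Lambda_0^{5-n-p}$ -- i.e. that they genuinely carry one extra inverse power of the cutoff relative to the boundedness level. It is this single power, together with $\Lambda\le\Lambda_0$ in the irrelevant regime, that converts into the advertised $\Lambda/\Lambda_0$ suppression. The relevant sector $n+p\le 4$ is as usual the delicate one for the power counting, but here it is rendered harmless precisely by the $\Lambda_0$-independence of the renormalization conditions.
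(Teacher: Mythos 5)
Your proposal is correct and follows essentially the same route as the paper: the identical double induction in $(r,n)$, the RGIs (\ref{RGI3}) and (\ref{RGI5}) for the irrelevant and relevant sectors respectively, the vanishing of the $\Lambda_R$-boundary term via the $\Lambda_0$-independence of the renormalization conditions, and Taylor reconstruction for $n+p\le 4$. The power counting of the $\Lambda_0$-boundary terms in (\ref{RGI3}) that you single out as the main bookkeeping issue is handled in the paper exactly as you describe, using $5-n-p\le 0$ and $\Lambda\le\Lambda_0$.
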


\begin{proof} Note that eq. ($\ref{BoundW}$) is a weaker version of the bound ($\ref{Bound}$) and thus has already been established in Theorem ($\ref{BoundTh}$). We employ the same induction scheme as in the proof of Theorem ($\ref{BoundTh}$). In order to prove the induction step, we will use the RGIs ($\ref{RGI3}$) and ($\ref{RGI5}$).
\begin{enumerate}
\item \textbf{The case} $p+n \ge 5$:

We begin with the RGI ($\ref{RGI3}$). At the RHS, we plug in ($\ref{BoundW}$), ($\ref{ini2}$) as the initial condition at $\Lambda_0$ and ($\ref{ConvE}$) as the induction hypothesis. The result is
\begin{eqnarray}
|| \frac{d}{d \Lambda_0}  \Lambda^{4-n-r_1}  \partial^p A_{n}^{(r_1, r_2)}(p_1,...,p_{n}, \Lambda)|| & \le &  \Lambda_R^{-r_1} \Lambda_0^{3-n-p} Pln \left( \frac{\Lambda_0}{\Lambda_R} \right) \nonumber \\ &&  + \ \Lambda_R^{-r_1}  Pln \left( \frac{\Lambda_0}{\Lambda_R} \right) \Lambda_0^{-2} \int_\Lambda^{\Lambda_0} s^{4-n-p} ds \nonumber \\ &\le&   \Lambda_R^{-r_1} \Lambda_0^{-2} \Lambda^{5-n-p} Pln \left( \frac{\Lambda_0}{\Lambda_R} \right),
\end{eqnarray}
which is equivalent to ($\ref{ConvE}$).

\item \textbf{The case} $p+n \le 4$:

We start with the RGI ($\ref{RGI5}$) and plug in ($\ref{BoundW}$) and ($\ref{ConvE}$) as the induction hypothesis:
\begin{eqnarray}
\lVert \frac{d}{d \Lambda_0} \Lambda^{4-n-r_1} \partial^p A^{(r_1, r_2)}_{n}(\Lambda) \rVert &\le& \lVert \frac{d}{d \Lambda_0} \Lambda_R^{4-n-r_1} \partial^p A^{(r_1, r_2)}_{n}(\Lambda_R) \rVert \nonumber \\ && \ + \  \Lambda_R^{-r_1} Pln \left( \frac{\Lambda_0}{\Lambda_R} \right) \Lambda_0^{-2} \int_{\Lambda_R}^{\Lambda} ds \ s^{4-n-p} . \nonumber \\
\end{eqnarray}
Again, the renormalization conditions ($\ref{rc1}$)-($\ref{c4}$) for the renormalizable couplings $\rho_a$ serve as initial conditions at $\Lambda_R$. Since these conditions mean that the couplings $\rho_a$ are kept \textbf{fixed} at the renormalization scale, their total derivatives with respect to $\Lambda_0$ vanish at $\Lambda=\Lambda_R$:
\begin{eqnarray}
\frac{d}{d \Lambda_0} \rho_a (\Lambda_R) =0.
\end{eqnarray}
We therefore have
\begin{eqnarray}
\lVert \frac{d}{d \Lambda_0}  \rho^{(r_1, r_2)}_a (\Lambda) \rVert &\le&  \Lambda_R^{-r_1} Pln \left( \frac{\Lambda_0}{\Lambda_R} \right) \Lambda_0^{-2} \int_{\Lambda_R}^{\Lambda} ds \ s^{4-n-p} 
\end{eqnarray}
where $n, p$ have to be taken according to the definitions ($\ref{rcc1}$)-($\ref{rcc5}$) of the couplings $\rho_a$. Thus, eq. ($\ref{ConvE}$) follows easily for all $\frac{d}{d \Lambda_0} \rho_a(\Lambda)$. In complete analogy to the strategy set forth in the proof of Theorem ($\ref{BoundTh}$), we use Taylor expansions ($\ref{TE0}$) and ($\ref{TE1}$) to expand $A_n^{(r_1, r_2)}(k_1,...,k_n,\Lambda)$ around $k_i=0$ and thus to reconstruct the full $\frac{d}{d \Lambda_0} A_n^{(r_1, r_2)}(k_1,...,k_n,\Lambda)$. Since all remaining terms in the expansions correspond to the case $n+p \ge 5$, the claim ($\ref{ConvE}$) is established for $n+p \le 4$.
\end{enumerate}
\begin{flushright}
$\Box$
\end{flushright}
\end{proof}
Given Theorem ($\ref{ConvTh}$) and using Lemma ($\ref{l}$), we may easily integrate eq. ($\ref{ConvE}$) with respect to $\Lambda_0$ in order to explicitly show the convergence of the vertex functions $A_{n}^{(r_1, r_2)}(\Lambda, \Lambda_0)$ to a no-cutoff limit  
\begin{equation}
A_{n}^{cont \ (r_1, r_2)}(\Lambda) := \lim_{\Lambda_0 \rightarrow \infty} A_{n}^{(r_1, r_2)}(\Lambda, \Lambda_0). \label{contlimA} 
\end{equation}
The result is the analogon to eq. ($\ref{ConvE}$):
\begin{eqnarray}
|| A_{n}^{(r_1, r_2)}(p_1,...,p_{n}, \Lambda, \Lambda_0)- A_{n}^{cont \ (r_1, r_2)}(p_1,...,p_{n},\Lambda)  || \le \left( \frac{\Lambda}{\Lambda_R} \right)^{r_1} \frac{\Lambda}{\Lambda_0} Pln \left( \frac{\Lambda_0}{\Lambda_R} \right) . \nonumber \\   \label{conv2}
\end{eqnarray}

\end{subsection}

\begin{subsection}{Uniqueness of the no-cutoff limit}
In order to establish the boundedness and convergence of the dimensionless vertex functions $A^{(r_1, r_2)}_n(\Lambda)$ in Theorem  ($\ref{BoundTh}$) and ($\ref{ConvTh}$), it has been necessary to specify initial conditions at the UV cutoff scale $\Lambda_0$. These are the momentum derivatives of the vertex functions $\partial^p A^{(r_1, r_2)}_n(\Lambda_0)$ for $n+p \ge 5$. We will now explicitly show in perturbation theory in $\lambda_4^R$ and $\lambda_5^R $ that the no-cutoff limits $A_{n}^{cont \ (r_1, r_2)}(\Lambda)$ do not depend on these initial conditions as long as they are sufficiently small in the sense of eq. ($\ref{ini}$).

We follow the strategy set forth in section ($\ref{RenFlowOver}$). For a given set of initial conditions $\grave{a}$ la ($\ref{ini}$),
\begin{eqnarray}
||\partial^p A_{n}^{(r_1, r_2)}(p_1,...,p_{n}, \Lambda_0)|| \le \Lambda_0^{-p}  \left( \frac{\Lambda_0}{\Lambda_R} \right)^{r_1} Pln \left( \frac{\Lambda_0}{\Lambda_R} \right) , \ \ \ \ n+p \ge 5 \nonumber,
\end{eqnarray}
we construct another one via the parametrization
\begin{eqnarray}
\partial^p A_{n}^{(r_1, r_2)}(\Lambda_0) \rightarrow \partial^p \tilde{A}_{n}^{(r_1, r_2)}(\Lambda_0) := t \ \partial^p A_{n}^{(r_1, r_2)}( \Lambda_0), \ \ \ t \in [0,1], \ \ \ n+p \ge 5 . \nonumber \\  \label{shapeA}
\end{eqnarray}
Obviously, $t=0$ corresponds to the case $\partial^p A_{n}^{(r_1, r_2)}(\Lambda_0)=0,  \ \ n+p \ge 5$, whereas $t=1$ leaves the original set of initial conditions unchanged. Thus, the ''running'' vertex functions become dependent on the parameter $t$:
\begin{eqnarray}
\partial^p A_{n}^{(r_1, r_2)}= \partial^p A_{n}^{(r_1, r_2)}(p_1,...,p_{n}, \Lambda, \Lambda_0, t).
\end{eqnarray}
We could now proceed as in section ($\ref{RenFlowOver}$) and analyze the quantity $W(\Lambda)$ defined in eq. ($\ref{w}$), that is the total derivative of the vertex functions with respect to the parameter $t$ holding the renormalizable couplings fixed at $\Lambda=\Lambda_R$. Instead, we will choose a strategy very similar to the proof of convergence of the vertex functions in Theorem  ($\ref{ConvTh}$): we start by integrating the RGIs for the vertex functions, thereby employing the renormalization conditions, and \textit{then} differentiate the resulting inequalities with respect to the parameter $t$. The first step has already been performed in eqns. ($\ref{RGI2}$) and ($\ref{RGI4}$), and we proceed to the second one.
\begin{enumerate}
\item Differentiation of ($\ref{RGI2}$) with respect to the parameter $t$ yields
\begin{eqnarray}
\hspace{-0.5cm} \lVert \frac{d}{d t} \Lambda^{4-n-r_1} \partial^p A^{(r_1, r_2)}_{n}(\Lambda) \rVert   &\le&   \lVert \frac{d}{d t} \Lambda_0^{4-n-r_1} \partial^p  A^{(r_1, r_2)}_{n}(\Lambda_0) \rVert \nonumber \\ &&  + c_{n,p} \int_{\Lambda}^{\Lambda_0} ds \ s^{3-n-r_1} \Bigg( \lVert  \frac{d}{d t} \partial^p A^{(r_1, r_2)}_{n+2}(s) \rVert \nonumber \\ && \qquad \qquad \qquad \qquad \qquad \qquad  + \lVert  \frac{d}{d t} \partial^p A^{(r_1, r_2)}_{n+1}(s) \rVert  \nonumber \\  && + 2 \sum_{...} s^{-p_1} \lVert  \frac{d}{d t} \partial^{p_2} A_{l}^{(s_1, s_2)}(s) \rVert \lVert \partial^{p_3} A^{(r_1-s_1, r_2-s_2)}_{n+2-l}(s) \rVert \Bigg) .   \nonumber \\ \label{RGI3U}
\end{eqnarray} 
Note that in ($\ref{RGI3U}$),  the terms of ($\ref{RGI3}$) that stem from the differentiation of the upper bound of the integral in ($\ref{RGI2}$) are missing.
\item Differentiation of ($\ref{RGI4}$) with respect to the parameter $t$ yields
\begin{eqnarray}
\hspace{-0.5cm} \lVert \frac{d}{d t} \Lambda^{4-n-r_1} \partial^p A^{(r_1, r_2)}_{n}(\Lambda) \rVert &\le& \lVert \frac{d}{d t} \Lambda_R^{4-n-r_1} \partial^p A^{(r_1, r_2)}_{n}(\Lambda_R) \rVert \nonumber \\ && +  c_{n,p} \int_{\Lambda_R}^{\Lambda} ds \ s^{3-n-r_1} \Bigg( \lVert \frac{d}{d t} \partial^p A^{(r_1, r_2)}_{n+2}(s) \rVert \nonumber \\ && \qquad \qquad \qquad \qquad \qquad \qquad  + \lVert \frac{d}{d t} \partial^p A^{(r_1, r_2)}_{n+1}(s) \rVert  \nonumber \\ && + 2 \sum_{...} s^{-p_1} \lVert \frac{d}{d t} \partial^{p_2} A_{l}^{(s_1, s_2)}(s) \rVert \lVert \partial^{p_3} A^{(r_1-s_1, r_2-s_2)}_{n+2-l}(s) \rVert \Bigg) . \nonumber \\ \label{RGI5U}
\end{eqnarray} 
\end{enumerate}

\begin{satz}[Uniqueness of the no-cutoff limit] \label{UniTh} Let there be renormalization conditions ($\ref{rc1}$)-($\ref{c4}$). Assume that to order $r_1, r_2$ in perturbation theory in $\lambda_4^R$ and $\lambda_5^R $ 
\begin{eqnarray}
||\partial^p A_{n}^{(r_1, r_2)}(p_1,...,p_{n}, \Lambda)|| \le \Lambda^{-p}  \left( \frac{\Lambda}{\Lambda_R} \right)^{r_1} Pln \left( \frac{\Lambda_0}{\Lambda_R} \right), \label{BoundWU}
\end{eqnarray}
and that for $n+p \ge 5$
\begin{eqnarray}
|| \frac{d}{d t} \partial^p A_{n}^{(r_1, r_2)}(p_1,...,p_{n}, \Lambda_0)|| \le \Lambda_0^{-p}  \left( \frac{\Lambda_0}{\Lambda_R} \right)^{r_1} Pln \left( \frac{\Lambda_0}{\Lambda_R} \right) . \label{ini2U}
\end{eqnarray}
Then
\begin{eqnarray}
|| \frac{d}{d t} \partial^p A_{n}^{(r_1, r_2)}(p_1,...,p_{n}, \Lambda)|| \le \Lambda^{-p}  \left( \frac{\Lambda}{\Lambda_R} \right)^{r_1} \frac{\Lambda}{\Lambda_0} Pln \left( \frac{\Lambda_0}{\Lambda_R} \right)  \label{Uni}
\end{eqnarray}
where $\Lambda_R \le \Lambda \le \Lambda_0$.
\end{satz}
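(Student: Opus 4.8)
The plan is to follow the proof of Theorem (\ref{ConvTh}) essentially line by line, with the operator $\Lambda_0\frac{d}{d\Lambda_0}$ replaced everywhere by $\frac{d}{dt}$ and with the $t$-differentiated inequalities (\ref{RGI3U}) and (\ref{RGI5U}) used in place of (\ref{RGI3}) and (\ref{RGI5}). I would run the same double induction: over the overall perturbative order $r=r_1+r_2$ and, at fixed $r$, downward in the number of external legs $n$, the ordering being dictated by the fact that on the right-hand side of (\ref{RGI1}) one has $r_{LHS}>r_{RHS}$, or $r_{LHS}=r_{RHS}$ with $n_{LHS}<n_{RHS}$. The induction start is immediate: since $A_n^{(0,0)}\equiv 0$ and $A_n^{(r_1,r_2)}\equiv 0$ for $n>2r+2$, their $t$-derivatives vanish and (\ref{Uni}) holds trivially. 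The induction hypothesis is that (\ref{Uni}) holds for all $(r,n)$ with $r<r_0$, and for $r=r_0$ with $n>n_0$. Note that (\ref{BoundWU}) is just the weaker bound (\ref{Bound}) already proved in Theorem (\ref{BoundTh}), and that for the parametrization (\ref{shapeA}) the hypothesis (\ref{ini2U}) holds automatically, since $\frac{d}{dt}\partial^p\tilde A_n^{(r_1,r_2)}(\Lambda_0)=\partial^p A_n^{(r_1,r_2)}(\Lambda_0)$ obeys (\ref{ini}).

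For the induction step with $n+p\ge 5$ I would start from (\ref{RGI3U}), inserting (\ref{BoundWU}) for the undifferentiated factors, (\ref{ini2U}) for the boundary term at $\Lambda_0$, and the induction hypothesis (\ref{Uni}) for the $t$-differentiated vertex functions under the integral. The boundary term contributes $\Lambda_R^{-r_1}\Lambda_0^{4-n-p}Pln(\Lambda_0/\Lambda_R)$, and the integrand is bounded by $\Lambda_R^{-r_1}\Lambda_0^{-1}s^{4-n-p}Pln(\Lambda_0/\Lambda_R)$; since $4-n-p\le -1$, Lemma (\ref{l}) turns the $s$-integral into $\Lambda_0^{-1}\Lambda^{5-n-p}Pln$ (and into $\Lambda_0^{-1}Pln$ when $n+p=5$). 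Using $\Lambda\le\Lambda_0$ to absorb the boundary term into the integral term and multiplying back by $\Lambda^{n+r_1-4}$ gives exactly (\ref{Uni}). The important point is that, unlike in (\ref{RGI3}), differentiating (\ref{RGI2}) with respect to $t$ produces no extra term from the integration endpoint $\Lambda_0$, so this case is actually simpler than its counterpart in Theorem (\ref{ConvTh}).

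For $n+p\le 4$ I would use (\ref{RGI5U}), i.e.\ integrate the RGI upward from $\Lambda_R$ to $\Lambda$. The crucial observation is that the parametrization (\ref{shapeA}) rescales only the initial data at $\Lambda_0$ for $n+p\ge 5$; it does not touch the renormalization conditions (\ref{rc1})--(\ref{c4}), which keep the relevant couplings $\rho_a(\Lambda_R)$ at $t$-independent values. Hence $\frac{d}{dt}\rho_a(\Lambda_R)=0$, and the boundary term in (\ref{RGI5U}) at $\Lambda=\Lambda_R$ vanishes for precisely those $(n,p)$ with $n+p\le 4$ that occur (the ones with an odd number of derivatives vanishing anyway by orthogonal invariance). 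What remains is $c_{n,p}\int_{\Lambda_R}^{\Lambda}ds\,s^{3-n-r_1}(\cdots)$, whose integrand is bounded via (\ref{BoundWU}) and the induction hypothesis by $\Lambda_R^{-r_1}\Lambda_0^{-1}s^{4-n-p}Pln(\Lambda_0/\Lambda_R)$ with $4-n-p\ge 0$; Lemma (\ref{l}) then yields $\Lambda_0^{-1}\Lambda^{5-n-p}Pln$, which after multiplication by $\Lambda^{n+r_1-4}$ is (\ref{Uni}) for the coefficients $\rho_a$ of the Taylor expansions (\ref{TE0}), (\ref{TE1}). Reconstructing the full momentum dependence of $\frac{d}{dt}\partial^p A_n^{(r_1,r_2)}(k_1,\dots,k_n,\Lambda)$ from these expansions finishes the step, since every remainder term falls into the already-treated case $n+p\ge 5$.

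I do not expect a genuine obstacle: the argument is structurally identical to Theorem (\ref{ConvTh}). The two points requiring care are the bookkeeping of the polynomials $Pln$ --- checking that the degree stays finite under the recursion, which holds because each integration of an $s^{-1}$ term raises the degree by at most one and this happens only finitely often at fixed $(r,n,p)$ --- and the verification that the $t$-differentiated renormalization conditions vanish at $\Lambda_R$, which is exactly what lets the case $n+p\le 4$ close without an inhomogeneous boundary contribution. Once (\ref{Uni}) is established, integrating it over $t\in[0,1]$ and applying Lemma (\ref{l}) shows, just as for (\ref{conv2}), that the no-cutoff limit $A_n^{cont\ (r_1,r_2)}(\Lambda)$ is independent of the choice of small initial data at $\Lambda_0$.
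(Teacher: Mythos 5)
Your proposal is correct and follows exactly the route the paper takes: the paper's own proof of Theorem (\ref{UniTh}) consists of the single remark that one repeats the proof of Theorem (\ref{ConvTh}) with the RGIs (\ref{RGI3}), (\ref{RGI5}) replaced by (\ref{RGI3U}), (\ref{RGI5U}), which is precisely your argument, including the observation that the $t$-independence of the renormalization conditions kills the boundary term at $\Lambda_R$ and that (\ref{RGI3U}) lacks the endpoint contribution present in (\ref{RGI3}). You have merely written out the induction details that the paper leaves implicit by reference to Theorems (\ref{BoundTh}) and (\ref{ConvTh}).
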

\begin{proof}
The proof goes in complete analogy to the proof of Theorem ($\ref{ConvTh}$) if the RGIs ($\ref{RGI3}$) and ($\ref{RGI5}$) are replaced by ($\ref{RGI3U}$) and ($\ref{RGI5U}$).
\begin{flushright}
$\Box$
\end{flushright}
\end{proof}
We now may integrate the inequality ($\ref{Uni}$) over $t$ with integration limits $0$ and $1$:
\begin{eqnarray}
|| \partial^p A_{n}^{(r_1, r_2)}(p_1,...,p_{n}, \Lambda,1) - \partial^p A_{n}^{(r_1, r_2)}(p_1,...,p_{n}, \Lambda,0) || \le \Lambda^{-p}  \left( \frac{\Lambda}{\Lambda_R} \right)^{r_1} \frac{\Lambda}{\Lambda_0} Pln \left( \frac{\Lambda_0}{\Lambda_R} \right) . \nonumber \\ \label{ind2}
\end{eqnarray}
Since eq. ($\ref{ind2}$) is valid for \textit{any} set of initial conditions $\partial^p A_{n}^{(r_1, r_2)}(\Lambda_0), \ n+p \ge5$, as long as they satisfy eq. ($\ref{ini}$), we conclude with the triangle inequality that for two different sets $\partial^p A_{n}^{A (r_1, r_2)}(\Lambda_0)$ and $\partial^p A_{n}^{B(r_1, r_2)}(\Lambda_0)$ which are in accordance with eq. ($\ref{ini}$) the associated ''running'' vertex functions satisfy
\begin{eqnarray}
|| \partial^p A_{n}^{A(r_1, r_2)}(p_1,...,p_{n}, \Lambda) - \partial^p A_{n}^{B(r_1, r_2)}(p_1,...,p_{n}, \Lambda) || \le \Lambda^{-p}  \left( \frac{\Lambda}{\Lambda_R} \right)^{r_1} \frac{\Lambda}{\Lambda_0} Pln \left( \frac{\Lambda_0}{\Lambda_R} \right) . \nonumber \\  \label{UniR}
\end{eqnarray}
This result is the analogon to eq. ($\ref{indf}$) and shows that the no-cutoff limit of the vertex functions $A_{n}^{cont \ (r_1, r_2)}(\Lambda)$ is {independent} of the choice of the initial surface as long as the initial values $\partial^p A_{n}^{(r_1, r_2)}(\Lambda_0), \ n+p \ge5$, remain sufficiently small.

For finite $\Lambda_0$, eq. ($\ref{UniR}$) means that at the scale $\Lambda$ the ignorance about the exact values of the bare vertex functions  $\partial^p A_{n}^{(r_1, r_2)}(\Lambda_0), \ n+p \ge5$, amounts to an indetermination of the ''running'' vertex functions $\partial^p A_{n}^{(r_1, r_2)}(p_1,...,p_{n}, \Lambda)$ of the order of $ \frac{\Lambda}{\Lambda_0} Pln \left( \frac{\Lambda_0}{\Lambda_R} \right)$.

\end{subsection}

\end{section}

\chapter[Effective Field Theories with Flow Equations]{Effective Field Theories from the Viewpoint of the Renormalization Group} \label{EffFlow}

We investigate the predictivity of an effective field theory that has a finite UV cutoff scale $\Lambda_0$ by means of the renormalization group flow equations. Therefore, additional nonrenormalizable coupling constants are introduced and the vertex functions of the effective potential are expanded into perturbation series in the renormalized renormalizable and some of the bare nonrenormalizable couplings. This is referred to as ''generalized perturbation theory''. New bounds for the vertex functions are established in generalized perturbation theory in an attempt to unify the results of \cite{Wiec} and \cite{KKS}, and improvement conditions for the nonrenormalizable couplings are imposed at the renormalization scale $\Lambda_R$. 
We prove that there exist small initial conditions for the nonrenormalizable couplings at the UV cutoff scale $\Lambda_0$ such that appropriately chosen improvement conditions can be met. The proof is done in perturbation theory in the renormalized renormalizable couplings and in the deviations of the renormalized nonrenormalizable couplings from the values they would take at the renormalization scale $\Lambda_R$ for vanishing initial conditions at the UV cutoff scale $\Lambda_0$. The main advantage of our approach as compared to \cite{Wiec} is that the case of vanishing renormalizable couplings does not pose any problems and thus also nonrenormalizable theories can be treated. Finally, it is proven in generalized perturbation theory that the improvement conditions lead to an enhanced predictivity of the effective field theory at scales $\Lambda << \Lambda_0$ for a finite UV cutoff $\Lambda_0$. The present chapter can be seen as a rigorous version of the concepts introduced in section ($\ref{EffFlowOver}$).

\begin{section}[Initial and improvement conditions]{Initial and improvement conditions for the nonrenormalizable couplings}

\begin{subsection}{Generalized perturbation theory and RG inequalities for vertex functions}  \label{GPT}
In this chapter, the main objective is to analyze the behaviour of the effective potential $L(\phi, \Lambda, \Lambda_0)$ as additional renormalization conditions for some of the nonrenormalizable couplings are introduced at the renormalization scale $\Lambda_R$. These come in addition to the renormalization conditions for the renormalizable couplings, and are again referred to as ''\textit{improvement conditions}''. The analysis will be carried out in perturbation theory in the renormalizable couplings and in some of the nonrenormalizable couplings.

In order to define nonrenormalizable coupling constants, we consider a Taylor expansion of the vertex fuctions $L_{n} (k_1,...,k_{n}, \Lambda)$ introduced in eq. ($\ref{Lexp}$) around $k_i=0$ up to  $\mathcal{O}(k^4)$: 
\begin{eqnarray}
L_{n} (k_1,...,k_{n}, \Lambda) &=& L_{n} (0,...,0, \Lambda) +  \frac{1}{2} \sum_{i_1,i_2=1}^{n-1}  k_{i_1}^{\mu_1} k_{i_2}^{\mu_2} \ \partial^{\mu_1}_{i_1,n} \partial^{\mu_2}_{i_2,n} L_{n} (\tilde{k}_1,...,\tilde{k}_{n}, \Lambda)|_{\tilde{k_i}=0} \nonumber \\ && +  \frac{1}{4!} \sum_{i_1...i_4=1}^{n-1}  k_{i_1}^{\mu_1} ... k_{i_4}^{\mu_4} \ \partial^{\mu_1}_{i_1,n} ... \partial^{\mu_4}_{i_4,n} L_{n} (\tilde{k}_1,...,\tilde{k}_{n}, \Lambda)|_{\tilde{k_i}=0}  \nonumber \\ && + \frac{1}{5!} \sum_{i_1,...,i_6=1}^{n-1} k_{i_1}^{\mu_1} ... k_{i_6}^{\mu_6} \int_0^1 d \tau (1-\tau)^5 \partial^{\mu_1}_{i_1,n} ... \partial^{\mu_6}_{i_6,n} L_{n} (\tilde{k}_1,...,\tilde{k}_{n}, \Lambda)|_{\tilde{k_i}=\tau k_i} . \nonumber \\ \label{TE2}
\end{eqnarray}
The expansion coefficients in ($\ref{TE2}$) may be used to define the additional running coupling constants\footnote{See Appendix ($\ref{MDV}$) for details on the definitions of the couplings.}, which we will denote by $\rho_{\tilde{a}}(\Lambda)$ in accordance with the notation employed in section ($\ref{EffFlowOver}$):
\begin{eqnarray} 
\rho_{6}(\Lambda) \ \delta^{\mu \nu} &:=& {\partial^\mu_{i,3}} \partial^\nu_{i,3} L_3(k_1,k_2,k_3,\Lambda)|_{k_i=0},  \ \ \ i=1,2 \label{rcc6} \\
 \rho_{7}(\Lambda)  \ \delta^{\mu \nu} &:=& \partial_{i,3}^{\mu} {\partial_{j,3}^{\nu}} L_3(k_1,k_2,k_3,\Lambda)|_{k_i=0} , \ \ \  i \ne j = 1,2 \label{rcc7} \\
\rho_8(\Lambda) &:=& L_5(0,...,0, \Lambda) \label{rcc8} \\
\rho_{9}(\Lambda) \ I^{\mu \nu \rho \sigma} &:=& \partial^\mu_{1,2} \partial^\nu_{1,2} \partial^\rho_{1,2} \partial^\sigma_{1,2} L_2(k_1,k_2,\Lambda)|_{k_i=0} \\
\rho_{10}(\Lambda) \ \delta^{\mu \nu} &:=& \partial^\mu_{i,4} \partial^\nu_{i,4} L_4(k_1,k_2,k_3,k_4,\Lambda)|_{k_i=0} ,  \ \ \ i=1...3 \\
 \rho_{11}(\Lambda)  \ \delta^{\mu \nu} &:=& \partial_{i,4}^{\mu} \partial_{j,4}^{\nu}  L_4(k_1,k_2,k_3,k_4,\Lambda)|_{k_i=0} , \ \ \  i \ne j = 1...3 \\ 
\rho_{12}(\Lambda) &:=& L_6(0,...,0,\Lambda)  \label{rcc12}
\end{eqnarray}
where $I^{\mu \nu \rho \sigma}:= \delta^{\mu \nu} \delta^{\rho  \sigma} + \delta^{\mu \rho} \delta^{\nu  \sigma} + \delta^{\mu \sigma} \delta^{\nu  \rho}$. The couplings ($\ref{rcc6}$)-($\ref{rcc12}$) are indeed nonrenormalizable since their mass dimensions $D_{\rho_{\tilde{a}}}$ satisfy
\begin{eqnarray}
D_{\rho_{\tilde{a}}} = \left\{ \begin{array}{c c c} -1  & ,& {\tilde{a}}=6,7,8 \\ -2  &,  &  {\tilde{a}}=9,...,12     \ , \end{array} \right. 
\end{eqnarray}
see eq. ($\ref{DL}$).
In the following, we will restrict our considerations to the couplings with $D_{\rho_{\tilde{a}}} \ge -1$, that is from now on ${\tilde{a}}=1...8$ unless stated otherwise.

For the bare nonrenormalizable couplings $\rho_{\tilde{a}}(\Lambda_0)$, we define initial conditions
\begin{eqnarray}
\rho_{\tilde{a}}(\Lambda_0) := \rho_{\tilde{a}}^0, \ \ \ \tilde{a} =6...8. \label{iniCNR}
\end{eqnarray}
In analogy to the notation introduced in eq. ($\ref{dCR}$), we employ
\begin{eqnarray}
\lambda_{\tilde{a}}^0(\Lambda) := \Lambda^{-D_{\rho_{\tilde{a}}}} \rho_{\tilde{a}}^0 . \label{dCNR}
\end{eqnarray}
The vertex functions $L_{n} (k_1,...,k_{n}, \Lambda)$ are expanded in perturbation theory in the renormalized renormalizable couplings $\rho_4^R$ and $\rho_5^R$ introduced in eqns. ($\ref{c3}$)-($\ref{c4}$) and in the bare nonrenormalizable couplings  $\rho_{\tilde{a}}^0, \  \tilde{a} =6...8$, of eqns. ($\ref{iniCNR}$):
\begin{eqnarray}
L_{n} (k_1,...,k_{n}, \Lambda) = \sum_{r_1, ...,r_{9}=0}^{\infty} (\rho_4^R)^{r_1} (\rho_5^R)^{r_2} (\rho_{6}^0)^{r_3} (\rho_{7}^0)^{r_4}  (\rho_{8}^0)^{r_5}  L_{n}^{(r_1,...,r_5)} (k_1,...,k_{n}, \Lambda). \nonumber \\  \label{GenPert}
\end{eqnarray}
We refer to the expansion ($\ref{GenPert}$) as ''generalized perturbation theory'', in contrast to the expansion ($\ref{Pert}$) that is done only in the renormalizable couplings $\rho_4^R$ and $\rho_5^R$. The reason for using bare nonrenormalizable couplings as additional expansion parameters is that in the integrations performed in the inductive proofs of upcoming Theorems  ($\ref{BoundThII}$) and ($\ref{PreTh}$), the initial values for the nonrenormalizable couplings have to be specified at the bare scale $\Lambda_0$.

In the following, we will work with the dimensionless vertex functions $A_n(\Lambda)$ defined in eq. ($\ref{dV0}$). With
\begin{eqnarray}
A_{n}^{(r_1, ..., r_5)} (k_1,...,k_{n}, \Lambda) := \Lambda^{n-4 +r_1 - (r_3 +r_4 +r_5) )}  L^{(r_1, ..., r_5)}_{n} (k_1,...,k_{n}, \Lambda) \label{Apert2_Pert}
\end{eqnarray}
and the definitions ($\ref{dCR}$), ($\ref{dCNR}$) we may write
\begin{eqnarray}
A_{n} (k_1,...,k_{n}, \Lambda) = \sum_{r_1, ...,r_{5}=0}^{\infty} (\lambda_4^R)^{r_1} (\lambda_5^R)^{r_2} (\lambda_{6}^0)^{r_3} (\lambda_{7}^0)^{r_4} (\lambda_{8}^0)^{r_5}  A_{n}^{(r_1,...,r_5)} (k_1,...,k_{n}, \Lambda). \nonumber \\ \label{Apert2}
\end{eqnarray}
The perturbative expansion ($\ref{Apert2}$) is sensible only for small dimensionless couplings. Therefore we again impose eq. ($\ref{smallR}$) as an additional constraint to the renormalization conditions ($\ref{rc1}$)-($\ref{c4}$) and equally demand that the initial conditions ($\ref{iniCNR}$) are such that
\begin{eqnarray}
\lambda_{\tilde{a}}^0(\Lambda) \le 1, \ \ \ \ {\tilde{a}}=6...8.  \label{smallNR}
\end{eqnarray}
Note that because of the definition ($\ref{dCNR}$), this implies $\rho_{\tilde{a}}^0 \le \Lambda_0^{-1}, \ {\tilde{a}}=6...8$, for $\Lambda \le \Lambda_0$ and thus
\begin{equation}
\lambda_{\tilde{a}}^0(\Lambda) \le \frac{\Lambda}{\Lambda_0}, \ \ \ \ {\tilde{a}}=6...8.  \label{smallNRExp}
\end{equation}
To $0 th$ order in perturbation theory in all coupling constants the vertex functions vanish:
\begin{eqnarray}
 A_{n}^{(0,...,0)} (k_1,...,k_{n}, \Lambda) = 0. \label{zero2}
\end{eqnarray}
Finally, we define overall orders in perturbation theory via
\begin{eqnarray}
r &:=& r_1 +... + r_5 \\
r_{NR} &:=& r_3+ r_4 + r_5 .
\end{eqnarray}
$r_{NR}$ gives the overall order in perturbation theory in the bare nonrenormalizable couplings $\lambda_{6}^0$,  $\lambda_{7}^0$ and $\lambda_{8}^0$.

We rewrite the  RGE ($\ref{RGE0}$) in terms of the dimensionless vertex functions  ($\ref{dV0}$) and express the resulting equation in perturbation theory in the couplings $\lambda_4^R$, $\lambda_5^R$, $\lambda_{6}^0$,  $\lambda_{7}^0$ and $\lambda_{8}^0$. Applying the bounds ($\ref{q1}$) and ($\ref{q2}$) as well as the condition ($\ref{smallg}$) for the renormalization constant $A$ we arrive at 
\begin{eqnarray} 
&& \hspace{-0.7cm} \lVert \frac{d}{d \Lambda} \Lambda^{4-n-r_1 + r_{NR} } \partial^p A^{(r_1,..., r_5)}_{n}(\Lambda)  \rVert \nonumber \\   && \qquad \qquad \qquad \le   c_{n,p} \ \Lambda^{3-n-r_1 + r_{NR} } \Bigg( \lVert \partial^p A^{(r_1,..., r_5)}_{n+2}(\Lambda) \rVert + \lVert \partial^p A^{(r_1,..., r_5)}_{n+1}(\Lambda) \rVert \nonumber \\ && \qquad \qquad \qquad  \qquad \qquad +  \sum_{...} \Lambda^{-p_1} \lVert \partial^{p_2} A_{l}^{(s_1,..., s_5)}(\Lambda) \rVert \lVert \partial^{p_3} A^{(r_1-s_1,..., r_5-s_5)}_{n+2-l}(\Lambda) \rVert \Bigg) \nonumber \\ \label{RGI1_NR}
\end{eqnarray} 
where
\begin{eqnarray}
 \sum_{...} \ := \sum_{p_1,p_2,p_3: \ \sum p_i =p} \ \sum_{l=1}^{n} \ \sum_{\substack{s_1,..., s_5=0 \\ 1 \le s \le r-1 }}^{r_1,..., r_5}  .  \label{sumabb}
\end{eqnarray} 
The RGI ($\ref{RGI1_NR}$) is the analogon to the RGI ($\ref{RGI1}$) of chapter ($\ref{RenFlow}$). Integrating  ($\ref{RGI1_NR}$) with respect to $\Lambda$ and differentiating the resulting RGIs with respect to $\Lambda_0$, we can easliy deduce RGIs that correspond to the RGIs ($\ref{RGI2}$), ($\ref{RGI3}$), ($\ref{RGI4}$) and ($\ref{RGI5}$) of chapter ($\ref{RenFlow}$). The same holds true for the RGIs ($\ref{RGI3U}$) and ($\ref{RGI5U}$). We leave this step to the reader.
 
\end{subsection}

\begin{subsection}{Boundedness of vertex functions in generalized perturbation theory}
We establish new bounds for the norms $\lVert  \partial^p A^{(r_1, ..., r_5)}_{n}(\Lambda) \rVert$ of the vertex functions in generalized perturbation theory. These bounds can be viewn as the unification of the improved bounds of Keller, Kopper and Salmhofer \cite{KKS} and the results of Wieczerkowski \cite{Wiec} concerning boundedness of vertex functions in generalized perturbation theory. The bounds established in Theorem ($\ref{BoundTh}$) are included as the special case of $0th$ order in perturbation theory in the nonrenormalizable couplings.

\begin{satz}[Boundedness II] \label{BoundThII}
Given the renormalization conditions ($\ref{rc1}$)-($\ref{c4}$) and the initial conditions ($\ref{iniCNR}$), and assuming that
\begin{eqnarray}
||\partial^p A_{n}^{(r_1,..., r_5)}(p_1,...,p_{n}, \Lambda_0)|| \le \Lambda_0^{-p}  \left( \frac{\Lambda_0}{\Lambda_R} \right)^{r_1} Pln \left( \frac{\Lambda_0}{\Lambda_R} \right) \label{iniNR}
\end{eqnarray} 
for $n+p \ge 6$, to order $r_1,..., r_5$ in perturbation theory in $\lambda_4^R$, $\lambda_5^R$, $\lambda_{6}^0$,  $\lambda_{7}^0$ and $\lambda_{8}^0$
\begin{eqnarray}
&& ||\partial^p A_{n}^{(r_1,..., r_5)}(p_1,...,p_{n}, \Lambda)|| \nonumber \\ &&  \qquad \qquad  \qquad  \le \Lambda^{-p} \left( \frac{\Lambda}{\Lambda_R} \right)^{r_1} \left( \frac{\Lambda_0}{\Lambda} \right)^{r_{NR}}   \left( \delta_{r_{NR}, 0} \ Pln\left( \frac{\Lambda}{\Lambda_R} \right) +  \frac{\Lambda}{\Lambda_0}  Pln \left( \frac{\Lambda_0}{\Lambda_R} \right) \right) \nonumber \\ \label{BoundNR}
\end{eqnarray}
where $Pln(z)$ denotes some polynomial in $ln(z)$ whose coefficients are taken to be nonnegative and $\Lambda_R \le \Lambda \le \Lambda_0$.

\end{satz}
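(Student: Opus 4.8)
The plan is to run the proof of Theorem ($\ref{BoundTh}$) again, now carrying the three extra perturbative orders $r_3,r_4,r_5$ that count powers of the bare nonrenormalizable couplings $\rho_6^0,\rho_7^0,\rho_8^0$, with the nonrenormalizable order $r_{NR}=r_3+r_4+r_5$ as the new bookkeeping parameter. One proves ($\ref{BoundNR}$) by the same double induction as there: an outer induction on the total order $r=r_1+\dots+r_5$ and an inner induction on the number of external legs $n$, ordered so that every term on the right-hand side of the key inequality ($\ref{RGI1_NR}$) has either strictly smaller $r$, or equal $r$ and strictly larger $n$. The induction start is $r=0$, where all vertex functions vanish by ($\ref{zero2}$), together with $n>2r+2$, where they vanish too. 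The case $r_{NR}=0$ is nothing new: there ($\ref{BoundNR}$) is word for word the bound ($\ref{Bound}$), the vertex functions $A_n^{(r_1,r_2,0,0,0)}$ coincide with the $A_n^{(r_1,r_2)}$ of chapter ($\ref{RenFlow}$), and the hypotheses reduce to those of Theorem ($\ref{BoundTh}$), so it is already established. The genuine content is $r_{NR}\ge 1$, where the Kronecker symbol $\delta_{r_{NR},0}$ switches off the \emph{unsuppressed} term $Pln(\Lambda/\Lambda_R)$ and leaves only the $\tfrac{\Lambda}{\Lambda_0}Pln(\Lambda_0/\Lambda_R)$ piece, dressed with $(\Lambda_0/\Lambda)^{r_{NR}}$.

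For the induction step I would integrate ($\ref{RGI1_NR}$) over the scale to obtain the analogues of ($\ref{RGI2}$) and ($\ref{RGI4}$) (the ones left to the reader earlier in this section) and then split into the familiar three cases. For $n+p\ge 6$: integrate down from $\Lambda_0$, insert the bare bound ($\ref{iniNR}$) at $\Lambda_0$ and the induction hypothesis ($\ref{BoundNR}$) under the integral; the clean point is that the factors $s^{-p}(s/\Lambda_R)^{r_1}(\Lambda_0/s)^{r_{NR}}$ from the hypothesis cancel the explicit $s$-dependent $r_1$- and $r_{NR}$-powers in the prefactor $s^{3-n-r_1+r_{NR}}$ of ($\ref{RGI1_NR}$), so that one is left precisely with the integral $\int_\Lambda^{\Lambda_0}ds\,s^{3-n-p}(\cdots)$ of Theorem ($\ref{BoundTh}$), times only the inert constant $\Lambda_R^{-r_1}\Lambda_0^{r_{NR}}$; Lemma ($\ref{l}$) and multiplication by $\Lambda^{n-4+r_1-r_{NR}}$ then reproduce ($\ref{BoundNR}$). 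For $n+p=5$: the same computation, but the $\Lambda_0$-boundary term is now the initial condition ($\ref{iniCNR}$), which in generalized perturbation theory is the Kronecker-delta datum $\partial^p A_n^{(r_1,\dots,r_5)}(\Lambda_0)\sim\Lambda_0^{-p}$ precisely for $r_1=r_2=0$, $r_{NR}=1$ (and zero otherwise) and is readily seen to be majorized by the right-hand side of ($\ref{BoundNR}$) at $\Lambda=\Lambda_0$. For $n+p\le 4$: integrate up from $\Lambda_R$, where the renormalization conditions ($\ref{rc1}$)--($\ref{c4}$) supply the data; these hold to all orders in generalized perturbation theory, so the data vanish for $r\ge 1$ except where they reproduce $\rho_4^R,\rho_5^R$, and one closes the argument by reconstructing the momentum dependence of $A_n^{(r_1,\dots,r_5)}(k_1,\dots,k_n,\Lambda)$ from its zero-momentum Taylor coefficients via ($\ref{TE0}$), ($\ref{TE1}$), ($\ref{TE2}$), the remainders falling into the already-settled range $n+p\ge 5$.

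The step that needs care, and the only one genuinely beyond Theorem ($\ref{BoundTh}$), is to check that the convolution term in ($\ref{RGI1_NR}$) never regenerates the unsuppressed polynomial $\delta_{r_{NR},0}Pln(\Lambda/\Lambda_R)$ once $r_{NR}\ge 1$. The mechanism is short but must be spelled out: in that sum $r_{NR}$ splits additively, $r_{NR}=s_{NR}+(r_{NR}-s_{NR})$ with $s_{NR}=s_3+s_4+s_5$, so for $r_{NR}\ge 1$ at least one of the two factors carries a positive nonrenormalizable order and hence, by the induction hypothesis, only the $\tfrac{\Lambda}{\Lambda_0}$-suppressed contribution; the other factor is bounded by $2\,Pln(\Lambda_0/\Lambda_R)$ (using $s\le\Lambda_0$), and the product is again $\tfrac{s}{\Lambda_0}$-suppressed, exactly as the target demands. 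The additive splittings $r_1=s_1+(r_1-s_1)$ and $r_{NR}=s_{NR}+(r_{NR}-s_{NR})$ produce the $(\Lambda/\Lambda_R)^{r_1}$ and $(\Lambda_0/\Lambda)^{r_{NR}}$ factors with the right exponents, and the derivative powers $p=p_1+p_2+p_3$ combine with the $\Lambda^{-p_1}$ from $\tfrac{d}{d\Lambda}\Delta_\Lambda$ just as before. Everything else --- the $A_{n+1}$ and $A_{n+2}$ ``tadpole'' terms, the smallness constraints ($\ref{smallR}$), ($\ref{smallNR}$), ($\ref{smallm}$), ($\ref{smallg}$), and the absorption of stray polynomials $P(\ln\Lambda)$ into $Pln$ --- goes through verbatim from the scalar case, and I would simply invoke it.
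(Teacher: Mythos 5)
Your proposal follows the paper's proof essentially verbatim: the same reduction of the $r_{NR}=0$ case to Theorem (\ref{BoundTh}), the same double induction in $(r,n)$ using the analogues of the RGIs (\ref{RGI2}) and (\ref{RGI4}), the same three-way case split at $n+p\ge 6$, $n+p=5$, $n+p\le 4$, and the same observation that for $r_{NR}\ge 1$ all initial data at $\Lambda_R$ vanish; your explicit check that the bilinear term cannot regenerate the unsuppressed $Pln(\Lambda/\Lambda_R)$ piece is a welcome elaboration of what the paper leaves implicit. The one slip is the induction start: because $\rho_8^0$ (a five-point coupling) is now an expansion parameter, the vanishing condition is $n>3r+2$ rather than $n>2r+2$, as the paper notes — with your weaker bound the inner downward induction on $n$ would lack a base for $2r+2<n\le 3r+2$.
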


\begin{proof} At first we note that to $0th$ order in perturbation theory in the bare nonrenormalizable couplings  $\lambda_{6}^0$,  $\lambda_{7}^0$ and $\lambda_{8}^0$, that is for $r_{NR} = 0$, Theorem ($\ref{BoundThII}$) reduces to Theorem ($\ref{BoundTh}$) that has already been proven. We therefore proceed to the case $r_{NR} > 0$ and follow the induction scheme employed in the proof of Theorem ($\ref{BoundTh}$), that is the proof is done via induction in both the overall order in perturbation theory, $r=r_1+...+r_5$, and the number of external legs, $n$, of the vertex functions $A_n^{(r_1, ..., r_5)}$. The induction start is now given by 
\begin{eqnarray}
\{(r,n): r=0  \ \wedge \ n \in  \mathbb{N} \} \vee \{ (r,n): r \ge 1 \ \wedge \ n > 3r+2  \} 
\end{eqnarray}
because of the perturbation theory in the bare couplings.
\begin{enumerate}
\item \textbf{The case} $p+n \ge 6$: On the RHS of the analogon to the RGI ($\ref{RGI2}$), we plug in eq. ($\ref{iniNR}$) as the initial condition at $\Lambda_0$ and eq. ($\ref{BoundNR}$) as the induction hypothesis. The integrals can be solved easily and as a result, ($\ref{BoundNR}$) is established.

\item \textbf{The case} $p+n = 5$: Plugging eqns. ($\ref{iniCNR}$) as initial conditions at $\Lambda_0$ and eq. ($\ref{BoundNR}$) as the induction hypothesis into the analogon to the RGI ($\ref{RGI2}$), the bound ($\ref{BoundNR}$) can be proven for all expansion coefficients ($\ref{rcc6}$)-($\ref{rcc8}$). We will demonstrate this for the example $n=3$, $p=2$:
\begin{eqnarray}
\hspace{-0.5cm} ||\Lambda^{1-r_1 + r_{NR} } \partial^2 A^{(r_1, ..., r_5)}_3(k_1,k_2,k_3,\Lambda)|_{k_i=0}|| &\le& \delta^{r 1} \delta^{r_6 1} + \delta^{r 1} \delta^{r_7 1} \nonumber \\ && + \ \Lambda_R^{-r_1} \Lambda_0^{r_{NR}}   \int_{\Lambda}^{\Lambda_0} ds \ s^{-2}  \frac{s}{\Lambda_0}   Pln \left( \frac{\Lambda_0}{\Lambda_R}  \right) . \nonumber \\
\end{eqnarray} 
Solving the intergral and multiplying the whole inequality with $\Lambda^{r_1 -1 - r_{NR} }$ yields ($\ref{BoundNR}$).

\item \textbf{The case} $p+n \le 4$: The bound ($\ref{BoundNR}$) has to be proven for the expansion coefficients ($\ref{rcc1}$) -($\ref{rcc5}$). We start with the analogon to the RGI ($\ref{RGI4}$) and plug in eq. ($\ref{BoundNR}$) as the induction hypothesis. Because we are treating the case $r_{NR} > 0$, all initial conditions at $\Lambda_R$ vanish.  Solving the integrals establishes the bound ($\ref{BoundNR}$).

\end{enumerate}
\begin{flushright}
$\Box$
\end{flushright}
\end{proof}

\end{subsection}

\begin{subsection}{Inversion of the RG trajectory}  \label{InvSec}

At the renormalization scale $\Lambda_R$ we impose improvement conditions for the nonrenormalizable couplings $\rho_{\tilde{a}}, \ \tilde{a}=6...12$. The canonical dimensions of the couplings for which improvement conditions are defined determines an improvement index $s$ that has been introduced in section ($\ref{EffFlowOver}$):
\begin{eqnarray}
 s=1: && \rho_{\tilde{a}}(\Lambda_R) = \rho^{NR}_{\tilde{a}} , \ \  {\tilde{a}}=6,...,8    \label{impro1} \\  
 s=2: && \rho_{\tilde{a}}(\Lambda_R) = \rho^{NR}_{\tilde{a}}   , \ \  {\tilde{a}}=6,...,12       \label{impro2} \\
 \vdots \ \ \ \ \ &  & \ \ \ \ \ \ \ \ \ \ \ \vdots   \nonumber 
\end{eqnarray}
We restrict ourselves to the case $s=1$. In analogy to eqns. ($\ref{dCR}$), ($\ref{dCNR}$) we introduce dimensionless versions of the couplings ($\ref{impro1}$):
\begin{eqnarray}
\lambda^{NR}_{\tilde{a}}(\Lambda) := \Lambda^{- D_{\rho_{\tilde{a}}} } \rho^{NR}_{\tilde{a}}. \label{improdl}
\end{eqnarray}
For small initial values of the nonrenormalizable couplings as implied by eq. ($\ref{smallNR}$), 
\begin{eqnarray}
\lambda_{\tilde{a}}^0(\Lambda_0) \le 1, \ \ \ {\tilde{a}}=6,...,8, \label{smalliniNR}
\end{eqnarray}
the improvement conditions ($\ref{impro1}$) cannot be chosen freely. It follows from Theorem ($\ref{ConvTh}$) and eq. ($\ref{conv2}$) that for initial conditions $\grave{a}$ la eq.  ($\ref{smalliniNR}$) we have
\begin{eqnarray}
|| \lambda_{\tilde{a}}^{(r_1, r_2)}(\Lambda_R, \Lambda_0)- \lambda_{\tilde{a}}^{cont \ (r_1, r_2)}(\Lambda_R)  ||  & \le & \left( \frac{\Lambda}{\Lambda_R} \right)^{r_1} \frac{\Lambda_R}{\Lambda_0} \ Pln \left( \frac{\Lambda_0}{\Lambda_R} \right)  , \ \ \ \ \tilde{a}=6...8, \nonumber \\ \label{conv2p}
\end{eqnarray}
where $\lambda_{\tilde{a}}^{cont \ (r_1, r_2)}(\Lambda_R):= \lambda_{\tilde{a}}^{(r_1, r_2)}(\Lambda_R, \infty)$ and the index $(r_1,r_2)$ refers to perturbation theory in the renormalizable couplings $\lambda_4^R$ and $\lambda_5^R$. Since the latter are small according to eqns. ($\ref{smallR}$), ($\ref{smallRExp}$) we conclude that for small initial values ($\ref{smalliniNR}$) the running nonrenormalizable couplings must satisfy at $\Lambda=\Lambda_R$
\begin{eqnarray}
|| \lambda_{\tilde{a}}(\Lambda_R, \Lambda_0)- \lambda_{\tilde{a}}^{cont}(\Lambda_R)  || &\le&  \frac{\Lambda_R}{\Lambda_0} Pln \left( \frac{\Lambda_0}{\Lambda_R} \right)  , \ \ \ \ \tilde{a}=6...8, \label{conv2b}
\end{eqnarray}
in agreement to the discussion at the beginning of section ($\ref{EffFlowOver}$).

In the following, it will be proven in perturbation theory that the inversion is also true: for given improvement conditions ($\ref{improdl}$) that satisfy
\begin{eqnarray}
|| \lambda_{\tilde{a}}^{NR}(\Lambda_R)- \lambda_{\tilde{a}}^{cont}(\Lambda_R)  || &\le&  \frac{\Lambda_R}{\Lambda_D} Pln \left( \frac{\Lambda_D}{\Lambda_R} \right)  , \ \ \ \Lambda_D > \Lambda_R,  \ \ \ \tilde{a}=6...8,  \ \ \label{con1}
\end{eqnarray}
there exist small initial conditions $\grave{a}$ la eq.  ($\ref{smalliniNR}$) such that the improvement conditions can be met:
\begin{eqnarray}
\lambda_{\tilde{a}}(\Lambda_R, \Lambda_0) = \lambda_{\tilde{a}}^{NR}(\Lambda_R)   \label{con2}
\end{eqnarray}
for an UV cutoff scale $\Lambda_0$ that obeys $\Lambda_0 \le \Lambda_D$. 

\bigskip

We begin by expanding the running dimensionless coupling constants $\lambda_{\tilde{a}}(\Lambda ), \ \tilde{a}=6...8,$ in perturbation theory in the renormalized renormalizable couplings $\lambda_4^R$ and $\lambda_5^R$ and in the bare nonrenormalizable couplings  $\lambda_6^0$, $\lambda_7^0$ and $\lambda_8^0$:
\begin{eqnarray}
\lambda_{\tilde{a}} (\Lambda) = \sum_{r_1, ...,r_{5}=0}^{\infty} \lambda_{\tilde{a}}^{(r_1,...,r_5)} (\Lambda)  (\lambda_4^R)^{r_1} (\lambda_5^R)^{r_2} (\lambda_{6}^0)^{r_3} \lambda_{7}^0)^{r_4}  (\lambda_{8}^0)^{r_5}  .  \label{Lambpert}
\end{eqnarray}
Eq. ($\ref{Lambpert}$) follows essentially from ($\ref{Apert2}$) and the definitions ($\ref{rcc6}$)-($\ref{rcc8}$). Next, we introduce auxiliary variables $\overline{\lambda}_{\tilde{a}}(\Lambda)$ which are defined as the values the running dimensionless couplings take for vanishing initial conditions ($\ref{iniCNR}$). This can also be viewed as the $0th$ order in perturbation theory in the nonrenormalizable couplings:
\begin{eqnarray}
\overline{\lambda}_{\tilde{a}}(\Lambda):= \sum_{\substack{r_1, ...,r_{5}=0 \\ r_{NR}=0}}^{\infty} \lambda_{\tilde{a}}^{(r_1,...,r_5)} (\Lambda)  (\lambda_4^R)^{r_1} (\lambda_5^R)^{r_2} (\lambda_{6}^0)^{r_3} (\lambda_{7}^0)^{r_4}  (\lambda_{8}^0)^{r_5}  \label{zRcoup}
\end{eqnarray}
where $r_{NR}$ is again the overall order in perturbation theory in the nonrenormalizable couplings. Defining the deviations
\begin{eqnarray}
\Delta \lambda_{\tilde{a}}(\Lambda) := \lambda_{\tilde{a}} (\Lambda) -\overline{\lambda}_{\tilde{a}}(\Lambda)
\end{eqnarray}
we may write
\begin{eqnarray}
\Delta \lambda_{\tilde{a}}(\Lambda) = \sum_{\substack{r_1, ...,r_{5}=0 \\ r_{NR} \ge 1}}^{\infty} \lambda_{\tilde{a}}^{(r_1,...,r_5)} (\Lambda)  (\lambda_4^R)^{r_1} (\lambda_5^R)^{r_2} (\lambda_{6}^0)^{r_3} (\lambda_{7}^0)^{r_4}  (\lambda_{8}^0)^{r_5}  . \label{devi}
\end{eqnarray}
Furthermore, we note that
\begin{eqnarray}
\lambda_{\tilde{a}}^{(0,0,1,0,0)} &=& \delta_{\tilde{a}, 6} \\
\lambda_{\tilde{a}}^{(0,0,0,1,0)} &=& \delta_{\tilde{a}, 7} \\
\lambda_{\tilde{a}}^{(0,0,0,0,1)} &=& \delta_{\tilde{a}, 8} .
\end{eqnarray}
Thus, for $\tilde{a}=6...8$
\begin{eqnarray}
\lambda_{\tilde{a}}^0= \sum_{\substack{r_1, ...,r_{5}=0 \\ r_{NR} \ge 1 \\ r=1 }}^{\infty} \lambda_{\tilde{a}}^{(r_1,...,r_5)} (\Lambda)  (\lambda_4^R)^{r_1} (\lambda_5^R)^{r_2} (\lambda_{6}^0)^{r_3} (\lambda_{7}^0)^{r_4}  (\lambda_{8}^0)^{r_5}
\end{eqnarray}
and we finally arrive at
\begin{eqnarray}
\lambda_{\tilde{a}}^0(\Lambda) = \Delta \lambda_{\tilde{a}}(\Lambda) - \sum_{\substack{r_1, ...,r_{5}=0 \\ r_{NR} \ge 1 \\ r \ge 2 }}^{\infty} \lambda_{\tilde{a}}^{(r_1,...,r_5)} (\Lambda)  (\lambda_4^R)^{r_1} (\lambda_5^R)^{r_2} (\lambda_{6}^0)^{r_3} (\lambda_{7}^0)^{r_4}  (\lambda_{8}^0)^{r_5} . \label{inv}
\end{eqnarray}
Eq. ($\ref{devi}$) gives us the deviations $\Delta \lambda_{\tilde{a}} (\Lambda)$ in perturbation theory in the renormalized renormalizable couplings $\lambda_4^R$ and $\lambda_5^R$ and the bare nonrenormalizable couplings  $\lambda_6^0$, $\lambda_7^0$ and $\lambda_8^0$, whereas eq. ($\ref{inv}$) serves as the starting point for the inversion. This will become clear in the following.

We expand the couplings $\lambda_6^0$, $\lambda_7^0$ and $\lambda_8^0$ in perturbation theory in $\lambda_4^R$ and $\lambda_5^R$ and in the deviations 
\begin{eqnarray}
\Delta \lambda_{\tilde{a}}^R(\Lambda) := ({\Lambda_R}/{\Lambda})^{D_{\rho_{\tilde{a}}}} \Delta \lambda_{\tilde{a}}(\Lambda_R)  , \ \ \ {\tilde{a}}=6,...,8.   \label{devR}
\end{eqnarray}
The result is
\begin{eqnarray}
\lambda_{\tilde{a}}^0(\Lambda) =  \sum  \lambda_{\tilde{a}}^{0 \ (l_1, ..., l_5)}(\Lambda) (\lambda_4^R)^{l_1} (\lambda_5^R)^{l_2} (\Delta \lambda_6^R)^{l_3} (\Delta \lambda_{7}^R)^{l_4}  (\Delta \lambda_{8}^R)^{l_5}.   \label{L_0pert}
\end{eqnarray}
For $\tilde{a}=6...8$ we have
\begin{eqnarray}
\lambda_{\tilde{a}}^{0 \ (0, 0, 0, 0 , 0)} &=& \lambda_{\tilde{a}}^{0 \ (1, 0, 0, 0 ..., 0)} = \lambda_{\tilde{a}}^{0 \ (0, 1, 0, 0 ..., 0)} =0  \\
\lambda_{\tilde{a}}^{0 \ (0, 0, 1, 0 , 0)} &=& \delta_{\tilde{a}, 6}  \\
\lambda_{\tilde{a}}^{0 \ (0, 0, 0, 1 , 0)} &=& \delta_{\tilde{a}, 7}  \\
\lambda_{\tilde{a}}^{0 \ (0, 0, 0, 0 , 1)} &=& \delta_{\tilde{a}, 8} .
\end{eqnarray}
In order to keep the complexity of the upcoming equations under control, some notations have to be introduced:
\begin{eqnarray} 
\mathbf{A}_{i_1} &:=& (A_{i_1}^1, ..., A_{i_1}^5 ), \ \ \ i_1=1...r_3 \\
\mathbf{B}_{i_2} &:=& (B_{i_2}^1, ..., B_{i_2}^5 ),  \ \ \ i_2=1...r_4 \\
\mathbf{C}_{i_3} &:=& (C_{i_3}^1, ..., C_{i_3}^5 ), \ \ \ i_3=1...r_5,
\end{eqnarray} 
the ''norms''
\begin{eqnarray}
|\mathbf{A}_{i_1}| &:=& A_{i_1}^1+ ...+ A_{i_1}^5, \ \ \ i_1=1...r_3 \\
& \vdots &   \nonumber
\end{eqnarray}
and
\begin{eqnarray}
\mathbf{T} &:=& \mathbf{A}_1+...+ \mathbf{A}_{r_3} +  \mathbf{B}_1+...+ \mathbf{C}_{r_5}.
\end{eqnarray}
We now plug the perturbative expansions ($\ref{L_0pert}$) into eq. ($\ref{inv}$) and arrive at a system of equations for the coefficients $\lambda_{\tilde{a}}^{0 \ (l_1, ..., l_5) }(\Lambda)$ of the expansion ($\ref{L_0pert}$):
\begin{eqnarray}
&& \hspace{-0.7cm} \lambda_{\tilde{a}}^{0 \ (l_1, ..., l_5) }(\Lambda) \nonumber \\  && =  \Delta \lambda^{(l_1, ..., l_5)}_{\tilde{a}}(\Lambda)  - \sum_{\substack{s_1, s_2=0 \\ s_1+s_2 \le l_1+l_2-1}}^{l_1, l_2} \sum_{\substack{r_3, r_4 ,r_{5}=0 \\ r_{NR} \ge 1} }^{\infty} \ \sum^\infty_{ \begin{subarray}{l}  \mathbf{A}_1... \mathbf{A}_{r_3}  \\ \mathbf{B}_1... \mathbf{B}_{r_4}  \\ \mathbf{C}_1... \mathbf{C}_{r_5}=0  \end{subarray}} \Big( \delta_{T^1, s_1} \delta_{T^2, s_2} \delta_{T^3, l_3} \delta_{T^4, l_4}  \delta_{T^5, l_5}   \nonumber \\ &&  \ \quad  \qquad \qquad \qquad  \qquad  \qquad \left.  \lambda_{\tilde{a}}^{(l_1-s_1,l_2-s_2, r_3,r_4,r_5)} (\Lambda)   \lambda_{6}^{0 \ \mathbf{A}_1}... \lambda_{6}^{0 \ \mathbf{A}_{r_3}} \lambda_{7}^{0 \ \mathbf{B}_{1}} ... \lambda_{8}^{0 \ \mathbf{C}_{r_5}}   \right) \nonumber \\&& \quad \ - \sum_{\substack{r_3, r_4 ,r_{5}=0 \\ r_{NR} \ge 2} }^{\infty} \ \sum^\infty_{ \begin{subarray}{l}  \mathbf{A}_1... \mathbf{A}_{r_3}  \\ \mathbf{B}_1... \mathbf{B}_{r_4}  \\ \mathbf{C}_1... \mathbf{C}_{r_5}=0  \end{subarray}} \left( \delta_{T^1, l_1}  ... \delta_{T^5, l_5}   \lambda_{\tilde{a}}^{(0,0, r_3,r_4,r_5)} (\Lambda) \lambda_{6}^{0 \ \mathbf{A}_1}... \lambda_{6}^{0 \ \mathbf{A}_{r_3}} \lambda_{7}^{0 \ \mathbf{B}_{1}} ... \lambda_{8}^{0 \ \mathbf{C}_{r_5}}    \right) \nonumber \\ \label{MainInv}
\end{eqnarray}
where $\tilde{a}=6...8$. Note that with $l:=l_1+ ...+ l_5$, 
\begin{eqnarray}
\Delta \lambda^{(l_1, ..., l_5)}_{6}(\Lambda_R) &=& \delta_{l_3, 1 } \delta_{l, 1} \label{weissnicht1} \\
\Delta \lambda^{(l_1, ..., l_5)}_{7}(\Lambda_R) &=& \delta_{l_4, 1 } \delta_{l, 1} \\
\Delta \lambda^{(l_1, ..., l_5)}_{8}(\Lambda_R) &=& \delta_{l_5, 1 } \delta_{l, 1} .  \label{weissnicht2}
\end{eqnarray}
Eq. ($\ref{MainInv}$) is our key equation for the inversion of the RG trajectory, as is shown by the following theorem.

\begin{satz}[Inversion of the RG trajectory]  \label{invTh}
For $\tilde{a}=6...8$ and $l_{\Delta}:= l_3+ l_4+ l_5$ let 
\begin{eqnarray}
|| \Delta \lambda^{(l_1, ..., l_5)}_{\tilde{a}}(\Lambda) || &\le&  \left( \frac{\Lambda}{\Lambda_R} \right)^{l_1} \left( \frac{\Lambda_0}{\Lambda} \right)^{l_\Delta}  \frac{\Lambda}{\Lambda_0}  Pln \left( \frac{\Lambda_0}{\Lambda_R} \right) .
\end{eqnarray}
Then to order $l_1,...,l_5$ in perturbation theory in $\lambda_4^R$, $\lambda_5^R$ and the deviations $\Delta \lambda_6^R$, $\Delta \lambda_7^R$ and $\Delta \lambda_8^R$
\begin{eqnarray}
|| \lambda^{0 \ (l_1, ..., l_5)}_{\tilde{a}}(\Lambda) || &\le&  \left( \frac{\Lambda}{\Lambda_R} \right)^{l_1} \left( \frac{\Lambda_0}{\Lambda} \right)^{l_\Delta} \frac{\Lambda}{\Lambda_0}  Pln \left( \frac{\Lambda_0}{\Lambda_R} \right)    \label{smalliniP}
\end{eqnarray}
where $\Lambda_R \le \Lambda \le \Lambda_0$.
\end{satz}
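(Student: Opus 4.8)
The plan is to establish the bound (\ref{smalliniP}) by induction on the total order $l := l_1 + l_2 + l_3 + l_4 + l_5$, using the explicit recursion (\ref{MainInv}) for the coefficients $\lambda_{\tilde a}^{0\,(l_1,\ldots,l_5)}(\Lambda)$. The point of (\ref{MainInv}) is that its right-hand side is assembled from only three types of building blocks, each of which is already under control: first, the deviation coefficient $\Delta\lambda_{\tilde a}^{(l_1,\ldots,l_5)}(\Lambda)$, which satisfies \emph{exactly} the claimed bound by the hypothesis of the theorem; second, the coefficients $\lambda_{\tilde a}^{(r_1,\ldots,r_5)}(\Lambda)$ of the running dimensionless couplings, which by the definitions (\ref{rcc6})--(\ref{rcc8}) together with (\ref{dV0}) and (\ref{Apert2}) are, up to fixed combinatorial constants, zero-momentum values of momentum derivatives of the dimensionless vertex functions $A_n^{(r_1,\ldots,r_5)}$ (with $n=3,\ p=2$ for $\tilde a=6,7$ and $n=5,\ p=0$ for $\tilde a=8$, so $n+p=5$ in all cases), hence bounded by Theorem (\ref{BoundThII}); and third, products $\lambda_6^{0\,\mathbf{A}_1}\cdots\lambda_8^{0\,\mathbf{C}_{r_5}}$ of bare-coupling coefficients of strictly lower total order, to which the induction hypothesis applies.

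First I would dispose of the base cases $l=0,1$: from the explicit relations recorded after (\ref{L_0pert}) one has $\lambda_{\tilde a}^{0\,(0,0,0,0,0)}=\lambda_{\tilde a}^{0\,(1,0,0,0,0)}=\lambda_{\tilde a}^{0\,(0,1,0,0,0)}=0$, which trivially obey (\ref{smalliniP}), while $\lambda_6^{0\,(0,0,1,0,0)}=1$ (and its analogues for $\tilde a=7,8$) is bounded by $(\Lambda_0/\Lambda)(\Lambda/\Lambda_0)\,Pln(\Lambda_0/\Lambda_R)=Pln(\Lambda_0/\Lambda_R)$ once the constant term of $Pln$ is taken $\ge 1$. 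For the induction step I would fix $(l_1,\ldots,l_5)$ with $l\ge 2$ and insert into (\ref{MainInv}) the hypothesis on $\Delta\lambda_{\tilde a}^{(l_1,\ldots,l_5)}$, the bound of Theorem (\ref{BoundThII}) for every factor $\lambda_{\tilde a}^{(\cdots)}(\Lambda)$ (noting that in each surviving term $r_{NR}\ge 1$, so the $\delta_{r_{NR},0}$ alternative of (\ref{BoundNR}) drops out and only the $(\Lambda/\Lambda_0)\,Pln$ piece remains), and the induction hypothesis for every factor $\lambda_{\tilde b}^{0\,\mathbf{A}_i}$. The latter is legitimate because these factors carry total index $|\mathbf{A}_i|\le l-1$: in the first (double) sum of (\ref{MainInv}) since $T^1+\cdots+T^5=s_1+s_2+l_\Delta\le(l_1+l_2-1)+l_\Delta=l-1$, and in the last sum since there $r_{NR}\ge 2$ forces each single factor to have $|\mathbf{A}_i|\le l-(r_{NR}-1)\le l-1$. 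Moreover $\lambda_{\tilde b}^{0\,\mathbf{0}}=0$, so every factor present has $|\mathbf{A}_i|\ge 1$ and the number $r_{NR}$ of factors is correspondingly bounded; hence all sums in (\ref{MainInv}) are finite and combining finitely many bounds merely enlarges the nonnegative-coefficient polynomial $Pln$.

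The arithmetical heart of the argument is the exponent bookkeeping, and this is the step I expect to be the main obstacle. Writing the induction hypothesis for a factor $\lambda_{\tilde b}^{0\,\mathbf{A}}$, $\mathbf{A}=(A^1,\ldots,A^5)$, in the form $(\Lambda/\Lambda_R)^{A^1}(\Lambda_0/\Lambda)^{A^3+A^4+A^5-1}\,Pln$, and the bound from Theorem (\ref{BoundThII}) for the single vertex factor $\lambda_{\tilde a}^{(l_1-s_1,l_2-s_2,r_3,r_4,r_5)}(\Lambda)$ with $r_{NR}=r_3+r_4+r_5\ge 1$ in the form $(\Lambda/\Lambda_R)^{l_1-s_1}(\Lambda_0/\Lambda)^{r_{NR}-1}\,Pln$, one multiplies the bounds and uses the Kronecker constraints $T^1=s_1$ and $T^3+T^4+T^5=l_3+l_4+l_5=l_\Delta$ together with the fact that the product contains exactly $r_{NR}$ of the $\lambda^0$-factors: the total exponent of $\Lambda/\Lambda_R$ becomes $(l_1-s_1)+T^1=l_1$, and the total exponent of $\Lambda_0/\Lambda$ becomes $(r_{NR}-1)+\big(l_\Delta-r_{NR}\big)=l_\Delta-1$, so the whole term is bounded by $(\Lambda/\Lambda_R)^{l_1}(\Lambda_0/\Lambda)^{l_\Delta}(\Lambda/\Lambda_0)\,Pln(\Lambda_0/\Lambda_R)$, which is (\ref{smalliniP}). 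The identical count applies to the $r_{NR}\ge 2$ sum (now with $T^1=l_1$, $T^2=l_2$), while the lone $\Delta\lambda_{\tilde a}^{(l_1,\ldots,l_5)}$ term already carries that bound; summing the three contributions and absorbing constants into $Pln$ closes the induction. The delicate point is that the overall suppression factor $\Lambda/\Lambda_0$ must survive to exactly the first power — an extra power is harmless since $\Lambda\le\Lambda_0$, but a missing power would be fatal — and it is precisely the cancellation $(r_{NR}-1)+\big(l_\Delta-r_{NR}\big)=l_\Delta-1$, together with never encountering a self-referential order-$l$ coefficient on the right-hand side (guaranteed by $s_1+s_2\le l_1+l_2-1$ in the double sum and by $r_{NR}\ge 2$ in the remaining sum), that makes the count come out right.
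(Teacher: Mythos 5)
Your proof is correct and follows essentially the same route as the paper's: induction on the total order $l=l_1+\cdots+l_5$, with the induction step carried out by inserting the hypothesis on $\Delta\lambda_{\tilde a}^{(l_1,\ldots,l_5)}$, the bounds of Theorem (\ref{BoundThII}) for the running-coupling coefficients, and the induction hypothesis for the strictly lower-order bare factors into the recursion (\ref{MainInv}). The explicit exponent bookkeeping you supply (the cancellation yielding $l_\Delta-1$ powers of $\Lambda_0/\Lambda$ and the finiteness of the sums via $\lambda_{\tilde b}^{0\,\mathbf{0}}=0$) is left implicit in the paper but checks out.
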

\begin{proof} The proof is done via induction in the overall order in perturbation theory $l=l_1+...+l_5$ of the couplings $\lambda_4^R$, $\lambda_5^R$ and the deviations $\Delta \lambda_6^R$, $\Delta \lambda_7^R$ and $\Delta \lambda_8^R$. We note that
\begin{eqnarray}
\lambda^{0 \ (l_1, ..., l_5)}_{\tilde{a}}(\Lambda) = \Delta \lambda^{(l_1, ..., l_5)}_{\tilde{a}}(\Lambda) \ \ \ \text{for} \  \ \ l=1,
\end{eqnarray}
and thus the induction start is established. In order to prove the induction step, we make use of eq. ($\ref{MainInv}$). The overall orders in perturbation theory $|\mathbf{A}_{i_1}| $, $|\mathbf{B}_{i_2}| $,  $|\mathbf{C}_{i_3}| $ of all coefficients $\lambda_{6}^{0 \ \mathbf{A}_{i_1}}$, $\lambda_{7}^{0 \ \mathbf{B}_{i_2}}$, $\lambda_{8}^{0 \ \mathbf{C}_{i_3}}$  appearing on rhe RHS of ($\ref{MainInv}$) satisfy
\begin{eqnarray}
|\mathbf{A}_{i_1}|, |\mathbf{B}_{i_2}|, |\mathbf{C}_{i_3}|  &<& l.
\end{eqnarray}
This follows from the constraints on the summation bounds appearing on the RHS of ($\ref{MainInv}$).  We thus plug eq. ($\ref{smalliniP}$) as the induction hypothesis into the RHS of eq. ($\ref{MainInv}$).  The bounds for $\lambda_{\tilde{a}}^{(l_1-s_1,l_2-s_2, r_3,r_4,r_5)} (\Lambda)$ and $\lambda_{\tilde{a}}^{(0,0, r_3,r_4,r_5)} (\Lambda)$ follow from Theorem ($\ref{BoundThII}$) and the induction step can be established.
\begin{flushright}
$\Box$
\end{flushright}
\end{proof}
The couplings $\lambda_4^R$, $\lambda_5^R$ are small as defined in eqns. ($\ref{smallR}$), ($\ref{smallRExp}$). If we demand also smallness of the deviations ($\ref{devR}$),
\begin{eqnarray}
\Delta \lambda_{\tilde{a}}^R(\Lambda) \le 1, \ \ \ {\tilde{a}}=6,...,8, \label{smallinidev}
\end{eqnarray}
Theorem ($\ref{invTh}$) implies that for
\begin{eqnarray}
|| \Delta \lambda_{\tilde{a}}(\Lambda) || &\le&  \frac{\Lambda}{\Lambda_0}  Pln \left( \frac{\Lambda_0}{\Lambda_R} \right)
\end{eqnarray}
we have 
\begin{eqnarray}
||\lambda^0_{\tilde{a}}(\Lambda) || &\le&  \frac{\Lambda}{\Lambda_0}  Pln \left( \frac{\Lambda_0}{\Lambda_R} \right). \label{smallindeed}
\end{eqnarray}
Let us remember that $\overline{\lambda}_{\tilde{a}}(\Lambda)$ refers to the case of vanishing initial conditions ($\ref{iniCNR}$), as follows from the definition ($\ref{zRcoup}$). Thus, Theorem ($\ref{ConvTh}$) and eq. ($\ref{conv2b}$) imply
\begin{eqnarray}
|| \overline{\lambda}_{\tilde{a}}(\Lambda_R, \Lambda_0)- \lambda_{\tilde{a}}^{cont}(\Lambda_R)  || \le  \frac{\Lambda_R}{\Lambda_0} Pln \left( \frac{\Lambda_0}{\Lambda_R} \right)  , \ \ \ \ \tilde{a}=6...8. 
\end{eqnarray}
Since $\Delta \lambda_{\tilde{a}}(\Lambda) = \lambda_{\tilde{a}} (\Lambda) -\overline{\lambda}_{\tilde{a}}(\Lambda)$,  we conclude with the triangle inequality that if
\begin{eqnarray}
|| {\lambda}_{\tilde{a}}(\Lambda_R, \Lambda_0)- \lambda_{\tilde{a}}^{cont}(\Lambda_R)  || \le  \frac{\Lambda_R}{\Lambda_0} Pln \left( \frac{\Lambda_0}{\Lambda_R} \right)  , \ \ \ \ \tilde{a}=6...8,   \label{smalldevcont}
\end{eqnarray}
the bare couplings are small $\grave{a}$ la eq. ($\ref{smallindeed}$). The conjecture discussed at the beginning of this section in eqns. ($\ref{con1}$) and ($\ref{con2}$) has therefore been proven: there exist small initial conditions $\lambda^0_{\tilde{a}}(\Lambda), \ \tilde{a}=6...8$, such that the improvement conditions $\lambda_{\tilde{a}}(\Lambda_R, \Lambda_0) = \lambda_{\tilde{a}}^{NR}(\Lambda_R)$ for the nonrenormalizable couplings can be met as long as they satisfy eq. ($\ref{con1}$) .

\bigskip
In eq. ($\ref{L_0pert}$), we have expanded the bare couplings $\lambda_6^0$, $\lambda_7^0$ and $\lambda_8^0$ in perturbation theory in $\lambda_4^R$ and $\lambda_5^R$ and in the deviations $\Delta \lambda_{\tilde{a}}^R, \ \tilde{a}=6...8$. This approach is suitable for analyzing the predictivity of an effective field theory for a given UV cutoff scale $\Lambda_0$, as will be shown in the next section. In particular, the special case where the renormalized renormalizable couplings $\lambda_4^R$ and $\lambda_5^R$ are chosen to be zero can be treated without problems. 

However, if the aim is to fix nonrenrormalizable couplings at their continuum values $\lambda_{\tilde{a}}^{cont}(\Lambda_R)$ in order to achieve improved convergence in the limt $\Lambda_0 \rightarrow \infty$, this approach is problematic because of the $\Lambda_0$-dependence of the deviations $\Delta \lambda_{\tilde{a}}^R(\Lambda, \Lambda_0), \ \tilde{a}=6...8$. In this case, it seems reasonable to follow the strategy outlined by Wieczerkowski \cite{Wiec} and to replace the expansion ($\ref{L_0pert}$) by one that is done in the renormalizable couplings only\footnote{This essentially amounts to consider the equations beginning with eq. ($\ref{L_0pert}$) in the case of $0th$ order in perturbation theory in the deviations $\Delta \lambda_{\tilde{a}}^R$, $\tilde{a}=6...8$, and to withdraw eqns. ($\ref{weissnicht1}$)-($\ref{weissnicht2}$). }:
\begin{eqnarray}
\lambda_{\tilde{a}}^0(\Lambda) =  \sum  \lambda_{\tilde{a}}^{0 \ (l_1, l_2)}(\Lambda) (\lambda_4^R)^{l_1} (\lambda_5^R)^{l_2} .   \label{L_0pertR}
\end{eqnarray}
A repetition of the analysis outlined in this section then yields an equation\footnote{In \cite{Wiec}, the part of this equation that corresponds to the last term of the RHS of ($\ref{MainInv}$) seems to be missing. } similar to eq. ($\ref{MainInv}$) and ultimately the analogon of Theorem ($\ref{invTh}$) in perturbation theory in the renormalizable couplings $\lambda_4^R$ and $\lambda_5^R$. 

\end{subsection}

\end{section}

\begin{section}[Predictivity of effective field theories]{Predictivity of effective field theories and improved convergence}  \label{Presec}

In order to investigate the predictivity of an effective field theory, we follow the strategy set forth in section ($\ref{EffFlowOver}$). For a given set of initial conditions $\grave{a}$ la ($\ref{iniNR}$),
\begin{eqnarray}
||\partial^p A_{n}^{(r_1,..., r_5)}(p_1,...,p_{n}, \Lambda_0)|| \le \Lambda_0^{-p}  \left( \frac{\Lambda_0}{\Lambda_R} \right)^{r_1} Pln \left( \frac{\Lambda_0}{\Lambda_R} \right) , \ \ \ \ n+p \ge 6 \nonumber,
\end{eqnarray}
we construct another one via the parametrization
\begin{eqnarray}
\partial^p A_{n}^{(r_1, ..., r_5)}(\Lambda_0) \rightarrow \partial^p \tilde{A}_{n}^{(r_1, ..., r_5)}(\Lambda_0) := t \ \partial^p A_{n}^{(r_1, ..., r_5)}( \Lambda_0), \ \ \ t \in [0,1], \ \ \ n+p \ge 6 . \nonumber \\  \label{shapeANR}
\end{eqnarray}
Obviously, $t=0$ corresponds to the case $\partial^p A_{n}^{(r_1,..., r_5)}(\Lambda_0)=0,  \ \ n+p \ge 6$, whereas $t=1$ leaves the original set of initial conditions unchanged. The ''running'' vertex functions become dependent on the parameter $t$:
\begin{eqnarray}
\partial^p A_{n}^{(r_1, ..., r_5)}= \partial^p A_{n}^{(r_1, ..., r_5)}(p_1,...,p_{n}, \Lambda, \Lambda_0, t).
\end{eqnarray}

\begin{satz}[Predictivity of Effective Field Theories] \label{PreTh} Let there be renormalization conditions ($\ref{rc1}$)-($\ref{c4}$) and  improvement conditions ($\ref{impro1}$), meaning the improvement index is chosen to be $s=1$. Assume that to order $r_1, ..., r_5$ in perturbation theory in $\lambda_4^R$, $\lambda_5^R$, $\lambda_{6}^0$,  $\lambda_{7}^0$ and $\lambda_{8}^0$
\begin{eqnarray}
||\partial^p A_{n}^{(r_1,..., r_5)}(p_1,...,p_{n}, \Lambda)|| \le \Lambda^{-p}  \left( \frac{\Lambda}{\Lambda_R} \right)^{r_1}  \left( \frac{\Lambda_0}{\Lambda} \right)^{r_{NR}} Pln \left( \frac{\Lambda_0}{\Lambda_R} \right), \label{BoundWP}
\end{eqnarray}
and that for $n+p \ge 6$
\begin{eqnarray}
|| \frac{d}{d t} \partial^p A_{n}^{(r_1,..., r_5)}(p_1,...,p_{n}, \Lambda_0)|| \le \Lambda_0^{-p}  \left( \frac{\Lambda_0}{\Lambda_R} \right)^{r_1} Pln \left( \frac{\Lambda_0}{\Lambda_R} \right) . \label{ini2P}
\end{eqnarray}
Then 
\begin{eqnarray}
|| \frac{d}{d t} \partial^p A_{n}^{(r_1,..., r_5)}(p_1,...,p_{n}, \Lambda)|| \le \Lambda^{-p}  \left( \frac{\Lambda}{\Lambda_R} \right)^{r_1}  \left( \frac{\Lambda_0}{\Lambda} \right)^{r_{NR}} \left(\frac{\Lambda}{\Lambda_0}\right)^{2} Pln \left( \frac{\Lambda_0}{\Lambda_R} \right) \nonumber \\ \label{Pre}
\end{eqnarray}
where $\Lambda_R \le \Lambda \le \Lambda_0$.
\end{satz}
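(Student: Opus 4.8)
The plan is to transcribe the proof of Theorem~($\ref{UniTh}$) into generalized perturbation theory, the one genuinely new ingredient being that the improvement conditions ($\ref{impro1}$) furnish \emph{extra} vanishing boundary data at the renormalization scale. First I would set up the two renormalization group inequalities: integrating the key RGI ($\ref{RGI1_NR}$) \emph{down} from $\Lambda_0$ to $\Lambda$ and \emph{up} from $\Lambda_R$ to $\Lambda$, then differentiating each resulting inequality with respect to the parameter $t$ of ($\ref{shapeANR}$), produces the analogues of ($\ref{RGI3U}$) and ($\ref{RGI5U}$) (no terms from differentiating an integration limit arise, since $t$ does not enter the limits). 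As in Theorem~($\ref{ConvTh}$) the induction runs simultaneously over the overall order $r=r_1+\dots+r_5$ and the number of legs $n$ of $A_n^{(r_1,\dots,r_5)}$; the induction start is $\{r=0\}\cup\{r\ge1,\ n>3r+2\}$, where the vertex functions vanish by ($\ref{zero2}$) and because $A_n^{(r_1,\dots,r_5)}=0$ for $n>3r+2$.

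For the induction step I would split on $n+p$. In the case $n+p\ge6$ I would use the down-integrated inequality, inserting ($\ref{ini2P}$) for the boundary term at $\Lambda_0$, the bound ($\ref{BoundNR}$) of Theorem~($\ref{BoundThII}$) for the undifferentiated factors in the quadratic terms, and the claimed bound ($\ref{Pre}$) as induction hypothesis for the differentiated factors; Lemma~($\ref{l}$) disposes of the $s$-integrals. The boundary term is admissible exactly because $n+p\ge6$ makes the exponent $6-n-p$ of $\Lambda_0$ nonpositive, so $\Lambda_0^{6-n-p}\le\Lambda^{6-n-p}$ --- this is what upgrades the suppression from $\Lambda/\Lambda_0$ (as in Theorem~($\ref{UniTh}$)) to $(\Lambda/\Lambda_0)^2$. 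In each quadratic term one factor supplies the full $(\Lambda/\Lambda_0)^2$ through the induction hypothesis (both entries of $\sum_{\dots}$ have strictly lower overall order), while the other contributes at most the unsuppressed $\delta_{r_{NR},0}\,Pln$ piece of ($\ref{BoundNR}$); since the $r_{NR}$-powers of the two factors add up to $r_{NR}$, the product retains the required $(\Lambda_0/\Lambda)^{r_{NR}}(\Lambda/\Lambda_0)^2$ shape.

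In the case $n+p\le5$ I would use the up-integrated inequality. The essential new point, relative to Theorem~($\ref{UniTh}$) where the window of free bare data reached down to $n+p=5$, is that \emph{all} couplings with $n+p\le5$ are now held fixed at $\Lambda_R$: those with $n+p\le4$ by the renormalization conditions ($\ref{rc1}$)--($\ref{c4}$) and those with $n+p=5$, namely $\rho_6,\rho_7,\rho_8$, by the improvement conditions ($\ref{impro1}$). As the deformation ($\ref{shapeANR}$) touches only the $n+p\ge6$ initial data, these fixed values are $t$-independent, so $\frac{d}{dt}\rho_a(\Lambda_R)=\frac{d}{dt}\rho_{\tilde a}(\Lambda_R)=0$ and the boundary term at $\Lambda_R$ drops. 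Integrating from $\Lambda_R$ with Lemma~($\ref{l}$) then yields ($\ref{Pre}$) at the special momentum configurations of ($\ref{rcc1}$)--($\ref{rcc8}$), and the full momentum dependence is reconstructed from the Taylor expansions ($\ref{TE0}$)--($\ref{TE2}$), whose remainders have $n+p\ge6$ and are already under control. The main obstacle is purely organisational: keeping track of the five perturbative orders and checking that the $(\Lambda_0/\Lambda)^{r_{NR}}$ factors recombine correctly in the quadratic terms, in particular that whenever the undifferentiated factor carries the $\delta_{r_{NR},0}$ piece of Theorem~($\ref{BoundThII}$) the differentiated factor still delivers the full two powers of $\Lambda/\Lambda_0$; beyond that the argument is a line-by-line copy of Theorem~($\ref{UniTh}$) with $n+p=5$ replaced by $n+p=6$ and $\Lambda/\Lambda_0$ replaced by $(\Lambda/\Lambda_0)^2$.
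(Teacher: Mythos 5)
Your proposal follows the paper's proof essentially step for step: the same analogues of the RGIs (\ref{RGI3U}) and (\ref{RGI5U}), the same induction scheme with the case split at $n+p=6$, the same use of (\ref{ini2P}) as boundary data at $\Lambda_0$, and the same key observation that the renormalization \emph{and} improvement conditions make all couplings $\rho_{\tilde a}$, $\tilde a=1\dots8$, $t$-independent at $\Lambda_R$ so that the boundary terms in the up-integrated inequality vanish. Your added bookkeeping (the exponent $6-n-p\le0$ explaining the upgrade to $(\Lambda/\Lambda_0)^2$, and the recombination of the $(\Lambda_0/\Lambda)^{r_{NR}}$ factors in the quadratic terms) is correct and merely makes explicit what the paper leaves to the reader.
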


\begin{proof} We note that ($\ref{BoundWP}$) is a weaker bound than ($\ref{BoundNR}$) and thus has already been established in Theorem ($\ref{BoundThII}$). We follow the by now well-known induction scheme outlined in the proof of Theorem ($\ref{BoundTh}$).
\begin{enumerate}

\item \textbf{The case} $p+n \ge 6$: On the RHS of the analogon to the RGI ($\ref{RGI3U}$), we plug in eq. ($\ref{ini2P}$) as the initial condition at $\Lambda_0$, eq. ($\ref{Pre}$) as the induction hypothesis and eq. ($\ref{BoundWP}$). The integrals can be solved and as a result, ($\ref{Pre}$) is established.

\item \textbf{The case} $p+n \le 5$: The inequality ($\ref{Pre}$) has to be proven for the derivatives of the expansion coefficients ($\ref{rcc1}$) -($\ref{rcc5}$) and ($\ref{rcc6}$)-($\ref{rcc8}$) with respect to the parameter $t$. We start with the analogon to the RGI ($\ref{RGI5U}$) and plug in eq. ($\ref{Pre}$) as the induction hypothesis and eq. ($\ref{BoundWP}$). All initial conditions at $\Lambda_R$ vanish because the renormalization conditions ($\ref{rc1}$)-($\ref{c4}$) and improvement conditions ($\ref{impro1}$) mean that the couplings $\rho_{\tilde{a}}, \ \tilde{a}=1...8,$ are independent of $t$ at $\Lambda=\Lambda_R$.  Solving the integrals establishes ($\ref{Pre}$).
\end{enumerate}
\begin{flushright}
$\Box$
\end{flushright}
\end{proof}
We can integrate the inequality ($\ref{Pre}$) over $t$ with integration limits $0$ and $1$:
\begin{eqnarray}
|| \partial^p A_{n}^{(r_1,..., r_5)}(p_1,...,p_{n}, \Lambda,1) - \partial^p A_{n}^{(r_1, ..., r_5)}(p_1,...,p_{n}, \Lambda,0) ||  \qquad \qquad \qquad \nonumber \\ \le \Lambda^{-p}  \left( \frac{\Lambda}{\Lambda_R} \right)^{r_1} \left( \frac{\Lambda_0}{\Lambda} \right)^{r_{NR}}  \left(\frac{\Lambda}{\Lambda_0}\right)^{2} Pln \left( \frac{\Lambda_0}{\Lambda_R} \right) .  \label{ind3}
\end{eqnarray}
Since eq. ($\ref{ind3}$) is valid for \textit{any} set of initial conditions $\partial^p A_{n}^{(r_1,..., r_5)}(\Lambda_0), \ n+p \ge6$, as long as they satisfy eq. ($\ref{iniNR}$), we conclude with the triangle inequality that for two different sets $\partial^p A_{n}^{A (r_1,..., r_5)}(\Lambda_0)$ and $\partial^p A_{n}^{B(r_1,..., r_5)}(\Lambda_0)$ which are in accordance with eq. ($\ref{iniNR}$) the associated ''running'' vertex functions satisfy
\begin{eqnarray}
|| \partial^p A_{n}^{A(r_1,..., r_5)}(p_1,...,p_{n}, \Lambda) - \partial^p A_{n}^{B(r_1,..., r_5)}(p_1,...,p_{n}, \Lambda) || \qquad \qquad \qquad \nonumber \\ \le \Lambda^{-p}  \left( \frac{\Lambda}{\Lambda_R} \right)^{r_1} \left( \frac{\Lambda_0}{\Lambda} \right)^{r_{NR}}  \left(\frac{\Lambda}{\Lambda_0}\right)^{2} Pln \left( \frac{\Lambda_0}{\Lambda_R} \right) . \label{PreF}
\end{eqnarray}
As has been stressed before, the perturbative approach is sensible only for small dimensionless expansion parameters $\grave{a}$ la eqns. ($\ref{smallR}$) for  $\lambda_4^R$, $\lambda_5^R$ and ($\ref{smallNR}$)  for $\lambda_{6}^0$,  $\lambda_{7}^0$, $\lambda_{8}^0$. Theorem ($\ref{invTh}$) and the thereof derived eqns. ($\ref{smalldevcont}$),  ($\ref{smallindeed}$)  ensure that for improvement conditions which satisfy ($\ref{con1})$, that is
\begin{eqnarray*}
|| \lambda_{\tilde{a}}^{NR}(\Lambda_R)- \lambda_{\tilde{a}}^{cont}(\Lambda_R)  || &\le&  \frac{\Lambda_R}{\Lambda_D} Pln \left( \frac{\Lambda_D}{\Lambda_R} \right)  , \ \ \ \Lambda_D > \Lambda_R,  \ \ \ \tilde{a}=6...8,   
\end{eqnarray*}
the couplings $\lambda_{6}^0$,  $\lambda_{7}^0$, $\lambda_{8}^0$ are small in the sense of eqns. ($\ref{smallNR}$) as long as the UV cutoff stays below the scale $\Lambda_D$, $\Lambda_0 \le \Lambda_D$.

Coming back to the discussion at the end of section ($\ref{EffFlowOver}$), we now have proven that the ignorance about the exact initial values $ \partial^p A_{n}^{(r_1,..., r_5)}(\Lambda_0), \ \ n+p \ge 6$, amounts to an indetermination of the ''running'' vertex functions $\partial^p A_{n}^{(r_1,..., r_5)}(\Lambda)$ of the order of $\left( \frac{\Lambda_0}{\Lambda} \right)^{r_{NR}}  \left(\frac{\Lambda}{\Lambda_0}\right)^{2} Pln \left( \frac{\Lambda_0}{\Lambda_R} \right) $. Note that for small couplings $\lambda_{6}^0$,  $\lambda_{7}^0$, $\lambda_{8}^0$ in the sense of eqns. ($\ref{smallNR}$), ($\ref{smallNRExp}$) this amounts to an indetermination of $\partial^p A_{n}(\Lambda)$ of the order of $\left(\frac{\Lambda}{\Lambda_0}\right)^{2} Pln \left( \frac{\Lambda_0}{\Lambda_R} \right) $.   Compared to Theorem ($\ref{UniTh}$) and eq. ($\ref{UniR}$) where only renormalization conditions for the renormalizable couplings have been specified, the improvement conditions ($\ref{impro1}$) for $s=1$ have thus led to enhanced predictivity. 

\bigskip
We will now briefly comment on the case where the nonrenrormalizable couplings $ \rho_{\tilde{a}}(\Lambda_R) , \ \tilde{a}=6..8,$ are fixed to their no-cutoff values $\rho_{\tilde{a}}^{cont}(\Lambda_R)$ in order to achieve improved convergence in the limt $\Lambda_0 \rightarrow \infty$. Following \cite{Wiec}, we then assume that the initial conditions for the nonrenormalizable couplings are defined in perturbation theory in the renormalizable couplings, see eq. ($\ref{L_0pertR}$). Consequently, also the vertex functions are given in perturbation theory in the renormalizable couplings $\lambda_4^R$ and $\lambda_5^R$. Employing the RGIs ($\ref{RGI3}$) and ($\ref{RGI5}$) and following the path set forth in Theorems  ($\ref{PreTh}$) and  ($\ref{ConvTh}$),  improved convergence as compared to eq. ($\ref{conv2}$)  of the vertex functions to their continuum limits can be established:
\begin{eqnarray}
|| A_{n}^{(r_1, r_2)}(p_1,...,p_{n}, \Lambda, \Lambda_0)- A_{n}^{cont \ (r_1, r_2)}(p_1,...,p_{n},\Lambda)  || \le \left( \frac{\Lambda}{\Lambda_R} \right)^{r_1} \left(\frac{\Lambda}{\Lambda_0}\right)^2 Pln \left( \frac{\Lambda_0}{\Lambda_R} \right) . \nonumber \\   \label{conv2impro}
\end{eqnarray}
\end{section}


\begin{chapter}[Polchinski's Equation for Quantum Gravity]{Polchinski's Equation for Euclidean Quantum Gravity}  \label{PolGR}

The standard covariant BRS quantization procedure for Euclidean Einstein gravity is reviewed, as it can be found for instance in \cite{Stelle}, \cite{CLM}. Unlike these authors, we include a cosmological term. Our dynamical variable is a perturbation of the metric density $\sqrt{g} \ g^{\mu \nu}$ around flat space, and in contrast to \cite{Hooft}, \cite{Dono1}, we do not work in the background field method. We state the Slavnov-Taylor-Identities (STI) for quantum Einstein gravity and deduce graviton and ghost propagators. The problems that arise due to the cosmological constant when gravity is treated as a flat-space QFT are discussed, as well as possible resolutions. 
A momentum cutoff regularization for the generating functional is employed and the resulting violation of the gauge invariance and hence the Slavnov-Taylor-Identities is discussed. We review a fine-tuning procedure that has been shown \cite{KM} to cure the analogous problem occuring in the perturbative renormalization of Yang-Mills theory via flow equations. The ultimate aim of this procedure is the restoration of the STI in the no-cutoff limit. First implications of an analogous procedure for quantum gravity are proposed.
Finally, we establish the Polchinski renormalization group equation for Euclidean quantum gravity and discuss the properties that a solution of this RGE must have in order to represent a valid candidate for a quantum theory of gravitation. 

\begin{section}{Covariant quantization of Euclidean gravity}

\begin{subsection}{BRS quantization of Einstein gravity with a cosmological constant}   \label{BRSQ}

The Einstein-Hilbert {action} for Euclidean gravity with a cosmological constant is defined as\footnote{See Appendix ($\ref{Conv}$) for our conventions.}
\begin{equation}
S_{EH}= \frac{1}{{\lambda^2}} \int d^4 x \sqrt{g} \left( -{4\Lambda_K} + {2} R \right) \label{EH}
\end{equation}
where $\lambda^2 := 32 \pi G$ is proportional to Newtons constant $G$ and $\Lambda_K$ is the cosmological constant. The canonical dimensions of $\lambda$ and $\Lambda_K$ are 
\begin{eqnarray}
[\lambda ] &=& \Lambda^{-1} \label{cdN} \\
 {[} \Lambda_K ]  &=& \Lambda^{2} \label{cdK}
\end{eqnarray}
where $\Lambda$ is some scale of mass. We would like to remind the reader of the definitions of the geometrical quantities in the action ($\ref{EH}$):
\begin{eqnarray}
g &=& \det g_{\mu \nu} \\
{R}^{\rho \sigma}_{\ \ \mu \nu} &=& \partial_\mu {\Gamma}^{\rho \sigma}_{\ \ \nu} - \partial_{\nu} \Gamma^{\rho \sigma}_{\ \ \mu} + \Gamma^{\rho}_{\  \alpha \mu} {\Gamma}^{ \alpha \sigma}_{\ \ \nu} - \Gamma^{\rho}_{\  \alpha \nu} {\Gamma}^{ \alpha \sigma}_{\ \ \mu} \label{DCur}  \\
\Gamma_{\sigma \mu \nu} &=& \frac{1}{2} (\partial_\mu g_{\sigma \nu} + \partial_\nu g_{\sigma \mu}- \partial_\sigma g_{\mu \nu}) \label{DChr} \\
R_{\mu \nu} &=& R^{\alpha}_{\ \mu \alpha \nu} \\
R &=& R^\mu_{\ \mu}  . \label{LCC}
\end{eqnarray}
Obviously, the {metric} $g_{\mu \nu}$ is the only dynamical variable the Lagrangian depends on. It turns out convenient to define the contravariant metric density (weight $+1$)
\begin{eqnarray}
\tilde{g}^{\mu \nu}:= \sqrt{g} \ g^{\mu \nu} \label{gdens}
\end{eqnarray}
where $g^{\mu \nu}$ is the inverse metric, $g^{\mu \rho} g_{\rho \nu}= \delta^\mu_{\ \nu}$, and to rewrite the action ($\ref{EH}$) in terms of $\tilde{g}^{\mu \nu}$. With 
\begin{eqnarray}
\det \tilde{g}^{\mu \nu} &=& g^2 \det g^{\mu \nu} = g^2 g^{-1} \nonumber \\ &=& g
\end{eqnarray}
and $\tilde{g}^{\mu \rho} \tilde{g}_{\rho \nu}= \delta^\mu_{\ \nu}$ we find \cite{CLM}
\begin{eqnarray}
S_{EH} &=& \frac{1}{{\lambda^2}} \int d^4 x \left( -4 \Lambda_K \sqrt{\det \tilde{g}^{\mu \nu}} - \frac{1}{2}  \tilde{g}^{\mu \nu} \tilde{g}_{\alpha \beta}\tilde{g}_{\gamma \delta} \partial_\mu \tilde{g}^{\beta \gamma}  \partial_\nu \tilde{g}^{\alpha \delta} \right. \nonumber \\ &&  \qquad  \qquad \qquad \left. + \frac{1}{4} \tilde{g}^{\mu \nu} \tilde{g}_{\alpha \beta} \tilde{g}_{\gamma \delta} \partial_\mu \tilde{g}^{\alpha \beta} \partial_\nu \tilde{g}^{\gamma \delta} + \tilde{g}_{\alpha \beta} \partial_\mu \tilde{g}^{\alpha \nu} \partial_\nu \tilde{g}^{\mu \beta} \right).
\end{eqnarray}
The tensor density $\tilde{g}^{\mu \nu}$ is split up into a flat Euclidean background $\delta^{\mu \nu}$ and a perturbation part $h^{\mu \nu}$:
\begin{eqnarray}
\tilde{g}^{\mu \nu}= \delta^{\mu \nu} + \lambda h^{\mu \nu} . \label{h}
\end{eqnarray}
The quantity $h^{\mu \nu}$ will be our dynamical field variable. With $\frac{1}{x}=\frac{1}{x_0} - \frac{1}{x_0^2} (x - x_0) + \frac{1}{x_0^3} (x - x_0)^2+ ...$ we find
\begin{eqnarray}
\tilde{g}_{\mu \nu}= \delta_{\mu \nu} - \lambda h_{\mu \nu} + \lambda^2 h_{\mu}^{\ \alpha} h_{\alpha \nu} - \lambda^3 h_{\mu}^{\ \alpha} h_{\alpha }^{\ \beta} h_{\beta \nu} +  ...
\end{eqnarray}
where $h_{\mu \nu}= \delta_{\mu \alpha} \delta_{\nu \beta} h^{\alpha \beta}$. In addition, because of $\det x= \exp( tr \ln x)$ and $\ln x = \sum_{n=1}^\infty \frac{1}{n} (-1)^{n+1} (x-1)^n $ it follows that
\begin{eqnarray}
\sqrt{\det \tilde{g}^{\mu \nu}} = 1 + \frac{1}{2} \lambda h - \frac{1}{4} \lambda^2 h^\mu_{\ \rho} h^\rho_{\ \mu} + \lambda^2 \frac{1}{8} h^2 + ... .  \label{KKexp}
\end{eqnarray}
where the notation $h^\mu_{\ \mu}=h$ has been employed\footnote{Sometimes we will use the notation $h$ also as an abbreviation for $h_{\mu \nu}$.}. Thus, we can expand the Einstein-Hilbert action ($\ref{EH}$) in powers of $h_{\mu \nu}$. Since we are expanding around flat space, the scalar curvature $R$ vanishes to $0th$ order in $h_{\mu \nu}$. Furthermore, we would like to stress that for $\Lambda_K \ne 0$ the flat background is \textit{not} a solution of the field equations and we are left with a term linear in $h_{\mu \nu}$ in the action:
\begin{eqnarray}
S_{EH}= \int d^4 x \left( - \frac{4 \Lambda_K}{\lambda^2} + \mathcal{L}^{(1)} + \mathcal{L}^{(2)} + \mathcal{L}^{(3)} + ...  \right) . \label{gaexp}
\end{eqnarray}
Here, $\mathcal{L}^{(1)}$, $\mathcal{L}^{(2)}$,...  denote the Lagrangians linear, bilinear, etc. in $h$. The constant term $- \frac{4\Lambda_K}{\lambda^2}$ will merely give rise to a multiplicative factor in the generating functional of quantum gravity and we will drop it from now on. From Ref \cite{CLM} and eq. ($\ref{KKexp}$) follows\footnote{The contributions linear in $h$ that stem from the expansion of $R$ vanish because they are total derivatives. }
\begin{eqnarray}
\mathcal{L}^{(1)} &=& - 2 \frac{\Lambda_K}{\lambda}  h \label{l1} \\
\mathcal{L}^{(2)} &=&  \Lambda_K \big( h^\mu_{\ \rho} h^\rho_{\ \mu} -  \frac{1}{2} h^2 \big) - \frac{1}{2} \partial_\mu h_{\nu \rho } \partial^\mu h^{\nu \rho} + \frac{1}{4} \partial_\mu h \partial^\mu h + \partial_\mu h^{\mu \rho} \partial^\nu h_{\nu \rho} .  \nonumber \\ \label{l2}
\end{eqnarray}
As shown in Appendix ($\ref{cgdens}$), the tensor density ($\ref{gdens}$) transforms under infinitesimal general coordinate transformations  as 
\begin{eqnarray}
\tilde{g}^{\mu \nu} & \rightarrow & \tilde{g}^{\mu \nu}{'} = \tilde{g}^{\mu \nu} + \mathcal{L}_X \tilde{g}^{\mu \nu}
\end{eqnarray}
where
\begin{eqnarray}
\mathcal{L}_X \tilde{g}^{\mu \nu} = X^\rho \partial_\rho \tilde{g}^{\mu \nu} + \tilde{g}^{\mu \nu} \partial_\rho X^\rho - \tilde{g}^{\rho \nu} \partial_\rho  X^\mu - \tilde{g}^{\mu \rho} \partial_\rho X^\nu.    \label{lg0}
\end{eqnarray}
$\mathcal{L}_X$ denotes the Lie derivative with respect to a vector field $X=X^\rho \partial_\rho  $. Regarding our gravitational field variable $h_{\mu \nu}$, we  define ''gauge transformations''
\begin{eqnarray}
h^{\mu \nu}{'} &=& h^{\mu \nu} + \lambda^{-1}  \mathcal{L}_X \tilde{g}^{\mu \nu} \label{g1} \\
\delta^{\mu \nu}{'} &=& \delta^{\mu \nu} . \label{g2}
\end{eqnarray}
Here, the freedom to ''shift'' the changes induced by the coordinate transformation on $\tilde{g}^{\mu \nu}$ between the parts $\delta^{\mu \nu}$ and $\lambda h^{\mu \nu}$ has been exploited such that only the $h_{\mu \nu}$ field is transformed and the flat Euclidean background $\delta^{\mu \nu}$ remains fixed. With eqns. ($\ref{h}$) and ($\ref{lg0}$) it follows that 
\begin{equation}
\mathcal{L}_X \tilde{g}^{\mu \nu} = \delta^{\mu \nu} \partial_\rho X^\rho - \delta^{\rho \nu} \partial_\rho  X^\mu - \delta^{\mu \rho} \partial_\rho X^\nu + \lambda \big( X^\rho \partial_\rho h^{\mu \nu} + h^{\mu \nu} \partial_\rho X^\rho - h^{\rho \nu} \partial_\rho  X^\mu - h^{\mu \rho} \partial_\rho X^\nu \big) \label{lg}
\end{equation}
in agreement with Ref \cite{Stelle}. 

In order to quantize gravity, the gauge has to be fixed. As a gauge condition we employ the harmonic gauge:
\begin{eqnarray}
F^\mu_{\ \rho \sigma}(h^{\rho \sigma})  &=& 0 \\
\text{with} \ \ \ F^\mu_{\ \rho \sigma}  &=&  \delta^\mu_{\ \rho} \partial_\sigma   . \label{GFH}
\end{eqnarray}
In the following, we will often use the abbreviation $F^\mu=F^\mu_{\ \rho \sigma}(h^{\rho \sigma}) $. The Faddeev-Popov procedure \cite{Peskin} then adds a gauge fixing term
\begin{equation}
S_{GF}(h) = - \frac{1}{2 \xi} \int d^4 x  F^\mu F_\mu \label{GFL}
\end{equation}
as well as a ghost term\footnote{Note that since $\frac{\delta (F^\mu)'}{\delta X^\nu} =\frac{\delta (F^\mu)'}{\delta h'_{\rho \sigma}} \frac{\delta h'_{\rho \sigma}}{\delta X^\nu}$ and $\frac{\delta h'_{\rho \sigma}}{\delta X^\nu} C^\nu = \mathcal{L}_C g_{\rho \sigma}$ we have
$ \int d^4 x  \ \overline{C}_\mu \left(  \frac{\delta (F^{\mu})'}{\delta X^\nu} \right) C^\nu =  \int d^4 x  \ \overline{C}_\mu \left(  \frac{\delta (F^{\mu})'}{\delta h_{\rho \sigma }} \right)  \mathcal{L}_C g_{\rho \sigma} \label{ghR}. $ An expression of the latter kind was used by Ref \cite{MR1}.}
\begin{equation}
S_{GH}(h, C, \overline{C}) =  - \int d^4 x \ \overline{C}_\mu \left( \lambda \frac{\delta (F^{\mu})'}{\delta X^\nu} \right) C^\nu \label{gh}
\end{equation}
to the action where $F(^\mu){'}=F^\mu_{\ \rho \sigma}(h^{\rho \sigma}{'}) $. With eqns. ($\ref{lg}$) and ($\ref{GFH}$) we find
\begin{equation}
\lambda \frac{\delta (F^\mu)'}{\delta X^\nu} = - \delta^\mu_{\ \nu} \partial^\rho \partial_\rho  + \lambda \big( \partial_\rho \partial_\nu h^{\mu \rho} + \partial_\rho h^{\mu \rho} \partial_\nu - \delta^{\mu}_{\ \nu} \partial_\sigma h^{ \sigma \rho} \partial_\rho  - \delta^\mu_{\ \nu}   h^{\rho \sigma} \partial_\rho \partial_\sigma  \big)  \label{fp}
\end{equation}
in agreement with Ref \cite{CLM}. 

Let $\epsilon$ be an anticommuting constant parameter. One can check that the total action
\begin{equation}
S_{tot}= S_{EH}(h) + S_{GF}(h )+ S_{GH}(h, C, \overline{C})   \label{Stot}
\end{equation}
is invariant under the BRS transformations
\begin{eqnarray}
\delta_\epsilon  h^{\mu \nu} &=&  \mathcal{L}_C \tilde{g}^{\mu \nu} \epsilon \label{BRST1} \\
\delta_\epsilon C^\mu &=& \lambda C^\nu \partial_\nu C^\mu \epsilon \label{BRST2}\\
\delta_\epsilon \overline{C}_\mu &=&    \xi^{-1} F_\mu \epsilon. \label{BRST3}
\end{eqnarray}
These are in complete analogy to the corresponding BRS transformations of Yang Mills theory \cite{Peskin}. We may define an extended total action $\tilde{S}_{tot}$ by introducing sources $\beta_{\mu \nu} $ and $\tau_\mu$ that couple to the nonlinear BRS variations ($\ref{BRST1}$) and ($\ref{BRST2}$):
\begin{eqnarray}
\tilde{S}_{tot}:= S_{tot} + \int d^4 x \left( \beta_{\mu \nu} \mathcal{L}_C \tilde{g}^{\mu \nu}  + \tau_\mu  \lambda C^\nu \partial_\nu C^\mu  \right) . \label{StotE}
\end{eqnarray}
One can show that ($\ref{BRST1}$) and ($\ref{BRST2}$)  are nilpotent,
\begin{eqnarray}
\delta_\epsilon  \big( \mathcal{L}_C \tilde{g}^{\mu \nu} \epsilon  \big) &=& 0  \\
\delta_\epsilon  \big( C^\nu \partial_\nu C^\mu \epsilon \big) &=& 0 .
\end{eqnarray}
Thus, also $\tilde{S}_{tot}$ is invariant under the BRS transformations ($\ref{BRST1}$)-($\ref{BRST3}$) and the BRS invariance can be stated in the form\footnote{Since $f(x + \delta x)=f(x)+ \frac{\partial f}{\partial x} \delta x$ the invariance $f(x + \delta x)=f(x)$ means $\frac{\partial f}{\partial x} \delta x=0$.}
\begin{eqnarray}
\delta_\epsilon \tilde{S}_{tot} = \frac{\delta \tilde{S}_{tot}}{\delta \beta_{\mu \nu}} \frac{\delta \tilde{S}_{tot}}{\delta h^{\mu \nu}}  + \frac{\delta \tilde{S}_{tot}}{\delta \tau_{\mu}} \frac{\delta \tilde{S}_{tot}}{\delta C^{\mu }} + \xi^{-1} F_\mu \frac{\delta \tilde{S}_{tot}}{\delta \overline{C}_{\mu }} =0 .   \label{BRSTE}
\end{eqnarray}
Finally, we add a source term
\begin{equation}
S_{J}= \int d^4 x \left( t_{\mu \nu} h^{\mu \nu} + \overline{\sigma}_\mu C^\mu + \sigma^\mu \overline{C}_\mu  \right)
\end{equation}
to the action.

We are now ready to formally write down two unregularized generating functionals for Euclidean quantum Einstein gravity:
\begin{eqnarray}
W(J) = \int \mathcal{D} h_{\mu \nu} \mathcal{D} C^\mu \mathcal{D} \overline{C}_\mu \ e^{ \left. S_{tot} + S_{J} \right.}  \label{GFQG}
\end{eqnarray}
and an extended version
\begin{eqnarray}
\tilde{W}(J, \beta_{\mu \nu} ,  \tau_\mu) = \int \mathcal{D} h_{\mu \nu} \mathcal{D} C^\mu \mathcal{D} \overline{C}_\mu \ e^{ \left. \tilde{S}_{tot} + S_{J} \right.}  \label{GFQGE}
\end{eqnarray}
where we employed the notation
\begin{equation}
J = \{ t_{\mu \nu},\sigma_\mu , \overline{\sigma}^\mu \}.
\end{equation}
Note that $\tilde{W}(J, \beta_{\mu \nu} ,  \tau_\mu) $ involves the composite field operators ($\ref{BRST1}$) and ($\ref{BRST2}$). In addition, we introduce a generating functional $Z(J)$  for connected correlation functions,
\begin{equation}
W(J)=e^{- Z(J)}, \label{conn}
\end{equation}
and similarly an extended version $\tilde{Z}(J, \beta_{\mu \nu} , \tau_\mu)$. 

The (extended) generating functional $\tilde{W}(J, \beta_{\mu \nu} ,  \tau_\mu) $ as well as $\tilde{S}_{tot}$ are invariant under the BRS transformations ($\ref{BRST1}$)- ($\ref{BRST3}$), whereas the source term $S_J$ is \textit{not}. This leads to a condition for $\tilde{W}(J, \beta_{\mu \nu} ,  \tau_\mu) $, the Slavnov-Taylor-identities (STI). Introducing the BRS operator
\begin{eqnarray}
\mathcal{D}:= \int d^4x \left( t_{\mu \nu} \frac{\delta }{\delta \beta_{\mu \nu}}  + \overline{\sigma}_\mu \frac{\delta }{\delta \tau_{\mu}} + \xi^{-1} \sigma^\mu F_{\mu \rho \sigma} \Big( \frac{\delta }{\delta t_{\rho \sigma}} \Big) \right)  \label{BRSO}
\end{eqnarray}
they can be compactly summarized as
\begin{eqnarray}
\mathcal{D}  \tilde{W}(J, \beta_{\mu \nu} ,  \tau_\mu) = 0.  \label{STW}
\end{eqnarray}
The STI ($\ref{STW}$) take the same form when they are written down in terms of the functional $\tilde{Z}(J, \beta_{\mu \nu} , \tau_\mu)$,
\begin{eqnarray}
\mathcal{D}  \tilde{Z}(J, \beta_{\mu \nu} ,  \tau_\mu) = 0.  \label{STZ}
\end{eqnarray}
For completeness sake, let us also define classical fields
\begin{eqnarray}
\overline{h}_{ \mu \nu}=\frac{\delta \tilde{Z}}{\delta t^{\mu \nu}} \\ \xi^\mu=  \frac{\delta \tilde{Z}}{\delta \overline{\sigma}_\mu} \\ \overline{\xi}_\mu=  \frac{\delta \tilde{Z}}{\delta {\sigma}^\mu} 
\end{eqnarray}
and denote
\begin{equation}
\overline{\Phi}= \{\overline{h}_{ \mu \nu}, \xi^{\mu}, \overline{\xi}_\mu  \}.
\end{equation}
The Legendre transform $\Gamma(\overline{\Phi},\beta_{\mu \nu} , \tau_\mu)$ of $\tilde{Z}(J, \beta_{\mu \nu} , \tau_\mu)$,
\begin{eqnarray}
\Gamma(\overline{\Phi},\beta_{\mu \nu} , \tau_\mu)=\int d^4x \left(t_{\mu \nu} \overline{h}^{\mu \nu} + \overline{\sigma}_\mu \xi^{\mu} +  \sigma^\mu \overline{\xi}_\mu  \right) - \tilde{Z}(J, \beta_{\mu \nu} , \tau_\mu) ,
\end{eqnarray}
generates 1PI correlation functions. The Slavnov-Taylor-identities can be formulated in terms of $\Gamma(\overline{\Phi},\beta_{\mu \nu} , \tau_\mu)$:
\begin{eqnarray}
\int d^4x \left( \frac{\delta \Gamma}{\delta \beta^{\mu \nu}} \frac{\delta \Gamma}{\delta \overline{h}_{\mu \nu}}   + \frac{\delta \Gamma}{\delta \tau_{\mu}} \frac{\delta \Gamma}{\delta \xi^{\mu}}+ \xi^{-1} F_{\mu \rho \sigma} \big( {\overline{h}}^{ \rho \sigma} \big)  \frac{\delta \Gamma}{\delta \overline{\xi}_\mu } \right) &=& 0  .   \label{STI}
\end{eqnarray}
Again,  ($\ref{STI}$) are in complete analogy to the Slavnov-Taylor identities of Yang-Mills theory.

\end{subsection}

\begin{subsection}{The graviton and ghost propagators}  \label{propsec}
In order to find the graviton propagator in harmonic gauge, we have to include the gauge fixing terms ($\ref{GFL}$) into the Lagrangian bilinear in the quantum fields, $\mathcal{L}^{(2)}$. We wish to obtain a free-field Langrangian of the form
\begin{equation}
\mathcal{L}^{f} = - \frac{1}{2} h^{\mu \nu} \Delta^{-1}_{\mu \nu \rho \sigma} h^{\rho \sigma}   \label{Lf}
\end{equation}
where $\Delta_{\mu \nu \rho \sigma}$ is the graviton propagator. From eqns. ($\ref{l2}$), ($\ref{KKexp}$) and ($\ref{GFL}$) follows that
\begin{eqnarray}
\mathcal{L}^{(2)}+ \mathcal{L}_{GF} &=& -{\Lambda_K} \left(  \frac{1}{2} h^2  - h^\mu_{\ \rho} h^\rho_{\ \mu}  \right) -  \frac{1}{2 \xi} \partial_\mu h^{\mu \rho} \partial^\nu h_{\nu \rho}\nonumber \\ &&  - \frac{1}{2} \partial_\mu h_{\nu \rho } \partial^\mu h^{\nu \rho} + \frac{1}{4} \partial_\mu h \partial^\mu h + \partial_\mu h^{\mu \rho} \partial^\nu h_{\nu \rho} .
\end{eqnarray}
If we choose the parameter $\xi=\frac{1}{2}$ and use partial integration the above expression simplifies to
\begin{eqnarray}
\mathcal{L}^{(2)}+ \mathcal{L}_{GF} &=& {\Lambda_K} \left( h_{\mu \rho} h^{\mu \rho} - \frac{1}{2} h^2    \right) + \frac{1}{2} \left( h_{\nu \rho } \partial_\mu  \partial^\mu h^{\nu \rho} - \frac{1}{2}  h \partial_\mu \partial^\mu h  \right) . \nonumber \\
\end{eqnarray}
By comparison with eq.  ($\ref{Lf}$) we conclude that
\begin{eqnarray}
\Delta^{-1}_{\mu \nu \rho \sigma} &=&  -\frac{1}{2} \left( \delta_{\mu \rho} \delta_{\nu \sigma}+ \delta_{\mu \sigma} \delta_{\nu \rho}    - \delta_{\mu \nu} \delta_{\rho \sigma}       \right) (\partial_\alpha \partial^\alpha + 2\Lambda_K)\nonumber \\ & \equiv & -\mathcal{D}_{\mu  \nu \rho \sigma}  (\partial_\alpha \partial^\alpha + 2\Lambda_K ). \label{invprop}
\end{eqnarray}
To obtain the graviton propagator, $\Delta^{-1}_{\mu \nu \rho \sigma} $ has to be inverted. The operator $\mathcal{D}_{\mu  \nu \rho \sigma}$ is symmetric in $\mu, \nu$ and $\rho, \sigma$ and under the combined operation  $\mu \leftrightarrow \rho, \ \nu \leftrightarrow \sigma$. By direct calculation one finds that
\begin{eqnarray}
\mathcal{D}^{\alpha \beta \mu  \nu } \mathcal{D}_{\alpha \beta  \rho \sigma } = \frac{1}{2} \left(\delta^\mu_\rho \delta^\nu_\sigma + \delta^\mu_\sigma \delta^\nu_\rho    \right)   \label{one}
\end{eqnarray}
where again $\mathcal{D}^{\alpha \beta \mu  \nu }= \frac{1}{2} \left( \delta^{\alpha \mu} \delta^{\beta \nu}+ \delta^{\alpha \nu} \delta^{\beta \mu}    - \delta^{\alpha \beta} \delta^{\mu \nu} \right)$. Thus we conclude that the propagator for gravitons on a flat Euclidean background is
\begin{eqnarray}
\Delta_{\mu \nu \rho \sigma} (x-y) &=&  \int \frac{d^4k}{(2 \pi)^4}  e^{ik(x-y)} \Delta_{\mu \nu \rho \sigma} (k^2)   \label{gravpropPS}
\end{eqnarray}
where we have employed the Fourier transform
\begin{eqnarray}
\Delta_{\mu \nu \rho \sigma} (k^2) = \frac{1}{2} \frac{\delta_{\mu \rho} \delta_{\nu \sigma}+ \delta_{\mu \sigma} \delta_{\nu \rho}   -   \delta_{\mu \nu} \delta_{\rho \sigma}}{k^2 - 2 \Lambda_K}  . \label{gravprop}
\end{eqnarray}
The expression ($\ref{gravprop}$) deserves some comments. It is known from the literature \cite{zee} that in flat Euclidean space, the propagator for a massive spin 2 particle with five physical degrees of freedom is given by
\begin{equation}
\Delta_{\mu \nu \rho \sigma}^{Spin 2} (k^2)= \frac{1}{2} \frac{G_{\mu \rho} G_{\nu \sigma} + G_{\mu \sigma} G_{\nu \rho}  -   \frac{2}{3} G_{\mu \nu} G_{\rho \sigma}}{k^2+ m^2} \label{s2mprop}
\end{equation}
where $ G_{\mu \nu} = \delta_{\mu \nu} - (k_\mu k_\nu )/m^2$. As is explained in Appendix ($\ref{Conv}$), to $0th$ order in $\lambda$ physical sources $t^{\mu \nu}$ satisfy 
\begin{equation}
k_\mu t^{\mu \nu}=0.
\end{equation}
Thus, if we consider the one-particle exchange amplitude between two physical sources\footnote{For real sources $t^{\mu \nu}(k)$ we just have $t^{\mu \nu}{'}(k)=t^{\mu \nu}(-k)$.} $t^{\mu \nu}$ and $t^{\mu \nu}{'}$ for the massive spin 2 state described by ($\ref{s2mprop}$) we arrive at
\begin{eqnarray}
 \frac{ t_{\mu \nu} t^{\mu \nu}{'}-  \frac{1}{3} t t' }{k^2 + m^2}
\end{eqnarray}
 with $t= t^\mu_{\ \mu}$. On the other hand, the one-particle exchange amplitude derived with the graviton propagator ($\ref{gravprop}$) reads
\begin{eqnarray}
 \frac{ t_{\mu \nu} t^{\mu \nu}{'}-  \frac{1}{3} t t' }{k^2 - 2 \Lambda_K} -  \frac{1}{6} \frac{t t' }{k^2 - 2\Lambda_K} . \label{ex}
\end{eqnarray}
By comparison it follows immediately that for $\Lambda_K < 0$, the first term in ($\ref{ex}$) corresponds to the exchange of a massive spin 2 particle. However, the second term has the ''wrong sign'' and must be associated with a repulsive interaction due to a massive spin 0 ghost. This result is in agreement with \cite{Veltman}. For $\Lambda_K > 0$ things look actually worse: in this case one would naively conclude that ($\ref{ex}$) describes the exchange amplitude of a massive spin 2 particle and a massive spin 0 ghost with ''masses'' $i \sqrt{\Lambda_K}$.   

However, for $\Lambda_K=0$ there is no problem because then ($\ref{gravprop}$) is just the well-known propagator for a massless spin 2 field with two degrees of freedom. Note that there remains a difference between the graviton propagator ($\ref{gravprop}$) for $\Lambda_K=0$ and the massive spin 2 particle propagator ($\ref{s2mprop}$) in the limit $m \rightarrow 0$ even if they are coupled to physical sources $k_\mu t^{\mu \nu}=0$. This is known as the van Dam-Veltman-Zakharov (vDVZ) discontinuity \cite{Damvelt} \cite{Zak} and is ultimatively due to the extra degrees of freedom of the massive spin 2 state as compared to the massless case.

For $\Lambda_K \ne 0$ the appearance of the ghost and the imaginary ''masses''  $i \sqrt{\Lambda_K}$ respectively disqualify the \textit{linearized} theory described by the action
\begin{eqnarray}
S_{EH}^L=\int d^4 x \left( \mathcal{L}^{(1)} + \mathcal{L}^{(2)} +  \mathcal{L}_{GF}  \right)  \label{EHL}
\end{eqnarray}
 as a consistent quantum theory of gravity because negative probabilities, gravitons travelling faster than the speed of light and other problematic behaviour will occur. However, this can be seen as an artefact of our expansion around flat space that is \textit{not} a solution of the field equations for $\Lambda_K \ne 0$ and thus does not represent the ground state of the theory. In case of $\Lambda_K \ne 0$, the solutions of the field equations that correspond to the flat space solution for $\Lambda_K=0$ (i.e. that have the maximal number of symmetries, 10 Killing vectors) are either de Sitter ($\Lambda_K > 0$) or anti de Sitter ($\Lambda_K < 0$). See Appendix ($\ref{Conv}$) for some more details. Indeed it can be shown \cite{Novello} that in an expansion around (anti) de Sitter space, the spectrum of the theory with nonvanishing cosmological constant does not suffer from ghosts or other pathologies and the gravitational field involves only two physical degrees of freedom.

In this work, we will not attempt to enter a deeper discussion involving quantum field theory in curved spacetimes. However, it will turn out in the next chapter that the case $\Lambda_K \ne 0$ has some interesting implications for the renormalization group flow of gravity. Therefore, we propose the following pragmatic point of view. It can be argued \cite{Gaba} that a nonlinear completion\footnote{It is known \cite{Vain} that an analysis of this kind can cure the the vDVZ discontinuity problem.} of the action ($\ref{EHL}$) such that ultimatively, the full Einstein Hilbert action with cosmological constant ($\ref{EH}$) is restored,
\begin{eqnarray}
S_{EH}= S_{EH}^L + V(h),
\end{eqnarray}
 should eliminate the ghost, the longitudinal polarisations of a massive graviton and the wrong sign ''mass'' terms respectively. This is due to the nonlinear interactions in $V(h)$ which are supposed to allow for a background rearrangement such that the signature of the kinetic term of the ghost changes and a shift to the correct ground state of the theory occurs. Then one effectively ends up with a QFT in curved spacetime (de Sitter or anti de Sitter).  
     
Thus, we will \textit{formally} treat quantum gravity with (small) nonvanishing $\Lambda_K$ as a QFT in flat Euclidean space, keeping in mind that a more accurate treatment would amount to doing it in (anti) de Sitter space. Note that since we will work with an (effective) IR momentum space cutoff $\Lambda$, the occurrence of the wrong sign ''mass'' term in the propagator ($\ref{gravprop}$) for $\Lambda_K> 0$ will not pose any problems from a technical point of view as long as we keep $\Lambda^2> \Lambda_K$.

\medskip
Finally, it remains to derive the propagator $\Delta_{GH \mu \nu}$ for the ghost fields $\overline{C}^\nu$, $C_\mu$. This one can be extracted out of the ghost term ($\ref{gh}$). Employing eq. ($\ref{fp}$) we find
\begin{equation}
S_{GH}(h, C, \overline{C}) = \int d^4 x  \overline{C}_\mu \delta^\mu_{\ \nu} \partial^\rho \partial_\rho {C}^\nu + \mathcal{O}(\lambda)  \label{propghPS0}
\end{equation}
and thus
\begin{equation}
\Delta_{GH \mu \nu}(x-y)=  \int \frac{d^4 k}{(2 \pi)^4}  e^{ik(x-y)} \Delta_{GH \mu \nu}(k^2)  \label{propghPS}
\end{equation}
where
\begin{equation}
\Delta_{GH \mu \nu}(k^2)= \frac{\delta_{\mu \nu}}{k^2} .  \label{propgh}
\end{equation}

\end{subsection}

\end{section}

\begin{section}{Cutoff regularization and Polchinski's equation}

\begin{subsection}{UV cutoff regularization of gravity and violation of the gauge invariance}  \label{CutRegG}
We begin by decomposing the total action ($\ref{Stot}$) of Einstein gravity into the graviton and ghost kinetic terms, a term linear in the field steming from the cosmological constant and an interaction Lagrangian $\mathcal{L}_{int}^{EH} \big(h,C,\overline{C}\big)$:
\begin{eqnarray}
{S}_{tot} &=& \int_x - 2 \frac{\Lambda_K}{\lambda}  h   - \frac{1}{2} \langle h^{\mu \nu}, \Delta^{-1}_{\mu \nu \rho \sigma} h^{\rho \sigma} \rangle  -  \langle \overline{C}^\mu, \Delta_{GH \mu \nu}^{-1} {C}^\nu    \rangle  + \int_x  \mathcal{L}_{int}^{EH} \big(h,C,\overline{C}\big) . \nonumber \\  \label{Stotde}
\end{eqnarray}
In addition, we introduce an extended interaction Lagrangian $\tilde{\mathcal{L}}_{int}^{EH}(h,C,\overline{C}, \beta, \tau)$ involving the composite field operators $ \mathcal{L}_C \tilde{g}^{\mu \nu}$ and $ C^\nu \partial_\nu C^\mu  $ by 
\begin{eqnarray}
\tilde{\mathcal{L}}_{int}^{EH}(h,C,\overline{C}, \beta, \tau) := \mathcal{L}_{int}^{EH}(h,C,\overline{C}) +  \beta_{\mu \nu} \mathcal{L}_C \tilde{g}^{\mu \nu}  + \lambda \tau_\mu C^\nu \partial_\nu C^\mu  .
\end{eqnarray}
Note that from the definition ($\ref{StotE}$) of the extended total action follows that
\begin{eqnarray}
\tilde{S}_{tot} &=& \int_x - 2 \frac{\Lambda_K}{\lambda}  h   - \frac{1}{2} \langle h^{\mu \nu}, \Delta^{-1}_{\mu \nu \rho \sigma} h^{\rho \sigma} \rangle  -  \langle \overline{C}^\mu, \Delta_{GH \mu \nu}^{-1} {C}^\nu    \rangle  + \int_x  \tilde{\mathcal{L}}_{int}^{EH} \big(h,C,\overline{C}, \beta, \tau\big) . \nonumber \\  \label{StotEde}
\end{eqnarray}
In order to be able to write down a renormalization group equation for Euclidean quantum gravity, we have to employ a cutoff regularization. Let $\Lambda$ be some scale. We introduce a cutoff function\footnote{One can think of $K(-\partial^2/\Lambda^2)$ as $K(k^2/\Lambda^2)$ where $k^2$ are the eigenvalues of the operator $-\partial^2$. } $K(-\partial^2/\Lambda^2)$ which has the properties ($\ref{cutf}$):
\begin{eqnarray}
K(z) = \left\{ \begin{array}{c c c}   1 & ,&  0 \le z \le 1 \\ \mbox{smooth} & ,& 1 < z < 4   \\  0 & ,& 4 \le z . \end{array} \right. \nonumber
\end{eqnarray}
The UV regularization is done by multiplying the graviton propagator ($\ref{gravpropPS}$)  with the cutoff function,
\begin{eqnarray}
\Delta_{\mu \nu \rho \sigma} (x-y) &\rightarrow& \Delta^{\Lambda}_{\mu \nu \rho \sigma}(x-y) := K(-\partial^2/\Lambda^2) \Delta_{\mu \nu \rho \sigma}(x-y) . \label{propgravreg}
\end{eqnarray}
Similarly, we regularize the propagator ($\ref{propghPS}$) for the ghosts:
\begin{eqnarray}
\Delta_{\mu \nu GH} (x-y) &\rightarrow& \Delta^{\Lambda}_{\mu \nu GH}(x-y) := K(-\partial^2/\Lambda^2) \Delta_{\mu \nu GH} (x-y)   . \label{propghreg}
\end{eqnarray}
At this point, some important comments are in order. A momentum cutoff regularization inevitably violates the local gauge invariance of any gauge theory. To see this, consider a homogeneous gauge transformation of some field $\phi(x)$,
\begin{eqnarray}
\phi(x) \rightarrow \Omega(x) \phi(x).
\end{eqnarray}
In momentum space the gauge transformed field is given by a convolution of the field with the gauge transformation,
\begin{eqnarray}
\int d^4k \  \Omega(p-k) \phi(k),
\end{eqnarray}
and consequently any division of momenta is lost. On the quantum level, the gauge symmetry leads to BRS invariance of the total action and ultimatively to the Slavnov-Taylor-Identities for the generating functional. Hence, if gauge invariance is destroyed by the regulator, so will be the BRS invariance and we will end up with violated Slavnov-Taylor-identities (vSTI).

This fact poses a serious obstacle to any kind of analysis of a gauge theory involving RG flow equations. Moreover, since the STI are generated by nonlinear BRS transformations of the fields, additional complications arise because composite field operators will have to be renormalized, too. 

In the case of the perturbative renormalization of Yang-Mills (YM) theory via flow equations, it has been shown \cite{KM} that these problemes can be surmounted by the following procedure which has as its ultimate aim the restoration of the STI (and therefore the gauge invariance) of the theory in the limit $\Lambda_0 \rightarrow \infty$, i.e. when the UV cutoff is taken away:
\begin{enumerate}
\item As a first step, one disregards the violation of the STI and establishes a finite UV behaviour of the vertex functions of an effective potential $L$ for the theory at hand. Here, $L$ is defined as in eq. ($\ref{Spol}$). Since local gauge invariance is destroyed in the regularized theory, the RG flow will generally produce contributions to all those operators that are not forbidden by the unbroken global symmetries, e.g. $O(4)$ invariance. Thus, one has to add as counterterms to $L$ not only terms corresponding to the ''classical'' interaction part of the theory, but also all local renormalizable operators allowed by the unbroken global symmetries. In general, the number of such terms is much higher than the actual number of free parameters in the theory\footnote{In the case of spontaneously broken $SU(2)$ Yang Mills theory, there are 37 such counterterms needed, whereas only 9 physical renormalization conditions exist \cite{Mull}.}. 

Thus for an arbitrary set of renormalization conditions, a \textit{family} of finite theories is established. The procedure up to now is essentially the same as the renormalization of a scalar field theory described in chapter ($\ref{RenFlow}$).

\item The second step consists of finding one particular choice for the set of renormalization conditions such that in the limit $\Lambda_0 \rightarrow \infty$ the theory satisfies the STI, i.e. gauge invariance is restored. This so-called \textit{fine tuning procedure} is rather involved and, roughly speaking, goes as follows \cite{KM}. 

It turns out that the violation of the STI can be expressed through vertex functions carrying a space-time integrated operator insertion which depends on the UV cutoff $\Lambda_0$. These vertex functions will vanish in the limit $\Lambda_0 \rightarrow \infty$ if the set of renormalization conditions can be chosen such that the inserted vertex functions have vanishing renormalization conditions for all of their relevant local parts. However, it turns out that there can be more relevant parts of these insertions\footnote{Again for the example of spontaneously broken $SU(2)$ Yang Mills theory, the insertion has canonical dimension 5 and there are 53 relevant parts of it \cite{Mull}.} than renormalization conditions describing the family of finite theories mentioned above. In this case it has also to be shown that there are linear interdependences between the relevant parts of the insertions. This can be achieved by relating the BRS variation of the bare action (which can be worked out explicitly) with the vSTI for the vertex fuctions by means of the flow equations. 

Clearly, the fine-tuning procedure involves the renormalization of composite field operators, namley the BRS variations of the fields and the space-time integrated operator insertion describing the vSTI. In the context of renormalization via flow equations, this problem has been studied in \cite{KK3}, \cite{KK4}. 

To summarize this second step once more from a slightly different point of view, one could say that the STI \textit{determine} all arbitrarily chosen renormalization conditions of step one but the ''real'' free parameters of the gauge theory. 
\end{enumerate}
Now let us come back to our analysis of quantum gravity as an effective field theory by means of the RG flow equations. Since the cutoff regularization ($\ref{propgravreg}$) violates the invariance of the Einstein-Hilbert action ($\ref{gaexp}$) under the local ''gauge'' transformations ($\ref{g1}$), the BRS invariance ($\ref{BRSTE}$) of the (extended) total gravity action will be broken and as a result, we will observe a violation of the Slavnov-Taylor-Identities  ($\ref{STW}$) and  ($\ref{STI}$) respectively.

Thus, in order to arrive at a theory that yields sensible physical predictions, one would expect that some kind of fine-tuning procedure as described above for the case of Yang-Mills theory is needed also for quantum gravity in order to restore gauge invariance and hence the STI in the end. However, the situation for gravity is somewhat different because contrary to the Yang-Mills case, the theory is perturbatively nonrenormalizable (at least without cosmological constant). Thus, it will in general not make sense to discuss the restoration of the STI in the limit $\Lambda_0 \rightarrow \infty$, i.e. when the UV cutoff is taken away. We will come back to this point in the next chapter and propose a fine-tuning procedure adapted to the case of quantum gravity.

For the time being, we will restrain ourselves to the implications of step one of our discussion on the previous page for quantum gravity. This means that the cutoff regularization ($\ref{propgravreg}$) and the resulting violation of the gauge/BRS symmetry ($\ref{BRSTE}$) enforces the introduction of counterterms for \textit{all} local renormalizable operators still allowed by the unbroken global symmetries. In our case, this is just the Euclidean $O(4)$ invariance. Hence, a bare total gravity action ${S}_{tot}({\Lambda_0})$ will be defined in the following way:
\begin{eqnarray}
{S}_{tot}(\Lambda_0) &=& \int_x A  \frac{(2 \pi)^4}{2}  h   - \frac{1}{2} \langle h^{\mu \nu}, \Delta^{\Lambda_0 \ -1}_{\mu \nu \rho \sigma} h^{\rho \sigma} \rangle  - \langle \overline{C}^\mu, \Delta_{GH \mu \nu}^{\Lambda_0 \ -1} {C}^\nu    \rangle  + {L} \big(h,C,\overline{C}, \Lambda_0 \big) . \nonumber \\  \label{StotdeReg}
\end{eqnarray}
Note that we included an arbitrary renormalization constant\footnote{The factor ${(2 \pi)^4}/{2}$ is chosen in order to obtain a nice RGE in the next section. } $A$ for the term linear in the field $h$. Furthermore, it is understood that the cosmological ''mass term''  $-2 \Lambda_K$ appearing in the graviton propagator ($\ref{gravprop}$) is replaced by another arbitrary renormalization constant $B_1$, and that the ghost propagator becomes equipped with a ''mass'' squared $B_2$.  See eqns. ($\ref{gravpropG}$) and ($\ref{ghpropG}$) for the new momentum space propagators. Finally, we introduced the bare interaction term
\begin{eqnarray}
{L} \big(h,C,\overline{C}, \Lambda_0 \big)  := \int_x  \mathcal{L}_{int}^{EH} \big(h,C,\overline{C}\big) + L_{C.T.}\big(h,C,\overline{C},\Lambda_0     \big).  \label{LGrav}
\end{eqnarray}
In eq. ($\ref{LGrav}$), $\mathcal{L}_{int}^{EH}$ means the ''classical'' interaction Lagrangian of quantum Einstein gravity introduced in eq. ($\ref{Stotde}$), whereas $L_{C.T.}\big(h,C,\overline{C},\Lambda_0  \big)$ contains all local counterterms necessary to cancel the upcoming divergences. Note that since $\mathcal{L}_{int}^{EH}$ involves (infinitely many) nonrenormalizable operators as will be explained in section ($\ref{GENR}$), in the standard perturbative  treatment we would also need counterterms corresponding to higher order field invariants (i.e. higher powers of the  curvature, such as $R^2$) that are not present in $\mathcal{L}_{int}^{EH}$ \cite{Hooft} \cite{Goro}. This would be the case even if a symmetry respecting regulator would be used, such as dimensional regularization. However, if the theory is treated as an effective field theory with flow equations along the lines developed in chapter ($\ref{EffFlow}$), only a finite number of counterterms will be needed depending on an improvement index $s$. This will become clear in the next chapter.

By virtue of the definition ($\ref{StotdeReg}$), we may now easily write down an UV regularized generating functional for quantum gravity:
\begin{eqnarray}
W(J; {\Lambda_0}) = \int \mathcal{D} h_{\mu \nu} \mathcal{D} C^\mu \mathcal{D} \overline{C}_\mu \ e^{ \left. S_{tot}({\Lambda_0}) + S_{J} \right.} . \label{GFQGReg}
\end{eqnarray}
Defining also an extended bare total gravity action $\tilde{S}_{tot}({\Lambda_0})$ via
\begin{eqnarray}
 \tilde{S}_{tot}(\Lambda_0) &=& \int_x A  \frac{(2 \pi)^4}{2}  h   -  \frac{1}{2} \langle h^{\mu \nu}, \Delta^{\Lambda_0 \ -1}_{\mu \nu \rho \sigma} h^{\rho \sigma} \rangle  -  \langle \overline{C}^\mu, \Delta_{GH \mu \nu}^{\Lambda_0 \ -1} {C}^\nu    \rangle  + \tilde{L} \big(h,C,\overline{C} , \beta, \tau, \Lambda_0 \big)  \nonumber \\  \label{StotdeRegE}
\end{eqnarray}
where
\begin{eqnarray}
\tilde{L} \big(h,C,\overline{C}, \beta, \tau, \Lambda_0 \big)  := \int_x \tilde{\mathcal{L}}_{int}^{EH}(h,C,\overline{C}, \beta, \tau) + \tilde{L}_{C.T.}\big(h,C,\overline{C}, \beta, \tau,    \Lambda_0  \big),  \label{LGravE}
\end{eqnarray}
we arrive at the extended regularized functional
\begin{eqnarray}
W(J,\beta_{\mu \nu},\tau_\mu; {\Lambda_0} ) = \int \mathcal{D} h_{\mu \nu} \mathcal{D} C^\mu \mathcal{D} \overline{C}_\mu \ e^{ \left. \tilde{S}^{\Lambda_0}_{tot} + S_{J} \right.} . \label{GFQGEReg}
\end{eqnarray}
The latter can be used as a starting point for deducing violated Slavnov-Taylor-Identities. Note that counterterms for the nonlinear BRS variations have been included in ($\ref{LGravE}$).

\end{subsection}

\begin{subsection}{Polchinski's equation for Euclidean quantum gravity}  \label{PolGravSec}
In this section, we will work in momentum space. In the spirit of the effective action ($\ref{SpolM}$) and our definition ($\ref{StotdeReg}$) of a bare gravity action,  we introduce an effective total action $S_{tot}({\Lambda})$ for quantum gravity referring to some scale $\Lambda$. To do so, we employ regularized momentum space graviton and ghost propagators,
\begin{eqnarray}
\Delta^{\Lambda}_{\mu \nu \rho \sigma}(k^2) &=& \frac{1}{2} K(k^2/\Lambda^2) \ \frac{\delta_{\mu \rho} \delta_{\nu \sigma}+ \delta_{\mu \sigma} \delta_{\nu \rho}    -\delta_{\mu \nu} \delta_{\rho \sigma}}{k^2 + B_1}  \label{gravpropG} \\
\Delta^{\Lambda}_{GH \mu \nu} (k^2) &=& - K(k^2/\Lambda^2)  \ \frac{\delta_{\mu \nu}}{k^2+ B_2}.   \label{ghpropG} 
\end{eqnarray}
Note that two arbitrary renormalization constants $B_1$ and $B_2$ (''mass'' squares) have been included in  ($\ref{gravpropG}$)  and ($\ref{ghpropG}$). Our effective action for quantum gravity takes the following form: 
\begin{eqnarray} 
S_{tot}(\Lambda) &=&  \int \frac{d^4k}{(2 \pi)^4} \bigg(  A  \frac{(2 \pi)^4}{2} \ \delta(k) h(k)    -  \frac{1}{2} C_1 h^{\mu \nu}(k) \left(\Delta^{\Lambda}_{\mu \nu \rho \sigma}(k^2)    \right)^{-1} h^{\rho \sigma}(-k)  \nonumber  \\ && \ \ \ \ \ \ \ \  - C_2 \ \overline{C}^\mu(k) \left(\Delta^{\Lambda}_{GH \mu \nu}(k^2)\right)^{-1} {C}^\nu(-k) \bigg) +  L \Big(h, C, \overline{C}; \Lambda \Big)  \label{SGraveff}
\end{eqnarray}
where $L(h, C, \overline{C}; \Lambda)$ is a not necessarily local interaction term depending on the fields $h_{\mu \nu}, C^\mu,  \overline{C}_\mu$ and the scale $\Lambda$. We will refer to it as ''effective potential''.

The action ($\ref{SGraveff}$) gives rise to an Euclidean quantum field theory described by the generating functional
\begin{eqnarray}
W(J) = \int \mathcal{D} h_{\mu \nu} \mathcal{D} C^\mu \mathcal{D} \overline{C}_\mu \ e^{ S_{tot}(\Lambda) + S_{J} } .  \label{WgravR}
\end{eqnarray}
The dependence of $L(h, C, \overline{C}; \Lambda)$ on the scale $\Lambda$ is given by the Polchinksi RGE for quantum gravity, which will be deduced in the following theorem.
\begin{satz}[Polchinski's equation for Euclidean quantum gravity]  \label{PolQG}
Let $\Lambda$ be some scale, and assume\footnote{This is not an essential ingredient. See Appendix ($\ref{LZ}$) for details.} that
\begin{equation}
J(k)=0 \ \ \text{for} \ \ k^2>\Lambda^2.  \label{noJ}
\end{equation}
Under a change of $\Lambda$ the generating functional $W(J)$ defined in eq. ($\ref{WgravR}$) remains unchanged,
\begin{eqnarray}
\Lambda \frac{d}{d \Lambda} W(J) =0,
\end{eqnarray}
if the effective potential $L(h, C, \overline{C}; \Lambda)$ satisfies
\begin{eqnarray}
-\Lambda \frac{d}{d \Lambda} L &=& \int d^4k \frac{(2 \pi)^4}{2 } \Lambda \frac{d}{d \Lambda} \Delta^{\Lambda}_{\mu \nu \rho \sigma} 
\left( \frac{\delta L}{\delta h_{\mu \nu}} \frac{\delta L}{\delta h_{\rho \sigma}} + \frac{\delta^2 L}{\delta h_{\mu \nu} \delta h_{\rho \sigma}  }  + A  \delta (k) \delta^{\mu \nu }  \frac{\delta L}{\delta h_{\rho \sigma}}    \right) \nonumber \\ && \ \ \ + \ \int d^4k {(2 \pi)^4} \Lambda \frac{d}{d \Lambda} \Delta^{\Lambda}_{GH \mu \nu} 
\left( \frac{\delta L}{\delta \overline{C}_{\mu}} \frac{\delta L}{\delta {C}_{\nu}} + \frac{\delta^2 L}{\delta \overline{C}_{\mu} \delta {C}_{\nu} }    \right). \label{polgrav}
\end{eqnarray}

\end{satz}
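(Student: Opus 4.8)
The plan follows the standard derivation of Polchinski's equation, adapted to the graviton $h_{\mu\nu}$ and the ghost pair $C^{\mu},\overline C_{\mu}$. Write the effective action $(\ref{SGraveff})$ as $S_{tot}(\Lambda)=S_{\mathrm{free}}(\Lambda)+L$, with $S_{\mathrm{free}}(\Lambda)$ the linear $A$-term together with the regularized graviton and ghost kinetic terms; the constants $C_{1},C_{2}$ are $\Lambda$-independent, so after rescaling $h,C,\overline C$ (and relabelling $L$ accordingly) we may take the quadratic part to be $-\tfrac12\langle h,(\Delta^{\Lambda})^{-1}h\rangle-\langle\overline C,(\Delta^{\Lambda}_{GH})^{-1}C\rangle$ with the propagators $(\ref{gravpropG})$, $(\ref{ghpropG})$, where $(\Delta^{\Lambda})^{-1}$ is the genuine inverse on symmetric tensors exhibited in $(\ref{invprop})$--$(\ref{one})$. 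Then $W(J)$ depends on $\Lambda$ only through (i) the explicit factor $\Lambda\frac{d}{d\Lambda}L$ in the integrand and (ii) the cutoff functions inside $(\Delta^{\Lambda})^{-1}$, $(\Delta^{\Lambda}_{GH})^{-1}$; the theorem asserts these cancel exactly when $L$ obeys $(\ref{polgrav})$.

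\textbf{Graviton sector.} Treating the quadratic part plus the linear $A$-term as a (formal) Gaussian measure $d\mu(h)$, the key tool is the functional integration-by-parts identity
\[
\int d\mu(h)\,\frac{\delta F}{\delta h_{\mu\nu}(k)}=\int d\mu(h)\Bigl(\bigl[(\Delta^{\Lambda})^{-1}h\bigr]^{\mu\nu}(k)-\tfrac{A}{2}\delta(k)\delta^{\mu\nu}\Bigr)F(h),
\]
and its ghost analogue. Using $\Lambda\frac{d}{d\Lambda}(\Delta^{\Lambda})^{-1}=-(\Delta^{\Lambda})^{-1}\bigl(\Lambda\frac{d}{d\Lambda}\Delta^{\Lambda}\bigr)(\Delta^{\Lambda})^{-1}$, differentiating $d\mu$ brings down $\tfrac12\bigl\langle(\Delta^{\Lambda})^{-1}h,\bigl(\Lambda\frac{d}{d\Lambda}\Delta^{\Lambda}\bigr)(\Delta^{\Lambda})^{-1}h\bigr\rangle$; rewriting each $(\Delta^{\Lambda})^{-1}h$ via the identity above and integrating by parts once more, the two functional derivatives act on $e^{L+S_{J}}$ and generate $\frac{\delta L}{\delta h_{\mu\nu}}\frac{\delta L}{\delta h_{\rho\sigma}}+\frac{\delta^{2}L}{\delta h_{\mu\nu}\delta h_{\rho\sigma}}$, the $A\,\delta(k)\delta^{\mu\nu}\frac{\delta L}{\delta h_{\rho\sigma}}$ term of $(\ref{polgrav})$ (which in fact vanishes, since $\Lambda\frac{d}{d\Lambda}\Delta^{\Lambda}$ is supported in $\Lambda^{2}<k^{2}<4\Lambda^{2}$ and hence zero at $k=0$, but which is retained for uniformity with $(\ref{polm})$), a field-independent ``vacuum'' term, and cross terms with $\delta S_{J}/\delta h\sim t^{\rho\sigma}$. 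The latter — and any surviving explicit $J$-dependence — drop out because $\Lambda\frac{d}{d\Lambda}\Delta^{\Lambda}$ is supported in $\Lambda^{2}<k^{2}<4\Lambda^{2}$ while by hypothesis $(\ref{noJ})$ the source lives in $k^{2}\le\Lambda^{2}$; the symmetry factor $\tfrac{(2\pi)^{4}}{2}$ counts the two ways of assigning the two derivatives.

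\textbf{Ghost sector and conclusion.} The identical manipulation for the Grassmann integral over $C^{\mu},\overline C_{\mu}$ with covariance $\Delta^{\Lambda}_{GH}$ yields the second line of $(\ref{polgrav})$; since $C$ and $\overline C$ are distinct integration variables there is no $1\leftrightarrow2$ permutation and hence the overall factor is $(2\pi)^{4}$ rather than $\tfrac{(2\pi)^{4}}{2}$, and the anticommuting signs must be tracked so that $\frac{\delta L}{\delta\overline C_{\mu}}\frac{\delta L}{\delta C_{\nu}}+\frac{\delta^{2}L}{\delta\overline C_{\mu}\delta C_{\nu}}$ appears with the sign shown. Adding the two sectors, $\Lambda\frac{d}{d\Lambda}W(J)$ equals the functional average, with respect to the (unnormalized) weight $e^{S_{tot}(\Lambda)+S_{J}}$ of $(\ref{WgravR})$, of $\Lambda\frac{d}{d\Lambda}L$ plus the right-hand side of $(\ref{polgrav})$ (up to the field-independent vacuum term, which merely renormalizes $W$). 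This average vanishes for all sources obeying $(\ref{noJ})$ precisely when the bracket vanishes, i.e. when $L$ satisfies $(\ref{polgrav})$; conversely, if $(\ref{polgrav})$ holds the average is zero and $\Lambda\frac{d}{d\Lambda}W(J)=0$. Because the scalar version of this computation is exactly what produces $(\ref{polm})$, the present argument also supplies the derivation of $(\ref{polm})$ that was deferred in chapter $(\ref{RenFlow})$.

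\textbf{Main obstacle.} No step is conceptually deep; the work is all bookkeeping. The delicate points are the Grassmann signs and combinatorial factors in the ghost Gaussian integration by parts, and the clean justification for dropping the source- and $A$-dependent cross terms, which rests on the compact support of $\Lambda\frac{d}{d\Lambda}\Delta^{\Lambda}$ together with $(\ref{noJ})$. Keeping the index contractions consistent on the symmetric-tensor subspace, using $(\ref{invprop})$--$(\ref{one})$, is a minor additional chore.
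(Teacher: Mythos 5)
Your proposal is correct and is essentially the paper's own derivation read in the forward direction: the paper exhibits $\Lambda\frac{d}{d\Lambda}W$ as an integral of total functional derivatives under the path integral (using $K^{-2}\Lambda\frac{dK}{d\Lambda}=-\Lambda\frac{d}{d\Lambda}K^{-1}$ and the non-overlap of $J$ with the support of $\frac{dK}{d\Lambda}$), which is the same manipulation as your Gaussian integration by parts. Your side remark that the $A\,\delta(k)$ term actually vanishes because $\Lambda\frac{d}{d\Lambda}\Delta^{\Lambda}(0)=0$ is a correct observation the paper does not make, but it does not change the argument.
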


\begin{proof}
Differentiation of $W(J)$ with respect to $\Lambda$ yields
\begin{eqnarray}
\Lambda \frac{d}{d \Lambda} W &=& \int \mathcal{D} h_{\mu \nu} \mathcal{D} C^\mu \mathcal{D} \overline{C}_\mu  \left[ \int \frac{d^4k}{(2 \pi)^4} \left( -\frac{1}{{2}}h^{\mu \nu} \Lambda \frac{d}{d \Lambda} \left( \Delta^{\Lambda}_{\mu \nu \rho \sigma} \right)^{-1} h^{\rho \sigma} \right. \right.  \nonumber \\ & &  \ \ \ \left. - \ \overline{C}^\mu \Lambda \frac{d}{d \Lambda} \left( \Delta^{\Lambda}_{GH \mu \nu} \right)^{-1} {C}^\nu  \right) +  \left. \Lambda \frac{d}{d \Lambda} L(h, C, \overline{C}; \Lambda) \right] e^{ S_{tot}(\Lambda) + S_{J}} .  \nonumber \\  \label{DiffW}
\end{eqnarray}
Due to eq. ($\ref{noJ}$), $J(k)$ has no overlap with $\frac{d K}{d \Lambda}$ for a cutoff function which has the properties ($\ref{cutf}$). Moreover, we have\footnote{See also eq. ($\ref{one}$).}
\begin{eqnarray}
\big(\Delta_{\mu \nu \alpha \beta}^\Lambda(k^2)\big)^{-1} \Delta^{\alpha \beta \rho \sigma}(k^2) &=& \frac{1}{2} \left(\delta_\mu^\rho \delta_\nu^\sigma + \delta_\mu^\sigma \delta_\nu^\rho    \right) K^{-1} \\
\big(\Delta_{GH \mu \alpha}^\Lambda(k^2)\big)^{-1} \Delta^{\alpha \nu}_{GH}(k^2) &=& \delta_\mu^\nu K^{-1}.
\end{eqnarray}
Using $ K^{-2} \Lambda \frac{d K}{d \Lambda} =- \Lambda \frac{d }{d \Lambda} K^{-1}$ and neglecting field-independent terms which only change $W(J; \Lambda)$ by an overall factor we therefore find that
\begin{eqnarray}
&& \hspace{-0.7cm} \int d^4k \ \Lambda \frac{d K}{d \Lambda} \int \mathcal{D} h_{\mu \nu} \frac{\delta }{\delta  h^{\mu \nu}} \left[ \left( h^{\mu \nu} K^{-1} + \frac{(2 \pi)^4}{2}  \Delta^{\mu \nu \rho \sigma} \frac{\delta }{\delta  h^{\rho \sigma}}   \right) e^{S_{tot}(\Lambda)} \right]  \nonumber \\ && \quad = \  \int \mathcal{D} h_{\mu \nu}  \int {d^4k} \ \frac{1}{2} \left[ \frac{1}{(2\pi)^4 }  h^{\mu \nu} \Lambda \frac{d }{d \Lambda} \left( \Delta^{\Lambda}_{\mu \nu \rho \sigma} \right)^{-1} h^{\rho \sigma} \right. \nonumber \\ && \qquad \ \ + \  (2\pi)^4 \left. \Lambda \frac{d}{d \Lambda} \Delta^{\Lambda}_{\mu \nu \rho \sigma} 
\left( \frac{\delta L}{\delta h_{\mu \nu}} \frac{\delta L}{\delta h_{\rho \sigma}} + \frac{\delta^2 L}{\delta h_{\mu \nu} \delta h_{\rho \sigma} } + A   \delta (k) \delta^{\mu \nu }  \frac{\delta L}{\delta h_{\rho \sigma}}   \right)  \right] e^{S_{tot}(\Lambda)}. \nonumber \\
\end{eqnarray}
Employing the identity $\{ \frac{\delta }{\delta C_{\mu}}, C_\mu \}= \{ \frac{\delta }{\delta \overline{C}_{\mu}}, \overline{C}_\mu \}=1$ and remembering that $\frac{\delta }{\delta C^{\mu}} \frac{\delta }{\delta \overline{C}^{\nu}} = -\frac{\delta }{\delta \overline{C}^{\nu}} \frac{\delta }{\delta C^{\mu}}$  we similarly have
\begin{eqnarray}
& & \hspace{-0.7cm} \int d^4k  \Lambda \frac{d K}{d \Lambda} \int \mathcal{D} C^\mu \mathcal{D} \overline{C}_\mu \left[ \frac{\delta }{\delta \overline{C}^{\mu}} \left( \overline{C}^{\mu} K^{-1} e^{S_{tot}(\Lambda)} \right) \right. \nonumber \\ && \left. \ \qquad \qquad \qquad \qquad \qquad \qquad + \ \frac{\delta }{\delta C^{\mu}} \left( \left( C^{\mu} K^{-1} + (2 \pi)^4 \Delta_{GH }^{\mu \nu}  \frac{\delta }{\delta \overline{C}^{\nu}} \right) e^{S_{tot}(\Lambda)} \right) \right]  \nonumber \\& & = \ - \int \mathcal{D} C^\mu \mathcal{D} \overline{C}_\mu \int d^4k \left[ \frac{1}{(2 \pi)^4} \overline{C}^{\mu} \Lambda \frac{d}{d \Lambda} \left(\Delta^{\Lambda}_{GH \mu \nu} \right)^{-1} {C}^\nu  \right. \nonumber \\ && \qquad \qquad \qquad \qquad \qquad +  \left. (2\pi)^4 \Lambda \frac{d}{d \Lambda} \Delta^{\Lambda}_{GH \mu \nu} 
\left( \frac{\delta L}{\delta \overline{C}_{\mu}} \frac{\delta L}{\delta {C}_{\nu}} + \frac{\delta^2 L}{\delta \overline{C}_{\mu} \delta {C}_{\nu} }    \right) \right] e^{S_{tot}(\Lambda)}.  \nonumber \\
\end{eqnarray}
Hence, if we choose in eq. ($\ref{DiffW}$) the quantity  $-\Lambda \frac{d}{d \Lambda} L$ as proposed by the Polchinski RGE ($\ref{polgrav}$) we obtain
\begin{eqnarray}
\hspace{-1em} \Lambda \frac{d}{d \Lambda} W &=& \int d^4k  \ \Lambda \frac{d K}{d \Lambda} \int \mathcal{D} h_{\mu \nu}  \mathcal{D} C^\mu \mathcal{D} \overline{C}_\mu \left[ \frac{\delta }{\delta  h^{\mu \nu}} \left( \left( h^{\mu \nu} K^{-1} + \frac{(2 \pi)^4}{2}  \Delta^{\mu \nu \rho \sigma} \frac{\delta }{\delta  h^{\rho \sigma}}   \right) e^{S_{tot}(\Lambda)} \right) \right. \nonumber \\ && + \left. \frac{\delta }{\delta \overline{C}^{\mu}} \left( \overline{C}^{\mu} K^{-1} e^{S_{tot}(\Lambda)} \right) + \frac{\delta }{\delta C^{\mu}} \left( \left( C^{\mu} K^{-1} + (2 \pi)^4 \Delta_{GH }^{\mu \nu}  \frac{\delta }{\delta \overline{C}^{\nu}} \right) e^{S_{tot}(\Lambda)} \right)  \right]   \nonumber \\ &=&0 .
\end{eqnarray}
\begin{flushright}
$\Box$
\end{flushright}
\end{proof}
We would like to point out that the derivation of the Polchinski equation does \textit{not} depend on translational invariance of the effective potential \cite{Gerhard}.

The Polchinski RGE ($\ref{polgrav}$) for Euclidean quantum gravity can also be written down in position space:
\begin{eqnarray}
&& \hspace{-0.7cm}  -\Lambda \frac{d}{d \Lambda} L \nonumber \\ && = \frac{1}{2 } \int_{xy} \Lambda \frac{d}{d \Lambda} \Delta^{\Lambda}_{\mu \nu \rho \sigma} 
\left( \frac{\delta L}{\delta h_{\mu \nu}(x)} \frac{\delta L}{\delta h_{\rho \sigma}(y)} + \frac{\delta^2 L}{\delta h_{\mu \nu}(x) \delta h_{\rho \sigma}(y) }  + A \delta(x-y) \delta^{\mu \nu }  \frac{\delta L}{\delta h_{\rho \sigma}(x)}    \right) \nonumber \\ && \qquad \qquad \ + \ \int_{xy} \Lambda \frac{d}{d \Lambda} \Delta^{\Lambda}_{GH \mu \nu} 
\left( \frac{\delta L}{\delta \overline{C}_{\mu}(x)} \frac{\delta L}{\delta {C}_{\nu}(y)} + \frac{\delta^2 L}{\delta \overline{C}_{\mu}(x) \delta {C}_{\nu}(y) }    \right) . \label{polgravps}
\end{eqnarray}
Please refer to \cite{Mull} for a general derivation of position-space RGEs.

In analogy to the definition of the effective action in eq. ($\ref{SGraveff}$), we may introduce an extended effective action $\tilde{S}_{tot}(\Lambda)$ by employing an extended effective potential\footnote{It is understood that at some bare scale $\Lambda_0$, the extended effective potential $\tilde{L}$ is given by eq. ($\ref{LGravE}$).} 
\begin{equation}
\tilde{L} \big(h,C,\overline{C}, \beta, \tau, \Lambda \big). 
\end{equation}
The latter involves again the couplings $\beta_{\mu \nu},\tau_\mu$ to the BRS composite fields ($\ref{BRST1}$) and ($\ref{BRST2}$) and leads to an extended generating functional $W(J,\beta_{\mu \nu},\tau_\mu)$. One can check\footnote{Remember that the derivation of the Polchinski equation does {not} depend on translational invariance of the effective potential.} that the dependence of $\tilde{L} \big(h,C,\overline{C}, \beta, \tau, \Lambda \big)$ on the scale is again given by the Polchinski RGE ($\ref{polgrav}$) if we perform the substitution
\begin{eqnarray}
L(h, C, \overline{C}; \Lambda) \rightarrow \tilde{L} \big(h,C,\overline{C}, \beta, \tau, \Lambda \big).
\end{eqnarray}
If we would find a way to solve the RGE ($\ref{polgrav}$) for the (extended) effective potential, employing the bare potential ($\ref{LGravE}$) as initial condition at $\Lambda_0$, we would obtain a trajectory 
\begin{equation}
[\Lambda, \Lambda_0] \rightarrow \tilde{L} \big(h,C,\overline{C}, \beta, \tau, \Lambda, \Lambda_0 \big).
\end{equation}
The solution $\tilde{L} \big(h,C,\overline{C}, \beta, \tau, \Lambda, \Lambda_0 \big)$ then leads via eq. ($\ref{WgravR}$) to a generating functional $W(J,\beta_{\mu \nu},\tau_\mu; \Lambda_0 )$ of some QFT\footnote{See Appendix ($\ref{LZ}$) for a cleaner relation between $\tilde{L} \big(h,C,\overline{C}, \beta, \tau, \Lambda, \Lambda_0 \big)$ and  $W(J,\beta_{\mu \nu},\tau_\mu; \Lambda_0 )$ which does not need source terms that satisy the condition ($\ref{noJ}$).}. If this QFT would satisfy the following conditions,
\begin{enumerate}
\item only a finite number of renormalization conditions at some renormalization scale $\Lambda_R$ has to be imposed in order to retain a finite solution $L$ in the limit $\Lambda_0 \rightarrow \infty$

\item the arbitrary renormalization conditions of the various couplings that arise due to the symmetry breaking cutoff regularization can be chosen in such a way that the generating functional $W(J,\beta_{\mu \nu},\tau_\mu; \Lambda_0 )$ satisfies the STI ($\ref{STW}$) of quantum gravity in the limit $\Lambda_0 \rightarrow \infty$

\item crucial physical requirements such as the unitarity of the quantum theory are satisfied

\item the observed coupling strenghts, i.e. of the cosmological constant $\Lambda_K$ and Newtons constant $G={\lambda^2}/{32 \pi}$, are reproduced correctly, 
\end{enumerate}
then we would presumably have found the quantum theory of gravitation. However, it is well-known that so far, all attempts to find such a solution (at least in perturbation theory) have failed in one or the other points stated above. We will discuss this in more detail in the next chapter, and we will investigate the problem from the viewpoint of renormalization via flow equations and the analogous treatment of effective field theories proposed in chapter ($\ref{EffFlow}$).

\end{subsection}

\end{section}

\end{chapter}

\begin{chapter}{Euclidean Quantum Gravity via Flow Equations}  \label{PredGrav}
A bare action for quantum gravity containing all field invariants that are permitted by general coordinate invariance is introduced. The relation to higher derivative gravity is discussed, and it is argued that in the effective field theory approach the known unitarity problems \cite{Stelle} will not appear. In the following, the methods that have been developed in chapter ($\ref{EffFlow}$) for investigating the predictivity of effective field theories are applied to effective quantum gravity. As a first step, we disregard the violation of the Slavnov-Taylor identities (STI) and establish bounds for the vertex functions of the gravity effective potential in generalized perturbation theory in the renormalized renormalizable and some of the bare nonrenormalizable couplings. It is shown that by introducing appropriate notations, we may proceed in close analogy to the case of the scalar field theory considered in chapters  ($\ref{RenFlow}$)  and  ($\ref{EffFlow}$).  A set of (for the time being) arbitrary renormalization and improvement conditions is imposed. By inverting the renormalization group trajectory, it is argued that the improvement conditions force the UV cutoff $\Lambda_0$ of effective quantum gravity to be the Planck scale $M_P$. Finally, we establish that the family of theories described by the arbitrary renormalization and improvement conditions is predictive at scales far below the Planck scale with finite accuracy. We then proceed to the restoration of the STI. Introducing bare regularized BRS variations, the violated Slavnov-Taylor identities (vSTI) for the extended effective potential are worked out at the value $\Lambda=0$ of the floating cutoff.  Bounds for the vertex functions carrying the nonlinear BRS variations as operator insertions are established, and we note that a crucial difference to the Yang-Mills case lies in the fact that the gravity BRS fields contain \textit{nonrenormalizable parts}. By imposing renormalization and improvement conditions for the BRS variations, it is proven that the dependence of these vertex functions on the bare initial conditions is suppressed at scales far below the Planck scale\footnote{This is similar to the statements concerning the predictivity of the effective theory.}. The violation of the STI can be described in terms of vertex functions carrying a space-time integrated operator insertion having canonical dimension $5$. It is therefore argued that the STI can be restored to \textit{finite accuracy} if {one particular} set of arbitrary renormalization and improvement conditions for the couplings and BRS variations can be determined such that the relevant and leading irrelevant parts of the vertex functions describing the violation of the STI are driven small at scales far below the Planck scale. Here, ''small'' means the order of accuracy to which the theory is predictive. In the last section of this chapter, we consider the no-cutoff limit $\Lambda_0 \rightarrow \infty$  of quantum gravity from the viewpoint of the analysis with flow equations $\grave{a}$ la Polchinski. The vertex functions of the gravity effective potential are expanded solely in the renormalizable couplings, and their boundedness and convergence is established in the limit $\Lambda_0 \rightarrow \infty$ while the STI are still violated. Applying the same program to the vertex functions carrying the nonlinear BRS variations as operator insertions, we observe that the nonrenormalizable parts of the gravity BRS fields will go away in the no-cutoff limit if smallness of the bare BRS couplings is imposed. It is, however, shown that if the latter constraint is dropped, convergence of the BRS vertex functions may still be proven. Proceeding with the restoration of the STI of quantum gravity, we argue that for zero renormalized cosmological constant $\Lambda_K=0$ the theory will become free as $\Lambda_0 \rightarrow \infty$, and that the latter statement is compatible with gauge invariance. It is speculated whether a \textit{nonzero} cosmological constant $\Lambda_K \ne 0$ might lead to a \textit{nonvanishing} value of the gravitational constant in the no-cutoff limit, and we point out that the gravitational coupling should then become determined by the cosmological constant. Finally, we observe that a similar effect might be obtained by coupling massive fields to gravity, leading to speculations if the gravitational constant is given in terms of the cosmological constant and the masses of the elementary particles as $\Lambda_0 \rightarrow \infty$. We ask whether this indicates a Higgs-gravity connection.

\begin{section}{The bare action for effective quantum gravity}
\begin{subsection}{Quantum Einstein gravity without a cosmological constant is perturbatively nonrenormalizable} \label{GENR}
As it is explained in Appendix ($\ref{D}$), the canoncial dimension of a field of some QFT follows from the large momentum behaviour of its dedicated propagator. Thus we can determine the canonical dimensions $D_h$, $D_C$ and $D_{\overline{C}}$ of the graviton and ghost fields $h_{\mu \nu}$, $C_\mu$ and $\overline{C}_\nu$ introduced in section ($\ref{BRSQ}$) by looking at their respective momentum space propagators ($\ref{gravprop}$) and ($\ref{propgh}$). We find that in $d=4$
\begin{equation}
D_h=D_C= D_{\overline{C}}= 1 . \label{mhd}
\end{equation}
Let us now come back to the expansion ($\ref{gaexp}$) of the Einstein-Hilbert action with a cosmological constant,
\begin{eqnarray}
S_{EH}= \int d^4 x \left( - \frac{4\Lambda_K}{\lambda^2} + \mathcal{L}^{(1)} + \mathcal{L}^{(2)} + \mathcal{L}^{(3)} + ...  \right) . \nonumber
\end{eqnarray}
In eqns. ($\ref{l1}$) and ($\ref{l2}$) we have already given the Lagrangians linear and bilinear in the gravitational field $h$, $\mathcal{L}^{(1)} $ and $\mathcal{L}^{(2)}$. We will now derive some terms of higher powers in $h$. From eq. ($\ref{KKexp}$) and Ref. \cite{CLM} we obtain
\begin{eqnarray}
\mathcal{L}^{(3)} &=& \lambda \Lambda_K  \left(-\frac{2}{3} h^\mu_{\ \nu} h^\nu_{\ \rho} h^\rho_{\ \mu} + \frac{1}{2} h h^\mu_{\ \rho} h^\rho_{\ \mu} - \frac{1}{12} h^3 \right)   - \lambda  h^{\mu \nu} \left( \frac{1}{2} \partial_\mu h_{\rho \sigma} \partial_\nu h^{\rho \sigma}  - \frac{1}{4} \partial_\mu h \partial_\nu h \right.  \nonumber \\ && \hspace{4cm} + \ \partial^\rho h_{\mu \sigma} \partial^\sigma h_{\rho \nu}  - \partial_\rho h_{\mu \sigma} \partial^\rho h^{\nu \sigma} +  \partial_\rho h_{\mu \nu} \partial^\rho h   \bigg) \label{l3} \\
\mathcal{L}^{(4)} &=& \lambda^2 \Lambda_K \left(-\frac{1}{96} h^4 + \frac{1}{8} h^2 h^\mu_{\ \rho} h^\rho_{\ \mu} - \frac{1}{3} h h^\mu_{\ \nu} h^\nu_{\ \rho} h^\rho_{\ \mu} \right. \nonumber \\&& \hspace{4cm} \left.- \ \frac{1}{8} (h^\mu_{\ \rho} h^\rho_{\ \mu})^2  +\frac{1}{2} h^\mu_{\ \nu} h^\nu_{\ \rho} h^\rho_{\ \sigma} h^{\sigma}_{\ \mu}   \right) + \ \mathcal{O}(\lambda^2) \label{l4} \\
 \mathcal{L}^{(5)} &=& \lambda^3 \Lambda_K \left( -\frac{2}{5} h^\mu_{\ \nu} h^\nu_{\ \rho} h^\rho_{\ \sigma} h^{\sigma}_{\ \lambda}  h^{\lambda}_{\ \mu} + \frac{1}{6} (h^\mu_{\ \rho} h^\rho_{\ \mu}) (h^\mu_{\ \nu} h^\nu_{\ \rho} h^\rho_{\ \mu}) + \frac{1}{4} h h^\mu_{\ \nu} h^\nu_{\ \rho} h^\rho_{\ \sigma} h^{\sigma}_{\ \mu}  \nonumber \right. \\ && \left. \ -\  \frac{1}{16} h (h^\mu_{\ \rho} h^\rho_{\ \mu})^2  - \frac{1}{12} h^2 h^\mu_{\ \nu} h^\nu_{\ \rho} h^\rho_{\ \mu} + \frac{1}{48} h^3 h^\mu_{\ \rho} h^\rho_{\ \mu} - \frac{1}{960} h^5 \right) + \ \mathcal{O}(\lambda^3) . \nonumber \\ \label{l5}
\end{eqnarray}
By virtue of eqns. ($\ref{l1}$), ($\ref{l2}$) and ($\ref{l3}$)- ($\ref{l5}$) we have now explicitly given all field operators\footnote{Modulo the operators involving the ghost fields.} of quantum Einstein gravity with a cosmological constant up to canonical dimension $5$. Of course, the expansion does not stop there. In fact eq. ($\ref{gaexp}$) provides us with an infinite sum of operators of ever increasing canonical dimension, involving higher and higher powers of the gravitational coupling $\lambda$. Hence, the latter must have negative canonical dimension, as was already pointed out in eq. ($\ref{cdN}$).

The conventional wisdom is therefore that quantum Einstein gravity is perturbatively nonrenormalizable by dimensional analysis. Indeed it has been shown by t'Hooft and Veltman \cite{Hooft}, employing the background field method and dimensional regularization, that the 1-loop divergence of pure gravity with $\Lambda_K=0$ is given by 
\begin{eqnarray}
\frac{1}{8 \pi \epsilon} \left(\frac{1}{120} \overline{R}^2 + \frac{7}{20} \overline{R}_{\mu \nu} \overline{R}^{\mu \nu} \right).  \label{div1}
\end{eqnarray}
Here, the parameter $\epsilon=4-d$ is the deviation of the space-time dimension $d$ of $4$, and the overlined curvatures refer to the background metric $\overline{g}_{\mu \nu}$\footnote{In the background field method (BFM), one employs a splitup $g_{\mu \nu}= \overline{g}_{\mu \nu} + h_{\mu \nu}$ of the metric where $\overline{g}_{\mu \nu}$ is the background field. The latter is often required to satisfy the classical field equations. See \cite{BDJ} for a good review of the BFM.}. Clearly, the divergence ($\ref{div1}$) is not proportional to the original Einstein action ($\ref{EH}$), which is the kind of behaviour one would expect from a nonrenormalizable theory.

However, for pure gravity with $\Lambda_K=0$ and a background field that satisfies the Einstein equation ($\ref{EEQ}$), we have $\overline{R}_{\mu \nu}=0$ and thus the divergence ($\ref{div1}$) disappears- pure gravity is one loop finite! This is no longer true in the presence of matter fields, and furthermore it has been shown \cite{Goro} that at two loops a divergence
\begin{eqnarray}
\frac{209 \lambda^2}{2880 (16 \pi^2)^2} \frac{1}{\epsilon} \overline{R}^{\mu \nu}_{\ \ \gamma \delta} \overline{R}^{\gamma \delta}_{\ \ \rho \sigma} \overline{R}^{\rho \sigma}_{\ \ \mu \nu}
\end{eqnarray}
of pure gravity appears which remains even after the Einstein equations have been used. Hence, the conclusion seems inescapable: quantum Einstein gravity is perturbatively nonrenormalizable.

Note, however, that the inclusion of the cosmological term does not only provide us with a term linear in $h$ and some kind of ''mass'' term in the action, with problematic consequences (see the discussion in section ($\ref{propsec}$)), but also with additional interactions, i.e. operators involving more than two fields. In particular, there appear two operators in $\mathcal{L}^{(3)} $ and $\mathcal{L}^{(4)} $ multiplied with the cosmological constant\footnote{The operators have ''mixed'' couplings $\lambda \Lambda_k$ and $\lambda^2 \Lambda_K$.} which have canonical dimensions $\le 4$ and which are therefore renormalizable. This would not have been the case had we only considered the expansion of the curvature scalar $R$. Then, apart from the kinetic term, only operators with canonical dimension $\ge 5$ appear in the action.
 
We will discuss systematically the various operators appearing in the expanded gravity action in section ($\ref{ToyM}$). The inclusion of the cosmological term will then lead to some interesting speculations in section ($\ref{GravNoCut}$).

\end{subsection}

\begin{subsection}{The general action for gravity} \label{GAFG}
The underlying principle of general relativity is the invariance of the theory under general coordinate transformations. Hence, an action of a theory describing gravitation should be gauge invariant in that sense. The Einstein-Hilbert action ($\ref{EH}$), of course, meets this requirement. However, the principle of gauge invariance does not define the theory completely, since infinitely many invariants can be constructed out of the metric. In addition to the operators $\sqrt{g}$ and $\sqrt{g}R$ appearing in ($\ref{EH}$), there are invariants $\sqrt{g} R^2$, $\sqrt{g} R_{\mu \nu} R^{\mu \nu}$, $\sqrt{g} R^3$ etc. Hence, the most general action for a theory of gravitation takes the following form:
\begin{eqnarray}
S_{grav}&=& \int d^4 x \sqrt{g}  \left( -4 \frac{\Lambda_K}{\lambda^2} + \frac{2}{\lambda^2} R + c_1 R_{\mu \nu} R^{\mu \nu} - c_2 R^2 + ... \right) \label{Sgrav}
\end{eqnarray}
where $c_1, c_2$ are additional (dimensionless) coupling constants and $...$ means higher powers of $R$, $R_{\mu \nu}$ and $R_{\mu \nu \rho \sigma}$. Note that we did not include a term $\sqrt{g} R_{\mu \nu \rho \sigma} R^{\mu \nu \rho \sigma}$ in the action ($\ref{Sgrav}$) because of the Gauss-Bonnet topological invariance in four dimensions \cite{Stelle}:
\begin{eqnarray}
\int d^4 x \sqrt{g}   \left( R_{\mu \nu \rho \sigma} R^{\mu \nu \rho \sigma} - 4 R_{\mu \nu} R^{\mu \nu} + R^2 \right) =0
\end{eqnarray}
for space-times topologically equivalent to flat space. 

The experimental bounds on the couplings $c_1$, $c_2$ are very poor. From \cite{Dono1} we obtain $c_1, c_2 \le 10^{74}$, whereas the coefficients of yet higher powers of $R$ have essentially no experimental constraints.  The basic reason for this is that at the energies accessible at our present-day experiments, the curvature is so small that higher powers of $R$ are even smaller. We will demonstrate this behaviour in an explicit calulation at the end of this section.

Thus, there is no real justification\footnote{One may, however, demand that the classical equations of motion for the metric tensor be second order partial differential equations \cite{MackART}. Then one is restricted to the Einstein-Hilbert term.} for preferring the Einstein-Hilbert action ($\ref{EH}$) over the general action ($\ref{Sgrav}$) in a theory of gravitation. It cannot be argued on the basis of symmetry nor experimental input. Unlike in other theories, renormalizability is not a criterion, either: as has been discussed in the last section, a quantum theory based on the Einstein-Hilbert term already involves infinitely many nonrenormalizable operators.

We therefore refine our definition of a bare gravity action given in eq. ($\ref{StotdeReg}$) as follows:
\begin{eqnarray}
{S}_{tot}(\Lambda_0) &=& \int_x A  \frac{(2 \pi)^4}{2}  h   - \frac{1}{2} \langle h^{\mu \nu}, \Delta^{\Lambda_0 \ -1}_{\mu \nu \rho \sigma} h^{\rho \sigma} \rangle  -  \langle \overline{C}^\mu, \Delta_{GH \mu \nu}^{\Lambda_0 \ -1} {C}^\nu    \rangle  + {L} \big(h,C,\overline{C}, \Lambda_0 \big) \nonumber \\ \label{StotdeRegG}
\end{eqnarray}
where the bare interaction term ${L} \big(h,C,\overline{C}, \Lambda_0 \big)$ is now understood to contain all interaction terms of the  classical  general action ($\ref{Sgrav}$) expanded in powers of the gravitational field $h$, as well as counterterms necessary to cancel the upcoming divergences:  
\begin{eqnarray}
{L} \big(h,C,\overline{C}, \Lambda_0 \big)  := \int_x  \mathcal{L}_{int}^{grav} \big(h,C,\overline{C}\big) + L_{C.T.}\big(h,C,\overline{C},\Lambda_0     \big).  \label{LGravG}
\end{eqnarray}
Hence, our bare gravity action ($\ref{StotdeRegG}$) contains an infinite number of coupling constants $\Lambda_K, \lambda, c_1, c_2, ...$. In the next section, we will show that if we employ an effective field theory approach with flow equations $\grave{a}$ la chapter ($\ref{EffFlow}$), only a finite number\footnote{Depending on the number of renormalization and improvement conditions that are imposed.} of counterterms has to be included in $L_{C.T.}\big(h,C,\overline{C},\Lambda_0  \big)$. Remember that due to the symmetry-breaking cutoff regularization, among these will be counterterms for all operators allowed by the unbroken global $O(4)$ invariance.

The inverse regularized propagators appearing in  ($\ref{StotdeRegG}$) are still (the position-space versions of) the graviton and ghost propagators ($\ref{gravpropG}$) and ($\ref{ghpropG}$) that have been extracted out of the Einstein-Hilbert term. This means that contributions bilinear in $h$ that stem from the higher order field invariants in ($\ref{Sgrav}$) have been included in the interaction term ($\ref{LGravG}$). 

From the literature \cite{Stelle} it is known that a quantum theory of gravitation that is based on the action
\begin{eqnarray}
S_{R2}&=& \int d^4 x \sqrt{g}  \left( \frac{2}{\lambda^2} R + c_1 R_{\mu \nu} R^{\mu \nu} - c_2 R^2 \right) \label{SR^2}
\end{eqnarray}
is renormalizable, but plagued by unitarity problems. We will refer to  ($\ref{SR^2}$) as $R^2$-gravity. Since our bare gravity action ($\ref{StotdeRegG}$) contains the operators of ($\ref{SR^2}$), we have to discuss the implications of $R^2$-gravity for our work. 

To do so, we consider the contribution bilinear in $h$ to the interaction term $\int_x \mathcal{L}_{int}^{grav} \big(h,C,\overline{C}\big)$ that is generated by the operator 
\begin{eqnarray}
\int_x \sqrt{g} \left( c_1 R_{\mu \nu} R^{\mu \nu} - c_2 R^2 \right).  \label{R^2}
\end{eqnarray}
From the definitions ($\ref{DCur}$) and ($\ref{DChr}$) of the curvature tensor and the Christoffel symbols follows that each operator $R_{\mu \nu \rho \sigma}$ and its contractions $R$, $R_{\mu \nu}$ contain two derivatives. Hence for an expansion  ($\ref{h}$) of the metric density, the term bilinear in the gravitational field $h$ emerging out of ($\ref{R^2}$) will be of the form 
\begin{eqnarray}
L^{quad}_{R^2}(h)= - \frac{c_i \lambda^2}{2} \int_x h \partial^4 h.  \label{R2toyI}
\end{eqnarray}
The exact form of ($\ref{R2toyI}$) is rather lenghty due to the proliferations of Lorentz indices and we will not give it here. A momentum space version can be found in \cite{Stelle}. In order to extract the relevant physics, let us instead consider a toy model of the bare action ($\ref{StotdeRegG}$) that contains only two operators:
\begin{eqnarray}
 \int_x \frac{1}{2} h \partial^2 h - \frac{c_i \lambda^2}{2} h \partial^4 h \label{R2toy}
\end{eqnarray}
where the second term in ($\ref{R2toy}$) is treated as an interaction. The momentum space propagator of the model is thus again
\begin{eqnarray}
\frac{1}{k^2},
\end{eqnarray}
whereas the vertex factor of the ''interaction'' ($\ref{R2toyI}$) follows from
\begin{eqnarray}
\int_{x_1 x_2} e^{i x_1 k_1} e^{i x_2 k_2}  \frac{\delta}{\delta h(x_1)} \frac{\delta }{\delta h(x_2)} L^{quad}_{R^2}(h)   &=& -\frac{c_i \lambda^2}{2} \delta(k_1+ k_2) (k_1^4+ k_2^4)
\end{eqnarray}
as
\begin{eqnarray}
\tau(k)&=& -c_i \lambda^2 k^4.
\end{eqnarray}
Using the geometric series, we may calculate the full or ''dressed'' propagator\footnote{This is of course what we would have obtained had we used the full action ($\ref{R2toy}$) to derive the propagator.}:
\begin{eqnarray}
G^{(2)}(k^2) &=& \frac{1}{k^2} - \frac{1}{k^2} c_i \lambda^2 k^4 \frac{1}{k^2} + \frac{1}{k^2} c_i \lambda^2  k^4 \frac{1}{k^2}  c_i \lambda^2  k^4 \frac{1}{k^2}   + ... \nonumber \\ &=& \frac{1}{ k^2 + c_i \lambda^2 k^4 } .  \label{R2prop}
\end{eqnarray}
From ($\ref{R2prop}$) two important conclusions can be drawn. First, we may apply an expansion into partial fractions
\begin{eqnarray}
 \frac{1}{ k^2 + c_i \lambda^2 k^4 } = \frac{1}{k^2} - \frac{1}{k^2+ 1/( c_i \lambda^2)} .  \label{PBZ}
\end{eqnarray}
The second term on the RHS of ($\ref{PBZ}$) corresponds to the propagator of a massive scalar, but with the wrong overall sign.  Hence, the full propagator ($\ref{R2prop}$) leads to a potential that includes an additional Yukawa term:
\begin{eqnarray}
V(r) \sim - \frac{1}{r} \left( 1  - {e^{-r/\sqrt{c_i \lambda^2}}} \right).  \label{R2pot}
\end{eqnarray}
From ($\ref{R2pot}$) it is now clear why the experimental bounds on the couplings $c_i$ are so crude. Since $\lambda \sim 10^{-35}m$ it requires very large values of $c_i$ to produce observable effects.

Second, if we look at the denominator of the propagator ($\ref{R2prop}$) we see that the $k^4$ term becomes dominant over $k^2$ for
\begin{eqnarray}
k^2 > c_i \lambda^2 . \label{k^4l}
\end{eqnarray}
A $1/k^4$ propagator makes the theory perturbatively renormalizable, as can be argued already on the level of dimensional analysis. From ($\ref{Dphix}$) follows that the canonical dimension of a field $\phi$ is $D_\phi=0$ for an UV behaviour $\sigma=4$ of the associated propagator, and thus all operators of the (expanded) $R^2$-action ($\ref{SR^2}$) have dimensions $\le 4$ and are therefore renormalizable. 

On the other hand, the $R^2$-propagator introduces bad behaviour such as ghostlike particles and hence violation of the unitarity \cite{Stelle}. However, it can be shown \cite{Simon1} that these kind of problems only arise in the high energy domain  ($\ref{k^4l}$), i.e. when the UV behaviour of the theory is governed by the $1/k^4$ term and $R^2$-gravity is treated as a fundamental theory. If we introduce a cutoff and restrict ourselves to momenta $k^2 < c_i \lambda^2$, the $R^2$ terms produce only small corrections to the Einstein-Hilbert theory and no problematic behaviour is introduced. This argument holds still true when even higher order field invariants are included in the action, as we have done in ($\ref{Sgrav}$).

We conclude this section by noting that we also refine the definition of the bare extended action $\tilde{S}_{tot}({\Lambda_0})$ given in eq. ($\ref{StotdeRegE}$) by replacing ($\ref{LGravE}$) with
\begin{eqnarray}
\tilde{L} \big(h,C,\overline{C}, \beta, \tau, \Lambda_0 \big)  := \int_x \tilde{\mathcal{L}}_{int}^{grav}(h,C,\overline{C}, \beta, \tau) + \tilde{L}_{C.T.}\big(h,C,\overline{C}, \beta, \tau,    \Lambda_0  \big)   \label{LGravEG}
\end{eqnarray}
where
\begin{eqnarray}
\tilde{\mathcal{L}}_{int}^{grav}(h,C,\overline{C}, \beta, \tau) := \mathcal{L}_{int}^{grav}(h,C,\overline{C}) +   \lambda^{-1} \beta_{\mu \nu} \mathcal{L}_C \tilde{g}^{\mu \nu}  + \tau_\mu C^\nu \partial_\nu C^\mu.    
\end{eqnarray}
Concerning the counterterms appearing in $\tilde{L}_{C.T.}\big(h,C,\overline{C}, \beta, \tau, \Lambda_0  \big)$, we refer the reader to the remarks following eq. ($\ref{LGravG}$). Note however that additional counterterms for the nonlinear BRS variations have been included in ($\ref{LGravEG}$).

\end{subsection}

\end{section}

\begin{section}[Predictivity of effective Euclidean quantum gravity]{Predictivity of effective Euclidean quantum gravity from the viewpoint of the renormalization group}

\begin{subsection}{Analysis without Slavnov-Taylor identities}  \label{ToyM}
As has been discussed in the last section, a quantum theory of gravitation whose UV behaviour is governed by the $1/k^2$ propagator of the Einstein-Hilbert term of the bare action ($\ref{StotdeRegG}$) involves infinitely many nonrenormalizable operators. Thus it is clear by the arguments given in section ($\ref{EffFlowOver}$) that we will not be able to fix the (nonrenormalizable) couplings\footnote{The cosmological constant $\Lambda_K$ is a renormalizable coupling. See section  ($\ref{GravNoCut}$).} $\lambda, c_1, c_2...$ to arbitrary values at some renormalization scale $\Lambda_R$ while at the same time sending the UV cutoff of the theory to infinity, $\Lambda_0 \rightarrow \infty$.

However, as has been outlined in section ($\ref{EffFlowOver}$) and proven in chapter ($\ref{EffFlow}$), it is possible to extract information out of a nonrenormalizable theory as long as one keeps a finite UV cutoff $\Lambda_0$ and contents oneself with predictions of finite accuracy. We will therefore show in the following how the program of chapter ($\ref{EffFlow}$) can be applied to the theory of gravitation described by the general action ($\ref{Sgrav}$).

To do so, let us once more examine the structure of  ($\ref{Sgrav}$),
\begin{eqnarray}
S_{grav}&=& \int d^4 x \sqrt{g}  \left( -4 \frac{\Lambda_K}{\lambda^2} + \frac{2}{\lambda^2} R + c_1 R_{\mu \nu} R^{\mu \nu} - c_2 R^2 + ... \right). \nonumber
\end{eqnarray}
As we have discussed in the last section, each curvature tensor $R$ comes with two derivatives. Hence, if we express the action in terms of the metric density $\tilde{g}^{\mu \nu}= \sqrt{g} \ g^{\mu \nu} $ and apply the splitup ($\ref{KKexp}$) 
\begin{eqnarray}
\tilde{g}^{\mu \nu}= \delta^{\mu \nu} + \lambda h^{\mu \nu} \nonumber,
\end{eqnarray}
the expansion of ($\ref{Sgrav}$) in powers of $h$ takes the following schematic structure:
\begin{eqnarray}
S_{grav}& \sim & \int d^4 x \Big( \frac{\Lambda_K}{\lambda}  h + h (\partial^2  + \Lambda_K) h + \lambda \Lambda_K h^3 + \lambda^2 \Lambda_K h^4 + \lambda h^2 \partial^2 h + \lambda^3 \Lambda_K h^5  \nonumber \\ &&   \hspace{1.5cm} + \ c_1 \lambda^2 h \partial^4 h + c_2 \lambda^2 h \partial^4 h  + \lambda^2 h^3 \partial^2 h +  \lambda^4 \Lambda_K h^6 ...   \Big) .  \label{Stoy}
\end{eqnarray}
In ($\ref{Stoy}$) we have dropped all Lorentz indices and ordered the field operators with respect to their increasing canonical dimensions (with the exception of the kinetic term, which appears together with the ''mass'' term). Note that all combinations\footnote{The operators $\int_x h^{n_1} \partial^{2m} h^{n_2}, \ n_1+ n_2=n+1,$ and $\int_x h^n \partial^{2m} h$ are not linearly independent, as can by shown by integration by parts.}
\begin{eqnarray}
\int_x h^n \partial^{2m} h, \ \ \ n,m \in \mathbb{N}
\end{eqnarray}
of powers of fields $h$ and derivatives $\partial^{2}$ possible appear in ($\ref{Stoy}$).

In order to perform the analysis of chapter ($\ref{EffFlow}$), a cutoff regularization had to be imposed. We have already done this for the action ($\ref{Sgrav}$) in the last section, leading to the definition of a bare gravity action in eq. ($\ref{StotdeRegG}$). As has been discussed in section ($\ref{CutRegG}$), the cutoff regularization violates the BRS invariance of the gauge fixed action, and hence in eq. ($\ref{LGravG}$) counterterms for all operators allowed by the unbroken global $O(4)$ invariance have been included in the bare potential ${L} \big(h,C,\overline{C}, \Lambda_0 \big)$.

In section ($\ref{CutRegG}$), we have reviewed a fine-tuning procedure developed for the perturbative renormalization of Yang-Mills theory with flow equations which aims at the restoration of the Slavnov-Taylor identities (and therefore the gauge/BRS symmetry) of the theory in the no-cutoff limit. The first step of this procedure consists of introducing counterterms for all operators allowed by the unbroken global symmetries, and of establishing the boundedness and convergence of the vertex functions in the limit $\Lambda_0 \rightarrow \infty$ for an arbitrary set of renormalization conditions. The second step then amounts to the determination of one particular choice of these renormalization conditions such that the STI are restored in the limit $\Lambda_0 \rightarrow \infty$.  

We will now propose an analogon to the first step of this procedure for the theory of gravitation described by the general action ($\ref{Sgrav}$). The main difference to Yang-Mills theory lies once more in the fact that gravity is nonrenormalizable: the expansion of ($\ref{Sgrav}$) leads to infinitely many nonrenormalizable operators, and hence renormalization conditions for some nonrenormalizable operators will have to be imposed, too. In section ($\ref{EffFlowOver}$) and chapter ($\ref{EffFlow}$), these have been referred to as improvement conditions, and it has been pointed out that their introduction in general prevents us from taking the no-cutoff limit $\Lambda_0 \rightarrow \infty$.  

The starting point of our analysis will be once more the definition of a momentum-space effective total action for quantum gravity:
\begin{eqnarray} 
S_{tot}(\Lambda) &=&  \int \frac{d^4k}{(2 \pi)^4} \bigg(  A  \frac{(2 \pi)^4}{2} \ \delta(k) h(k)    -  \frac{1}{2} h^{\mu \nu}(k) \left(\Delta^{\Lambda}_{\mu \nu \rho \sigma}(k^2)    \right)^{-1} h^{\rho \sigma}(-k)  \nonumber  \\ && \ \ \ \ \ \ \ \  - \ C^\mu(k) \left(\Delta^{\Lambda}_{GH \mu \nu}(k^2)\right)^{-1} \overline{C}^\nu(-k) \bigg) +  L \Big(h, C, \overline{C}; \Lambda \Big)  \label{SGraveffG}
\end{eqnarray}
where  $L(h, C, \overline{C}; \Lambda)$ is a not necessarily local interaction term. While ($\ref{SGraveffG}$) looks exactly the same as the effective gravity action ($\ref{SGraveff}$), note that it is understood that $S_{tot}(\Lambda_0) $ is now given by the bare gravity action ($\ref{StotdeRegG}$) introduced in the last section. 

In the following, the behaviour of the effective potential $L(h, C, \overline{C}; \Lambda)$ will be investigated as renormalization and improvement conditions are imposed. Therefore, we have to define vertex functions by expanding $L$ in powers of the fields $h_{\mu \nu}$, $C^\mu$ and  $\overline{C}_\mu$.  To keep things as simply as possible, let us introduce the following notations for the fields and indices:
\begin{eqnarray}
{\Phi}&:=& \left(h_{\mu \nu}, C^\mu, \overline{C}_\mu \right) \label{Gvec}   \\
\textbf{n} &:=& \left( n_h, n_C, n_{\overline{C}} \right) \ , \ \ \ n:= n_h + n_C + n_{\overline{C}}.
\end{eqnarray}
We will also sometimes refer to the ''components'' $h$, $C$ and  $\overline{C}$ of the vector $\Phi$ by the placeholder $\phi$. The vertex functions can now be defined as $n$-fold functional derivatives of the potential $L(\Phi, \Lambda)$: 
\begin{eqnarray}
\delta^4(k_1 + ... + k_n ) L_n(k_1, ..., k_n, \Lambda) = (2 \pi )^{4n} \delta^{(n)}_{\hat{\Phi}} L(\Phi, \Lambda) \big|_{\Phi=0}    \label{LexpG}
\end{eqnarray}
where
\begin{equation}
 \delta^{(n)}_{\hat{\Phi}} := \frac{\delta}{\delta \phi(k_1)} ... \frac{\delta}{\delta \phi(k_n)}.
\end{equation}
In our simplifying notation, we have suppressed the $O(4)$ tensor structure of the vertex functions, as well as the assignment of the momenta to the multiindex $\textbf{n}$. Note that the $L_n(k_1, ..., k_n, \Lambda)$ are symmetric (antisymmetric) upon permuting the variables belonging to the graviton fields $h_{\mu \nu}$ (the ghost fields $C^\mu$, $\overline{C}_\mu $) .

Formally, the vertex functions ($\ref{LexpG}$) look exactly the same as those introduced in eq. ($\ref{Lexp}$) for the scalar field theory. This is of course a benefit of our notation, and in fact it was the reason for employing it. Hence, if we define running coupling constants ${\rho}_i(\Lambda)$  as coefficients of Taylor expansions of the vertex functions $L_n(k_1, ..., k_n, \Lambda)$ around $k_i=0$, we may just adopt the definitions of eqns. ($\ref{rcc1}$)-($\ref{rcc5}$) and ($\ref{rcc6}$)-($\ref{rcc12}$). Note that the so introduced couplings have to be understood as $k_i$-tuples
\begin{eqnarray}
{\rho}_i(\Lambda)=\big( \rho^1_i(\Lambda), ..., \rho^{k_i}_i(\Lambda) \big)  \label{ktup}
\end{eqnarray}
in order to account for the different combinations of fields $h$, $C$, $\overline{C}$ and the  $O(4)$ tensor structure entering the vertex functions. However, all couplings $\rho^j_i(\Lambda)$ belonging to a $k_i$-tuple ${\rho}_i(\Lambda)$ share the following two properties:
\begin{enumerate}
\item They emerge from vertex functions that have the same number of external legs
\item All $\rho^j_i(\Lambda), \ j=1...k$, have the same canonical dimension $D_{\rho_i}$.
\end{enumerate}
The latter property follows from the first and the fact that all fields $h$, $C$, $\overline{C}$ have canonical dimension $D_\phi=1$, as has been pointed out in eq. ($\ref{mhd}$). Since from the point of view of our analysis the attributes above are what matters, we will not have to distinguish between couplings belonging to the same $k_i$-tuple ${\rho}_i(\Lambda)$.

In order to avoid possible confusion, we would like to stress that ultimatively, the running couplings ${\rho}_i(\Lambda)$ are \textit{not} expected to be all independent from each other. In fact, they are supposed to be given in terms of the ''physical'' running couplings $\Lambda_K(\Lambda),\lambda(\Lambda), c_1(\Lambda), c_2(\Lambda)...$. But this is only after some kind of restoration of the Slavnov-Taylor identities, which will be the subject of the next section. For the moment, the STI are violated by the cutoff regularization and we have to treat all couplings as independent. Note that an intuitive way of understanding the effect of the restoration of the STI is to relate the couplings $\rho_i$ to the coefficients of the expansion ($\ref{Stoy}$). 

In the following table, we have collected all running couplings ${\rho}_i(\Lambda)$ having canonical dimension $D_{\rho_i} \ge -2$. They are given by the definitions of eqns. ($\ref{rcc1}$)-($\ref{rcc5}$) and ($\ref{rcc6}$)-($\ref{rcc12}$). In addition, we give the corresponding position space composite field operators, together with their canonical dimensions $D_{\mathcal{O}_i}= 4-D_{\rho_i}$. They are related by a derivative expansion of the position-space effective potential $L(\Phi, \Lambda)$, as is discussed in Appendix ($\ref{MDV}$). Remember that in our notation each $\phi$ is a placeholder for $h$, $C$ or $\overline{C}$, where of course the fields have to be distributed such that symmetries as $O(4)$ invariance are respected\footnote{Note also that because of a symmetry under global phase transformations, only pairs $C_\mu \mathcal{O}^{\mu \nu}(h) \overline{C}_\nu$ will appear in the composite field operators $\mathcal{O}_i$. If we assign ''ghost numbers'' $+1$ and $-1$ to $C_\mu$ and  $\overline{C}_\nu$ respectively, this means that there will be only vertex functions associated with ghost number $0$.}. Finally, we state the associated coefficients of the expansion ($\ref{Stoy}$) in order to give a relation of the ${\rho}_i(\Lambda)$ to the ''physical'' couplings.

\begin{table}[here]
\hspace{-0.4cm}\begin{tabular}{|c|l|c|c|c|c|} \hline  
Coupling $\rho_i$  & Definition via momentum-space& $D_{\rho_i}$  &  Position-space  & $D_{\mathcal{O}_i}$ & ''Physical \\  & vertex functions $L_n$ &  & operator $\mathcal{O}_i$  &  & coupling'' \\ \hline \hline $\rho_1(\Lambda)$ & $L_1(0,\Lambda)$ & $3$ & $\phi$ &  $1$ & ${\Lambda_K}/{\lambda}$ \\ \hline $\rho_2(\Lambda)$ & $L_2(0,0,\Lambda)$ & $2$ & $\phi^2$ & $2$   & ${\Lambda_K}$ \\ \hline $\delta^{\mu \nu} \rho_3(\Lambda) $ & $\partial^\mu_{1,2} \partial^\nu_{1,2} L_2(k_1,k_2,\Lambda)|_{k_i=0} $ & $0$ & $\phi \partial^2 \phi $ & $4$ & $1$ \\ \hline $ \rho_4 (\Lambda)$ & $L_3(0,0,0,\Lambda) $ & $1$ & $\phi^3$ & $3$ & ${\lambda} {\Lambda_K}$ \\ \hline $ \rho_5(\Lambda)$ & $L_4(0,0,0,0,\Lambda)$ & $0$ & $\phi^4$ & $4$ & ${\lambda}^2 {\Lambda_K} $ \\ \hline $\delta^{\mu \nu} \rho_{6}(\Lambda) $ & ${\partial^\mu_{1,3}} \partial^\nu_{1,3} L_3(k_1,k_2,k_3,\Lambda)|_{k_i=0}$& $ -1$ & $\phi^2 \partial^2 \phi$ & $5$ & $\lambda$ \\ \hline $\delta^{\mu \nu} \rho_{7}(\Lambda) $ &  $ \partial_{1,3}^{\mu} {\partial_{2,3}^{\nu}} L_3(k_1,k_2,k_3,\Lambda)|_{k_i=0} $ & $-1$ & $\phi \partial \phi \partial \phi$ & $5$ & $\lambda$ \\ \hline $\rho_8(\Lambda)$ & $L_5(0,...,0, \Lambda)$ & $-1$ & $\phi^5$ & $5$ & ${\lambda}^3 {\Lambda_K}$ \\ \hline $  I^{\mu \nu \rho \sigma} \rho_9(\Lambda)$ & $\partial^\mu_{1,2} \partial^\nu_{1,2} \partial^\rho_{1,2} \partial^\sigma_{1,2} L_2(k_1,k_2,\Lambda)|_{k_i=0}$ & $-2$ & $\phi \partial^4 \phi $ & $6$ & $c_i \lambda^2  $ \\ \hline $ \delta^{\mu \nu} \rho_{10}(\Lambda) $ & $\partial^\mu_{1,4} \partial^\nu_{1,4} L_4(k_1,k_2,k_3,k_4,\Lambda)|_{k_i=0}$ & $-2$ & $\phi^3 \partial^2 \phi$ & $6$ & $\lambda^2$ \\ \hline  $  \delta^{\mu \nu} \rho_{11}(\Lambda) $ & $\partial_{1,4}^{\mu} \partial_{2,4}^{\nu}  L_4(k_1,k_2,k_3,k_4,\Lambda)|_{k_i=0} $ & $-2$ & $\phi^2 \partial \phi \partial \phi$ & $6$ & $\lambda^2$ \\ \hline $\rho_{12}(\Lambda) $ & $L_6(0,...,0,\Lambda)$ & $-2$ & $\phi^6$ & $6$ & $\lambda^4 \Lambda_K$ \\ \hline 
\end{tabular} 
\caption{Some couplings and field operators of effective quantum gravity}   \label{tab}
\end{table}
The quantity $I^{\mu \nu \rho \sigma}$ appearing in Table ($\ref{tab}$) has been defined in section ($\ref{GPT}$) as $I^{\mu \nu \rho \sigma}= \delta^{\mu \nu} \delta^{\rho  \sigma} + \delta^{\mu \rho} \delta^{\nu  \sigma} + \delta^{\mu \sigma} \delta^{\nu  \rho}$. Furthermore, it has been pointed out in Appendix ($\ref{MDV}$) that by integration by parts, the operators $\int_x \phi^2 \partial^2 \phi$ and $\int_x \phi \partial \phi \partial \phi$ are not linearly independent. Thus, one can merge the couplings $\rho_6$ and $\rho_7$ into just one coupling constant associated with the operator $\int_x \phi^2 \partial^2 \phi$, which turns out to be
\begin{eqnarray}
\rho_{6/7}=-\rho_{6}(\Lambda) + \frac{1}{2} \rho_{7}(\Lambda).  \label{r67}
\end{eqnarray}
Eq. ($\ref{r67}$) means componentwise addition of the $k_i$-tuples $\rho_{6}$, $\rho_{7}$. A similar relation holds true for the couplings $\rho_{10}$ and $\rho_{11}$.

Let us now come to the renormalization and improvement conditions that have to be imposed for the couplings $\rho_i$ in order to establish bounds for the vertex functions $L_n(\Lambda)$ of the effective potential $L(\Phi, \Lambda)$. As we have already stressed, at this point of our analysis we will have to impose a set of arbitrary  renormalization and improvement conditions. However, the question arises for which couplings improvement conditions should be introduced. Recall from section ($\ref{EffFlowOver}$) that the canonical dimensions of the couplings for which improvement conditions are specified determines an improvement index $s$, that in turn is related to the amount of predictivity the effective field theory will have in the end.

To answer this question, let us consider the experimental values of the ''physical'' coupling constants $\Lambda_K, \lambda, c_1, c_2...$ appearing in the action ($\ref{Sgrav}$). They correspond to the renormalized values of the couplings.  Recent astrophysical data \cite{Krauss} strongly suggests that the cosmological constant $\Lambda_K$ should be nonzero and positive, while extremely small. The observational bounds are \cite{Dono1}
\begin{eqnarray}
|\Lambda_K| \le 10^{-83} \text{GeV}^2.  \label{Lex}
\end{eqnarray}
The coupling $\lambda$ has been defined at the beginning of section ($\ref{BRSQ}$) as $\lambda^2 = 32 \pi G$ where $G$ is Newton's constant. Thus we have
\begin{eqnarray}
\lambda \sim (10 \cdot M_{P})^{-1}  \label{lex}
\end{eqnarray}
where $M_{P} \sim 1.2 \cdot 10^{19}$ GeV is the Planck scale. In the following, we will simply speak of $\lambda$ as given by the inverse Planck scale, ignoring the factor of ten appearing in eq. ($\ref{lex}$). 

Concerning the couplings $c_1$,  $c_2$, we have discussed in the last section that their experimental bounds are very poor,
\begin{eqnarray}
c_1, c_2 \le 10^{74},
\end{eqnarray}
whereas the coefficients of yet higher invariants have essentially no experimental constraints. 

Now if we look at the expansion ($\ref{Stoy}$)  and Table ($\ref{tab}$) respectively, we find that all field operators $\mathcal{O}_i$ that have canonical dimensions $D_{\mathcal{O}_i} \le 5$ (corresponding to running couplings $\rho_i(\Lambda)$ with $D_{\rho_i} \ge -1$) are associated with the ''known'' couplings $\Lambda_K$ and $\lambda$. The ''unknown'' couplings $c_i$ appear for the first time associated with the operator
\begin{eqnarray}
\phi \partial^4 \phi
\end{eqnarray}
which has canonical dimension $D_{\mathcal{O}_i}=6$, corresponding to the running coupling constant $\rho_9(\Lambda)$ with $D_{\rho_9}=-2$. 

We therefore conclude that renormalization conditions for the couplings $\rho_i, \ i=1...5$, and improvement conditions for the couplings $\rho_i, \ i=6...8,$ will have to be imposed, the latter having canonical dimensions $D_{\rho_i} = -1$. Doing so, we adopt the notations employed in chapters ($\ref{RenFlow}$) and ($\ref{EffFlow}$):
\begin{eqnarray} 
\rho_{\tilde{a}}(\Lambda_R) &=& 0, \ \ {\tilde{a}}=1,...,3 \label{rcG1} \\
\rho_{\tilde{a}}(\Lambda_R) &=& \rho_{\tilde{a}}^R, \ \ {\tilde{a}}=4,5  \label{rcG2}   \\
\rho_{\tilde{a}}(\Lambda_R) &=& \rho^{NR}_{\tilde{a}} , \ \  {\tilde{a}}=6,...,8   . \label{rcG3}  
\end{eqnarray}
Because of the lack of experimental input for the couplings $c_1, c_2,...$ it will not make sense to specify improvement conditions for couplings $\rho_i$ with $D_{\rho_i} \le -2$. Comparing to section ($\ref{EffFlowOver}$) and chapter ($\ref{EffFlow}$), we recognize that the situation corresponds to an improvement index $s=1$. 

Before we can apply the analysis of chapter ($\ref{EffFlow}$) to the vertex functions ($\ref{LexpG}$) of our effective gravity potential, we have to add two more ingredients. First, the Theorems ($\ref{BoundThII}$), ($\ref{invTh}$) and ($\ref{PreTh}$) established in chapter ($\ref{EffFlow}$) refer to a fixed bare scale $\Lambda_0$. Moreover, remember that a crucial point for proving boundedness and convergence of vertex functions (Theorems ($\ref{BoundTh}$), ($\ref{ConvTh}$) and ($\ref{BoundThII}$)) and for estimating the predictivity of an effective field theory at scales $\Lambda << \Lambda_0$ (Theorems ($\ref{PreTh}$) and ($\ref{UniTh}$)) has been the smallness of the nonrenormalizable couplings at the bare scale $\Lambda_0$. Employing again dimensionless couplings $\lambda_i(\Lambda) = \Lambda^{-D_{\rho_i}} \rho_i(\Lambda)$, this amounts to the constraint
\begin{eqnarray}
\lambda_n(\Lambda_0) \le 1   \label{smallbG}
\end{eqnarray}
for the bare values of the nonrenormalizable couplings\footnote{We have adopted again the notation introduced in section ($\ref{RenFlowOver}$) where renormalizable couplings have been denoted by $\lambda_a$ and nonrenormalizable ones by $\lambda_n$.}. Therefore, the following two questions arise:
\begin{enumerate}
\item What is the bare scale $\Lambda_0$ for effective quantum gravity? 
\item Are the bare nonrenormalizable couplings $\rho_n(\Lambda_0)=\rho_n^0$ sufficiently small? 
\end{enumerate}
We will now show how these questions can be adressed by virtue of an analogon to Theorem ($\ref{invTh}$) and the experimental bounds for the physical couplings $\Lambda_K, \lambda, c_1, c_2...$.  To do so, let us recall some definitions and notations and adapt them to the case of effective quantum gravity. 

The bare values for the couplings $\rho_{\tilde{a}}, \  \tilde{a} =6...8$, are again denoted by
\begin{eqnarray}
\rho_{\tilde{a}}(\Lambda_0)= \rho_{\tilde{a}}^0.  \label{iniCNRG}
\end{eqnarray}
We will use  dimensionless couplings
\begin{eqnarray}
\lambda_{\tilde{a}}^R(\Lambda) &=& \Lambda^{-D_{\rho_{\tilde{a}}}}  \rho_{\tilde{a}}^R \label{lgren}  \\
\lambda_{\tilde{a}}^0(\Lambda) &=& \Lambda^{-D_{\rho_{\tilde{a}}}} \rho_{\tilde{a}}^0,   \label{lgBare}
\end{eqnarray}
as well as dimensionless vertex functions $A_n$ which are related to the $L_n(k_1, ..., k_n, \Lambda)$ defined in eq. ($\ref{LexpG}$) by
\begin{eqnarray}
 A_{n} (k_1,...,k_{n}, \Lambda) =  \Lambda^{n-4} L_{n} (k_1,...,k_{n}, \Lambda)  . \label{AexpG}
\end{eqnarray}
The dimensionless vertex functions are expanded\footnote{Please refer to eq. ($\ref{Apert2_Pert}$) of section ($\ref{GPT}$) for a definition of the coefficients $A_{n}^{(r_1,...,r_5)} (k_1,...,k_{n}, \Lambda)$.} in perturbation theory in the renormalized renormalizable couplings $\lambda_4^R$ and $\lambda_5^R$ and the bare nonrenormalizable couplings  $\lambda_{\tilde{a}}^0, \  \tilde{a} =6...8$:
\begin{eqnarray}
A_{n} (k_1,...,k_{n}, \Lambda) = \sum_{r_1, ...,r_{5}=0}^{\infty} (\lambda_4^R)^{r_1} (\lambda_5^R)^{r_2} (\lambda_{6}^0)^{r_3} (\lambda_{7}^0)^{r_4} (\lambda_{8}^0)^{r_5}  A_{n}^{(r_1,...,r_5)} (k_1,...,k_{n}, \Lambda). \nonumber \\ \label{Apert3}
\end{eqnarray}
Note that since the couplings $\lambda_{\tilde{a}}$ are $k_{\tilde{a}}$-tuples as defined in eq. ($\ref{ktup}$), it is understood that the indices $r_i$ denoting the orders in perturbation theory are multiindices 
\begin{eqnarray}
r_i = \big( r^1_i, ..., r^{k_{\tilde{a}}}_i \big), \ \ \ \ |r_i|:=r^1_i+ ...+ r^{k_{\tilde{a}}}_i,   \label{tupord}
\end{eqnarray}
and that
\begin{eqnarray}
(\lambda_{\tilde{a}})^{r_i} = (\lambda_{\tilde{a}}^{1})^{r^1_i} \cdot... \cdot (\lambda_{\tilde{a}}^{k_{\tilde{a}}})^{r_i^{k_{\tilde{a}}}}.   \label{tupexp}
\end{eqnarray}
The perturbative expansion ($\ref{Apert3}$) is sensible only for small dimensionless couplings. Therefore we impose as additional constraints to the renormalization and initial conditions
\begin{eqnarray}
\lambda_{\tilde{a}}^R(\Lambda) &\le& 1,  \ \ \ \ {\tilde{a}}=4,5  \label{smallRG}  \\
\lambda_{\tilde{a}}^0(\Lambda) &\le& 1, \ \ \ \ {\tilde{a}}=6...8.  \label{smallNRG}
\end{eqnarray}
Because of the definitions ($\ref{lgren}$) and ($\ref{lgBare}$), this implies $\rho_{4}^R \le \Lambda_R$ for $\Lambda \ge \Lambda_R$ and $\rho_{\tilde{a}}^0 \le \Lambda_0^{-1}, \ {\tilde{a}}=6...8$, for $\Lambda \le \Lambda_0$. Thus
\begin{eqnarray}
\lambda_{4}^R(\Lambda) &\le&  {\Lambda_R}/{\Lambda},    \label{smallRGExp}  \\
\lambda_{\tilde{a}}^0(\Lambda) &\le& {\Lambda}/{\Lambda_0}, \ \ \ \ {\tilde{a}}=6...8.  \label{smallNRGExp}
\end{eqnarray}
At this point we already give an analogon to Theorem ($\ref{BoundTh}$) concerning the boundedness of vertex functions for the case of effective quantum gravity. To do so, we define an overall order $r_{NR} = |r_3| + |r_4| + |r_5|$ in perturbation theory in the bare nonrenormalizable couplings $\lambda_{\tilde{a}}^0, \  \tilde{a} =6...8$, and assume for the time being that there exists some bare scale $\Lambda_0$ where the nonrenormalizable couplings are small $\grave{a}$ la eq. ($\ref{smallbG}$). This will then be justified later.

\begin{satz}[Boundedness of Gravity Vertex Functions]   \label{BoundThG}
Given the renormalization conditions ($\ref{rcG1}$)-($\ref{rcG2}$) and the initial conditions ($\ref{iniCNRG}$), and assuming that
\begin{eqnarray}
||\partial^p A_{n}^{(r_1,..., r_5)}(p_1,...,p_{n}, \Lambda_0)|| \le \Lambda_0^{-p}  \left( \frac{\Lambda_0}{\Lambda_R} \right)^{|r_1|} Pln \left( \frac{\Lambda_0}{\Lambda_R} \right) \label{iniNRG}
\end{eqnarray} 
for $n+p \ge 6$, to order $r_1,..., r_5$ in perturbation theory in $\lambda_4^R$, $\lambda_5^R$, $\lambda_{6}^0$,  $\lambda_{7}^0$ and $\lambda_{8}^0$
\begin{eqnarray}
&& ||\partial^p A_{n}^{(r_1,..., r_5)}(p_1,...,p_{n}, \Lambda)|| \nonumber \\ &&  \qquad \qquad  \qquad  \le \Lambda^{-p} \left( \frac{\Lambda}{\Lambda_R} \right)^{|r_1|} \left( \frac{\Lambda_0}{\Lambda} \right)^{r_{NR}}   \left( \delta_{r_{NR}, 0} \ Pln\left( \frac{\Lambda}{\Lambda_R} \right) +  \frac{\Lambda}{\Lambda_0}  Pln \left( \frac{\Lambda_0}{\Lambda_R} \right) \right) \nonumber \\ \label{BoundNRG}
\end{eqnarray}
where $r_{NR} = |r_3| + |r_4| + |r_5|$ and $\Lambda_R \le \Lambda \le \Lambda_0$.

\end{satz}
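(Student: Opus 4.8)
The plan is to reduce Theorem \ref{BoundThG} essentially verbatim to Theorem \ref{BoundThII}. The key observation is that, thanks to the notational conventions introduced in this section --- the collective field $\Phi = (h_{\mu\nu}, C^\mu, \overline{C}_\mu)$, the multiindex $\mathbf{n}=(n_h,n_C,n_{\overline{C}})$, the vertex functions \eqref{LexpG}, the $k_i$-tuples \eqref{ktup} and the tuple-valued perturbative orders \eqref{tupord}, \eqref{tupexp} --- the vertex functions $L_n(k_1,\dots,k_n,\Lambda)$ of the gravity effective potential satisfy \emph{formally the same} Polchinski RGE and hence the same RG inequalities as the scalar vertex functions of chapters \ref{RenFlow} and \ref{EffFlow}. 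Concretely, I would first derive the gravity analogue of the master RG inequality \eqref{RGI1_NR} from the momentum-space Polchinski equation \eqref{polgrav}: the bounds \eqref{q1}, \eqref{q2} on $\Lambda \frac{d}{d\Lambda}\Delta_\Lambda$ hold equally for the regularized graviton and ghost propagators \eqref{gravpropG}, \eqref{ghpropG} (both have compact support $\Lambda < k < 4\Lambda$ once $\Lambda^2 > |B_1|, |B_2|$), the condition \eqref{smallg}-type bound on $A$ is imposed as before, and the $O(4)$ tensor contractions only produce finite combinatorial constants that can be absorbed into the $c_{n,p}$. The ghost loop term contributes with the same structure as the graviton loop term up to a sign and a symmetry factor, which again changes nothing in the absolute-value estimates.

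\textbf{Key steps.} After establishing the master inequality, I would integrate it down from $\Lambda_0$ and up from $\Lambda_R$ exactly as in \eqref{RGI2} and \eqref{RGI4} (with $r_1$ replaced by $|r_1|$ and $r_{NR} = |r_3|+|r_4|+|r_5|$ throughout), obtaining the gravity analogues of \eqref{RGI2} and \eqref{RGI4}. Then the proof is the same induction as in Theorem \ref{BoundThII}: a double induction in the overall order $r = |r_1|+\dots+|r_5|$ and in the number of external legs $n$, with induction start $\{r=0\} \vee \{r\ge 1 \wedge n > 3r+2\}$ (the bound $n \le 3r+2$ holds because each vertex in generalized perturbation theory in gravity has at least three legs --- the cubic operators $\phi^3$, $\phi^2\partial^2\phi$, $\phi\partial\phi\partial\phi$ of Table \ref{tab} being the ones carrying $\lambda_6^0, \lambda_7^0, \lambda_8^0$ --- exactly as in the scalar $\phi^3+\phi^4$ case). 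The case $p+n \ge 6$ uses the down-integrated inequality with \eqref{iniNRG} as initial data and \eqref{BoundNRG} as induction hypothesis; the cases $p+n=5$ and $p+n\le 4$ use the up-integrated inequality, with the renormalization conditions \eqref{rcG1}, \eqref{rcG2} and the vanishing of all initial data at $\Lambda_R$ for $r_{NR}>0$ supplying the boundary terms, and Taylor's theorem (as in \eqref{TE0}, \eqref{TE1}, \eqref{TE2}) reconstructing the full momentum dependence from the coefficients $\rho_i$ listed in Table \ref{tab}. Lemma \ref{l} handles all the arising integrals $\int ds\, s^m Pln(s/\Lambda_R)$, producing polynomials in logarithms with nonnegative coefficients, exactly as before. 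The $r_{NR}=0$ sector reduces literally to Theorem \ref{BoundTh}, so only $r_{NR}>0$ needs the new argument, which is word-for-word that of Theorem \ref{BoundThII}.

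\textbf{Main obstacle.} The genuinely new bookkeeping --- and the only place where care is needed --- is twofold. First, one must verify that the counting of which couplings are renormalizable versus nonrenormalizable is consistent with the leg-counting: since all of $h$, $C$, $\overline{C}$ have canonical dimension $D_\phi = 1$ by \eqref{mhd}, the dimension formula $D_{L_n} = 4-n$ still holds, so $D_{\rho_i} \ge -1$ corresponds to $n+p \le 5$, matching the improvement index $s=1$ assumed in \eqref{rcG1}--\eqref{rcG3}; this is what makes the induction threshold $n+p\ge 6$ in \eqref{iniNRG}, \eqref{BoundNRG} the right one. Second, one must check that treating the $k_i$-tuples $\rho_i(\Lambda) = (\rho_i^1,\dots,\rho_i^{k_i})$ and their multiindex orders $r_i = (r_i^1,\dots,r_i^{k_i})$ does not break the induction --- but this is immediate from the two properties stated after \eqref{ktup}: all components of a tuple share the same leg number and the same canonical dimension, so the induction measures $r$ and $n$ are well-defined on tuples, and the abbreviated sum $\sum_{\dots}$ of \eqref{sumabb} goes through with multiindex summation variables. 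I do not expect any real difficulty here; the only risk is a purely clerical one of mismatched index ranges, and the cleanest way to control it is to carry out the reduction abstractly (noting that the RGE, the RG inequalities, the initial/renormalization/improvement conditions, and the leg-and-order bounds are all formally identical to the scalar case under the dictionary above) rather than re-expanding the tensor structures explicitly.
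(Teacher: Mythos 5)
Your proposal is correct and follows exactly the route the paper intends: the paper itself gives no details for Theorem (\ref{BoundThG}), stating only that condition (\ref{iniNRG}) amounts to smallness of the bare couplings $\rho_{\tilde{n}}(\Lambda_0)$, $\tilde{n}\ge 9$, and that the proof ``goes in analogy to the proof of Theorem (\ref{BoundTh})'' --- i.e.\ precisely the reduction to the scalar-field argument via the collective-field and $k_i$-tuple dictionary that you carry out. Your filling-in of the details (the gravity master RG inequality, the compact support of the regularized graviton and ghost propagators, the induction start $n>3r+2$, and the consistency of the $n+p\ge 6$ threshold with the improvement index $s=1$) is exactly the bookkeeping the paper leaves implicit.
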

The  condition ($\ref{iniNRG}$) amounts to the assumption of small inital values for couplings $\rho_{\tilde{n}}(\Lambda_0), \ \tilde{n} \ge 9$, as follows from their definitions in Table ($\ref{tab}$). The proof of Theorem ($\ref{BoundThG}$) goes  in analogy to the proof of Theorem ($\ref{BoundTh}$), and we will therefore skip it.

In order to proceed, we will again need the auxiliary variables $\overline{\lambda}_{\tilde{a}}(\Lambda)$ introduced in section  ($\ref{InvSec}$). They are defined as the values the running dimensionless couplings take for initial conditions $\lambda_{\tilde{a}}^0=0, \  \tilde{a} =6...8$, of the nonrenormalizable couplings:
\begin{eqnarray}
\overline{\lambda}_{\tilde{a}}(\Lambda) = \sum_{\substack{r_1, ...,r_{5}=0 \\ r_{NR}=0}}^{\infty} \lambda_{\tilde{a}}^{(r_1,...,r_5)} (\Lambda)  (\lambda_4^R)^{r_1} (\lambda_5^R)^{r_2} (\lambda_{6}^0)^{r_3} (\lambda_{7}^0)^{r_4}  (\lambda_{8}^0)^{r_5}     \label{zRcoupG}
\end{eqnarray}
where $r_{NR} = |r_3| + |r_4| + |r_5|$.  The $\overline{\lambda}_{\tilde{a}}(\Lambda)$ give rise to the definition of deviations $\Delta \lambda_{\tilde{a}}(\Lambda) = \lambda_{\tilde{a}} (\Lambda) -\overline{\lambda}_{\tilde{a}}(\Lambda)$, which we will need in the following form:
\begin{eqnarray}
\Delta \lambda_{\tilde{a}}^R(\Lambda) = ({\Lambda_R}/{\Lambda})^{D_{\rho_{\tilde{a}}}} \Delta \lambda_{\tilde{a}}(\Lambda_R)  , \ \ \ {\tilde{a}}=6,...,8.   \label{devRG}
\end{eqnarray}
We will consider expansions in the renormalized renormalizable couplings $\lambda_4^R$, $\lambda_5^R$ and the deviations $\Delta \lambda_{\tilde{a}}^R, \ {\tilde{a}}=6,...,8$, where the order of the expansion will be denoted\footnote{The order of an expansion in the couplings $\lambda_4^R$, $\lambda_5^R$, $\lambda_{6}^0$,  $\lambda_{7}^0$ and $\lambda_{8}^0$ has been denoted by $(r_1, ..., r_5)$.} by $(l_1, ..., l_5)$. As we have dicussed above, all couplings have to be understood as $k_{\tilde{a}}$-tuples, and this holds also true for the  deviations $\Delta \lambda_{\tilde{a}}^R$. The expansion has therefore to be understood in the sense of eq. ($\ref{tupexp}$), and we introduce
\begin{eqnarray}
|l_i|:=l^1_i+ ...+ l^{k_{\tilde{a}}}_i
\end{eqnarray}
in analogy to eq. ($\ref{tupord}$). We are now ready to formulate an analogon to Theorem ($\ref{invTh}$) for effective quantum gravity.

\begin{satz}[Inversion of the RG Trajectory of Effective Quantum Gravity]  \label{invThG}
For $\tilde{a}=6...8$ and $l_{\Delta}:= |l_3|+ |l_4|+ |l_5|$ let 
\begin{eqnarray}
|| \Delta \lambda^{(l_1, ..., l_5)}_{\tilde{a}}(\Lambda) || &\le&  \left( \frac{\Lambda}{\Lambda_R} \right)^{|l_1|} \left( \frac{\Lambda_0}{\Lambda} \right)^{l_\Delta}  \frac{\Lambda}{\Lambda_0}  Pln \left( \frac{\Lambda_0}{\Lambda_R} \right) .
\end{eqnarray}
Then to order $l_1,...,l_5$ in perturbation theory in $\lambda_4^R$, $\lambda_5^R$ and the deviations $\Delta \lambda_6^R$, $\Delta \lambda_7^R$ and $\Delta \lambda_8^R$
\begin{eqnarray}
|| \lambda^{0 \ (l_1, ..., l_5)}_{\tilde{a}}(\Lambda) || &\le&  \left( \frac{\Lambda}{\Lambda_R} \right)^{|l_1|} \left( \frac{\Lambda_0}{\Lambda} \right)^{l_\Delta} \frac{\Lambda}{\Lambda_0}  Pln \left( \frac{\Lambda_0}{\Lambda_R} \right)   
\end{eqnarray}
where $\Lambda_R \le \Lambda \le \Lambda_0$.
\end{satz}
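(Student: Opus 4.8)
The statement is the gravity analogue of Theorem~(\ref{invTh}), so the natural plan is to replay the proof of that theorem verbatim, keeping track only of the two differences: the couplings here are $k_{\tilde{a}}$-tuples (so all orders become multiindices $r_i,l_i$ and products are read in the sense of~(\ref{tupexp})), and the bound~(\ref{BoundNRG}) from Theorem~(\ref{BoundThG}) plays the role that~(\ref{BoundNR}) played before. First I would set up the inversion identity: exactly as in the passage leading to~(\ref{MainInv}), I expand the running dimensionless nonrenormalizable couplings $\lambda_{\tilde{a}}(\Lambda)$, $\tilde{a}=6\ldots 8$, in generalized perturbation theory (eq.~(\ref{Apert3})), split off the $r_{NR}=0$ part $\overline{\lambda}_{\tilde{a}}$, form the deviations $\Delta\lambda_{\tilde{a}}$ of~(\ref{devRG}), note the normalization relations $\lambda_{\tilde{a}}^{(0,0,1,0,0)}=\delta_{\tilde a,6}$ etc., and invert the power series to express $\lambda_{\tilde{a}}^0$ in perturbation theory in $\lambda_4^R,\lambda_5^R$ and the $\Delta\lambda_{\tilde{a}}^R$. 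This yields the gravity version of the key equation~(\ref{MainInv}), with multiindex bookkeeping and the same three structural pieces (the leading $\Delta\lambda^{(l_1,\ldots,l_5)}_{\tilde a}$ term, a convolution sum with $s_1+s_2\le |l_1|+|l_2|-1$, and the pure-$r_{NR}\ge 2$ term).

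Second, I would run the induction. The induction is on the overall order $l=|l_1|+\cdots+|l_5|$ in the couplings $\lambda_4^R,\lambda_5^R,\Delta\lambda_6^R,\Delta\lambda_7^R,\Delta\lambda_8^R$. The base case $l=1$ is immediate because $\lambda^{0\,(l_1,\ldots,l_5)}_{\tilde a}(\Lambda)=\Delta\lambda^{(l_1,\ldots,l_5)}_{\tilde a}(\Lambda)$ there. For the step I plug the induction hypothesis into the RHS of the gravity analogue of~(\ref{MainInv}); the crucial combinatorial point, just as in Theorem~(\ref{invTh}), is that every coefficient $\lambda_{6}^{0\,\mathbf{A}_{i_1}},\lambda_{7}^{0\,\mathbf{B}_{i_2}},\lambda_{8}^{0\,\mathbf{C}_{i_3}}$ occurring on the RHS has overall order strictly less than $l$ (forced by the summation constraints), so the hypothesis applies to each of them. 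The remaining coefficients $\lambda_{\tilde a}^{(l_1-s_1,l_2-s_2,r_3,r_4,r_5)}(\Lambda)$ and $\lambda_{\tilde a}^{(0,0,r_3,r_4,r_5)}(\Lambda)$ are bounded by Theorem~(\ref{BoundThG}) (extracting the relevant Taylor coefficients of $A_3$ at zero momentum picks up the right powers of $\Lambda/\Lambda_R$ and $\Lambda_0/\Lambda$). Multiplying the pieces and collecting the powers $(\Lambda/\Lambda_R)^{|l_1|}(\Lambda_0/\Lambda)^{l_\Delta}(\Lambda/\Lambda_0)$, with the $Pln$'s absorbing all logarithms, reproduces the claimed bound; the geometric-type sums over $r_3,r_4,r_5$ converge against the $(\Lambda_0/\Lambda)$ factors exactly as in the scalar case.

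Third, I would carry out the $\Lambda$-integrations hidden in "Theorem~(\ref{BoundThG})": as in chapter~(\ref{RenFlow}) these are done with Lemma~(\ref{l}), so no new analytic input is needed — the monomial-times-$Pln$ integrals are handled term by term and the case distinction $m\le-2$, $m=-1$, $m\ge 0$ of Lemma~(\ref{l}) gives back a bound of the same shape. I expect the only real friction to be notational rather than conceptual: making sure the multiindex conventions~(\ref{tupord})--(\ref{tupexp}) are consistently used so that "$l_1$" in $(\Lambda/\Lambda_R)^{|l_1|}$ really is $|l_1|$ and not a tuple, and that the $O(4)$ tensor index structure carried along by each $\rho^j_i$ never enters the estimates (it does not, since all components of a $k_{\tilde a}$-tuple share the same number of legs and the same canonical dimension). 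With those conventions pinned down, the argument is a line-by-line transcription of the proof of Theorem~(\ref{invTh}), and the conclusion~(\ref{smalliniP}) in its tuple form follows.
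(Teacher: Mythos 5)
Your proposal is correct and follows exactly the route the paper intends: the paper itself omits this proof, stating only that it "goes in analogy to the derivation of Theorem (\ref{invTh})" with Theorem (\ref{BoundThG}) supplying the bounds, which is precisely the line-by-line transcription (induction on the overall order $l$, gravity analogue of eq.~(\ref{MainInv}), multiindex bookkeeping) that you describe.
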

In proving Theorem ($\ref{invThG}$), we have to apply\footnote{It is therefore understood that the bare values for the couplings $\rho_{\tilde{n}}(\Lambda_0), \ \tilde{n} \ge 9$, are small.} Theorem ($\ref{BoundThG}$). The proof goes in analogy to the derivation of Theorem ($\ref{invTh}$) discussed in section ($\ref{InvSec}$), and we will not give it here. 

For small couplings $\lambda_4^R(\Lambda), \lambda_5^R(\Lambda)$ in the sense of eqns. ($\ref{smallRG}$), ($\ref{smallRGExp}$) and small deviations\footnote{See also the discussion at the end of this section concerning the smallness of couplings.} $\Delta \lambda_{\tilde{a}}^R(\Lambda) \le 1, \ {\tilde{a}}=6,...,8$, Theorem ($\ref{invThG}$) implies that if the $\Delta \lambda_{\tilde{a}}(\Lambda)$ are small,
\begin{eqnarray}
|| \Delta \lambda_{\tilde{a}}(\Lambda) || &\le&  \frac{\Lambda}{\Lambda_0}  Pln \left( \frac{\Lambda_0}{\Lambda_R} \right),  \ \ \ \ \tilde{a} =6...8,  \label{ifsmallG}
\end{eqnarray}
then also the bare values of the nonrenormalizable couplings $\lambda_{\tilde{a}}^0$ are small:
\begin{eqnarray}
||\lambda^0_{\tilde{a}}(\Lambda) || &\le&  \frac{\Lambda}{\Lambda_0}  Pln \left( \frac{\Lambda_0}{\Lambda_R} \right) \ \ \ \ \tilde{a} =6...8. \label{smallindeedG}
\end{eqnarray} 
Let us now discuss the implications of eqns.  ($\ref{ifsmallG}$) and  ($\ref{smallindeedG}$) for gravity. To do so, we analyze the auxiliary variables $\overline{\lambda}_{\tilde{a}}(\Lambda)$ entering the deviations  $\Delta \lambda_{\tilde{a}}(\Lambda) = \lambda_{\tilde{a}} (\Lambda) -\overline{\lambda}_{\tilde{a}}(\Lambda)$. They have been defined in eq.  ($\ref{zRcoupG}$) as the values the running couplings $\lambda_{\tilde{a}}$  take to $0th$ order in perturbation theory in the bare nonrenormalizable couplings  $\lambda_6^0$, $\lambda_7^0$ and $\lambda_8^0$. Hence, they are given solely in terms of the perturbative contributions in the renormalizable couplings $\lambda_4^R$ and $\lambda_5^R$. If we look at Table ($\ref{tab}$), we see that the latter are both associated with the cosmological constant $\Lambda_K$:
\begin{eqnarray}
\rho_4 \sim \lambda \Lambda_K , \ \ \ \ \rho_5 \sim \lambda^2 \Lambda_K   \label{rel1}
\end{eqnarray} 
where $\lambda$ is the gravitational coupling. On the other hand, the relations of the couplings $\lambda_{\tilde{a}}(\Lambda_R), \ {\tilde{a}}=6,...,8$,  to their ''physical'' counterparts are given by
\begin{eqnarray}
\rho_6, \ \rho_7 \sim \lambda, \ \ \ \ \rho_8 \sim  \lambda^3 \Lambda_K .  \label{rel2}
\end{eqnarray} 
Note that the cosmological constant does \textit{not} enter the couplings $\lambda_6, \ \lambda_7 $, and recall that the relations ($\ref{rel1}$) and ($\ref{rel2}$) should hold after some fine-tunig procedure has been imposed in order to restore the Slavnov-Taylor identities. 

As we have pointed out in eq. ($\ref{Lex}$), experimental input bounds the cosmological constant to be extremely small. In particular, if we compare it to the size of the gravitational constant $G \sim \lambda^2 \sim M_P^{-2}$, we arrive at the famous ratio
\begin{eqnarray}
\lambda^2 \Lambda_K \sim 10^{-120}.  \label{120}
\end{eqnarray} 
It is therefore reasonable to assume that the size of the auxiliary variables $\overline{\lambda}_{\tilde{a}}(\Lambda)$ calculated in perturbation theory in $\lambda_4^R$ and $\lambda_5^R$ will be neglectible to those of the couplings ${\lambda}_{\tilde{a}}(\Lambda), \ {\tilde{a}}=6, 7$. The latter should be of the order $\Lambda/M_P$ as is suggested by eq. ($\ref{rel2}$).  Thus, the deviations $\Delta \lambda_{\tilde{a}}(\Lambda) = \lambda_{\tilde{a}} (\Lambda) -\overline{\lambda}_{\tilde{a}}(\Lambda)$ can be estimated as
\begin{eqnarray}
|| \Delta \lambda_{\tilde{a}}(\Lambda) || &\le&  \frac{\Lambda}{M_P}  Pln \left( \frac{M_P}{\Lambda_R} \right),  \ \ \ \ \tilde{a} =6...8.  \label{smallG}
\end{eqnarray}
Now if we demand that the bare values $\lambda_{\tilde{a}}^0, \  \tilde{a} =6...8$, can be taken small in order to arrive at renormalized values $\lambda_{\tilde{a}}(\Lambda) \sim \Lambda/M_P$, eqns. ($\ref{ifsmallG}$) and  ($\ref{smallindeedG}$) tell us that the UV cutoff scale of effective quantum gravity must not exceed the Planck scale. This motivates us to define the bare scale $\Lambda_0$ of quantum gravity to be
\begin{eqnarray}
\Lambda_0 \sim M_P.  \label{cutP}
\end{eqnarray}
If the Planck scale is considered to be the characteristic energy scale of the theory, it is natural to assume that all ''bare'' physical coupling constants $\Lambda_K(M_P)$, $\lambda(M_P)$, $c_1(M_P)$, $c_2(M_P)...$ entering the action ($\ref{StotdeRegG}$) should be of the order of this scale. This, in turn, suggests that the (dimensionless) vertex functions $A_n(\Lambda)$ of the gravity effective potential $L(\Phi, \Lambda)$ should satisfy
\begin{eqnarray}
|| A_{n}(p_1,...,p_{n}, M_P)||  \sim  1  \label{PlanckNat}
\end{eqnarray}
where the norm $|| \ ||$ is defined as in eq. ($\ref{norm}$). In the following, we will always assume that  ($\ref{PlanckNat}$) holds true, and therefore in particular take it for granted that the bare values of all nonrenormalizable couplings $\rho^0_n$ will be small in the sense of eq. ($\ref{smallbG}$).

From the viewpoint of a ''characteristic energy scale $M_P$'' and eq. ($\ref{PlanckNat}$), it is of course disturbing that the renormalized value of the cosmological constant is so small. It requires a very delicate adjustment of the bare potential $L(\Phi, \Lambda_0)$ and therefore a lot of ''fine-tuning'' in order to reproduce the experimental bounds ($\ref{Lex}$) and the factor of $10^{-120}$ appearing in eq. ($\ref{120}$). We have nothing new to say about this puzzle. Note however that from a technical point of view, there is of course no problem in imposing a tiny renormalized value for the cosmological constant, other than it seems unnatural.

Let us now investigate the predictivity of effective quantum gravity at scales far below the Planck scale, $\Lambda << M_P$, as the renormalization and improvement conditions ($\ref{rcG1}$), ($\ref{rcG2}$), ($\ref{rcG3}$) are imposed as well as the assumption ($\ref{PlanckNat}$). To do so, we proceed as in section ($\ref{Presec}$) and introduce bare vertex functions depending on a parameter $t$
\begin{eqnarray}
\partial^p \tilde{A}_{n}^{(r_1, ..., r_5)}(M_P) := t \ \partial^p A_{n}^{(r_1, ..., r_5)}( M_P), \ \ \ t \in [0,1], \ \ \ n+p \ge 6 . \nonumber \\  \label{shapeANRG}
\end{eqnarray}
By varying $t$, the impact of a change of the initial conditions ($\ref{shapeANRG}$) on the ''running'' vertex functions $A_n(\Lambda)$ can be studied. This is of course related to the effect of not knowing the exact values of the couplings $c_1(M_P)$, $c_2(M_P)...$. The $A_n(\Lambda)$ become dependent on the parameter $t$, 
\begin{eqnarray}
\partial^p A_{n}^{(r_1, ..., r_5)}= \partial^p A_{n}^{(r_1, ..., r_5)}(p_1,...,p_{n}, \Lambda, M_P, t),
\end{eqnarray}
and we may formulate an analogon to Theorem ($\ref{PreTh}$) for effective quantum gravity. This will be done again in perturbation theory in the renormalized renormalizable couplings $\lambda_4^R$ and $\lambda_5^R$ and the bare nonrenormalizable couplings  $\lambda_{\tilde{a}}^0, \  \tilde{a} =6...8$. Since all couplings have to be understood as $k_{\tilde{a}}$-tuples, recall the notation conventions of eqns. ($\ref{tupord}$) and ($\ref{tupexp}$).

\begin{satz}[Predictivity of Effective Quantum Gravity]  \label{PreThG}
Let there be renormalization conditions ($\ref{rcG1}$)-($\ref{rcG2}$) and  improvement conditions ($\ref{rcG3}$). Assume that to order $r_1, ..., r_5$ in perturbation theory in $\lambda_4^R$, $\lambda_5^R$, $\lambda_{6}^0$,  $\lambda_{7}^0$ and $\lambda_{8}^0$
\begin{eqnarray}
||\partial^p A_{n}^{(r_1,..., r_5)}(p_1,...,p_{n}, \Lambda)|| \le \Lambda^{-p}  \left( \frac{\Lambda}{\Lambda_R} \right)^{|r_1|}  \left( \frac{M_P}{\Lambda} \right)^{r_{NR}} Pln \left( \frac{M_P}{\Lambda_R} \right), \label{BoundWPG}
\end{eqnarray}
and that for $n+p \ge 6$
\begin{eqnarray}
|| \frac{d}{d t} \partial^p A_{n}^{(r_1,..., r_5)}(p_1,...,p_{n}, M_P)|| \le M_P^{-p}  \left( \frac{M_P}{\Lambda_R} \right)^{|r_1|} Pln \left( \frac{M_P}{\Lambda_R} \right) . \label{ini2PG}
\end{eqnarray}
Then 
\begin{eqnarray}
|| \frac{d}{d t} \partial^p A_{n}^{(r_1,..., r_5)}(p_1,...,p_{n}, \Lambda)|| \le \Lambda^{-p}  \left( \frac{\Lambda}{\Lambda_R} \right)^{|r_1|}  \left( \frac{M_P}{\Lambda} \right)^{r_{NR}} \left(\frac{\Lambda}{M_P}\right)^{2} Pln \left( \frac{M_P}{\Lambda_R} \right) \nonumber \\ \label{PreG}
\end{eqnarray}
where $r_{NR} = |r_3| + |r_4| + |r_5|$ and  $\Lambda_R \le \Lambda \le M_P$.
\end{satz}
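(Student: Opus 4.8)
The plan is to run the argument of Theorem ($\ref{PreTh}$) for the scalar case essentially verbatim, the genuinely new inputs being the bookkeeping of the suppressed $O(4)$ tensor indices and of the $k_{\tilde{a}}$-tuple structure of the couplings $\rho_i=(\rho_i^1,\dots,\rho_i^{k_i})$, the presence of the ghost sector in the flow equation, and the replacement of the generic UV cutoff by the Planck scale, $\Lambda_0\sim M_P$, as established by Theorem ($\ref{invThG}$) and the discussion leading to ($\ref{cutP}$) together with the naturalness assumption ($\ref{PlanckNat}$). The first observation is that the bound ($\ref{BoundWPG}$) is weaker than ($\ref{BoundNRG}$) and hence has already been proven in Theorem ($\ref{BoundThG}$); it supplies the estimates for the undifferentiated factors occurring in the quadratic terms of the flow equation.

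Next I would set up the renormalization group inequalities for the gravity effective potential. Starting from Polchinski's equation for Euclidean quantum gravity ($\ref{polgrav}$), one expands $L$ into the vertex functions ($\ref{LexpG}$), passes to the dimensionless $A_n$ of ($\ref{AexpG}$), applies the momentum operator $\partial^p$, and expands in generalized perturbation theory in $\lambda_4^R,\lambda_5^R,\lambda_6^0,\lambda_7^0,\lambda_8^0$. Since the regularized graviton and ghost propagators ($\ref{gravpropG}$), ($\ref{ghpropG}$) both carry the cutoff function $K(k^2/\Lambda^2)$, the bounds ($\ref{q1}$) and ($\ref{q2}$) hold for the graviton loop and for the ghost loop alike, and the extra term proportional to the renormalization constant $A$ is controlled exactly as in the scalar case provided $A,B_1,B_2$ are kept small. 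This yields the gravity analogue of the key inequality ($\ref{RGI1_NR}$); integrating it down from $M_P$ to $\Lambda$ and up from $\Lambda_R$ to $\Lambda$, and then differentiating the result with respect to the parameter $t$ of ($\ref{shapeANRG}$), produces the analogues of ($\ref{RGI3U}$) and ($\ref{RGI5U}$) needed below.

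The proof then proceeds by the familiar double induction in the overall perturbative order $r=|r_1|+\dots+|r_5|$ and in the number of external legs $n$, with induction start $\{(r,n):r=0\}\vee\{(r,n):r\ge1,\ n>3r+2\}$, the bound $3r+2$ stemming from the five-leg coupling $\rho_8$. In the case $p+n\ge6$ I would insert ($\ref{ini2PG}$) as the datum at $M_P$, ($\ref{PreG}$) as the induction hypothesis, and ($\ref{BoundWPG}$) for the undifferentiated factors into the analogue of ($\ref{RGI3U}$); the resulting $s$-integrals are evaluated with Lemma ($\ref{l}$) and reproduce ($\ref{PreG}$). In the case $p+n\le5$ the inequality must be established for the $t$-derivatives of the Taylor coefficients ($\ref{rcc1}$)--($\ref{rcc5}$) and ($\ref{rcc6}$)--($\ref{rcc8}$), that is for the couplings $\rho_{\tilde{a}}$, $\tilde{a}=1\dots8$, of Table ($\ref{tab}$); one then uses the analogue of ($\ref{RGI5U}$), and the crucial point is that the renormalization and improvement conditions ($\ref{rcG1}$)--($\ref{rcG3}$) fix these couplings to values independent of $t$, so that every boundary term at $\Lambda_R$ vanishes. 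Solving the integrals with Lemma ($\ref{l}$) and reconstructing the full $\partial^p A_n^{(r_1,\dots,r_5)}(k_1,\dots,k_n,\Lambda)$ from the Taylor expansions ($\ref{TE0}$), ($\ref{TE1}$), ($\ref{TE2}$) --- whose remainder terms belong to the already treated case $p+n\ge6$ --- closes the induction step.

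Because the notation of Table ($\ref{tab}$) and of section ($\ref{ToyM}$) has been chosen precisely to make the gravity case formally identical to the scalar one, there is no deep obstacle; the points that still require attention are the alignment of the case split with the chosen improvement index $s=1$ and some purely combinatorial bookkeeping. For the former, one must check that the Taylor coefficients with $p+n\le5$ are precisely the couplings $\rho_1,\dots,\rho_8$ carrying renormalization or improvement conditions, whereas those with $p+n\ge6$ (beginning with $\rho_9,\dots,\rho_{12}$, $D_{\rho_i}=-2$) are exactly the ones whose bare data are being varied by $t$; this is what the canonical dimensions in Table ($\ref{tab}$) guarantee, and it is the reason the extra suppression $(\Lambda/M_P)^2$ in ($\ref{PreG}$) --- rather than the single power of the convergence estimate of Theorem ($\ref{ConvTh}$) --- is produced. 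For the latter, one has to push the multiindices $r_i=(r_i^1,\dots,r_i^{k_i})$ and the suppressed $O(4)$ indices through the symmetrized sum of the flow equation without spoiling the polynomial-in-logarithm structure; since all couplings in a tuple $\rho_i$ share the same canonical dimension and the same number of external legs, this collapses to the scalar bookkeeping after relabelling.
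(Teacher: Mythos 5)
Your proposal is correct and follows essentially the same route as the paper: the paper's own proof of Theorem (\ref{PreThG}) simply refers back to the proof of Theorem (\ref{PreTh}), noting that the hypotheses (\ref{BoundWPG}) and (\ref{ini2PG}) are supplied by Theorem (\ref{BoundThG}) and the naturalness assumption (\ref{PlanckNat}), which is exactly the double induction with the $p+n\ge 6$ / $p+n\le 5$ case split, vanishing $t$-derivatives at $\Lambda_R$, and Taylor reconstruction that you carry out. Your additional remarks on the $k_{\tilde{a}}$-tuple and $O(4)$ bookkeeping and on the ghost loop match the paper's own justification in section (\ref{ToyM}) for why the gravity case reduces to the scalar one.
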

For deducing Theorem  ($\ref{PreThG}$), we refer the reader to the proof Theorem ($\ref{PreTh}$). Note that the conditions ($\ref{BoundWPG}$) and ($\ref{ini2PG}$) are satisfied by virtue of Theorem ($\ref{BoundThG}$) and the assumption ($\ref{PlanckNat}$).  

Integrating eq. ($\ref{PreG}$) and employing the triangle inequality we conclude that for two different sets of bare vertex functions $\partial^p A_{n}^{A (r_1,..., r_5)}(M_P)$ and $\partial^p A_{n}^{B(r_1,..., r_5)}(M_P)$ which are small $\grave{a}$ la eq. ($\ref{PlanckNat}$) the associated ''running'' vertex functions satisfy
\begin{eqnarray}
|| \partial^p A_{n}^{A(r_1,..., r_5)}(p_1,...,p_{n}, \Lambda) - \partial^p A_{n}^{B(r_1,..., r_5)}(p_1,...,p_{n}, \Lambda) || \qquad \qquad \qquad \nonumber \\ \le \Lambda^{-p}  \left( \frac{\Lambda}{\Lambda_R} \right)^{r_1} \left( \frac{M_P}{\Lambda} \right)^{r_{NR}}  \left(\frac{\Lambda}{M_P}\right)^{2} Pln \left( \frac{M_P}{\Lambda_R} \right) . \label{PreFG}
\end{eqnarray}
It has therefore been established that the ignorance about the precise initial values $ \partial^p A_{n}^{(r_1,..., r_5)}(M_P), \ \ n+p \ge 6$, amounts to an indetermination of the ''running'' vertex functions $\partial^p A_{n}^{(r_1,..., r_5)}(\Lambda)$ of the order given by eq. ($\ref{PreFG}$).  Note that for small couplings $\lambda_{\tilde{a}}^R(\Lambda), \lambda_{\tilde{a}}^0(\Lambda)$ in the sense of eqns. ($\ref{smallRG}$)-($\ref{smallNRGExp}$) the latter implies 
\begin{eqnarray}
||A^A_{n}(\Lambda)-A^B_{n}(\Lambda)|| \le \left(\frac{\Lambda}{M_P}\right)^{2} Pln \left( \frac{M_P}{\Lambda_R} \right). \label{PreFGnp}
\end{eqnarray}
As follows from the definitions in Table ($\ref{tab}$), the initial values $ \partial^p A_{n}^{(r_1,..., r_5)}(M_P), \ \ n+p \ge 6$, correspond to bare couplings $\rho_{\tilde{n}}(\Lambda_0), \ \tilde{n} \ge 9$, which (after some fine-tuning procedure in order to restore the STI) are related to the ''physical''couplings $c_1, c_2, ...$ of the general gravity action ($\ref{Sgrav}$). Thus we conclude that our ignorance about these couplings results in an effective theory of quantum gravity that is predictive at scales $\Lambda < \Lambda_0$ to an accuray given by eq. ($\ref{PreFGnp}$). 

Let us finally point out one more time that the perturbation theory in the couplings $\lambda_4^R$, $\lambda_5^R$, $\lambda_{6}^0$,  $\lambda_{7}^0$ and $\lambda_{8}^0$ employed in Theorems ($\ref{BoundThG}$) and ($\ref{PreThG}$) makes only sense if the (dimensionless) couplings are small $\grave{a}$ la eqns. ($\ref{smallRG}$) and ($\ref{smallNRG}$).  The same holds true for the perturbation theory in the  couplings $\lambda_4^R$, $\lambda_5^R$ and the deviations $\Delta \lambda_{\tilde{a}}^R, \ {\tilde{a}}=6,...,8$ employed in Theorem ($\ref{invThG}$). If we look at Table ($\ref{tab}$), we see that the couplings $\lambda_{\tilde{a}}, \  {\tilde{a}}=2,...,8$, are associated with products of the cosmological constant $\Lambda_K$ and the gravitational constant $\lambda \sim M_P^{-1}$. Moreover, remember that in eq. ($\ref{smallG}$) we estimated the deviations as $\Delta \lambda_{\tilde{a}}^R \sim \Lambda/M_P$, and that our scale has to stay above the ''mass'' $\sqrt{\Lambda_K}$ as has been pointed out at the beginning of section ($\ref{RGIs}$). Remembering the definitions ($\ref{lgren}$), ($\ref{lgBare}$)  and ($\ref{devRG}$) it seems therefore reasonable to assume that (at least after some restoration of the STI) the couplings remain small on scales $\Lambda$ that are in the range
\begin{equation}
\sqrt{\Lambda_K} < \Lambda < M_P
\end{equation}
which means $10^{-42} GeV < \Lambda < 10^{19} GeV$. However, as has been dicussed in eq. ($\ref{smallg}$) of section ($\ref{RGIs}$), also the coupling $\rho_1 \sim \Lambda_K/\lambda$ has to remain small in order to prove Theorems ($\ref{BoundThG}$)- ($\ref{PreThG}$).  This would suggest a lower bound of $\Lambda > \Lambda_K M_P \sim 10^{-21} GeV$. Note that the latter is probably an artefact\footnote{See the discussion in section ($\ref{propsec}$).} of using a flat Euclidean background which does not fullfill the field equations for $\Lambda_K \ne 0$.

The renormalization of the extended effective potential $\tilde{L} \big(h,C,\overline{C}, \beta, \tau, \Lambda \big) $ involving couplings of the sources $\beta$, $\tau$ to the nonlinear BRS variations ($\ref{BRST1}$) and ($\ref{BRST2}$) will be discussed in the next section.

\end{subsection}

\begin{subsection}{Restoration of the Slavnov-Taylor identities for effective quantum gravity}  \label{STIRest}
As has been mentioned on various occasions, the introduction of the momentum space cutoff regularization violates the gauge (BRS) invariance of the (total) gravity action ($\ref{Sgrav}$) and leads to violated Slavnov-Taylor-identities (vSTI). We will now derive the vSTI and then discuss how they can be restored for effective quantum gravity with \textit{finite accuracy}.

To do so, we will have to consider the renormalization of vertex functions with \textit{one} insertion of a composite field. Some key elements of this subject (employing the method of flow equations) have been summarized in Appendix ($\ref{OI}$), which we will repeatedly refer to. 

Let us begin by introducing bare versions of the composite BRS fields ($\ref{BRST1}$) and ($\ref{BRST2}$) according to the rules discussed in Appendix ($\ref{OI}$): 
\begin{eqnarray}
\Psi^{\mu \nu}(x, \Lambda_0) &:=& R^0_1 \delta^{\mu \nu} \partial_\rho C^\rho - R^0_2 \big( \delta^{\rho \nu} \partial_\rho  C^\mu + \delta^{\mu \rho} \partial_\rho C^\nu \big) \nonumber \\ && \ \ + \ R^0_3  \partial_\rho \big(C^\rho  h^{\mu \nu} \big)   - R^0_4  \big( h^{\rho \nu} \partial_\rho  C^\mu + h^{\mu \rho} \partial_\rho C^\nu \big)  \label{BRST1_0} \\
\Omega^\mu(x, \Lambda_0)   &:=& R^0_5  C^\nu \partial_\nu C^\mu  \label{BRST2_0}
\end{eqnarray}
where the $R^0_{i}$ are some bare coupling constants\footnote{We will introduce the couplings $R_i$ more systematically later.} having canonical dimensions
\begin{eqnarray}
D_{R^0_i} &=& 0, \ \ \ i=1,2 \label{DR1_0} \\
D_{R^0_i} &=& -1, \ \ \ i=3...5 .   \label{DR2_0}
\end{eqnarray}
For the definition of ($\ref{BRST1_0}$) we have made use of eq. ($\ref{lg}$).  The BRS fields $\Psi^{\mu \nu}(x)$ and $\Omega^\mu(x) $ have ghost numbers $1$ and $2$ respectively, and their canonical dimensions are
\begin{equation}
D_{\Psi}=D_{\Omega}=2.  \label{DBRS}
\end{equation}
Note that there is again no (bare) insertion for the linear BRS variation ($\ref{BRST3}$) of the antighost field, because the STI ($\ref{STW}$) can be established by generating this variation through functional derivation with respect to the source field $t_{\mu \nu}$.    

By virtue of the bare BRS variations ($\ref{BRST1_0}$)  and ($\ref{BRST2_0}$), the extended effective potential ($\ref{LGravEG}$) can now be written as
\begin{eqnarray}
\tilde{L} \big(\Phi, \beta, \tau, \Lambda_0 \big)  = {L} \big(\Phi,  \Lambda_0 \big) +  \int d^4 x \left( \beta_{\mu \nu} \Psi^{\mu \nu}  + \tau_\mu  \Omega^\mu  \right)   \label{LGravEG2}
\end{eqnarray}
where the notation ${\Phi}= \left(h_{\mu \nu}, C^\mu, \overline{C}_\mu \right)$ for the graviton and ghost fields has been employed, and the potential ${L} \big(\Phi,  \Lambda_0 \big)$ has been defined in eq. ($\ref{LGravG}$). Moreover, introducing the quantity 
\begin{eqnarray}
Q(\Phi, \Lambda_0) :=  \int_x A  \frac{(2 \pi)^4}{2}  h   - \frac{1}{2} \langle h^{\mu \nu}, \Delta^{\Lambda_0 \ -1}_{\mu \nu \rho \sigma} h^{\rho \sigma} \rangle  -  \langle \overline{C}^\mu, \Delta_{GH \mu \nu}^{\Lambda_0 \ -1} {C}^\nu    \rangle  \label{Q}
\end{eqnarray}
the (extended) total bare gravity action ($\ref{StotdeRegG}$) can be compactly expressed in terms of $L$ and $Q$:
\begin{eqnarray}
{S}_{tot}({\Phi, \Lambda_0}) &=& Q(\Phi, \Lambda_0) +  {L} \big(\Phi,  \Lambda_0 \big) \label{slq1}\\
\tilde{S}_{tot}({\Phi, \beta, \tau, \Lambda_0}) &=& Q(\Phi, \Lambda_0) + \tilde{L} \big(\Phi, \beta, \tau, \Lambda_0 \big).
\end{eqnarray}
Finally, let us define \textit{regularized} bare BRS variations of the fields:
\begin{eqnarray}
\delta^{\Lambda_0}_\epsilon h^{\mu \nu} &:=& ( K^{\Lambda_0} \Psi^{\mu \nu} ) (x) \epsilon  \label{BRST1_R} \\
\delta^{\Lambda_0}_\epsilon  C^\mu   &:=& (K^{\Lambda_0} \Omega^\mu)(x)  \epsilon \label{BRST2_R} \\
\delta^{\Lambda_0}_\epsilon \overline{C}_\mu &:=& (K^{\Lambda_0} \xi^{-1} F_\mu)(x) \epsilon  \label{BRST3_R}
\end{eqnarray}
where $K^{\Lambda_0}:= K(\partial^2/\Lambda^2_0)$ is the (position-space) cutoff function introduced in eq. ($\ref{cutf}$), and $\xi^{-1} F_\mu= \xi^{-1} \partial^\sigma h_{\mu \sigma}$ refers to the gauge fixing, see eq. ($\ref{GFH}$). The motivation for employing regularized BRS variations will become clear later. 

Denoting the sources of the graviton and ghost fields by ${J} = \{ t_{\mu \nu},\sigma_\mu , \overline{\sigma}^\mu \}$, the violated Slavnov-Taylor identities now follow from the requirement that the regularized generating functional of quantum gravity,
\begin{eqnarray}
W({J}; {\Lambda_0}) = \int \mathcal{D} \Phi \ e^{ \left. S_{tot}(\Phi, {\Lambda_0}) + \langle \Phi, J \rangle \right.} ,
\end{eqnarray}
be invariant\footnote{This amounts to the statement that the Jacobian of the BRS transformation is equal to $1$.} under the BRS variations ($\ref{BRST1_R}$)-($\ref{BRST3_R}$) of the fields:
\begin{eqnarray}
0 \stackrel{!}{=} \int \mathcal{D} \Phi \ e^{ \left. S_{tot}(\Phi, {\Lambda_0}) + \langle \Phi, J \rangle \right.} \left( \delta^{\Lambda_0}_\epsilon \langle \Phi, J \rangle + \delta^{\Lambda_0}_\epsilon  S_{tot}(\Phi, {\Lambda_0})  \right) .   \label{vSTI1}
\end{eqnarray}
Note that the BRS variation of the source term $\delta^{\Lambda_0}_\epsilon \langle \Phi, J \rangle$ gives rise to ''conventional'' STI ($\ref{STW}$), whereas $\delta^{\Lambda_0}_\epsilon  S_{tot}(\Phi, {\Lambda_0})$ contains contributions stemming from the cutoff regularized inverse propagators appearing in ($\ref{Q}$) as well as from the BRS violating counterterms that had to be included in the bare potential ($\ref{LGravEG2}$). Let us analyze $\delta^{\Lambda_0}_\epsilon  S_{tot}(\Phi, {\Lambda_0})$ in some more detail, employing the ''split-up'' ($\ref{slq1}$). The BRS variation of the ''free'' part ($\ref{Q}$) is
\begin{eqnarray}
\hspace{-1cm} \delta^{\Lambda_0}_\epsilon  Q(\Phi, \Lambda_0) &=& \int_x \Big( A  \frac{(2 \pi)^4}{2} K^{\Lambda_0} \Psi^{\mu}_\mu   - \frac{1}{2} \langle h^{\mu \nu}, \Delta^{-1}_{\mu \nu \rho \sigma} \Psi^{\rho \sigma} \rangle  \nonumber \\ && \hspace{2cm} -  \langle \overline{C}^\mu, \Delta_{GH \mu \nu}^{ -1} \Omega^\nu    \rangle  -  \langle \xi^{-1} F_\mu, \Delta_{GH \mu \nu}^{ -1} {C}^\nu     \rangle \Big) \epsilon .   \label{Qvar}
\end{eqnarray}
Observe that in the terms bilinear in the fields, the cutoff function in the regularized BRS variations ($\ref{BRST1_R}$)-($\ref{BRST3_R}$) has cancelled its inverse appearing in the inverted propagators. On the other hand, the part linear in the field as well as the BRS variation of the bare effective potential $\delta^{\Lambda_0}_\epsilon {L} \big(\Phi,  \Lambda_0 \big)$ contain the cutoff function and will therefore be nonpolynomial in the field derivatives\footnote{Concerning the linear term, the nonpolynomial parts will go away because they are total derivatives.}. However, all nonpolynomial parts will be multiplied with powers of $\Lambda_0^{-2}$.

The vSTI ($\ref{vSTI1}$) can be rewritten as follows. We employ an extended regularized generating functional of quantum gravity,
\begin{eqnarray}
\tilde{W}({J}, \beta ,  \tau; \Lambda_0) = \int \mathcal{D} \Phi \ e^{ \left. \tilde{S}_{tot}(\Phi, \beta ,  \tau, {\Lambda_0}) +  \langle \Phi, J \rangle   \right.},
\end{eqnarray}
and define a regularized version of the BRS operator ($\ref{BRSO}$):
\begin{eqnarray}
\mathcal{D}^{\Lambda_0}:=  \langle t_{\mu \nu}, K^{\Lambda_0} \frac{\delta }{\delta \beta_{\mu \nu}} \rangle + \langle \overline{\sigma}_\mu, K^{\Lambda_0} \frac{\delta }{\delta \tau_{\mu}} \rangle + \langle  \sigma^\mu  ,  \xi^{-1} K^{\Lambda_0} F_{\mu \rho \sigma} \Big( \frac{\delta }{\delta t_{\rho \sigma}} \Big)\rangle . \label{BRSOR}
\end{eqnarray}
Here, $K^{\Lambda_0}$ is again the cutoff function. Moreover, we treat the BRS variation of the bare action as a space-time integrated operator insertion:
\begin{eqnarray}
L_{(1)}(\Phi, \Lambda_0) \epsilon &:=& \delta^{\Lambda_0}_\epsilon  S_{tot}(\Phi, {\Lambda_0}). \label{L1}
\end{eqnarray}
From the properties of the BRS fields ($\ref{BRST1_0}$)  and ($\ref{BRST2_0}$) and from looking at eq. ($\ref{Qvar}$) it follows that the operator insertion $L_{(1)}(\Phi, \Lambda_0)$ has ghost number $1$ and canonical dimension 
\begin{eqnarray}
D_{L_{(1)}}=5   \label{DL1G}
\end{eqnarray}
in the sense of space-time integrated operator insertions as it is discussed at the end of Appendix ($\ref{OI}$).

By introducing a modified functional
\begin{eqnarray}
W_\chi({J}; {\Lambda_0}) :=  \int \mathcal{D} \Phi \ e^{ \left. S_{tot}(\Phi, {\Lambda_0}) + \chi L_{(1)}(\Phi, \Lambda_0) + \langle \Phi, J \rangle \right.}  \label{Wmod}
\end{eqnarray}
with $\chi \in \mathbb{R}$, the violated Slavnov-Taylor identities ($\ref{vSTI1}$) now take the form
\begin{eqnarray}
\mathcal{D}^{\Lambda_0} \tilde{W}({J}, \beta ,  \tau)|_{\beta=\tau=0} = \frac{d}{d \chi} W_\chi({J}; {\Lambda_0})|_{\chi=0}  . \label{vSTI2}
\end{eqnarray}
From the viewpoint of the analysis with flow equations, the central object of interest is the effective potential $L$, rather than the generating functional of Greens functions $W$. We will therefore show how the vSTI  ($\ref{vSTI2}$) can be expressed in terms of $L$.

Let $\Lambda< \Lambda_0$ denote some scale. Consider the ''running'' extended effective potential 
\begin{equation}
\tilde{L} \big(\Phi, \beta, \tau, \Lambda, \Lambda_0 \big)  \label{LE1}
\end{equation}
that can be seen as a solution of Polchinski's equation for quantum gravity ($\ref{polgravps}$) with initial conditions ($\ref{LGravEG2}$).  In complete analogy, we may introduce a running potential 
\begin{equation}
{L}_\chi \big(\Phi, \Lambda, \Lambda_0 \big) \label{LC1}
\end{equation}
by employing initial conditions ${L} \big(\Phi, \Lambda_0 \big) + \chi L_{(1)}(\Phi, \Lambda_0)$, see eq. ($\ref{Wmod}$). In order to reduce the notational complexity of the upcoming equations, let us also define the following abbreviations:
\begin{eqnarray}
L (\Lambda)& \equiv & \tilde{L} \big(\Phi, \beta, \tau, \Lambda, \Lambda_0 \big)|_{\beta=\tau=0} \label{abbr1} \\
L_\beta^{\mu \nu} (x,\Lambda) & \equiv &  \frac{\delta }{\delta \beta_{\mu \nu}(x)}  \tilde{L} \big(\Phi, \beta, \tau, \Lambda, \Lambda_0 \big)|_{\beta=\tau=0} \\
L_\tau^\mu (x,\Lambda) & \equiv &  \frac{\delta }{\delta \tau_{\mu}(x)}  \tilde{L} \big(\Phi, \beta, \tau, \Lambda, \Lambda_0 \big)|_{\beta=\tau=0} \label{abbr3} \\
L_{(1)} (\Lambda) & \equiv & \frac{d}{d \chi} {L}_\chi \big(\Phi, \Lambda, \Lambda_0 \big)|_{\chi=0} . \label{abbr4}
\end{eqnarray}
We are now ready to give the vSTI in terms of the effective potential.

\begin{satz}[Violated Slavnov-Taylor Identities]  \label{vSTITh}
Employing the notations ($\ref{abbr1}$)-($\ref{abbr4}$), the violated Slavnov-Taylor identities for quantum gravity are
\begin{eqnarray}
 && \hspace{-1cm} \big\langle h^{\mu \nu}, \Delta^{ -1}_{\mu \nu \rho \sigma} L_\beta^{\rho \sigma}(0) \big\rangle  +  \big\langle \overline{C}^\mu, \Delta_{GH \mu \nu}^{ -1} L_\tau^\nu(0)  \big\rangle  +   \xi^{-1} \big\langle {C}^\mu, \Delta_{GH \mu \nu}^{ -1}  F^{\nu \rho \sigma} \big( h_{\rho \sigma}  \big)  \big\rangle  \nonumber \\ && \hspace{3.5cm} + \ \xi^{-1} \big\langle {C}^\mu, \Delta_{GH \mu \nu}^{ -1} F^{\nu \rho \sigma} \Big( \Delta^{\Lambda_0}_{\rho \sigma \alpha \beta} \frac{\delta L(0)}{\delta h_{\alpha \beta}} \Big)  \big\rangle \ = \ L_{(1)}(0) \ \  \nonumber \\ \label{vSTIL}
\end{eqnarray} 
\vspace{-1ex}
where 
\begin{equation}
\Big( \Delta^{\Lambda_0}_{\rho \sigma \alpha \beta} \frac{\delta L(0)}{\delta h_{\alpha \beta}} \Big) \big(x \big) \equiv \int_y   \Delta^{\Lambda_0}_{\rho \sigma \alpha \beta} (x-y) \frac{\delta L(0)}{\delta h_{\alpha \beta} (y)}.
\end{equation} 
\end{satz}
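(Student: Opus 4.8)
The plan is to start from the functional form of the violated Slavnov--Taylor identities already obtained as eq. (\ref{vSTI2}), namely $\mathcal{D}^{\Lambda_0}\tilde{W}(J,\beta,\tau)|_{\beta=\tau=0}=\frac{d}{d\chi}W_\chi(J;\Lambda_0)|_{\chi=0}$, and to rewrite both sides entirely in terms of the effective potential at the value $\Lambda=0$ of the floating cutoff. The crucial input is Theorem (\ref{PolQG}): since $W$, $\tilde{W}$ and $W_\chi$ do not depend on the floating scale, one may evaluate them at $\Lambda=0$, where the regularized propagators have been shrunk to zero and the relation between the generating functional and the effective potential is the simplest one --- this is the ``cleaner relation'' of Appendix (\ref{LZ}), expressing $\log W$ (up to the free part) by $L(\Phi,0,\Lambda_0)$ evaluated on source-dependent field arguments, i.e. by the connected amputated Green's functions, and likewise $\log\tilde{W}$ by $\tilde{L}(\Phi,\beta,\tau,0,\Lambda_0)$ and $\log W_\chi$ by $L_\chi(\Phi,0,\Lambda_0)$.

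Next I would re-express the left-hand side $\mathcal{D}^{\Lambda_0}\tilde{W}|_{\beta=\tau=0}$ term by term, using the three pieces of the regularized BRS operator (\ref{BRSOR}). In the $\langle t_{\mu\nu},K^{\Lambda_0}\delta/\delta\beta_{\mu\nu}\rangle$ piece the $\beta$-derivative extracts the nonlinear BRS insertion, which at $\Lambda=0$ is $L_\beta^{\mu\nu}(0)$ in the notation (\ref{abbr1})--(\ref{abbr4}); the source $t_{\mu\nu}$ is replaced, in the amputated picture, by $\Delta^{-1}_{\mu\nu\rho\sigma}h^{\rho\sigma}$, and the cutoff $K^{\Lambda_0}$ cancels the $K^{-1}$ hidden in the inverse regularized propagator --- this is precisely why $\mathcal{D}^{\Lambda_0}$ carries the cutoff and why the regularized BRS variations (\ref{BRST1_R})--(\ref{BRST3_R}) were introduced --- leaving the first term of (\ref{vSTIL}) with the \emph{unregularized} $\Delta^{-1}$. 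The $\langle\overline{\sigma}_\mu,K^{\Lambda_0}\delta/\delta\tau_\mu\rangle$ piece gives analogously $\langle\overline{C}^\mu,\Delta_{GH\,\mu\nu}^{-1}L_\tau^\nu(0)\rangle$. In the $\langle\sigma^\mu,\xi^{-1}K^{\Lambda_0}F_{\mu\rho\sigma}(\delta/\delta t_{\rho\sigma})\rangle$ piece the $t$-derivative brings down the full ``classical'' graviton field, which in the effective-potential representation is the \emph{dressed} combination $h^{\rho\sigma}+(\Delta^{\Lambda_0}_{\rho\sigma\alpha\beta}\,\delta L(0)/\delta h_{\alpha\beta})$; applying $\xi^{-1}K^{\Lambda_0}F_\mu$ and contracting with $\sigma^\mu$ (again identified with $\Delta_{GH\,\mu\nu}^{-1}C^\nu$, with the cutoffs telescoping) produces exactly the third and fourth terms of (\ref{vSTIL}). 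For the right-hand side, $\frac{d}{d\chi}W_\chi|_{\chi=0}$ translates at $\Lambda=0$ into $\frac{d}{d\chi}L_\chi(\Phi,0,\Lambda_0)|_{\chi=0}=L_{(1)}(0)$ because the insertion enters the bare potential linearly (eq. (\ref{L1})) and the flow is linear in it; the explicit structure of $L_{(1)}$ at the bare scale then follows from (\ref{L1}) together with the computation (\ref{Qvar}) of $\delta^{\Lambda_0}_\epsilon Q$, the nonpolynomial remnants all carrying powers of $\Lambda_0^{-2}$, consistent with the canonical dimension $D_{L_{(1)}}=5$ of eq. (\ref{DL1G}) --- though this last point is not needed for the identity itself.

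Collecting the pieces then gives (\ref{vSTIL}). The step I expect to be the main obstacle is the second one: establishing rigorously the effective-potential representation of $W$ and of its $\beta$-, $\tau$- and $\chi$-derivatives at $\Lambda=0$ in the presence of the composite BRS operator insertions (Appendices (\ref{LZ}) and (\ref{OI})), i.e. the precise dictionary ``source $\leftrightarrow\Delta^{-1}\times$ field'' and ``$\delta/\delta t_{\rho\sigma}\leftrightarrow$ dressed field $h^{\rho\sigma}+\Delta^{\Lambda_0}\delta L/\delta h$'', and checking that every cutoff function telescopes so that the unregularized inverse propagators and the single factor $\Delta^{\Lambda_0}$ in the dressing term come out exactly as written. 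A secondary subtlety is the legitimacy of sitting at $\Lambda=0$ given the massless graviton and ghost propagators and the possibly wrong-sign cosmological term; here one relies on (\ref{vSTIL}) being an algebraic identity between formal power series in the fields, the analytic control being supplied separately by the flow-equation bounds of the preceding sections.
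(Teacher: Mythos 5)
Your proposal is correct and follows essentially the same route as the paper's own proof: substituting the $\Lambda=0$ relations $\tilde{W}=e^{\frac{1}{2}\langle J,\Delta_{\Lambda_0}J\rangle}e^{\tilde{L}(\Phi,\beta,\tau,0,\Lambda_0)}$ and $W_\chi=e^{\frac{1}{2}\langle J,\Delta_{\Lambda_0}J\rangle}e^{L_\chi(\Phi,0,\Lambda_0)}$ from Appendix ($\ref{LZ}$) into eq. ($\ref{vSTI2}$) and inverting $\Phi=\Delta_{\Lambda_0}J$. The term-by-term bookkeeping you carry out (the telescoping of $K^{\Lambda_0}$ against the $K^{-1}$ in the inverse regularized propagators, and the dressed field $h+\Delta^{\Lambda_0}\delta L/\delta h$ arising from $\delta/\delta t_{\rho\sigma}$) is exactly what the paper leaves implicit in its one-line "plug in and reexpress" step.
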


\setcounter{proof}{10}
\begin{proof} 
As follows from the discussion in Appendix ($\ref{LZ}$), see in particular eqns. ($\ref{WL}$) and ($\ref{phij}$), the effective potentials ($\ref{LE1}$)  and ($\ref{LC1}$) can be related to the functionals $\tilde{W}({J}, \beta ,  \tau; \Lambda_0)$  and $W_\chi({J}; {\Lambda_0}) $ by 
\begin{eqnarray}
\tilde{W}({J}, \beta ,  \tau; \Lambda_0) = e^{\frac{1}{2} \langle J, \Delta_{\Lambda_0} J \rangle} e^{\tilde{L} \big(\Phi, \beta, \tau, 0, \Lambda_0 \big)} \label{WLrel1} \\
{W}_\chi({J}; \Lambda_0) = e^{\frac{1}{2} \langle J, \Delta_{\Lambda_0} J \rangle} e^{\tilde{L}_\chi \big(\Phi, 0, \Lambda_0 \big)} \label{WLrel2}
\end{eqnarray}
where
\begin{eqnarray}
\langle J, \Delta_{\Lambda_0} J  \rangle := \langle t^{\mu \nu}, \Delta^{\Lambda_0 }_{\mu \nu \rho \sigma} t^{\rho \sigma}  \rangle + \langle \sigma^\mu, \Delta_{GH \mu \nu}^{\Lambda_0 } \overline{\sigma}^\nu    \rangle
\end{eqnarray}
and the fields $\Phi$ are understood to be given in terms of their respective sources $J$,
\begin{eqnarray}
h_{\mu \nu}(x) &=& \int_y  \Delta^{\Lambda_0}_{\mu \nu \rho \sigma}(x-y) t^{\rho \sigma}(y) \label{phij1} \\
C_\mu (x)&=& \int_y \Delta_{GH \mu \nu}^{\Lambda_0}(x-y) \sigma^\nu (y) \\
\overline{C}_\mu (x) &=& \int_y \Delta_{GH \mu \nu}^{\Lambda_0}(x-y) \overline{\sigma}^\nu (y).  \label{phij3}
\end{eqnarray}
The vSTI for the effective potential can now be derived by plugging eqns. ($\ref{WLrel1}$) and ($\ref{WLrel2}$) into ($\ref{vSTI2}$) and reexpressing the resulting equation for the sources $J$ in terms of the fields $\Phi$. The latter is done by inverting the relations ($\ref{phij1}$)-($\ref{phij3}$).
\begin{flushright}
$\Box$
\end{flushright}
\end{proof}
In the following, the violated Slavnov-Taylor identities will be analyzed. At first we note that the potentials ($\ref{abbr1}$)-($\ref{abbr4}$) appear in eq. ($\ref{vSTIL}$)  with ''floating'' cutoff $\Lambda=0$. This implies a relation between the the effective potentials $L$ and the generating functionals $W$ $\grave{a}$ la eqns. ($\ref{WLrel1}$) and ($\ref{WLrel2}$), which in particular allows for external momenta $k_i^2 > 0$ in the vertex functions $L_n(k_i,0, \Lambda_0)$. However, so far in this work we have employed the perception that we probe the physics only below $\Lambda$, corresponding to a relation ($\ref{WP}$) between $L$ and $W$ and  external momenta $k_i^2 \le \Lambda^2$. In addition, problems may arise when $\Lambda^2 < \Lambda_K$, see the discussion at the end of section ($\ref{propsec}$). We will therefore analyze the functionals appearing in the vSTI ($\ref{vSTIL}$) for nonvanishing floating cutoff $\Lambda$ and keep the latter above the external momenta $k_i$ as well as the (at this point arbitrary\footnote{After the restoration of the STI has been accomplished, $B_1$ will be related to $\Lambda_K$.}) ''mass'' squares $B_1, B_2$ appearing in the graviton and ghost propagators ($\ref{gravpropG}$) and ($\ref{ghpropG}$):
\begin{eqnarray}
\Lambda^2 &>& k_i^2, B_1, B_2 .
\end{eqnarray}
The $\Lambda$-dependent results of this analysis will then (at least qualitatively) be related to the functionals appearing in eq. ($\ref{vSTIL}$) by replacing $\Lambda^2$ with the mass squares\footnote{Modulo problems that may arise with ''wrong sign'' mass squares such as $B_1 \sim \Lambda_K$ for $\Lambda_K > 0$.} $B_1, B_2$ and the external momenta $k_i^2$ respectively. This is suggested by analogous results\footnote{See in particular eqns. $4.52$ and $4.117$ of \cite{Mull} in the limit $\Lambda \rightarrow 0$.} for Yang-Mills Theory \cite{Mull}.

We begin with the LHS of eq.  ($\ref{vSTIL}$) and notice that the three functionals $L(\Lambda)$, $L_\beta^{\mu \nu}(\Lambda)$ and $L_\tau^\mu(\Lambda)$ defined in eqns.  ($\ref{abbr1}$)-($\ref{abbr3}$) appear.

In Theorem ($\ref{BoundThG}$) of the last section, we have already established bounds for the (dimensionless) vertex functions $A_{n} (k_1,...,k_{n}, \Lambda, \Lambda_0) $ of the functional $L(\Lambda)$. In the discussion following Theorem ($\ref{invThG}$) we then have argued that the UV cutoff of effective quantum gravity should be the Planck scale, $\Lambda_0=M_P$. Finally, improvement conditions have been introduced in eq. ($\ref{rcG3}$) and the dependence of the vertex functions on the unknown initial conditions ($\ref{shapeANRG}$) has been analyzed in Theorem ($\ref{PreThG}$). The result, summarized in eq. ($\ref{PreFGnp}$), was that at some scale $\Lambda<M_P$ the vertex functions $A_{n}(\Lambda)$ (and therefore the functional $L(\Lambda)$) are known to an accuracy of $(\Lambda/M_P)^2$.

In order to establish these bounds for the entire LHS of  eq. ($\ref{vSTIL}$), a similar analysis has to be performed for the generating functionals $L_\beta^{\mu \nu}(x,\Lambda)$ and $L_\tau^\mu(x,\Lambda)$ of the vertex functions carrying one BRS field $\Psi^{\mu \nu}(x)$ or $\Omega^\mu(x)$ as  operator insertion. The necessary steps to do so have been developed in Appendix ($\ref{OI}$), leading to Theorems  ($\ref{BoundThOI}$) and  ($\ref{PreThI}$) concerning bounds for vertex functions with one operator insertion and their dependence on the unknown initial conditions.

Let us therefore apply these theorems to the functionals $L_\beta^{\mu \nu}(x,\Lambda)$ and $L_\tau^\mu(x,\Lambda)$. We introduce the Fourier transforms
\begin{eqnarray}
L_\beta^{\mu \nu}(q,\Lambda) := \int_x e^{iqx} L_\beta^{\mu \nu}(x, \Lambda)  \\
L_\tau^\mu(q, \Lambda) := \int_x e^{iqx} L_\tau^\mu(x, \Lambda)
\end{eqnarray}
and define momentum-space vertex functions
\begin{eqnarray}
\delta^4(q+ k_1 + ... + k_n ) L_{\beta n}(q, k_1, ..., k_n, \Lambda) \ = \ (2 \pi )^{4n} \delta^{(n)}_{\hat{\Phi}} L_\beta(q, \Phi, \Lambda) \big|_{\Phi=0} \  \label{LBetaExp} \\ \delta^4(q+ k_1 + ... + k_n ) L_{\tau n}(q, k_1, ..., k_n, \Lambda) \ = \ (2 \pi )^{4n} \delta^{(n)}_{\hat{\Phi}} L_\tau(q, \Phi, \Lambda) \big|_{\Phi=0} . \label{LTauExp}
\end{eqnarray}
Here, the schematic notation introduced in eq. ($\ref{LexpG}$) has been employed. As has been pointed out in eq. ($\ref{DBRS}$), the canonical dimension of the BRS composite fields $\Psi^{\mu \nu}(x)$ and $\Omega^\mu(x) $ equals $2$.  Hence, the canonical dimension of the momentum-space vertex functions $L_{\beta n}$ and $L_{\tau n}$ is 
\begin{eqnarray}
D_{L_{\beta n}} = D_{L_{\tau n}} \ = \ 2-n  \label{Dbetatau}
\end{eqnarray}
in accordance with eq. ($\ref{DLnOI}$) of Appendix  ($\ref{OI}$).

Note that the (suppressed) tensor structure of eqns. ($\ref{LBetaExp}$) and ($\ref{LTauExp}$), as well as the ghost numbers $+1$ and $+2$ associated with the functionals $L_\beta^{\mu \nu}(q,\Lambda)$ and $L_\tau^\mu(q, \Lambda)$, restrain the number of nonvanishing vertex functions $L_{\beta n}$ and $L_{\tau n}$. Introducing running coupling constants $R_i(\Lambda)$ as we have done in eqns. ($\ref{rcc0OI}$)- ($\ref{rcc3OI}$), it therefore turns out that all couplings having canonical dimension  $D_{R_i} \ge -1$ are given by the following (schematic) definitions:
\begin{eqnarray}
 R_i(\Lambda) &:=&  \partial_{1} \Big( \frac{\delta}{\delta C (k_1)} L_\beta(q,\Lambda) \Big)\big|_{C=k_1=q=0}, \ \ \ i=1,2 \nonumber\\  R_i(\Lambda) &:=&  \partial_{j} \Big( \frac{\delta}{\delta C(k_1)} \frac{\delta}{\delta h (k_2)} L_\beta(q,\Lambda) \Big)\big|_{h=C=k_j=q=0} , \ \ \ i=3,4  \nonumber  \\  R_i(\Lambda) &:=&   \partial_{j} \Big( \frac{\delta}{\delta C(k_1)} \frac{\delta}{\delta C (k_2)} L_\tau(q,\Lambda) \Big)\big|_{C=k_j=q=0}, \ \ \ i=5   .  \nonumber  \\
  \label{rcR}
\end{eqnarray}
The index $i$ stems from the different possibilities to contract the (suppressed) tensor structure on the RHS of eqns. ($\ref{rcR}$).  

The running couplings $R_i(\Lambda), \ i=1...5,$ are of course the pendants to the bare couplings $R_i^0, \ i=1...5,$ appearing in eqns. ($\ref{BRST1_0}$)  and ($\ref{BRST2_0}$). This is underlined by the fact that if we look at eqns. ($\ref{rcR}$) and ($\ref{Dbetatau}$), we see that the canonical dimensions of the running couplings $R_i(\Lambda)$ are 
\begin{eqnarray}
D_{R_i} &=& 0, \ \ \ i=1,2 \label{DR1} \\
D_{R_i} &=& -1, \ \ \ i=3...5   \label{DR2}
\end{eqnarray}
in agreement with the previous eqns. ($\ref{DR1_0}$) and ($\ref{DR2_0}$). Note also that eq. ($\ref{DR2}$)  means that the BRS transformations ($\ref{BRST1_R}$)-($\ref{BRST3_R}$) for gravity contain \textit{nonrenormalizable parts}. This is an important difference to the Yang-Mills case \cite{KM}.

Proceeding as in Appendix ($\ref{OI}$), we define renormalization conditions for the couplings having canonical dimensions $D_{R_i} = 0$,
\begin{eqnarray} 
R_i^R := R_i(\Lambda_R), \ \ \ i=1,2  \label{rccOIG}
\end{eqnarray}
where $\Lambda_R < M_P$ is some renormalization scale, and initial conditions for the couplings having $D_{R_i} = -1$:
\begin{eqnarray} 
R_i^0 := R_i(M_P), \ \ \ i=3...5 . \label{iniOIG}
\end{eqnarray}
Moreover, we introduce dimensionless couplings 
\begin{eqnarray} 
\mathcal{R}_i^R (\Lambda) &=& R_i^R, \ \ \ i=1,2   \label{ROIG1} \\
\mathcal{R}_i^0 (\Lambda) &=& \Lambda R_i^0,   \ \ \ i=3...5   \label{ROIG2}
\end{eqnarray}
and impose as an additional constraint to the renormalization and initial conditions that ($\ref{ROIG1}$), ($\ref{ROIG2}$) be small for $\Lambda_R< \Lambda < M_P$:
\begin{eqnarray} 
\mathcal{R}_i^R(\Lambda)  &\le& 1, \ \ \ i=1,2  \label{smallOIGa} \\
\mathcal{R}_i^0(\Lambda)  &\le& 1, \ \ \ i=3...5.  \label{smallOIGb}
\end{eqnarray}
Note that because of the definition ($\ref{ROIG2}$), this in particular implies ${R}_i^0 \le M_P^{-1}, \ i=3...5$, for $\Lambda \le M_P$ and thus
\begin{equation}
\mathcal{R}_i^0 (\Lambda)  \le {\Lambda}/{M_P}, \ \ \ \ i=3...5.  \label{smallOIGExp}
\end{equation}
As always, we will consider dimensionless vertex functions
\begin{eqnarray}
A_{\beta n}(\Lambda) := \Lambda^{n-2} L_{\beta n}, \ \ \ \ A_{\tau n}(\Lambda) := \Lambda^{n-2} L_{\tau n}
\end{eqnarray}
and expand them in the dimensionless renormalized renormalizable couplings $\mathcal{R}_1^R$, $\mathcal{R}_2^R$, $\lambda_4^R$ and $\lambda_5^R$ defined in eqns.  ($\ref{ROIG1}$), ($\ref{lgren}$) and the bare nonrenormalizable couplings $\mathcal{R}_i^0, \ i=3...5,$ and $\lambda_{\tilde{a}}^0, \  \tilde{a} =6...8$ of eqns. ($\ref{ROIG2}$), ($\ref{lgBare}$). For $A_{\beta n}(\Lambda)$ we thus obtain\footnote{Please refer to eq. ($\ref{Apert2I_0}$) of Appendix ($\ref{OI}$) for a definition of the coefficients $A_{\beta n}^{(i, r_1,...,r_5)} (q,k_1,...,k_{n}, \Lambda)$.}
\begin{eqnarray}
A_{\beta n} (q,k_1,...,k_{n}, \Lambda) &=& \sum_{r_1, ...,r_{5}=0}^{\infty}  \ \sum_{i=1}^4 \mathcal{R}_i^{R/0} \ (\lambda_4^R)^{r_1} (\lambda_5^R)^{r_2} \nonumber \\  &&  \hspace{2.5cm}  (\lambda_{6}^0)^{r_3} (\lambda_{7}^0)^{r_4} (\lambda_{8}^0)^{r_5}  A_{\beta n}^{(i, r_1,...,r_5)} (q,k_1,...,k_{n}, \Lambda) \nonumber \\  \label{GenPertIG}
\end{eqnarray}
where we have employed the notation
\begin{eqnarray}
\mathcal{R}_i^{R/0} := \left\{ \begin{array}{l} \mathcal{R}_i^R, \ \ \  i=1,2 \\ \mathcal{R}_i^0, \ \ \ \ i=3,4 \ . \end{array} \right. \label{R_R0}
\end{eqnarray}
Hence each graph contributing to ($\ref{GenPertIG}$) contains \textit{one} extra vertex associated with a renormalized coupling constant $\mathcal{R}_1^R$, $\mathcal{R}_2^R$ or a bare one $\mathcal{R}_3^0$, $\mathcal{R}_4^0$ because of the BRS operator insertion.  Remember also that the couplings $\lambda_{\tilde{a}}$ have to be understood as $k_{\tilde{a}}$-tuples, and recall the notation conventions of eqns. ($\ref{tupord}$) and ($\ref{tupexp}$). 

The procedure for $A_{\tau n}(\Lambda)$ goes in complete analogy, with the exception that the extra vertex is always associated with the bare nonrenormalizable coupling $\mathcal{R}_5^0$.

Finally, we introduce symbols $\Theta_i^R$ and  $\Theta_i^0$ as we have done it in Appendix ($\ref{OI}$):
\begin{eqnarray}
i=1,2: && \Theta_i^R:=1 \  \wedge \  \Theta_i^0:=0  \label{thet1} \\ 
i=3...5:&&   \Theta_i^R:=0  \  \wedge \  \Theta_i^0:=1.  \label{thet2} 
\end{eqnarray}
We are now ready to establish bounds for the vertex functions  $A_{\beta n}$ by applying Theorem ($\ref{BoundThOI}$) of Appendix ($\ref{OI}$).

\begin{satz}[Boundedness of Vertex Functions with BRS Insertion] \label{BoundThOIG}
Let \\ $\Lambda_R \le \Lambda \le M_P$. Given Theorem ($\ref{BoundThG}$), the renormalization conditions ($\ref{rccOIG}$), the initial conditions ($\ref{iniOIG}$) and assuming that
\begin{eqnarray}
||\partial^p A_{\beta n}^{(i, r_1,..., r_5)}(q, p_1,...,p_{n}, \Lambda_0)|| \le M_P^{-p}  \left( \frac{M_P}{\Lambda_R} \right)^{|r_1|} Pln \left( \frac{M_P}{\Lambda_R} \right) \label{iniNRIG}
\end{eqnarray} 
for $n+p \ge 4$, to order $r_1,..., r_5$ in perturbation theory in $\lambda_4^R$, $\lambda_5^R$, $\lambda_{6}^0$,  $\lambda_{7}^0$ and $\lambda_{8}^0$
\begin{eqnarray}
&& ||\partial^p A_{\beta n}^{(i,r_1,..., r_5)}(p_1,...,p_{n}, \Lambda)|| \nonumber \\ &&  \qquad \qquad   \le \Lambda^{-p} \left( \frac{\Lambda}{\Lambda_R} \right)^{r_1} \left( \frac{M_P}{\Lambda} \right)^{r_{NR} + \Theta_i^0}   \left( \Theta_i^R \ Pln\left( \frac{\Lambda}{\Lambda_R} \right) +  \frac{\Lambda}{M_P}  Pln \left( \frac{M_P}{\Lambda_R} \right) \right) \nonumber \\     \label{BoundNRIG}
\end{eqnarray}
where the index $i$ refers to an extra vertex associated with a renormalized coupling ${R}_i^R, \ i=1,2$ having canonical dimension $D_{R_i} = 0$ ($\Theta_i^R=1$, $\Theta_i^0=0$) or a bare one ${R}_i^0, \ i=3,4$ having canonical dimension $D_{R_i} = -1$ ($\Theta_i^0=1$, $\Theta_i^R=0$), and $r_{NR} = |r_3| + |r_4| + |r_5|$.
\end{satz}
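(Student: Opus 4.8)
The plan is to prove Theorem ($\ref{BoundThOIG}$) by the same inductive machinery that established Theorems ($\ref{BoundThII}$) and ($\ref{BoundThOI}$), now transcribed to the gravitational effective potential carrying one BRS insertion. First I would write down the Polchinski flow equation obeyed by the functional $L_\beta^{\mu\nu}(q,\Lambda)$ obtained from $\tilde L\big(\Phi,\beta,\tau,\Lambda,\Lambda_0\big)$ by one functional derivative with respect to $\beta_{\mu\nu}$ and subsequent restriction to $\beta=\tau=0$. As for any single composite-operator insertion (cf.~\cite{KK3}, \cite{KK4}) this equation is linear in the insertion vertex functions $L_{\beta n}$, its structure being the gravitational analogue of the linearized equation ($\ref{poll}$) with the deviation $\delta L$ replaced by $L_\beta$ and the reference solution taken to be $L$; the ordinary vertex functions $L_n$ entering as coefficients are already controlled by Theorem ($\ref{BoundThG}$), and the initial data at $\Lambda_0=M_P$ are the coefficients of the bare BRS field $K^{\Lambda_0}\Psi^{\mu\nu}$, i.e. the couplings $R_i^0$ of ($\ref{BRST1_0}$). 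Passing to dimensionless vertex functions $A_{\beta n}=\Lambda^{n-2}L_{\beta n}$, expanding in generalized perturbation theory in $\lambda_4^R,\lambda_5^R,\lambda_6^0,\lambda_7^0,\lambda_8^0$ while keeping the one factor $\mathcal{R}_i^{R/0}$ forced by the insertion, and using the support properties ($\ref{cutf}$) of the cutoff function together with the bounds ($\ref{q1}$), ($\ref{q2}$), one obtains a key RG inequality for $\partial^p A_{\beta n}^{(i,r_1,...,r_5)}$ in complete analogy with ($\ref{RGI1_NR}$); integrating it down from $M_P$ and up from $\Lambda_R$ and differentiating as appropriate yields the analogues of ($\ref{RGI2}$)--($\ref{RGI5}$).

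Second, the induction runs, exactly as in the proof of Theorem ($\ref{BoundTh}$), over the overall order $r=|r_1|+...+|r_5|$ and, at each fixed $r$, over the number of external legs $n$, the order in the extra-vertex coupling being frozen at one. The induction start is supplied by the vanishing of $A_m^{(0,...,0)}$, see ($\ref{zero2}$), and by $A_{\beta n}^{(i,r_1,...,r_5)}=0$ for $n$ exceeding the maximal leg number that $r$ ordinary vertices together with one insertion can generate, so that only finitely many low-order functions remain, each bounded directly by its boundary data. For the induction step I would distinguish the irrelevant region $n+p\ge 4$, where the RG inequality is integrated down from $M_P$ with the initial condition ($\ref{iniNRIG}$) and the induction hypothesis ($\ref{BoundNRIG}$) inserted on the right-hand side together with the Theorem ($\ref{BoundThG}$) bounds for the ordinary factors, from the relevant region $n+p\le 3$. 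In the latter the boundary data are the renormalization conditions ($\ref{rccOIG}$) at $\Lambda_R$ when the extra vertex carries a dimension-$0$ coupling $R_1$ or $R_2$ (so that $\Theta_i^R=1$), obtained by integrating up, whereas when the extra vertex carries one of the nonrenormalizable couplings $R_3,R_4,R_5$ of canonical dimension $-1$ the data are the initial conditions ($\ref{iniOIG}$) fixed at $M_P$ (so that $\Theta_i^0=1$), reached by integrating down as in the irrelevant region. All the resulting $\Lambda$-integrals are evaluated by Lemma ($\ref{l}$).

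The bookkeeping that carries the argument --- and the step I expect to be the main obstacle --- is the power counting behind the factor $(M_P/\Lambda)^{r_{NR}+\Theta_i^0}$ appearing in ($\ref{BoundNRIG}$). A bare nonrenormalizable coupling contributes one inverse power of $\Lambda_0=M_P$ per vertex; this is the origin of the $(M_P/\Lambda)^{r_{NR}}$ in Theorem ($\ref{BoundThG}$). Since the BRS couplings $R_3,R_4,R_5$ are themselves \emph{nonrenormalizable} --- a feature with no counterpart in the Yang--Mills case \cite{KM} --- an extra vertex built out of them counts precisely as one more such power, which is exactly what $\Theta_i^0=1$ records. One therefore has to verify that this additional nonrenormalizable power does not destabilize the relevant ($n+p\le 3$) bounds: concretely, that no divergence requiring a boundary condition at $\Lambda_R$ is produced for a coupling whose value is fixed at $M_P$, and that after the rescaling by $\Lambda^{n-2}$ and the $\Lambda^{-p}$ coming from the $p$ momentum derivatives the tadpole and tree closures in the flow equation reproduce exactly the $\Lambda$-powers claimed in ($\ref{BoundNRIG}$). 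Modulo this --- essentially a careful rerun of the argument in Appendix ($\ref{OI}$) with the single new input that $D_{R_i}$ may be negative --- I would regard the remainder as routine and, as with Theorem ($\ref{BoundThG}$), present the bounds without spelling out the remaining integrations.
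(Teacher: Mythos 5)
Your proposal follows essentially the same route as the paper, which in fact gives no explicit proof of this theorem: it simply invokes Theorem ($\ref{BoundThOI}$) of Appendix ($\ref{OI}$), itself justified only by the remark that one derives the linear RG inequality ($\ref{RGI1_NRI}$) for the inserted vertex functions and reruns the induction scheme of Theorems ($\ref{BoundTh}$)/($\ref{BoundThII}$) using the already established bounds of Theorem ($\ref{BoundThG}$) for the ordinary vertex functions appearing as coefficients --- exactly the scheme you describe, including the origin of the extra factor $(M_P/\Lambda)^{\Theta_i^0}$ from a nonrenormalizable BRS vertex and the threshold $n+p\ge 4$ coming from $D_{\mathcal{O}}+2$ with $D_{\mathcal{O}}=2$. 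The one point to tighten is that in the region $n+p\le 3$ the choice of boundary scale is dictated by the dimension of the Taylor coefficient being bounded ($n+p=2$: integrate up from $\Lambda_R$ with the renormalization conditions ($\ref{rccOIG}$), whose perturbative coefficients vanish beyond lowest order; $n+p=3$: integrate down from $M_P$ with ($\ref{iniOIG}$)), not by the index $i$ of the extra vertex as your wording suggests, since every combination of $i$ and $n+p$ occurs at higher orders.
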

The Theorem concerning boundedness of the vertex functions $A_{\tau n}$ is the same with the exception that the extra vertex is always associated with the bare nonrenormalizable coupling ${R}_5^0$. Thus, we do not need the renormalization conditions ($\ref{rccOIG}$) and we only have the case $\Theta_i^0=1$, $\Theta_i^R=0$.

Let us introduce \textit{improvement conditions} for the nonrenormalizable couplings $R_i, \ i=3...5$:
\begin{eqnarray} 
R_i^{NR} := R_i(\Lambda_R), \ \ i=3...5. \label{ImproOIG}
\end{eqnarray}
It is assumed that the improvement conditions ($\ref{ImproOIG}$) are taken such that they are compatible with small initial conditions ($\ref{iniOIG}$), where small is meant in the sense of eq. ($\ref{smallOIGb}$). By the arguments given in the last section, this will amount to the requirement that the improvement conditions should be (not larger than) of the order of the inverse Planck scale,
\begin{eqnarray} 
R_i^{NR} \sim M_P^{-1}.
\end{eqnarray}
Considering the renormalized BRS variations ($\ref{BRST1}$) and ($\ref{BRST2}$) (after some restoration of the STI), it seems reasonable that this will be so because we observe that effectively, the $R_i^{NR}$ become replaced with the gravitational constant $\lambda$.

In order to investigate the dependence of the vertex functions $A_{\beta n}(\Lambda)$ and $A_{\tau n}(\Lambda)$ on the unknown initial conditions, we introduce the parametrization
\begin{eqnarray}
 t \ \partial^p A_{\beta n}^{(i,r_1, ..., r_5)}( M_P), \ \ \ t \in [0,1], \ \ \ n+p \ge 4   \label{shapeANRGI}
\end{eqnarray}
and similarly for $A_{\tau n}^{(i,r_1, ..., r_5)}( M_P)$. This leads to $t$-dependend ''running'' vertex functions $A_{\beta n}(\Lambda, t)$ and $A_{\tau n}(\Lambda, t)$, and we may apply Theorem ($\ref{PreThI}$) of Appendix ($\ref{OI}$):

\begin{satz}[Predictivity of Quantum Gravity with BRS Insertion] \label{PreThIG} Let \\ there be renormalization conditions  ($\ref{rccOIG}$) and  improvement conditions ($\ref{ImproOIG}$). Assume that to order $r_1, ..., r_5$ in perturbation theory in $\lambda_4^R$, $\lambda_5^R$, $\lambda_{6}^0$,  $\lambda_{7}^0$ and $\lambda_{8}^0$
\begin{eqnarray}
||\partial^p A_{\beta n}^{(i,r_1,..., r_5)}(q, p_1,...,p_{n}, \Lambda)|| \le \Lambda^{-p}  \left( \frac{\Lambda}{\Lambda_R} \right)^{|r_1|}  \left( \frac{M_P}{\Lambda} \right)^{r_{NR}+ \Theta_i^0} Pln \left( \frac{M_P}{\Lambda_R} \right), \label{BoundWPIG}
\end{eqnarray}
and that for $n+p \ge 4$
\begin{eqnarray}
|| \frac{d}{d t} \partial^p A_{\beta n}^{(i,r_1,..., r_5)}(q, p_1,...,p_{n}, M_P)|| \le M_P^{-p}  \left( \frac{M_P}{\Lambda_R} \right)^{|r_1|} Pln \left( \frac{M_P}{\Lambda_R} \right) . \label{ini2PIG}
\end{eqnarray}
Given Theorems ($\ref{BoundThG}$),  ($\ref{BoundThOIG}$) and ($\ref{PreThG}$) we then have for $\Lambda_R \le \Lambda \le M_P$
\begin{eqnarray}
|| \frac{d}{d t} \partial^p A_{\beta n}^{(i, r_1,..., r_5)}(q, p_1,...,p_{n}, \Lambda)|| \le \Lambda^{-p}  \left( \frac{\Lambda}{\Lambda_R} \right)^{|r_1|}  \left( \frac{M_P}{\Lambda} \right)^{r_{NR}+\Theta_i^0} \left(\frac{\Lambda}{M_P}\right)^{2} Pln \left( \frac{M_P}{\Lambda_R} \right) \nonumber \\ \label{PreIG}
\end{eqnarray}
where the index $i$ refers to an extra vertex associated with a renormalized coupling ${R}_i^R, \ i=1,2$ having canonical dimension $D_{R_i} = 0$ ($\Theta_i^0=0$) or a bare one ${R}_i^0, \ i=3,4$ having canonical dimension $D_{R_i} = -1$ ($\Theta_i^0=1$), and  $r_{NR} = |r_3| + |r_4| + |r_5|$.
\end{satz}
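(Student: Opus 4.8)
The plan is to follow the route already laid out for the scalar case: the statement is the operator-insertion analogue of Theorem ($\ref{PreThG}$), and its proof is modelled line by line on the proof of Theorem ($\ref{PreTh}$), equivalently on Theorem ($\ref{PreThI}$) of Appendix ($\ref{OI}$). First I would note that the hypothesis ($\ref{BoundWPIG}$) is only a weakened form of the bound ($\ref{BoundNRIG}$) already established in Theorem ($\ref{BoundThOIG}$), so it need not be re-proven. The argument then proceeds by a double induction in the overall perturbative order $r=|r_1|+\dots+|r_5|$ and in the number of external legs $n$, exactly the scheme of the proof of Theorem ($\ref{BoundTh}$); the induction start is supplied by $A_{\beta n}^{(0,\dots,0)}=0$ and by $A_{\beta n}^{(i,r_1,\dots,r_5)}=0$ for $n$ sufficiently large. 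Since all couplings are $k$-tuples the orders $r_i$ are multiindices and one works with $|r_i|$ throughout, but this is purely bookkeeping and changes nothing structurally.

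The technical engine is the renormalization group inequality for the vertex functions carrying one BRS insertion. Because $L_\beta^{\mu\nu}$ and $L_\tau^\mu$ are single functional derivatives of the extended effective potential $\tilde L$, which itself solves the gravity Polchinski equation ($\ref{polgrav}$), they obey the \emph{linearized} flow equation of the type ($\ref{poll}$); concretely, every term on the right-hand side of the associated RGI contains exactly one factor $\partial^{p}A_{\beta m}$ (respectively $\partial^{p}A_{\tau m}$) together with at most one factor of an ordinary vertex function $\partial^{p}A_m$. This linearity, i.e. the insertion appears exactly once, forbids products of insertion vertex functions and is what carries the single extra vertex $\mathcal{R}_i^{R/0}$ through the whole flow. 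I would integrate this RGI down from $M_P$ to $\Lambda$ for $n+p\ge 4$ and up from $\Lambda_R$ to $\Lambda$ for $n+p\le 3$, and then differentiate both integrated inequalities with respect to the parameter $t$ of ($\ref{shapeANRGI}$); as in the passage from ($\ref{RGI2}$), ($\ref{RGI4}$) to ($\ref{RGI3U}$), ($\ref{RGI5U}$), the boundary terms produced by differentiating the upper integration limit drop out because $t$ does not enter it. For $n+p\ge 4$ the initial term is controlled by the hypothesis ($\ref{ini2PIG}$); for $n+p\le 3$, that is for the couplings $R_i(\Lambda_R)$, the renormalization conditions ($\ref{rccOIG}$) and the improvement conditions ($\ref{ImproOIG}$) fix these couplings independently of $t$, so their $t$-derivatives vanish at $\Lambda=\Lambda_R$ and there is no boundary contribution there either. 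Into the surviving integrals one then feeds ($\ref{PreIG}$) as the induction hypothesis, the bound ($\ref{BoundWPIG}$), the bound ($\ref{PreG}$) of Theorem ($\ref{PreThG}$) for the ordinary $\frac{d}{dt}\partial^p A_m$ that may occur, and the bound ($\ref{BoundNRIG}$) of Theorem ($\ref{BoundThOIG}$) for the undifferentiated $A_{\beta m}$; the integrals are solved with Lemma ($\ref{l}$), and Taylor expansion around $k_i=0$ (as in ($\ref{TE0}$)--($\ref{TE1}$), now also in the insertion momentum $q$) reconstructs the full momentum dependence from the relevant and leading irrelevant coefficients, all remainder terms falling into the already-treated case $n+p\ge 4$. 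The case of $A_{\tau n}$ is identical except that the extra insertion vertex is always the bare nonrenormalizable coupling $\mathcal{R}_5^0$, so only the sub-case $\Theta_i^0=1,\ \Theta_i^R=0$ occurs and the conditions ($\ref{rccOIG}$) are not needed.

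The main obstacle I expect is the precise bookkeeping of the extra factor $(M_P/\Lambda)^{\Theta_i^0}$ in ($\ref{PreIG}$). One has to verify that the single insertion vertex, when it carries one of the nonrenormalizable BRS couplings $R_3,R_4,R_5$ (canonical dimension $-1$, cf.\ ($\ref{DR2}$)), contributes \emph{exactly} one power of $M_P/\Lambda$ to the dimensionless vertex function, and that this power is neither lost nor doubled under the flow, under the down-integration from $M_P$, or when the insertion vertex is enclosed in a loop of the RGI -- the fact that the gravity BRS fields contain nonrenormalizable parts is precisely what distinguishes this from the Yang-Mills treatment of \cite{KM}. Since $\mathcal{R}_i^0(\Lambda)\le\Lambda/M_P$ by ($\ref{smallOIGExp}$) the enhancement is harmless, but it must be threaded consistently through every term of the inductive estimate; the discipline enforced by the $\Theta_i^R/\Theta_i^0$ notation of ($\ref{thet1}$)--($\ref{thet2}$) is what keeps this manageable. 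A lesser point is that the vertex functions here are evaluated at a nonvanishing floating cutoff with $\Lambda^2>k_i^2$, so the subtleties about small mass squares $B_1,B_2$ discussed after Theorem ($\ref{vSTITh}$) do not enter the proof itself and are relevant only when the result is transcribed to $\Lambda=0$.
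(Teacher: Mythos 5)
Your proposal is correct and follows essentially the same route as the paper, which itself only sketches this proof by reference to Theorems (\ref{PreTh}) and (\ref{PreThI}): the double induction in $r$ and $n$, the integrated-then-$t$-differentiated RGIs with the split at $n+p\ge 4$ versus $n+p\le 3$, the vanishing of the $\Lambda_R$ boundary terms via (\ref{rccOIG}) and (\ref{ImproOIG}), and the insertion of Theorems (\ref{BoundThG}), (\ref{BoundThOIG}) and (\ref{PreThG}) into the linear RGI are exactly the intended ingredients. Your remarks on the linearity of the insertion flow and on threading the single factor $(M_P/\Lambda)^{\Theta_i^0}$ through the estimates correctly identify the only points where the gravity case genuinely departs from the scalar template.
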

For the vertex functions $A_{\tau n}(\Lambda, t)$ we may proceed similarly, with the exception that once again the extra vertex is always associated with the bare nonrenormalizable coupling ${R}_5^0$. Thus, we do not need the renormalization conditions ($\ref{rccOIG}$) and we only have the case $\Theta_i^0=1$.

By the arguments following Theorem  ($\ref{PreThG}$), eq.  ($\ref{PreIG}$) means that for small couplings in the sense of eqns. ($\ref{smallRG}$)-($\ref{smallNRGExp}$) and ($\ref{smallOIGa}$)-($\ref{smallOIGExp}$), the ignorance about the initial values ($\ref{shapeANRGI}$) amounts to an indetermination of the running vertex functions $A_{\beta n}(\Lambda)$ and $A_{\tau n}(\Lambda)$ of the order of $(\Lambda/M_P)^2$.

Thus, we have achieved our aim: Theorems ($\ref{PreThG}$) and ($\ref{PreThIG}$) allow us to control the entire LHS of the violated Slavnov-Taylor Identities ($\ref{vSTIL}$), since all three functionals $L(\Lambda)$, $L_\beta^{\mu \nu}(\Lambda)$ and $L_\tau^\mu(\Lambda)$ appearing are now known to an accuracy of $(\Lambda/M_P)^2$.

\medskip
Let us therefore move on and consider the functional $L_{(1)} (\Lambda)$ that forms the RHS of the vSTI ($\ref{vSTIL}$). It has been defined in eq. ($\ref{abbr4}$) as the generating functional of vertex functions carrying the BRS variation of the bare total action ($\ref{L1}$) as a (space-time integrated) operator insertion. Remember that $L_{(1)} (\Lambda)$  has ghost number $1$ and canonical dimension $5$ in the sense of space-time integrated operator insertions discussed at the end of Appendix ($\ref{OI}$). Thus, according to eq. ($\ref{DLnOI}$) the momentum-space vertex functions
\begin{eqnarray}
\delta^4(k_1 + ... + k_n ) L_{(1)n}(k_1, ..., k_n, \Lambda) = (2 \pi )^{4n} \delta^{(n)}_{\hat{\Phi}} L_{(1)}(\Phi, \Lambda) \big|_{\Phi=0}    \label{LexpGIG}
\end{eqnarray}
have canonical dimensions
\begin{eqnarray}
D_{ L_{(1) n} } = 5 -n.  \label{DL1_5}
\end{eqnarray}
We may use the vertex functions ($\ref{LexpGIG}$) to introduce running coupling constants $F_i(\Lambda)$, as we have done it in eqns. ($\ref{rcc0OI}$)- ($\ref{rcc3OI}$) of Appendix ($\ref{OI}$). If we attempt to control the dimensionless counterparts $A_{(1)n}(k_1, ..., k_n, \Lambda)$ of ($\ref{LexpGIG}$) to an accuracy of $(\Lambda/M_P)^2$, we will have to define renormalization and improvement conditions:
\begin{eqnarray}
F_i^R &:=& F_i(\Lambda_R), \ \ \ D_{F_i} \ge 0 \label{RenL1} \\
F_i^{NR} &:=& F_i(\Lambda_R), \ \ \ D_{F_i} = -1.  \label{ImproL1}
\end{eqnarray}
This follows again from the analogous procedure leading to Theorem ($\ref{PreThI}$).

The renormalization and improvement conditions ($\ref{RenL1}$) and ($\ref{ImproL1}$) are associated with position-space local operators $\mathcal{O}_i(h, C, \overline{C})$ having canonical dimensions $D_{\mathcal{O}_i} \le 6$, as follows from a derivative expansion of the functional $L_{(1)} (\Lambda_R)$ $\grave{a}$ la Appendix ($\ref{MDV}$). On the other hand, the \textit{bare} insertion $L_{(1)} (M_P)$ has been defined in eq. ($\ref{L1}$) as the (regularized) BRS variation of the bare total action:
\begin{eqnarray}
L_{(1)}(\Phi, M_P) \epsilon &=& \delta^{M_P}_\epsilon  S_{tot}(\Phi, {M_P}) \ \equiv \ \sum_i F_i^0 \int_x \mathcal{O}_i(h, C, \overline{C}) \epsilon . \label{L1_2}
\end{eqnarray}
If we look at Table ($\ref{tab}$) and remember the definitions of the BRS fields ($\ref{BRST1_0}$)  and ($\ref{BRST2_0}$), we see that the operators $\mathcal{O}_i(h, C, \overline{C})$ having canonical dimensions $D_{} \le 6$ are those that are asscociated with the bare coupling constants $\rho^0_{\tilde{a}}, \  \tilde{a}=1...8$, and $R^0_i, \ i=1...5$, defined in  Table ($\ref{tab}$) and  eqns.  ($\ref{rcR}$). Hence
\begin{eqnarray}
 F_i^0 = F_i^0(\rho^0_{\tilde{a}}, R^0_j), \ \ \ \tilde{a}=1...8, \ j=1...5, \ D_{F^0_i} \ge -1.   \label{FB1}
\end{eqnarray}
However, the bare couplings $\rho^0_{\tilde{a}}, \  \tilde{a}=1...8$, and $R^0_i, \ i=1...5$, are already \textit{implicitly defined} by their renormalized counterparts because of the renormalization and improvement conditions ($\ref{rcG1}$)-($\ref{rcG2}$), ($\ref{rccOIG}$), ($\ref{rcG3}$) and ($\ref{ImproOIG}$):
\begin{eqnarray}
\rho^0_{\tilde{a}} &=& \rho^0_{\tilde{a}} \big( \rho^R_{\tilde{b}}, \rho^{NR}_{\tilde{b}}, \Lambda_R, M_P \big)   \label{FB2}   \\
R^0_i &=& R^0_i \big(R^R_j, R^{NR}_j, \Lambda_R, M_P \big).   \label{FB3} 
\end{eqnarray}
We therefore conclude that while establishing Theorem ($\ref{PreThI}$) for the (dimensionless) vertex functions $A_{(1)n}(k_1, ..., k_n, \Lambda)$ of the functional $L_{(1)} (\Lambda)$, the renormalization and improvement conditions ($\ref{RenL1}$) and ($\ref{ImproL1}$) cannot be chosen freely. In fact, they are (in principle) determined by solving the Polchinski RGE for initial conditions ($\ref{FB1}$). Hence, they must be functions\footnote{In the analogous treatment for Yang-Mills Theory, it is shown that these functions can be explicitly worked out by analyzing the relevant part of an 1PI fuctional $\Gamma_{(1)}$ describing the violation of the STI \cite{Mull}. The functional $\Gamma_{(1)}$ is the 1PI Yang-Mills counterpart to our functional $L_{(1)}$. Note that for effective quatum gravity, one would have to analyze the relevant \textit{and} the least irrelevant parts of a functional $\Gamma_{(1)}$ being the 1PI counterpart to $L_{(1)}$.} $G_i$ of the renormalization and improvement conditions ($\ref{rcG1}$)-($\ref{rcG2}$), ($\ref{rccOIG}$), ($\ref{rcG3}$) and ($\ref{ImproOIG}$):
\begin{eqnarray}
F_i^R &=& G_i(\rho^R_{\tilde{a}}, \rho^{NR}_{\tilde{a}}, R^R_j, R^{NR}_j) \ + \ \mathcal{O}\big( (\Lambda_R/{M_P})^2 \big)  \label{FtR1}    \\
F_i^{NR} &=& G_i(\rho^R_{\tilde{a}}, \rho^{NR}_{\tilde{a}}, R^R_j, R^{NR}_j) \ + \ \mathcal{O}\big( (\Lambda_R/{M_P})^2 \big)  \label{FtR2} 
\end{eqnarray}
where the indetermination that is left in eqns. ($\ref{FtR1}$), ($\ref{FtR2}$) stems once more from the fact that we do not know about the initial conditions
\begin{eqnarray}
\partial^p A_{(1)n}(p_1,...,p_{n}, M_P), \ \ \ n+p \ge 7    \label{inivSTI}
\end{eqnarray}
associated with couplings $ F_i^0, \ \ D_{F^0_i} \le -2$. Note however that for establishing Theorems ($\ref{BoundThOI}$) and ($\ref{PreThI}$) for the vertex functions ($\ref{LexpGIG}$), it is necessary that ($\ref{inivSTI}$) are sufficiently small, see eq. ($\ref{iniNRI}$). This will be the case because in eq. ($\ref{L1_2}$), the bare couplings $\rho^0_{\tilde{n}}, \  \tilde{n} \ge 9$, are small by the reasoning of eq. ($\ref{PlanckNat}$), and the bare couplings $R^0_i, \ i=1...5$, are small according to eq. ($\ref{smallOIGb}$). Furthermore, the nonpolynomial parts in  $\delta^{M_P}_\epsilon {L} \big(\Phi,  M_P \big)$ introduced by the \textit{regularized} BRS variations (see the discussion regarding eq. ($\ref{Qvar}$)) will be also small since they are multiplied by powers of $\Lambda_0^{-2}$.

Let us therefore assume that Theorems ($\ref{BoundThOI}$) and ($\ref{PreThI}$) have been established for the vertex functions of the functional $L_{(1)} (\Lambda)$ while employing renormalization and improvement conditions $\grave{a}$ la ($\ref{FtR1}$) and ($\ref{FtR2}$).

\medskip
We are now ready to discuss the restoration of the violated Slavnov-Taylor identities ($\ref{vSTIL}$). The violation is expressed in terms of the functional $L_{(1)} (\Lambda)$ appearing on the RHS of ($\ref{vSTIL}$), since the LHS of ($\ref{vSTIL}$) effectively describes the regularized BRS operator ($\ref{BRSOR}$) acting on the extended effective potential $\tilde{L} \big(\Phi, \beta, \tau, 0, \Lambda_0 \big)$. Note that ($\ref{BRSOR}$) equals its unregularized counterpart ($\ref{BRSO}$) at scales $\Lambda< \Lambda_0$. This means that ultimatively, the LHS of ($\ref{vSTIL}$) is identical to the the LHS of the STI ($\ref{STW}$). Thus, we have restored the vSTI if we can make $L_{(1)} (\Lambda)$ vanish\footnote{Remember the discussion following Theorem ($\ref{vSTITh}$) concerning the value $\Lambda=0$ of the floating cutoff.}. 

However, since the three functionals $L(\Lambda)$, $L_\beta^{\mu \nu}(\Lambda)$ and $L_\tau^\mu(\Lambda)$ appearing at the LHS of ($\ref{vSTIL}$) are only known to an accuracy of $(\Lambda/M_P)^2$, it will suffice to show that
\begin{eqnarray}
|| L_{(1)} (\Lambda)  || &\le& \left(\frac{\Lambda}{M_P} \right)^2.  \label{L1Re}
\end{eqnarray}
This is the restoration of the vSTI ($\ref{vSTIL}$) to \textit{finite accuracy}. In order to achieve ($\ref{L1Re}$), two steps are necessary.
\begin{itemize}
\item Renormalization and improvement conditions for the \textit{physical} couplings $\Lambda_K$, $\lambda$ and the gravitational field $h_{\mu \nu}$ are specified at some renormalization scale $\Lambda_R < M_P$. To do so, we employ again a schematic notation suppressing all indices. Moreover, we would like to remind the reader of the notations and conventions employed in table ($\ref{tab}$) and eqns. ($\ref{r67}$) and ($\ref{gravpropG}$):
\begin{eqnarray}
\frac{\delta}{\delta h(k_1)} \frac{\delta}{\delta h(k_2)} L(\Phi, \Lambda_R)\big|_{h=k_i=0} &\stackrel{!}{=}&0, \ \ \ \  B_1 \stackrel{!}{=} \Lambda_K \ \nonumber \\
\partial_{i}^2 \left(\frac{\delta}{\delta h(k_1)} \frac{\delta}{\delta h(k_2)} L(\Phi, \Lambda_R)\right) \Big|_{h=k_i=0} &\stackrel{!}{=}& 0 \nonumber \\
\Big(\frac{1}{2} \partial_{i} \partial_{j} -\partial_{i}^2  \Big) \left(\frac{\delta}{\delta h(k_1)} \frac{\delta}{\delta h(k_2)} \frac{\delta}{\delta h(k_3)} L(\Phi, \Lambda_R)\right) \Big|_{h=k_{i}=0} &\stackrel{!}{=}& \lambda \ . \nonumber \\   \label{physrenGK}
\end{eqnarray}

\item \textit{One particular set} of ''arbitrary'' renormalization and improvement conditions ($\ref{rcG1}$)-($\ref{rcG2}$), ($\ref{rccOIG}$), ($\ref{rcG3}$) and ($\ref{ImproOIG}$) for the remaining couplings $\rho_{\tilde{a}}(\Lambda_R), \  \tilde{a}=1...8$, and $R_i(\Lambda_R), \ i=1...5$, has to be determined such that
\begin{eqnarray}
|| G_i(\rho^R_{\tilde{a}}, \rho^{NR}_{\tilde{a}}, R^R_j, R^{NR}_j) || &\le& \left(\frac{\Lambda_R}{M_P}\right)^2  \ \ \forall \ i   \label{Gismall}
\end{eqnarray}
where the functions $G_i$ have been defined in eqns. ($\ref{FtR1}$) and ($\ref{FtR2}$).
\end{itemize}
Once the second step has been achieved, also the renormalization and improvement conditions $F_i^R$ and $F_i^{NR}$ of the functional $L_{(1)} (\Lambda)$ will obey the bound ($\ref{Gismall}$)  because of eqns. ($\ref{FtR1}$) and ($\ref{FtR2}$). Since the initial conditions ($\ref{inivSTI}$) are also small, we then may conclude that ($\ref{L1Re}$) will be satisfied and the vSTI ($\ref{vSTIL}$) are restored with an accuracy of $(\Lambda/M_P)^2$. 

Note that this means in particular that at the scale $\Lambda< M_P$, our effective theory of quantum gravity is now determined by the cosmologogical constant $\Lambda_K$ and the gravitational constant $\lambda$ to an accuracy of $(\Lambda/M_P)^2$.

\medskip
Some concluding remarks are in order. In section ($\ref{CutRegG}$) we have discussed that for the restoration of the STI for Yang-Mills Theory \cite{KM}, it turns out that there are \textit{more} relevant parts of the $L_{(1)}$-insertion than ''arbitrary'' renormalization conditions $\grave{a}$ la  ($\ref{rcG1}$)-($\ref{rcG2}$), ($\ref{rccOIG}$). To be concrete, one finds that $53$ conditions analogous to ($\ref{Gismall}$) have to be achieved by adjusting $37+7$ renormalization conditions for the effective potential and the BRS variations. Thus, it has to be shown that there are some linear interdependences between the relevant parts of the $L_{(1)}$-insertion for Yang-Mills theory. This is a rather awkward task involving also the renormalization of 1PI vertex functions that we did not consider in this work. Please refer to \cite{KM}, \cite{Mull} for details.

However, it seems probable that the same difficulties will arise in the case of effective quantum gravity: it may turn out that there are \textit{more} inequalities ($\ref{Gismall}$) to fulfill than there are renormalization and improvement conditions ($\ref{rcG1}$)-($\ref{rcG2}$), ($\ref{rccOIG}$), ($\ref{rcG3}$) and ($\ref{ImproOIG}$). In this case, we would have to prove linear interdependences between the relevant and least irrelevant parts of $L_{(1)}$, i.e. eqns.  ($\ref{FtR1}$), ($\ref{FtR2}$), by employing the methods described in \cite{KM}.

\end{subsection}

\begin{section}[The no-cutoff limit of quantum qravity]{The no-cutoff limit of quantum gravity from the viewpoint of the Polchinski analysis} \label{GravNoCut}

It has been discussed in chapter ($\ref{OvM}$) that the running nonrenormalizable couplings $\rho_n(\Lambda)$ of some quantum field theory become \textit{determined} by the renormalizable ones $\rho_a(\Lambda)$ in the no-cutoff limit $\Lambda_0 \rightarrow \infty$:
\begin{equation}
\lim_{\Lambda_0 \rightarrow \infty}  \rho_n(\Lambda,\Lambda_0, \rho_a^0(\Lambda, \Lambda_0, \rho_a(\Lambda))) = \rho_n^{cont}( \Lambda, \rho_a(\Lambda)).  \label{snc}
\end{equation}
This is, however, only true if the bare initial values $\rho_n^0= \rho_n^0 (\rho_a^0)$ of the nonrenormalizable couplings are sufficiently small in the sense of eq. ($\ref{sini}$). See section ($\ref{RenFlowOver}$) for more details. In addition, we would like the reader to recall that in general, the no-cutoff limits $\rho_n^{cont}( \Lambda, \rho_a(\Lambda))$ will be \textit{nonzero}.

On the other hand, from the viewpoint of the flow equations a theory has been called ''nonrenormalizable'' if its field and symmetry content is such that \textit{no} renormalizable interactions, that is no couplings except for kinetic and mass terms, are permitted. However, there is nothing wrong with introducing renormalization conditions for the fields and masses of such a theory (i.e. for the only operators that are renormalizable), and considering the no-cutoff limit employing analogons to Theorems ($\ref{BoundTh}$)-($\ref{UniTh}$). In fact, we can produce a related situation for the scalar field theory considered in chapter ($\ref{RenFlow}$) by choosing vanishing renormalization conditions  ($\ref{c3}$) and ($\ref{c4}$) for the renormalizable $\phi^3$ and $\phi^4$ couplings:
\begin{eqnarray} 
\rho_4^R=\rho_5^R = 0.  \label{rencond0}
\end{eqnarray}
If the bare nonrenormalizable couplings are assumed to be small, see eq. ($\ref{ini}$),  the outcome of such an analysis, already discussed at the end of section ($\ref{EffFlowOver}$),  is a forgone conclusion: all interactions will go away as $\Lambda_0 \rightarrow \infty$ because there are no renormalizable interactions generating new contributions to the nonrenormalizable ones while integrating out field modes. Thus, the running effective potential vanishes:
\begin{eqnarray} 
L(\Lambda, \Lambda_0) \rightarrow 0 \ \ \ \text{for} \ \ \ \Lambda_0 \rightarrow \infty,  \label{LzNR}
\end{eqnarray}
and we are left with the ''free'' part of the effective action\footnote{See eq. ($\ref{SpolM}$) for definitions of the potential $L$ and the effective action.}. Equivalently, one can say that all running nonrenormalizable couplings will die out in the no-cutoff limit,
\begin{equation}
 \rho_n^{cont}( \Lambda, \rho_a(\Lambda)) = 0.   \label{nnc}
\end{equation}
Note that this is not in contradiction to eq. ($\ref{snc}$). One can rather say that in the special case of a nonrenormalizable theory, the running nonrenormalizable couplings  $\rho_n(\Lambda)$ are determined by the renormalizable ones  $\rho_a(\Lambda)$ (i.e. only field strength renormalization and masses) to be \textit{zero}.

This can be reformulated in more technical terms for the scalar field theory of chapter ($\ref{RenFlow}$) as follows. Since the vertex functions of the effective potential are evaluated in perturbation theory in the couplings $\rho_4^R$ and $\rho_5^R$, see eq. ($\ref{Pert}$), and vanish to $0th$ order in perturbation theory, see eq. ($\ref{zero}$), they will all be zero  for the choice ($\ref{rencond0}$). This in particular holds true for the bare initial conditions ($\ref{ini}$). However, in the no-cutoff limit the vertex functions  converge to limits that are independent of the initial conditions as long as these are sufficiently small. This is assured by Theorems ($\ref{ConvTh}$) and ($\ref{UniTh}$). In particular, we may choose vanishing initial conditions. Thus eq. ($\ref{LzNR}$) is confirmed.

\medskip
Let us now discuss the situation for quantum gravity. As has been argued in section ($\ref{GENR}$), the expansion of the gravity action leaves us with infinitely many nonrenormalizable operators\footnote{We always assume that the UV behaviour of the theory is governed by the $1/k^2$ propagators ($\ref{gravprop}$) and ($\ref{propgh}$). This once more amounts to the assumption of small bare couplings associated with the higher field invariants such as $R^2$. See the discussion in section ($\ref{GAFG}$).}. However, as we have just pointed out, from the viewpoint of the flow equations the maybe more interesting question is \textit{''are there any renormalizable interactions at all?''}. Looking at table ($\ref{tab}$), we see that this is indeed the case: the couplings $\rho_4$ and $\rho_5$ accociated with $\phi^3$ and $\phi^4$-like operators are renormalizable. Note that after the restoration of the Slavnov-Taylor identities, these couplings will be given in terms of the gravitational coupling $\lambda$ and the cosmological constant $\Lambda_K$ as $\lambda \Lambda_K$ and $\lambda^2 \Lambda_K$. Hence, they will only be there for nonvanishing (renormalized) cosmological constant.

At the end of this section, the above observation will lead to us the speculation whether quantum gravity \textit{with cosmological constant} has a no-cutoff limit with nonvanishing gravitational constant $\lambda$. Note that since the latter is a nonrenormalizable coupling, it would then be \textit{determined by the cosmological constant} $\Lambda_K$ in the sense of eq. ($\ref{snc}$). Moreover, we will discuss that the coupling of mass terms of some massive field $\varphi$ to gravity produces again $\phi^3$ and $\phi^4$-like operators. This will give rise to the speculation of a gravitational constant that is \textit{determined by the cosmological constant and the masses of the elementary particles} in the no-cutoff limit. If true, this might indicate that there is a deeper relation between the Higgs mechanism (that produces the mass terms) and the gravitational force.  

In the following, we will give the formal steps that have to be employed in order to consider the no-cutoff limit of quantum gravity along the lines of renormalization with flow equations. If we refer to equations or Theorems of sections ($\ref{ToyM}$) and ($\ref{STIRest}$), it is understood that the Planck scale $M_P$ is replaced by an arbitrary UV cutoff $\Lambda_0$. After the formal treatment, we will discuss the implications of our results. 

We begin with the vertex functions ($\ref{LexpG}$) of the gravity potential $L(h, C, \overline{C}, \Lambda)$ introduced in eq. ($\ref{SGraveffG}$) and expand their dimensionless counterparts $A_{n} (k_1,...,k_{n}, \Lambda)$ in the dimensionless renormalized renormalizable couplings $\lambda_4^R$ and $\lambda_5^R$ introduced in eqns. ($\ref{lgren}$):
\begin{eqnarray}
A_{n} (k_1,...,k_{n}, \Lambda) = \sum_{r_1, r_2=0}^{\infty} (\lambda_4^R)^{r_1} (\lambda_5^R)^{r_2}   A_{n}^{(r_1,r_2)} (k_1,...,k_{n}, \Lambda).  \label{Apert4}
\end{eqnarray}
Remember that the couplings $\lambda_{\tilde{a}}$ have to be understood as $k_{\tilde{a}}$-tuples, and recall the notation conventions of eqns. ($\ref{tupord}$) and ($\ref{tupexp}$). Comparing to the expansion ($\ref{Apert3}$) where also the bare nonrenormalizable couplings ($\ref{lgBare}$) are used as expansion parameters, we notice that we may identify our ''new'' expansion ($\ref{Apert4}$) with the $0th$ order of perturbation theory in the bare nonrenormalizable couplings of the former expansion ($\ref{Apert3}$). Thus, Theorem ($\ref{BoundThG}$) concerning the boundedness of gravity vertex functions can be applied to the vertex functions $A_{n}^{(r_1,r_2)} (k_1,...,k_{n}, \Lambda)$ of ($\ref{Apert4}$), and we may proceed by establishing their convergence to a no-cutoff limit in analogy to Theorem ($\ref{ConvTh}$) for the scalar field theory considered in chapter ($\ref{RenFlow}$).

\begin{satz}[Convergence of Gravity Vertex Functions] \label{ConvThG} 
Let there be renormalization conditions ($\ref{rcG1}$)-($\ref{rcG2}$). Assume that to order $r_1, r_2$ in perturbation theory in $\lambda_4^R$ and $\lambda_5^R $ 
\begin{eqnarray}
||\partial^p A_{n}^{(r_1, r_2)}(p_1,...,p_{n}, \Lambda)|| \le \Lambda^{-p}  \left( \frac{\Lambda}{\Lambda_R} \right)^{|r_1|} Pln \left( \frac{\Lambda_0}{\Lambda_R} \right), \label{BoundWGG}
\end{eqnarray}
and that for $n+p \ge 5$
\begin{eqnarray}
|| \Lambda_0 \frac{d}{d \Lambda_0} \partial^p A_{n}^{(r_1, r_2)}(p_1,...,p_{n}, \Lambda_0)|| \le \Lambda_0^{-p}  \left( \frac{\Lambda_0}{\Lambda_R} \right)^{|r_1|} Pln \left( \frac{\Lambda_0}{\Lambda_R} \right) . \label{ini2GG}
\end{eqnarray}
Then
\begin{eqnarray}
||\Lambda_0 \frac{d}{d \Lambda_0} \partial^p A_{n}^{(r_1, r_2)}(p_1,...,p_{n}, \Lambda)|| \le \Lambda^{-p}  \left( \frac{\Lambda}{\Lambda_R} \right)^{|r_1|} \frac{\Lambda}{\Lambda_0} Pln \left( \frac{\Lambda_0}{\Lambda_R} \right)  \label{ConvEGG}
\end{eqnarray}
where $\Lambda_R \le \Lambda \le \Lambda_0$.
\end{satz}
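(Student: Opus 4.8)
The plan is to follow the blueprint laid out in the proof of Theorem~($\ref{ConvTh}$) for the scalar case, since the whole apparatus has been engineered precisely so that the gravity vertex functions satisfy RG inequalities of the same structure. First I would note that ($\ref{BoundWGG}$) is a weakened form of the bound ($\ref{BoundNRG}$) restricted to $r_{NR}=0$, hence it is already guaranteed by Theorem~($\ref{BoundThG}$); this frees us to use it freely as an input in the induction. The induction is carried out simultaneously in the overall order $r=|r_1|+|r_2|$ in perturbation theory in $\lambda_4^R$, $\lambda_5^R$ and in the number of external legs $n$, exactly as in the proof of Theorem~($\ref{BoundTh}$), using that $A_n^{(r_1,r_2)}(\Lambda)=0$ for $n>2r+2$ and that $A_n^{(0,0)}(\Lambda)=0$ by the analogue of ($\ref{zero}$). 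The structural fact that makes the scheme close is that in the key gravity RG inequality (the analogue of ($\ref{RGI1}$), obtained from the Polchinski RGE ($\ref{polgravps}$) of Theorem~($\ref{PolQG}$) together with the compact-support bounds ($\ref{q1}$), ($\ref{q2}$) and the smallness condition ($\ref{smallg}$) on $A$) the terms on the right-hand side have either strictly larger $r$ or equal $r$ and strictly larger $n$.

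For the induction step I would split into the two familiar cases. In the case $p+n\ge 5$ one starts from the gravity analogue of ($\ref{RGI3}$): integrate the key RG inequality down from $\Lambda_0$ to $\Lambda$, then differentiate with respect to $\Lambda_0$ using the $\Theta$-function trick displayed before ($\ref{RGI3}$). Plugging in ($\ref{BoundWGG}$), the initial bound ($\ref{ini2GG}$) at $\Lambda_0$, and ($\ref{ConvEGG}$) as the induction hypothesis, the $s$-integral $\int_\Lambda^{\Lambda_0} s^{4-n-p}\,ds$ is elementary and is dominated by $\Lambda^{5-n-p}$ for $n+p\ge5$; multiplying back by $\Lambda^{n+|r_1|-4}$ yields ($\ref{ConvEGG}$). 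In the case $p+n\le 4$ one uses instead the gravity analogue of ($\ref{RGI5}$), integrating up from $\Lambda_R$; here the renormalization conditions ($\ref{rcG1}$)--($\ref{rcG2}$) enter as initial data, and because they \emph{fix} the renormalizable couplings $\rho_{\tilde a}$, $\tilde a=1\dots5$, at $\Lambda_R$ one has $\frac{d}{d\Lambda_0}\rho_{\tilde a}(\Lambda_R)=0$, so all boundary terms at $\Lambda_R$ vanish. Solving the integrals with Lemma~($\ref{l}$) establishes ($\ref{ConvEGG}$) for the couplings, and then Taylor's theorem ($\ref{TE0}$), ($\ref{TE1}$) reconstructs the full $\frac{d}{d\Lambda_0}A_n^{(r_1,r_2)}(k_1,\dots,k_n,\Lambda)$ from these coefficients plus remainder terms that fall under the already-treated case $n+p\ge5$.

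The only genuinely new bookkeeping relative to the scalar proof is the index and tensor structure: the fields $h_{\mu\nu}$, $C^\mu$, $\overline C_\mu$ all carry canonical dimension $1$ by ($\ref{mhd}$), so the dimensional counting $D_{A_n}=4-n$ is untouched, but the couplings are $k_i$-tuples ($\ref{ktup}$) and the orders $r_i$ become multiindices with $|r_i|$ as in ($\ref{tupord}$), ($\ref{tupexp}$). One has to check that the combinatorial permutation sums and the $O(4)$ contractions only contribute finitely many terms with nonnegative coefficients, so that Lemma~($\ref{l}$) still applies verbatim; this is the point the author glosses with ``in analogy''. I expect the main obstacle to be purely notational rather than conceptual: making sure the graviton/ghost propagators ($\ref{gravpropG}$), ($\ref{ghpropG}$) (with their ``mass'' constants $B_1$, $B_2$ kept below $\Lambda^2$) still give compact-support derivatives of $\Lambda\frac{d}{d\Lambda}\Delta^\Lambda$ obeying ($\ref{q1}$), ($\ref{q2}$), and that the extra linear-in-$h$ source term proportional to $A$ in the RGE ($\ref{polgrav}$) contributes a term of the same type as in ($\ref{RGE0}$). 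Once those are in place, the estimate is a line-by-line transcription and one may, as the author does, legitimately skip it. Finally, integrating ($\ref{ConvEGG}$) with respect to $\Lambda_0$ and invoking Cauchy's criterion gives the no-cutoff limit $A_n^{cont\,(r_1,r_2)}(\Lambda):=\lim_{\Lambda_0\to\infty}A_n^{(r_1,r_2)}(\Lambda,\Lambda_0)$ together with the convergence rate $(\Lambda/\Lambda_0)\,Pln(\Lambda_0/\Lambda_R)$, the gravity analogue of ($\ref{conv2}$).
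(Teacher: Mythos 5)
Your proposal is correct and matches the paper's approach: the paper itself merely notes that the hypotheses (\ref{BoundWGG}) and (\ref{ini2GG}) are guaranteed by Theorem (\ref{BoundThG}) and that the remainder "runs in complete analogy" to Theorem (\ref{ConvTh}), which is precisely the analogy you have spelled out (same induction scheme, same case split via the gravity analogues of (\ref{RGI3}) and (\ref{RGI5}), vanishing $\Lambda_0$-derivatives of the fixed couplings at $\Lambda_R$, Taylor reconstruction, and the purely notational multiindex/tensor bookkeeping).
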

The conditions ($\ref{BoundWGG}$)  and ($\ref{ini2GG}$)  are guaranteed by Theorem ($\ref{BoundThG}$). The remaining proof runs in complete analogy to the one of Theorem ($\ref{ConvTh}$), and we will therefore skip it. 

Integrating ($\ref{ConvEGG}$) with respect to $\Lambda_0$ we may conclude with Cauchy's criterion that the gravity vertex functions converge to a no-cutoff limit
\begin{equation}
A_{n}^{cont \ (r_1, r_2)}(\Lambda) := \lim_{\Lambda_0 \rightarrow \infty} A_{n}^{(r_1, r_2)}(\Lambda, \Lambda_0) \label{contlimAG} 
\end{equation}
where the rate of convergence is given by
\begin{eqnarray}
|| A_{n}^{(r_1, r_2)}(p_1,...,p_{n}, \Lambda, \Lambda_0)- A_{n}^{cont \ (r_1, r_2)}(p_1,...,p_{n},\Lambda)  || \le \left( \frac{\Lambda}{\Lambda_R} \right)^{|r_1|} \frac{\Lambda}{\Lambda_0} Pln \left( \frac{\Lambda_0}{\Lambda_R} \right) . \nonumber \\   \label{conv2G}
\end{eqnarray}
We would like to stress that at this point of the analysis, with gauge invariance still violated by the momentum cutoff regularization, we have only established that there exists a family of finite theories (with field content ${\Phi} = \left(h_{\mu \nu}, C^\mu, \overline{C}_\mu \right)$) that is parametrized by the arbitrary renormalization conditions ($\ref{rcG1}$)-($\ref{rcG2}$). 

Let us therefore proceed with the discussion of the formal steps that are necessary in order to restore the violated Slavnov-Taylor identities ($\ref{vSTIL}$) in the no-cutoff limit. To do so, we have to consider the effective potentials $L(\Lambda)$, $L_\beta^{\mu \nu}(\Lambda)$, $L_\tau^\mu(\Lambda)$ and $L_{(1)} (\Lambda)$ that have been defined in eqns. ($\ref{abbr1}$)-($\ref{abbr4}$). We have just established the boundedness and convergence of the vertex functions of $L(\Lambda)$  in the no-cutoff limit, so let us move on and carry out the same program for the vertex functions carrying the nonlinear BRS variations  ($\ref{BRST1_0}$) and  ($\ref{BRST2_0}$) as an operator insertion.

However, here we are confronted with the following problem. In the last section, the dimensionless vertex functions $A_{\beta n}(\Lambda)$ and $A_{\tau n}(\Lambda)$ of the functionals $L_\beta^{\mu \nu}(\Lambda)$ and $L_\tau^\mu(\Lambda)$ have been evaluated in perturbation theory in the couplings defined in eqns. ($\ref{ROIG1}$), ($\ref{lgren}$), ($\ref{ROIG2}$) and ($\ref{lgBare}$). We summarize the respective expansion parameters again in the following table: 
\begin{table}[here]
\hspace{0.5cm} \begin{tabular}{|c|c|c|} \hline  
Vertex function & Renormalized renormalizable &  Bare nonrenormalizable  \\  & expansion parameters & expansion parameters \\ \hline \hline  $A_{\beta n}(\Lambda)$  & $\mathcal{R}_1^R$, $\mathcal{R}_2^R$, $\lambda_4^R$, $\lambda_5^R$ & $\mathcal{R}_3^0$, $\mathcal{R}_4^0$, $\lambda_{6}^0$, $\lambda_{7}^0$ , $\lambda_{8}^0$  \\ \hline $A_{\tau n}(\Lambda)$ & $\lambda_4^R$, $\lambda_5^R$ & $\mathcal{R}_5^0$, $\lambda_{6}^0$, $\lambda_{7}^0$ , $\lambda_{8}^0$  \\ \hline
\end{tabular} 
\caption{Expansion parameters of the vertex functions with BRS insertion}   \label{tab2}
\end{table}

Recall that the couplings $\mathcal{R}_i^R$, $\mathcal{R}_i^0$ are associated with one extra vertex of each graph contributing to the vertex functions $A_{\beta n}(\Lambda)$ and $A_{\tau n}(\Lambda)$ because of the BRS operator insertion.

Proceeding as we have done in eq. ($\ref{Apert4}$) for the vertex functions without operator insertion would suggest to expand $A_{\beta n}(\Lambda)$ solely in the renormalizable couplings $\mathcal{R}_1^R$, $\mathcal{R}_2^R$, $\lambda_4^R$ and $\lambda_5^R$, and accordingly $A_{\tau n}(\Lambda)$ solely in the renormalizable couplings $\lambda_4^R$ and $\lambda_5^R$. This would again correspond to the $0$th order of perturbation theory in the nonrenormalizable couplings, or equivalently mean that we employ the special case of vanishing values for them:
\begin{eqnarray}
\mathcal{R}_i^0(\Lambda)&=&0, \ \ \ i=3...5   \label{zeroR1} \\
\lambda_{\tilde{a}}^0 (\Lambda)&=&0, \ \ \ \tilde{a}=6...8.  \label{zeroR2}
\end{eqnarray}
However, since each graph of the vertex functions $A_{\beta n}(\Lambda)$ and $A_{\tau n}(\Lambda)$ must have one extra vertex associated with one of the couplings $\mathcal{R}_i^R$ or $\mathcal{R}_i^0$, we are forced to conclude that for the choice ($\ref{zeroR1}$) we have  
\begin{eqnarray}
A_{\tau n}(\Lambda) = 0  \label{noat}
\end{eqnarray}
because all extra vertices of the graphs of $A_{\tau n}(\Lambda)$ are due to the nonrenormalizable bare coupling $\mathcal{R}_5^0$.

In order to avoid this, let us therefore establish the boundedness and convergence of the vertex functions $A_{\beta n}(\Lambda)$ and $A_{\tau n}(\Lambda)$ by expanding them in the renormalized renormalizable couplings $\mathcal{R}_1^R$, $\mathcal{R}_2^R$, $\lambda_4^R$ and $\lambda_5^R$ and also in the bare nonrenormalizable couplings $\mathcal{R}_i^0, \ i=3...5$. We obtain
\begin{eqnarray}
A_{\beta n} (q,k_1,...,k_{n}, \Lambda) &=& \sum_{r_1,r_{2}=0}^{\infty}  \ \sum_{i=1}^4 \mathcal{R}_i^{R/0} \ (\lambda_4^R)^{r_1} (\lambda_5^R)^{r_2}   A_{\beta n}^{(i, r_1,r_2)} (q,k_1,...,k_{n}, \Lambda) \nonumber \\  \label{GenPertIGNC}
\end{eqnarray}
where $\mathcal{R}_i^{R/0}$ is defined as in eq. ($\ref{R_R0}$). The expansion of $A_{\tau n}(\Lambda)$ is similar, with the exception that the extra vertex is always associated with the bare nonrenormalizable coupling $\mathcal{R}_5^0$.

If we want to take a no-cutoff limit $\Lambda_0 \rightarrow \infty$ while retaining nonvanishing couplings  $\mathcal{R}_i^0(\Lambda), \ i=3...5$, we are forced to give up the requirement ($\ref{smallOIGb}$) that these couplings remain small on scales $\Lambda \le \Lambda_0$. To see this, consider some scale $\Lambda_D \le \Lambda_0$. From the definition ($\ref{ROIG2}$) follows that for bare couplings
\begin{eqnarray}
R_i^0 \sim \Lambda_D^{-1}, \ \ \ i=3...5,   \label{divR0}
\end{eqnarray}
their dimensionless counterparts $\mathcal{R}_i^0(\Lambda)= \Lambda R_i^0$ will grow large\footnote{In particular, at the bare scale the $\mathcal{R}_i^0(\Lambda_0)$ will diverge as ${\Lambda_0}/{\Lambda_D}$ for  $\Lambda_0 \rightarrow \infty$.} as 
\begin{equation}
\mathcal{R}_i^0(\Lambda) \sim  {\Lambda}/{\Lambda_D}, \ \ \ i=3...5,  \label{divR}
\end{equation}
for $\Lambda \ge \Lambda_D$. We will repeatedly come back to the consequences of this behaviour.

Comparing ($\ref{GenPertIGNC}$) to the expansion ($\ref{GenPertIG}$) where also the bare nonrenormalizable couplings $\lambda_{\tilde{a}}^0, \  \tilde{a} =6...8$, are used as expansion parameters, we notice that we may again identify our ''new'' expansion ($\ref{GenPertIGNC}$) with the $0th$ order of perturbation theory in the bare nonrenormalizable couplings $\lambda_{\tilde{a}}^0, \  \tilde{a} =6...8$, of the former expansion ($\ref{GenPertIG}$). Thus, Theorem ($\ref{BoundThOIG}$) concerning the boundedness of gravity vertex functions with BRS insertion can be applied\footnote{It is understood that we substitute  $M_P \rightarrow \Lambda_0$ in Theorem ($\ref{BoundThOIG}$).} to the vertex functions $A_{\beta n}^{(i, r_1,r_2)} (q,k_1,...,k_{n}, \Lambda)$ of ($\ref{GenPertIGNC}$) The resulting  bounds are
\begin{eqnarray}
 && \hspace{-2em} ||\partial^p A_{\beta n}^{(i,r_1, r_2)}(q, p_1,...,p_{n}, \Lambda)|| \nonumber \\ &&  \qquad \qquad   \le \Lambda^{-p} \left( \frac{\Lambda}{\Lambda_R} \right)^{|r_1|} \left( \frac{\Lambda_0}{\Lambda} \right)^{ \Theta_i^0}   \left( \Theta_i^R \ Pln\left( \frac{\Lambda}{\Lambda_R} \right) +  \frac{\Lambda}{\Lambda_0}  Pln \left( \frac{\Lambda_0}{\Lambda_R} \right) \right) \nonumber \\     \label{BoundNRIGNC}
\end{eqnarray}
where the symbols $\Theta_i^R$ and  $\Theta_i^0$ have been defined in eqns. ($\ref{thet1}$) and ($\ref{thet2}$). Similar bounds for the expansion coefficients $A_{\tau n}^{(5, r_1,r_2)} (q,k_1,...,k_{n}, \Lambda)$ of the vertex functions $A_{\tau n}(\Lambda)$ can be established, where in this case we have $\Theta_i^R=0, \ \Theta_i^0=1$. 

Before we discuss the implications of the bounds ($\ref{BoundNRIGNC}$), let us move on and establish the convergence of the vertex functions $A_{\beta n}(\Lambda)$ and $A_{\tau n}(\Lambda)$ to a no-cutoff limit. To do so, some preliminary remarks are needed. In the last two sections, we have imposed improvement conditions for some of the nonrenormalizable couplings in an effective field theory context. However, as it has been discussed in section ($\ref{EffFlowOver}$) and again at the end of section ($\ref{Presec}$), it has been shown by C. Wieczerkowski \cite{Wiec} that improvement conditions may also be used to enhance the rate of convergence of the effective action to its no-cutoff limit. To do so, some of the nonrenormalizable couplings have to be fixed at the renormalization scale $\Lambda_R$  \textit{at their no-cutoff values}. The latter are determined by the renormalized renormalizable couplings and can in principle be calculated in perturbation theory in them.

For the gravity vertex functions without operator insertion ($\ref{Apert4}$), such a strategy would mean to impose improvement conditions
\begin{eqnarray}
\rho_{\tilde{a}}(\Lambda_R) &=& \rho^{cont}_{\tilde{a}}(\Lambda_R, \rho^R_{\tilde{b}}) , \ \  {\tilde{a}}=6,...,8, \ \  {\tilde{b}}=1,...,5,  \label{rcG3NC}  
\end{eqnarray}
instead of ($\ref{rcG3}$). By the same arguments that led to eq. ($\ref{conv2impro}$) we would then obtain an improved convergence as compared to Theorem ($\ref{ConvThG}$), eq. ($\ref{ConvEGG}$):
\begin{eqnarray}
||\Lambda_0 \frac{d}{d \Lambda_0} \partial^p A_{n}^{(r_1, r_2)}(p_1,...,p_{n}, \Lambda)|| \le \Lambda^{-p}  \left( \frac{\Lambda}{\Lambda_R} \right)^{|r_1|} \left( \frac{\Lambda}{\Lambda_0} \right)^2 Pln \left( \frac{\Lambda_0}{\Lambda_R} \right) . \label{ConvEGGNC}
\end{eqnarray}
In the following, we assume that  ($\ref{ConvEGGNC}$) has been established.

We now come back to our discussion concerning the convergence of the vertex functions $A_{\beta n}(\Lambda)$ and $A_{\tau n}(\Lambda)$ to no-cutoff limits. Corresponding to the expansion ($\ref{GenPertIGNC}$) where also the nonrenormalizable expansion parameters  $\mathcal{R}_i^0, \ i=3...5,$ have been employed, let us impose improvement conditions for the nonrenormalizable couplings ${R}_i(\Lambda_R), \ i=3...5$:
\begin{eqnarray} 
R_i^{NR} := R_i(\Lambda_R), \ \ i=3...5. \label{ImproOIGNC}
\end{eqnarray}
In order to avoid possible confusion, we would like to stress that the improvement conditions ($\ref{ImproOIGNC}$) are \textit{not} understood to be (necessarily) given in terms of renormalizable couplings in the no-cutoff limit (for small bare couplings ${R}_i^0$, this would in particular imply $R_5^{NR}=0$ by the reasoning leading to eq.  ($\ref{noat}$)), but are free input parameters. This is in contrast to ($\ref{rcG3NC}$). The conjecture (that we will not prove) is that they nevertheless can be met in the no-cutoff limit if we give up the requirement that the bare couplings ${R}_i^0$ be small, see eqns. ($\ref{divR0}$), ($\ref{divR}$).

The necessary ingredients for establishing the convergence of the vertex functions $A_{\beta n}(\Lambda)$ and $A_{\tau n}(\Lambda)$ to a no-cutoff limit have now been collected. The following Theorem can be deduced inductively by integrating the RG inequality ($\ref{RGI1_NRI}$) along the lines of eqns. ($\ref{RGI3}$) and ($\ref{RGI5}$) and applying the bounds of Theorems ($\ref{BoundThG}$), ($\ref{PreThG}$) and eq. ($\ref{ConvEGGNC}$).

\begin{satz}[Convergence of Vertex Functions with BRS Insertion] \label{ConvThIGNC} Let \\ there be renormalization conditions  ($\ref{rccOIG}$) and  improvement conditions ($\ref{ImproOIGNC}$). Assume that to order $r_1, r_2$ in perturbation theory in $\lambda_4^R$ and $\lambda_5^R$
\begin{eqnarray}
||\partial^p A_{\beta n}^{(i,r_1, r_2)}(q, p_1,...,p_{n}, \Lambda)|| \le \Lambda^{-p}  \left( \frac{\Lambda}{\Lambda_R} \right)^{|r_1|}  \left( \frac{\Lambda_0}{\Lambda} \right)^{\Theta_i^0} Pln \left( \frac{\Lambda_0}{\Lambda_R} \right), \label{BoundWPIGNC}
\end{eqnarray}
and that for $n+p \ge 4$
\begin{eqnarray}
||\Lambda_0 \frac{d}{d \Lambda_0} \partial^p A_{\beta n}^{(i,r_1,r_2)}(q, p_1,...,p_{n}, \Lambda_0)|| \le \Lambda_0^{-p}  \left( \frac{\Lambda_0}{\Lambda_R} \right)^{|r_1|} Pln \left( \frac{\Lambda_0}{\Lambda_R} \right) . \label{ini2PIGNC}
\end{eqnarray}
Given Theorems ($\ref{BoundThG}$), ($\ref{PreThG}$) and eq. ($\ref{ConvEGGNC}$) we then have for $\Lambda_R \le \Lambda \le \Lambda_0$
\begin{eqnarray}
|| \Lambda_0 \frac{d}{d \Lambda_0} \partial^p A_{\beta n}^{(i, r_1, r_2)}(q, p_1,...,p_{n}, \Lambda)|| \le \Lambda^{-p}  \left( \frac{\Lambda}{\Lambda_R} \right)^{|r_1|}  \left( \frac{\Lambda_0}{\Lambda} \right)^{\Theta_i^0} \left(\frac{\Lambda}{\Lambda_0}\right)^{2} Pln \left( \frac{\Lambda_0}{\Lambda_R} \right) \nonumber \\ \label{PreIGNC}
\end{eqnarray}
where the index $i$ refers to an extra vertex associated with a renormalized coupling ${R}_i^R, \ i=1,2$ having canonical dimension $D_{R_i} = 0$ ($\Theta_i^0=0$) or a bare one ${R}_i^0, \ i=3,4$ having canonical dimension $D_{R_i} = -1$ ($\Theta_i^0=1$), and  $r_{NR} = |r_3| + |r_4| + |r_5|$.
\end{satz}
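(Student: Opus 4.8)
The plan is to follow the by-now familiar induction scheme, mimicking exactly the proofs of Theorems (\ref{ConvTh}), (\ref{PreThG}) and (\ref{ConvThIGNC})'s non-BRS ancestor (\ref{ConvThG}). The statement to be proven is a convergence bound of Cauchy type on $\Lambda_0 \frac{d}{d\Lambda_0} \partial^p A_{\beta n}^{(i,r_1,r_2)}$, and the natural inputs are: (i) the boundedness bounds (\ref{BoundNRIGNC}) for $A_{\beta n}^{(i,r_1,r_2)}$ themselves, which are exactly the weaker hypothesis (\ref{BoundWPIGNC}); (ii) the analogous boundedness and convergence bounds for the vertex functions \emph{without} BRS insertion, namely Theorem (\ref{BoundThG}), Theorem (\ref{PreThG}) and the improved-convergence estimate (\ref{ConvEGGNC}); (iii) the initial-condition hypothesis (\ref{ini2PIGNC}) at $\Lambda=\Lambda_0$. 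The induction proceeds as always in the overall order $r=|r_1|+|r_2|$ and in the number of external legs $n$, the ordering being dictated by the fact that on the RHS of the relevant RG inequality the term on the left has either strictly higher order $r$ or the same $r$ and strictly more legs. The induction start is supplied by the graphs that vanish identically (the BRS insertion forces $n \le 2r+3$ or so, and $0$th order vanishes).

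First I would write down the RG inequality for the insertion vertex functions — this is the ``analogon to the RGI (\ref{RGI1_NRI})'' referenced just before the statement — differentiate its integrated form (the up-integration from $\Lambda_R$ and the down-integration from $\Lambda_0$, in the style of (\ref{RGI2})--(\ref{RGI5})) with respect to $\Lambda_0$, exactly as in the passage producing (\ref{RGI3}) and (\ref{RGI5}). The crucial structural point is that each graph of $A_{\beta n}$ carries \emph{one} extra vertex tied to a coupling $\mathcal{R}_i^{R/0}$; this produces the single power $(\Lambda_0/\Lambda)^{\Theta_i^0}$ that rides along untouched through the whole induction, so the argument splits into the case $\Theta_i^0=0$ (insertion vertex renormalizable, $i=1,2$) and $\Theta_i^0=1$ ($i=3,4$; and $i=5$ for $A_{\tau n}$), with the latter simply carrying an extra overall factor $\Lambda_0/\Lambda$. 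Then I would treat, as in Theorem (\ref{BoundTh}), the two regimes $p+n \ge 4$ and $p+n \le 3$ separately: for $p+n\ge 4$ plug hypothesis (\ref{ini2PIGNC}) in as the initial condition at $\Lambda_0$ and the induction hypothesis (\ref{PreIGNC}) into the down-integrated inequality, using Lemma (\ref{l}) to do the $s$-integral and picking up the extra $(\Lambda/\Lambda_0)^2$; for $p+n\le 3$ integrate up from $\Lambda_R$, where the renormalization conditions (\ref{rccOIG}) and the improvement conditions (\ref{ImproOIGNC}) mean the couplings are held \emph{fixed}, so their $\Lambda_0$-derivatives vanish and there is no boundary term — the bound then follows purely from the integral, completed by Taylor expansion (\ref{TE0}), (\ref{TE1}) of $A_{\beta n}(q,k_1,\dots,k_n,\Lambda)$ around vanishing momenta to reconstruct the full vertex function from its relevant parts.

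The bilinear ``source'' term on the RHS of the insertion RGE — the convolution of $\partial^{p_2} A_{\beta l}$ with $\partial^{p_3} A_{n+2-l}$ (a plain vertex function without insertion), plus the tadpole-type term with $A_{\beta,n+2}$ — is where the previously-established results must be fed in: the $A$-factors without insertion are controlled by Theorem (\ref{BoundThG}) (and, for the $\Lambda_0$-differentiated pieces, by (\ref{ConvEGGNC})), while the $A_{\beta}$-factors with insertion are controlled by the induction hypothesis; the bookkeeping of the $(\Lambda/\Lambda_R)^{|r_1|}$ powers, the $Pln$ factors and the single $(\Lambda_0/\Lambda)^{\Theta_i^0}$ is routine once the structure is laid out. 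For $A_{\tau n}$ one repeats everything verbatim with $\Theta_i^0=1$ throughout and no renormalization conditions (\ref{rccOIG}) — only improvement conditions — which is strictly easier.

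The main obstacle I anticipate is not the induction mechanics, which are genuinely copies of earlier proofs, but rather keeping the power-counting honest in the presence of the \emph{nonrenormalizable} insertion couplings $\mathcal{R}_i^0$, $i=3,4,5$: here one has \emph{abandoned} the smallness requirement (\ref{smallOIGb}) (cf.\ (\ref{divR0}), (\ref{divR})), so the dimensionless $\mathcal{R}_i^0(\Lambda)$ grow like $\Lambda/\Lambda_D$ and at the bare scale like $\Lambda_0/\Lambda_D$. One must verify that this growth is already fully accounted for by the factor $(\Lambda_0/\Lambda)^{\Theta_i^0}$ appearing in the bounds — i.e.\ that expanding in $\mathcal{R}_i^0$ rather than treating it as small does not spoil convergence, exactly the phenomenon that Wieczerkowski's improved-convergence argument (leading to (\ref{conv2impro}), (\ref{ConvEGGNC})) is built to exploit. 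The delicate check is that the extra vertex, being associated with an operator of canonical dimension $5$ (a BRS field of dimension $2$ inserted into a potential), contributes the correct single negative power and no more, so that the overall $(\Lambda/\Lambda_0)^2$ improvement survives; this is the point where I would be most careful to confirm that the ``$+\,\Theta_i^0$'' in the exponent of $(\Lambda_0/\Lambda)$ in (\ref{PreIGNC}) is exactly right and not off by one, and where a reader should check the claimed reduction to the $0$th order in the couplings $\lambda_{\tilde a}^0$ of Theorem (\ref{ConvThIGNC})'s predecessors is legitimate.
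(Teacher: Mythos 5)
Your proposal is correct and follows essentially the same route as the paper, which itself only states that the theorem ``can be deduced inductively by integrating the RG inequality (\ref{RGI1_NRI}) along the lines of eqns. (\ref{RGI3}) and (\ref{RGI5}) and applying the bounds of Theorems (\ref{BoundThG}), (\ref{PreThG}) and eq. (\ref{ConvEGGNC})'' --- exactly the induction in $(r,n)$, the $\Lambda_0$-differentiated up/down integrations, the split at $n+p=4$ with vanishing boundary terms at $\Lambda_R$, and the feeding-in of the non-inserted bounds that you describe. The obstacle you flag at the end (the large bare couplings $\mathcal{R}_i^0$ and the single factor $(\Lambda_0/\Lambda)^{\Theta_i^0}$) is precisely the point the paper addresses in the discussion following the theorem, resolved exactly as you suggest: there is at most one such vertex per graph, so the $(\Lambda/\Lambda_0)^2$ improvement absorbs it.
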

For the vertex functions $A_{\tau n}(\Lambda)$ we may proceed similarly, with the exception that the extra vertex is always associated with the bare nonrenormalizable coupling ${R}_5^0$. Thus, we do not need the renormalization conditions ($\ref{rccOIG}$) and we only have the case $\Theta_i^0=1$.

The inequalities ($\ref{BoundNRIGNC}$) and ($\ref{PreIGNC}$) can be converted into the following upper bounds which are valid for $i=1...4$:
\begin{eqnarray}
||\partial^p A_{\beta n}^{(i,r_1, r_2)}(q, p_1,...,p_{n}, \Lambda)|| & \le & \Lambda^{-p} \left( \frac{\Lambda}{\Lambda_R} \right)^{|r_1|} Pln \left( \frac{\Lambda_0}{\Lambda_R} \right)  \\
||\Lambda_0 \frac{d}{d \Lambda_0} \partial^p A_{\beta n}^{(i, r_1, r_2)}(q, p_1,...,p_{n}, \Lambda)|| &\le& \Lambda^{-p}   \left( \frac{\Lambda}{\Lambda_R} \right)^{|r_1|}  \frac{\Lambda}{\Lambda_0} Pln \left( \frac{\Lambda_0}{\Lambda_R} \right) . \quad \quad \  \label{convNCID}
\end{eqnarray}
These also apply to the vertex functions  $A_{\tau n}^{(5, r_1,r_2)} (\Lambda)$.

The point is that eq. ($\ref{convNCID}$) is already sufficient for proving the convergence of the vertex functions $A_{\beta n}^{(i, r_1,r_2)} (\Lambda)$ and $A_{\tau n}^{(5, r_1,r_2)} (\Lambda)$ to no-cutoff limits 
\begin{eqnarray}
A_{\beta n}^{cont \ (i, r_1,r_2)} (\Lambda) := \lim_{\Lambda_0 \rightarrow \infty} A_{\beta n}^{(i, r_1,r_2)} (\Lambda, \Lambda_0) \label{NCAb} \\ A_{\tau n}^{cont \ (i, r_1,r_2)} (\Lambda) := \lim_{\Lambda_0 \rightarrow \infty} A_{\tau n}^{(i, r_1,r_2)} (\Lambda, \Lambda_0)  \label{NCAt}
\end{eqnarray}
where the arguments are the same as those following Theorem ($\ref{ConvThG}$). We therefore do not need the requirement of small couplings $\mathcal{R}_i^0(\Lambda) \sim \Lambda/\Lambda_0$, see eqns. ($\ref{smallOIGb}$), ($\ref{smallOIGExp}$), in order to cancel out the factor $\left( {\Lambda_0}/{\Lambda} \right)^{\Theta_i^0}$ appearing in eqns. ($\ref{BoundNRIGNC}$) and ($\ref{PreIGNC}$). This is because there is at most \textit{one} vertex associated with a nonrenormalizable coupling $\mathcal{R}_i^0, \ i=3...5$, in each graph contributing to the vertex functions $A_{\beta n}(\Lambda)$  and $A_{\tau n}(\Lambda)$. 

One should, however, keep in mind that couplings $\mathcal{R}_i^0(\Lambda) \sim  {\Lambda}/{\Lambda_D}, \ i=3...5$, $\grave{a}$ la  eqns. ($\ref{divR0}$), ($\ref{divR}$) will grow large at scales $\Lambda > \Lambda_D$.

By virtue of Theorems ($\ref{ConvThG}$) and  ($\ref{ConvThIGNC}$) we have now established that the dimensionless vertex functions of the functionals $L(\Lambda)$, $L_\beta^{\mu \nu}(\Lambda)$ and $L_\tau^\mu(\Lambda)$ converge to no-cutoff limits as $\Lambda_0 \rightarrow \infty$. Hence, the LHS of the violated Slavnov-Taylor identities ($\ref{vSTIL}$) of quantum gravity is under control and we may proceed with the restoration of the STI  in the no-cutoff limit.

\medskip
To do so, we will again have to consider the functional $L_{(1)} (\Lambda)$ appearing on the RHS of the vSTI ($\ref{vSTIL}$). It has been defined in eq. ($\ref{abbr4}$) as the generating functional of vertex functions carrying the BRS variation of the bare total gravity action ($\ref{L1}$) as a (space-time integrated) operator insertion. Recall that $L_{(1)} (\Lambda)$  has ghost number $1$ and canonical dimension $5$ in the sense of space-time integrated operator insertions discussed at the end of Appendix ($\ref{OI}$). Furthermore, momentum-space vertex functions $L_{(1)n}(k_1, ..., k_n, \Lambda)$ of $L_{(1)} (\Lambda)$ have been introduced in eq. ($\ref{LexpGIG}$) of the last section, and it has been argued that these may be used to define running coupling constants $F_i(\Lambda)$ as it has been done it in eqns. ($\ref{rcc0OI}$)- ($\ref{rcc3OI}$) of Appendix ($\ref{OI}$).

If we attempt to control the dimensionless counterparts  $A_{(1)n}(k_1, ..., k_n, \Lambda)$ of the vertex functions ($\ref{LexpGIG}$)  in the no-cutoff limit, we will have to impose the renormalization conditions ($\ref{RenL1}$), that is
\begin{eqnarray}
F_i^R &:=& F_i(\Lambda_R), \ \ \ D_{F_i} \ge 0 .  \nonumber
\end{eqnarray}
Establishing analogons to Theorems ($\ref{BoundThOI}$) and ($\ref{ConvThI}$) concerning the boundedness and convergence of vertex functions with operator insertion for the $A_{(1)n}(k_1, ..., k_n, \Lambda)$ requires smallness of the bare initial conditions in the sense of eq. ($\ref{iniNRI}$):
\begin{eqnarray}
||\partial^p A_{(1)n}(p_1,...,p_{n}, \Lambda_0)|| &\le& \Lambda_0^{-p} \ Pln \left( \frac{\Lambda_0}{\Lambda_R} \right), \ \ \ n+p \ge 6.   \label{iniNRINC}
\end{eqnarray} 
However, it seems probable that the condition ($\ref{iniNRINC}$) will be spoiled when we drop the requirement that the dimensionless nonrenormalizable BRS couplings $\mathcal{R}^0_i(\Lambda)$, $i=3...5$, be small on scales $\Lambda \le \Lambda_0$. This follows from the fact that ($\ref{iniNRINC}$) is an inequality for vertex functions of the \textit{bare} insertion $L_{(1)} (\Lambda_0)$. Since the latter has been defined in eq. ($\ref{L1}$) as the (regularized) BRS variation of the bare total gravity action, it will contain the couplings $\mathcal{R}^0_i(\Lambda_0), \ i=3...5$. If these are chosen as in eqns. ($\ref{divR0}$), ($\ref{divR}$), they will diverge as $\Lambda_0/\Lambda_D$ for $\Lambda_0\rightarrow \infty$. Thus, it is suggested that we are forced to employ small values
\begin{equation}
\mathcal{R}^0_i(\Lambda) \sim \Lambda/\Lambda_0, \ \ \ i=3...5,  \label{smallRNR}
\end{equation} 
in order to satisfy ($\ref{iniNRINC}$). Note that in the no-cutoff limit $\Lambda_0 \rightarrow \infty$ this amounts to vanishing couplings  $\mathcal{R}^0_i(\Lambda)=0, \  i=3...5$, with the unpleasant consequences that have been discussed following eq. ($\ref{zeroR1}$). For the time being, let us nevertheless ensure that the condition ($\ref{iniNRINC}$) is satisfied by setting the nonrenormalizable BRS couplings to zero, eq. ($\ref{zeroR1}$) .

As it has been explained in the last section in eqns. ($\ref{L1_2}$) and ($\ref{FB1}$), the renormalization conditions ($\ref{RenL1}$) cannot be chosen freely. Since the bare insertion $L_{(1)} (\Lambda_0)$ has been defined in eq. ($\ref{L1}$) as the (regularized) BRS variation of the bare total gravity action, the bare renormalizable couplings $F_i^0=F_i(\Lambda_0), \ D_{F_i} \ge 0$, will be given in terms of the bare renormalizable coupling constants $\rho^0_{\tilde{a}}, \  \tilde{a}=1...5$, and the bare nonzero BRS couplings $R^0_i, \ i=1,2$, defined in  Table ($\ref{tab}$) and  eqns. ($\ref{rcR}$):
\begin{eqnarray}
 F_i^0 = F_i^0(\rho^0_{\tilde{a}}, R^0_j), \ \ \ \tilde{a}=1...5, \ j=1,2, \ D_{F^0_i} \ge 0.   \label{FB1NC}
\end{eqnarray}
This follows from the dimension $5$ of the  $L_{(1)}$ insertion and the canonical dimensions of the couplings $\rho^0_{\tilde{a}}, \  \tilde{a}=1...5$, and $R^0_i, \ i=1,2$. See Table ($\ref{tab}$) and eqns. ($\ref{rcR}$)-($\ref{DR2}$) for definitions of the operators, couplings and their respective canonical dimensions.  

By the reasoning of the last section, we therefore conclude that the renormalization conditions ($\ref{RenL1}$) must be functions $G_i$ of the renormalization conditions ($\ref{rcG1}$), ($\ref{rcG2}$) and ($\ref{rccOIG}$):
\begin{eqnarray}
F_i^R &=& G_i(\rho^R_{\tilde{a}}, R^R_j) \ + \ \mathcal{O}  \big( \Lambda_R/\Lambda_0 \big)  \label{FtRNC} 
\end{eqnarray}
where the indetermination that is left in eq. ($\ref{FtR1NC}$) stems from the ignorance about the initial conditions
\begin{eqnarray}
\partial^p A_{(1)n}(p_1,...,p_{n}, \Lambda_0), \ \ \ n+p \ge 6    \label{inivSTINC}
\end{eqnarray}
associated with couplings $ F_i^0, \ \ D_{F^0_i} \le -1$. Since we have assured that these are small $\grave{a}$ la eq. ($\ref{iniNRINC}$) by employing vanishing nonrenormalizable BRS couplings ($\ref{zeroR1}$), there should be no further problems with establishing the convergence of the vertex functions $A_{(1)n}(\Lambda, \Lambda_0)$ to no-cutoff limits along the lines of Theorems ($\ref{BoundThOI}$) and ($\ref{ConvThI}$) while employing the renormalization conditions ($\ref{FtRNC}$): 
\begin{eqnarray}
A_{(1)n}^{cont}(\Lambda) := \lim_{\Lambda_0 \rightarrow \infty} A_{(1)n}(\Lambda, \Lambda_0). \label{NCL1} 
\end{eqnarray}

Let us now return to the case of \textit{nonvanishing} nonrenormalizable BRS couplings $\mathcal{R}^0_i(\Lambda) \ne 0, \ i=3...5$. It may actually be possible to prove the boundedness and convergence of the vertex functions $A_{(1)n}(p_1,...,p_{n}, \Lambda)$ of the functional  $L_{(1)} (\Lambda)$ for this choice. The idea is to employ a strategy similar to the one leading to  eq. ($\ref{ImproOIGNC}$) and Theorem ($\ref{ConvThIGNC}$) for the vertex functions carrying the gravity BRS fields. This would mean to impose the improvement conditions ($\ref{ImproL1}$), that is 
\begin{eqnarray}
F_i^{NR} &:=& F_i(\Lambda_R), \ \ \ D_{F_i} = -1, \nonumber
\end{eqnarray}
in addition to the renormalization conditions ($\ref{RenL1}$). Moreover, we would expand the graphs contributing to the vertex functions $A_{(1)n}(p_1,...,p_{n}, \Lambda)$ in perturbation theory in the dimensionless renormalizable couplings $\lambda_4^R$, $\lambda_5^R$ and the dimensionless counterparts of the couplings  $F_i^{R}, \ D_{F_i} \ge 0$, but also in the dimensionless versions of the bare nonrenormalizable $F_i^0, \ D_{F_i} = -1$. See the analogous expansion ($\ref{GenPertIGNC}$). The conjecture is that the effect of the diverging bare BRS couplings $\mathcal{R}^0_i(\Lambda_0), \ i=3...5,$ appearing in the initial conditions  ($\ref{inivSTINC}$) can be expressed in terms of the dimensionless counterparts of the $F_i^0, \ D_{F_i} = -1$. By the arguments following eqns. ($\ref{NCAb}$), ($\ref{NCAt}$) the improved rate of convergence will then be sufficient to come up for factors $\Lambda_0/\Lambda$ that arise due to the couplings $F_i^0, \ D_{F_i} = -1$. Note that the described procedure is probably equivalent\footnote{We could multiply the BRS variations ($\ref{BRST1}$)-($\ref{BRST3}$) with one inverse gravitational constant $\lambda^{-1}$ and treat them as dimension $3$ operator insertions having no  nonrenormalizable parts. However, ($\ref{BRST1}$) will then contain a part associated with a relevant coupling $\lambda^{-1} \sim M_P$ which grows large at scales $\Lambda < M_P$. In order to avoid this, we have treated the BRS fields as dimension $2$ operator insertions in the last section. In the present context, the $\lambda^{-1}$ coupling would be problematic because it deters us from considering the limit $\lambda \rightarrow 0$. However, for dimension $3$ of the BRS fields the functional $L_{(1)}$ would have dimension $6$, and we would need more renormalization conditions. This might be equivalent to intruducing renormalization \textit{and} improvement conditions for a dimension $5$ functional $L_{(1)}$ as it is suggested here.} to treating the BRS fields  ($\ref{BRST1_0}$) and  ($\ref{BRST2_0}$) as operator insertions with canonical dimension $3$. 

In analogy to the arguments leading to eqns. ($\ref{FtR1}$), ($\ref{FtR2}$) and ($\ref{FtR1NC}$) we furthermore conjecture that in the above scenario, the renormalization and improvement conditions ($\ref{RenL1}$) and ($\ref{ImproL1}$) must be functions $G_i$ of the renormalization conditions ($\ref{rcG1}$), ($\ref{rcG2}$) and ($\ref{rccOIG}$) and the improvement conditions ($\ref{ImproOIGNC}$): 
\begin{eqnarray}
F_i^R &=& G_i(\rho^R_{\tilde{b}}, \rho^{cont}_{\tilde{a}}( \rho^R_{\tilde{b}}, \Lambda_R, \Lambda_0), R^R_j, R^{NR}_k) \ + \ \mathcal{O}\big( \Lambda_R/{\Lambda_0} \big)  \label{FtR1NC}    \\
F_i^{NR} &=& G_i(\rho^R_{\tilde{b}}, \rho^{cont}_{\tilde{a}}( \rho^R_{\tilde{b}}, \Lambda_R, \Lambda_0), R^R_j, R^{NR}_k) \ + \ \mathcal{O}\big( \Lambda_R/{\Lambda_0} \big)  \label{FtR2NC} 
\end{eqnarray}
with $\tilde{b}=1...5$, $\tilde{a}=6...8$, $j=1,2$, $k=3...5$. Here it has been assumed that the bare nonrenormalizable couplings $\rho^{0}_{\tilde{a}}, \ \tilde{a}=6...8,$ are \textit{implicitly defined} by their renormalized no-cutoff values ($\ref{rcG3NC}$), meaning that they are {determined by the renormalized renormalizable couplings} $ \rho^{R}_{\tilde{a}}, \ \tilde{a}=1...5$. Compare this to the analogous eqns. ($\ref{FtR1}$) and ($\ref{FtR2}$) in the effective field theory context\footnote{Note also that in eqns. ($\ref{FtR1NC}$) and ($\ref{FtR2NC}$) the remaining indetermination is larger as compared to ($\ref{FtR1}$) and ($\ref{FtR2}$) because the couplings $F_i^0, \ D_{F_i} = -1$, will grow large.} where the improvement conditions for the couplings $\rho_{\tilde{a}}, \ \tilde{a}=6...8,$ have been specified \textit{freely}. On the other hand, we would like to stress that the improvement conditions $R^{NR}_k, \ k=3...5,$ for the nonrenormalizable BRS couplings appearing in ($\ref{FtR1NC}$) and ($\ref{FtR1NC}$) still have to be understood as free input parameters. They can be met because we have allowed for nonvanishing $\mathcal{R}^0_i(\Lambda) \ne 0, \ i=3...5$.

It is only fair to admit that this second scenario is more speculative than the first one leading to eq. ($\ref{NCL1}$). It is at this point not entirely clear\footnote{We are, however, pretty confident because of the conjectured analogy to an analysis employing dimension $3$ BRS fields,  leading to a dimension $6$ functional $L_{(1)} (\Lambda)$. Then only renormalization conditions have to be imposed for the functionals with operator insertion.} whether boundedness and convergence of the vertex functions $A_{(1)n}(p_1,...,p_{n}, \Lambda)$ of the functional  $L_{(1)} (\Lambda)$ can be established along the lines of eqns. ($\ref{BoundNRIGNC}$) and  ($\ref{PreIGNC}$) employing the renormalization and improvement conditions ($\ref{FtR1NC}$) and ($\ref{FtR2NC}$). Let us all the same assume that it can be done and therefore nonvanishing BRS couplings $\mathcal{R}^0_i(\Lambda), \ i=3...5$, may be kept in the no-cutoff limit with consequences that we will evaluate shortly.

\medskip
The restoration of the violated Slavnov-Taylor identities ($\ref{vSTIL}$) in the no-cutoff limit can now be disussed. Since the violation is expressed in terms of the functional $L_{(1)} (\Lambda)$ appearing on the RHS of ($\ref{vSTIL}$), the STI are restored if we can make $L_{(1)} (\Lambda)$ vanish\footnote{Remember the discussion following Theorem ($\ref{vSTITh}$) concerning the value $\Lambda=0$ of the floating cutoff.} in the limit $\Lambda_0 \rightarrow \infty$:
\begin{eqnarray}
|| L_{(1)} (\Lambda)  || &\le& \frac{\Lambda}{\Lambda_0} \ .  \label{L1ReNC}
\end{eqnarray}
We now follow the proceedings that have been proposed at the end of the last section while adapting them to the present case. We begin with the ''conservative'' assumption of  vanishing nonrenormalizable BRS couplings $\mathcal{R}^0_i(\Lambda)=0, \ i=3...5$:
\begin{itemize}
\item Renormalization conditions for the physical \textit{renormalizable} coupling $\Lambda_K$ and the gravitational field $h_{\mu \nu}$ are specified at some renormalization scale $\Lambda_R$:
\begin{eqnarray}
\frac{\delta}{\delta h(k_1)} \frac{\delta}{\delta h(k_2)} L(\Phi, \Lambda_R)\big|_{h=k_i=0} &\stackrel{!}{=}&0, \ \ \ \  B_1 \stackrel{!}{=} \Lambda_K \ \nonumber \\
\partial_{i}^2 \left(\frac{\delta}{\delta h(k_1)} \frac{\delta}{\delta h(k_2)} L(\Phi, \Lambda_R)\right) \Big|_{h=k_i=0} &\stackrel{!}{=}& 0  .  \label{PRCNC}
\end{eqnarray}

\item \textit{One particular set} of ''arbitrary'' renormalization conditions ($\ref{rcG1}$), ($\ref{rcG2}$) and ($\ref{rccOIG}$) for the remaining couplings $\rho_{\tilde{a}}(\Lambda_R), \  \tilde{a}=1...5$, and $R_i(\Lambda_R), \ i=1,2$, has to be determined such that
\begin{eqnarray}
|| G_i(\rho^R_{\tilde{a}}, R^R_j) || &\le& \frac{\Lambda_R}{\Lambda_0}   \ \ \forall \ i   \label{Gismall1NC}
\end{eqnarray}
where the functions $G_i$ have been defined in eqns. ($\ref{FtRNC}$).
\end{itemize}
Once the bound ($\ref{Gismall1NC}$)  has been achieved, also the renormalization conditions $F_i^R$ of the functional $L_{(1)} (\Lambda)$ will obey ($\ref{Gismall1NC}$)  because of eq. ($\ref{FtRNC}$). Since the initial conditions ($\ref{iniNRINC}$) are also small, we then may conclude that ($\ref{L1ReNC}$) will be satisfied and the vSTI ($\ref{vSTIL}$) are restored in the no-cutoff limit.

If we  employ the more speculative scenario that allows for nonvanishing BRS couplings $\mathcal{R}^0_i(\Lambda) \ne 0, \ i=3...5$, the second point has to be modified as follows:
\begin{itemize}
\item \textit{One particular set} of ''arbitrary'' renormalization conditions ($\ref{rcG1}$), ($\ref{rcG2}$), ($\ref{rccOIG}$) \textit{and} improvement conditions ($\ref{ImproOIGNC}$) for the remaining couplings $\rho_{\tilde{b}}(\Lambda_R)$, $\tilde{b}=1...5$, and $R_j(\Lambda_R), \ j=1...5$, has to be determined such that
\begin{eqnarray}
|| G_i(\rho^R_{\tilde{b}}, \rho^{cont}_{\tilde{a}}( \rho^R_{\tilde{b}}, \Lambda_R, \Lambda_0), R^R_j, R^{NR}_k) || &\le&   \frac{\Lambda_R}{\Lambda_0} \ \ \forall \ i   \label{Gismall2NC}
\end{eqnarray}
where $\tilde{a}=6...8$, and the functions $G_i$ have been defined in eqns. ($\ref{FtR1NC}$) and ($\ref{FtR2NC}$).
\end{itemize}
If we manage to establish the bounds ($\ref{Gismall2NC}$), also the renormalization and improvement conditions $F_i^R$ and $F_i^{NR}$ of the functional $L_{(1)} (\Lambda)$ will obey ($\ref{Gismall2NC}$)  because of eqns. ($\ref{FtR1NC}$) and ($\ref{FtR2NC}$). If the remaining initial conditions (corresponding to the bare couplings $F_i^0, \ \ D_{F^0_i} \le -2$) are small, one may again conclude that ($\ref{L1ReNC}$) will be satisfied and the vSTI ($\ref{vSTIL}$) are restored in the no-cutoff limit. Note that in the present case it will \textit{not} suffice to prove that the renormalization conditions $F_i^R$ obey ($\ref{Gismall2NC}$) because the dimensionless counterparts of the bare couplings $F_i^0, \ D_{F_i} = -1$, are supposed to diverge as $\Lambda_0/\Lambda_D$ for $\Lambda_0 \rightarrow \infty$. Hence, it \textit{does} have to be established that also the improvement conditions $F_i^{NR}$ satisfy ($\ref{Gismall2NC}$).

\medskip
Let us finally discuss the physical implications of what we have done so far. Therefore, it will turn out helpful to explicitly give the renormalized versions of the bare BRS fields ($\ref{BRST1_0}$) and ($\ref{BRST2_0}$) \textit{after} the restoration of the STI. These should just be the standard gravity BRS variations ($\ref{BRST1}$) and ($\ref{BRST2}$):
\begin{eqnarray}
\Psi^{\mu \nu}(x, \Lambda_R) &=& \delta^{\mu \nu} \partial_\rho C^\rho -  \big( \delta^{\rho \nu} \partial_\rho  C^\mu + \delta^{\mu \rho} \partial_\rho C^\nu \big) \nonumber \\ && \ \ + \ \lambda  \partial_\rho \big(C^\rho  h^{\mu \nu} \big)   - \lambda  \big( h^{\rho \nu} \partial_\rho  C^\mu + h^{\mu \rho} \partial_\rho C^\nu \big)  \label{BRST1_Rb} \\
\Omega^\mu(x, \Lambda_R)   &=& \lambda  C^\nu \partial_\nu C^\mu  \label{BRST2_Rb}
\end{eqnarray}
where $\lambda$ is the gravitational constant. Comparing to ($\ref{BRST1_0}$) and ($\ref{BRST2_0}$), we see that for the restoration of the STI the ''arbitrary'' renormalization and improvement conditions ($\ref{rccOIG}$) and ($\ref{ImproOIGNC}$) will have to be chosen such that
\begin{eqnarray}
R_i^R &=& 1, \ \ \ i=1,2 \label{}    \\
R_i^{NR} &=& \lambda, \ \ \ i=3...5.   \label{Rphys}
\end{eqnarray}
After this preliminary remark, let us begin by considering the most unambigious case. It consists of choosing vanishing renormalization conditions for the one and only physical renormalizable coupling:
\begin{eqnarray}
\Lambda_K &=& 0,   \label{zeroKK}
\end{eqnarray}
meaning that we consider the case of zero renormalized cosmological constant. We will now try to figure out how our quantum theory of gravity will look like for the choice  ($\ref{zeroKK}$) in the no-cutoff limit \textit{after} the restoration of the STI. 

Once the latter have been accomplished, the renormalized renormalizable couplings  $\rho^R_4$ and $\rho^R_5$ will be associated with the cosmological constant $\Lambda_K$ and the gravitational constant $\lambda$:
\begin{eqnarray}
\rho^R_4 &\sim& \lambda \Lambda_K \label{r4NC}  \\
\rho^R_5 &\sim& \lambda^2 \Lambda_K  \label{r5NC}
\end{eqnarray}
as follows from Table ($\ref{tab}$). However, the dimensionless counterparts  $\lambda_4^R$ and $\lambda_5^R$ of the couplings ($\ref{r4NC}$) and ($\ref{r5NC}$) are our expansion parameters of the vertex functions $A_{n} (k_1,...,k_{n}, \Lambda)$  of the gravity potential  $L(h, C, \overline{C}, \Lambda)$ in eq. ($\ref{Apert4}$). Since eqns. ($\ref{r4NC}$) and ($\ref{r5NC}$) imply that $\lambda_4^R=\lambda_5^R=0$ for $\Lambda_K=0$, we conclude by the reasoning leading to eq. ($\ref{LzNR}$) that the no-cutoff limit ($\ref{contlimAG}$) of the gravity vertex functions must vanish\footnote{Note that this is based on the assumption of \textit{small} bare nonrenormalizable couplings as they are implied by eq. ($\ref{iniNRG}$).}:
\begin{equation}
A_{n}^{cont \ (r_1, r_2)}(\Lambda)  \ = \ 0    \label{ngvf}
\end{equation}
for $\Lambda_K=0$. As we have explained in eq. ($\ref{nnc}$), this amounts to the statement that all nonrenormalizable running couplings $\rho_n(\Lambda, \Lambda_0), \ n \ge 6$, will die out as $\Lambda_0 \rightarrow \infty$ because there are no renormalizable interactions associated with ($\ref{r4NC}$) and ($\ref{r5NC}$) generating new contributions for them. In particular, the renormalized gravitational constant will vanish in the no-cutoff limit:
\begin{equation}
\lambda(\Lambda, \Lambda_0) \rightarrow 0 \ \ \ \text{for} \ \ \ \Lambda_0 \rightarrow \infty.
\end{equation}
Thus, our theory will become free as $\Lambda_0 \rightarrow \infty$. Let us check how this complies with gauge invariance, which on the quantum level means that the vertex functions must satisfy the Slavnov-Taylor identities. 

As follows from eq. ($\ref{Rphys}$), a vanishing gravitational constant $\lambda=0$ implies that after the restoration of the STI, the renormalized nonrenormalizable BRS couplings must also vanish:
\begin{eqnarray}
\lim_{\Lambda_0 \rightarrow \infty} R_i(\Lambda_R, \Lambda_0) &=& 0, \ \ \ i=3...5.   \label{RNCzero}
\end{eqnarray}
This will be automatically the case if we employ \textit{small} bare nonrenormalizable BRS couplings in the sense of eq. ($\ref{smallRNR}$), because they will then vanish in the limit $\Lambda_0 \rightarrow \infty$. Furthermore, new contributions to the $R_i(\Lambda), \ i=3,4$, will not be generated since  $\lambda_4^R=\lambda_5^R=0$. Note that small BRS couplings correspond to the more conservative scenario leading to eq. ($\ref{noat}$) and a restoration of the STI $\grave{a}$ la eq. ($\ref{Gismall1NC}$). We therefore conclude that
\begin{eqnarray}
L^{cont}(\Lambda) \ = \ L_\tau^{\mu cont}(\Lambda) \ = \ L_{(1)}^{cont} (\Lambda) \ = \ 0  \label{allzero}
\end{eqnarray}
for $\Lambda_K=0$, where we have employed
\begin{eqnarray}
L^{cont}(\Lambda):= \lim_{\Lambda_0 \rightarrow \infty} L(\Lambda, \Lambda_0)
\end{eqnarray}
and similarly for the functionals $L_\tau^\mu(\Lambda, \Lambda_0)$ and $L_{(1)} (\Lambda, \Lambda_0)$. Recall that these functionals have been defined in eqns. ($\ref{abbr1}$)-($\ref{abbr4}$), and that their respective no-cutoff limits have been established by virtue of eqns. ($\ref{contlimAG}$), ($\ref{NCAb}$), ($\ref{NCAt}$) and  ($\ref{NCL1}$). Eqns. ($\ref{allzero}$) should be valid in particular for $\Lambda=0$.

In the no-cutoff limit, the restored STI for quantum gravity \textit{without} a cosmological constant follow therefore from ($\ref{vSTIL}$) as
\begin{eqnarray}
\big\langle h^{\mu \nu}, \Delta^{ -1}_{\mu \nu \rho \sigma} L_\beta^{\rho \sigma}(0) \big\rangle  +  \xi^{-1} \big\langle {C}^\mu, \Delta_{GH \mu \nu}^{ -1}  F^{\nu \rho \sigma} \big( h_{\rho \sigma}  \big)  \big\rangle  &=& 0  \label{vSTIRNCn}
\end{eqnarray} 
where $F^{\nu \rho \sigma} \big( h_{\rho \sigma}  \big)= \partial_\rho h^{\rho \nu}$ and $ \Delta^{ -1}_{\mu \nu \rho \sigma}$, $\Delta_{GH \mu \nu}^{ -1}$ are the inverse graviton and ghost propagators, respectively.  In position space, the latter are given (for $\Lambda_K=0$ and $ \xi=\frac{1}{2}$) by eqns. ($\ref{invprop}$)  and ($\ref{propghPS0}$) as
\begin{eqnarray}
\Delta^{ -1}_{\mu \nu \rho \sigma} &=&-\frac{1}{2} \left( \delta_{\mu \rho} \delta_{\nu \sigma}+ \delta_{\mu \sigma} \delta_{\nu \rho} - \delta_{\mu \nu} \delta_{\rho \sigma}       \right) \partial_\alpha \partial^\alpha  \\
\Delta_{GH \mu \nu}^{ -1}  &=& \delta_{\mu \nu} \partial_\alpha \partial^\alpha .
\end{eqnarray} 
We furthermore note that because of $\lambda_4^R=\lambda_5^R=0$  and eqns. ($\ref{RNCzero}$), ($\ref{BRST1_Rb}$) the no-cutoff limit of the functional $L_\beta^{\mu \nu}(0, \Lambda_0)$ must be 
\begin{eqnarray}
L_\beta^{\mu \nu cont}(0) =  \delta^{\mu \nu} \partial_\rho C^\rho - \big( \delta^{\rho \nu} \partial_\rho  C^\mu + \delta^{\mu \rho} \partial_\rho C^\nu \big)
\end{eqnarray}
for $\Lambda_K=0$. One can explicitly check  that the no-cutoff STI ($\ref{vSTIRNCn}$) are indeed satisfied and gauge invariance is in effect in the limit $\lambda \rightarrow 0$. Looking at the (renormalized) BRS fields ($\ref{BRST1_Rb}$) and ($\ref{BRST2_Rb}$), this can also be interpreted in the way that the gauge/BRS transformations get deformed in this limit because the parts that are associated with the gravitational constant will go away.

We would like to point out that the vanishing of the gravitational constant $\lambda$ in the no-cutoff limit of quantum gravity while employing an analysis $\grave{a}$ la Polchinski has already been conjectured\footnote{S. Weinberg QFT I, Page 526} by S. Weinberg in his book about quantum field theory \cite{Wein1}. However, he presumably refers to the case $\Lambda_K=0$.

In the discussion taking place at the end of section ($\ref{PolGravSec}$), we have given four conditions that a theory must satisfy in order to be a valid candidate for a (fundamental)  quantum theory of gravitation. The no-cutoff limit of quantum gravity \textit{without} a cosmological constant obtained from our analysis with flow equations satisfies three of the four conditions: we have imposed only a finite number of renormalization conditions in eqns. ($\ref{PRCNC}$), the Slavnov-Taylor identities  ($\ref{vSTIRNCn}$) are fulfilled, and the remaining theory will be unitary. However, it fails in the last point, because it predicts a vanishing gravitational constant which is in conflict with the experimental facts. Note that this prediction is based on the assumption of \textit{small} bare nonrenormalizable couplings as it has been stressed on various occasions.

Let us now consider the case of gravity with a \textit{nonzero} renormalized cosmological constant,
\begin{eqnarray}
\Lambda_K & \ne & 0.   \label{NzeroKK}
\end{eqnarray}
Here the situation is less clear, because it is not obvious that the renormalized renormalizable couplings  ($\ref{r4NC}$) and ($\ref{r5NC}$) associated with the $\phi^3$ and $\phi^4$-like operators must be zero after the restoration of the STI. However, nonzero renormalizable interactions generate new contributions to the nonrenormalizable ones while integrating out field modes. This fact leads to the speculation whether  the cosmological constant may prevent the gravitational constant from dying off in the no-cutoff limit. 

To be more precise, the question is \textit{do we have a nonzero no-cutoff limit of the gravitational constant for $\Lambda_K \ne 0$},
\begin{equation}
\lim_{\Lambda_0 \rightarrow \infty} \lambda(\Lambda, \Lambda_0, \Lambda_K) :=  \lambda^{cont}(\Lambda, \Lambda_K) \ \ne \ 0 \ ?  \label{nonzg}
\end{equation}
Since the running nonrenormalizable couplings are determined by the renormalizable ones in the no-cutoff limit, this would in particular mean that the value of the renormalized gravitational constant is determined by the renormalized cosmological constant for $\Lambda_0 \rightarrow \infty$.  

To be very clear, the idea is \textit{not} to impose a renormalization (or, in our terminology, improvement) condition for the gravitational coupling, as we have done it in the effective field theory context in section ($\ref{STIRest}$), eq. ($\ref{physrenGK}$). The conjecture is rather that the gravitational constant should come out of the theory as a prediction. Note, however, that the Polchinski analysis works only for \textit{small} couplings as we have discussed in section ($\ref{RenFlowOver}$) and emphasized on various occasions. In particular, our dimensionless expansion parameters $\lambda_4^R$ and $\lambda_5^R$ must be small in the sense of eqns.  ($\ref{smallRG}$) and ($\ref{smallRGExp}$), and the bare nonrenormalizable couplings $\lambda_{\tilde{a}}^0, \ \tilde{a} \ge 6$, must be sufficiently small as it is implied by eq. ($\ref{iniNRG}$). Moreover, it has been discussed in eqns. ($\ref{smallm}$) and ($\ref{smallg}$) of section ($\ref{RGIs}$) that also the dimensionless versions of the couplings\footnote{Note that after the restoration of the STI we will have $A=\Lambda_K/\lambda$, $B_1=\Lambda_K$, $B_2=0$ as follows from eq. ($\ref{Stoy}$) and the graviton and ghost propagators ($\ref{gravpropG}$) and ($\ref{ghpropG}$).  } $A$, $B_1$ and $B_2$ appearing in the ''free'' part\footnote{$B_1$ and $B_2$ are the mass squares of the graviton and ghost propagators ($\ref{gravpropG}$) and ($\ref{ghpropG}$).}  of the gravity effective action ($\ref{SGraveffG}$) have to stay small on scales $\Lambda_R < \Lambda < \Lambda_0$,
\begin{eqnarray}
|\Lambda^{-2} B_i | &\le& 1, \ \ \ i=1,2 \label{smallmNC} \\
|\Lambda^{-3} A |   &\le&  \label{smallgNC} 1.
\end{eqnarray}
The point is that the requirement of small couplings amounts to additional constraints on the theory.

The no-cutoff value $\lambda^{cont}(\Lambda, \Lambda_K)$ of the gravitational constant can then in principle be obtained in the process of restoring the Slavnov-Taylor identities for quantum gravity. Once the renormalization conditions for the cosmological constant and the gravitational field have been specified, eq. ($\ref{PRCNC}$), one has to determine one particular set of ''arbitrary'' renormalization conditions ($\ref{rcG1}$), ($\ref{rcG2}$) and ($\ref{rccOIG}$) for the remaining couplings $\rho_{\tilde{a}}(\Lambda_R), \  \tilde{a}=1...5$, and $R_i(\Lambda_R), \ i=1,2$, such that eqns. ($\ref{Gismall1NC}$) are fulfilled \textit{and} the constraint of small couplings can be met. Note that in the more speculative scenario concerning the restoration of the STI that allows for nonvanishing BRS couplings $\mathcal{R}^0_i(\Lambda) \ne 0, \ i=3...5$, one has also to adjust the improvement conditions ($\ref{ImproOIGNC}$) such that that eqns. ($\ref{Gismall2NC}$) are satisfied.

If such a set of ''arbitrary'' renormalization (and improvement) conditions would exist, it should be uniquely defined in the limit $\Lambda_0 \rightarrow \infty$ because of the uniqueness of the no-cutoff limit, Theorem ($\ref{UniTh}$). Moreover, the no-cutoff value $\lambda^{cont}(\Lambda, \Lambda_K)$ of the gravitational constant could then be calculated in perturbation theory in the couplings $\lambda_4^R$ and $\lambda_5^R$ by means of the definition
\begin{eqnarray}
\lambda^{cont}(\Lambda, \Lambda_K) := \Big(\frac{1}{2} \partial_{i} \partial_{j} -\partial_{i}^2  \Big) \left(\frac{\delta}{\delta h(k_1)} \frac{\delta}{\delta h(k_2)} \frac{\delta}{\delta h(k_3)} L^{cont}(\Phi, \Lambda)\right) \Big|_{h=k_{i}=0}, \ \ \ \label{physrenGKNC}
\end{eqnarray}
that is by solving the Polchinski equation.

It follows from eq. ($\ref{Rphys}$) that a nonvanishing no-cutoff value for the gravitational constant would imply that after the restoration of the STI, the renormalized nonrenormalizable BRS couplings must be nonzero, too:
\begin{eqnarray}
\lim_{\Lambda_0 \rightarrow \infty} R_i(\Lambda_R, \Lambda_0) & \ne & 0, \ \ \ i=3...5.   \label{RNCzeroN}
\end{eqnarray}
As has been pointed out before, this can only be achieved if we give up the requirement of small bare nonrenormalizable BRS couplings ${R}^0_i, \ i=3...5$, as we have done in eqns.  ($\ref{divR0}$), ($\ref{divR}$), and impose improvement conditions for them, eq.  ($\ref{ImproOIGNC}$). In light of eq. ($\ref{Rphys}$) this might of course seem suspicious\footnote{A related issue might be that it is at the present point not clear whether it can be prevented that the graviton-ghost couplings go away in the no-cutoff limit. This would probably require renormalizable operators of the type $\rho_4 \int \overline{C} h C$ and $\rho_5 \int \overline{C} h^2 C$. However, we do dot see them to be BRS invariant.} because it can be interpreted in the way that we actually \textit{do} have to impose renomalization conditions for the gravitational coupling at some point. Otherwise, we would be forced in the conclusion that it must be zero in the no-cutoff limit.      

However, as we have shown in eqns. ($\ref{NCAb}$) and  ($\ref{NCAt}$), at least from a technical point of view it should be acceptable because the convergence of the vertex functions $A_{\beta n}(\Lambda)$ and $A_{\tau n}(\Lambda)$ carrying the BRS fields as operator insertions can be established in the no-cutoff limit for nonvanishing couplings $\mathcal{R}^0_i(\Lambda)\ne 0, \ i=3...5$. This is in contrast to imposing arbitrary improvement conditions for the nonrenormalizable gravity couplings $\lambda_{\tilde{a}}, \ \tilde{a}=6...8$, which would spoil convergence. Moreover, note that the restoration of the STI will \textit{not} be possible for arbitrary values of the gravitational constant because of the gravity potential $L(\Lambda)$ appearing in the vSTI ($\ref{vSTIL}$). For establishing the convergence of its vertex functions in ($\ref{contlimAG}$) we have imposed only the renormalization conditions ($\ref{rcG1}$)-($\ref{rcG2}$) and \textit{no} (arbitrary) improvement conditions. In particular, we have pointed out in eq. ($\ref{rcG3NC}$) that the couplings $\rho_{\tilde{a}}(\Lambda_R), \ \tilde{a}=6,7$ (which will be given by the gravitational coupling after the restoration of the STI, see Table ($\ref{tab}$)) are understood to be determined by the renormalizable ones $\rho_{\tilde{a}}(\Lambda_R), \tilde{a}=1...5$.

We have furthermore disussed that nonvanishing BRS couplings $\mathcal{R}^0_i(\Lambda)\ne 0, \ i=3...5$, force us to employ a more speculative scenario for the restoration of the STI, leading to eq. ($\ref{Gismall2NC}$). This scenario ultimatively amounts to proving that the renormalization \textit{and} improvement conditions ($\ref{FtR1NC}$) and ($\ref{FtR2NC}$)  of the functional $L_{(1)} (\Lambda)$ describing the violation of the STI vanish according to eq.  ($\ref{Gismall2NC}$). Hence, the same number of conditions is to be fulfilled as in the effective field theory context, see eqns. ($\ref{FtR1}$) and ($\ref{FtR2}$). However, in the present case we have less free parameters at hand that we may adjust, because the improvement conditions for the gravitational couplings $\rho_{\tilde{a}}, \ \tilde{a}=6...8,$ appearing in eq. ($\ref{Gismall2NC}$) are understood to be given in terms of no-cutoff limits ($\ref{rcG3NC}$) that are already determined by the renormalized renormalizable couplings $ \rho^{R}_{\tilde{a}}, \ \tilde{a}=1...5$.  One might, however, still hope for linear interdependencies in the conditions ($\ref{FtR1NC}$) and ($\ref{FtR2NC}$).

\medskip
Let us finally remark that the renormalizable $\phi^3$ and $\phi^4$-like operators that are introduced because of the cosmological constant can also be obtained in a different way. Consider some massive fields $\varphi_i$ having masses $m_i$ that couple to the gravitational field $h$. Employing the expansion ($\ref{KKexp}$) we obtain schematically
\begin{equation}
\int_x \sqrt{g} m_i^2 \varphi_i^2  \sim \int_x \left( m_i^2 \varphi_i^2  + \lambda m_i^2 \varphi_i^2 h + \lambda^2 m_i^2 \varphi_i^2 h^2 + ... \right) \label{higgsgrav}
\end{equation}
Thus, we again observe renormalizable interactions $\varphi_i^2 h$ and $\varphi_i^2 h^2$ with ''mixed'' coupling constants $\lambda m_i^2$ and $\lambda^2 m_i^2$. The latter have canonical dimensions $D_{\lambda m_i^2}=1$ and $D_{\lambda^2 m_i^2}=0$ in analogy to the couplings $\rho_4$ and $\rho_5$, see Table ($\ref{tab}$). The speculations concerning a nonvanishing no-cutoff limit of the gravitational constant for a nonzero renormalized cosmological constant may therefore be extended to the case of massive fields coupling to the gravitational field. The gravitational constant might then be determined by the cosmological constant and the masses of the elementary particles in the no-cutoff limit. We are not sure whether this might be related to induced gravity \cite{Adler} \cite{Sak}.

Note that what has been said above could have yet another implication. Since all Standard Model operators have canonical dimension $D_{\mathcal{O}}=4$, coupling them to the gravitational  field would yield operators of canonical dimension $D_{\mathcal{O}} \ge 5$. Hence, these would be all nonrenormalizable. The only exception forms the Higgs field, which couples via its tachyonic mass term to the gravitational field in the manner of eq. ($\ref{higgsgrav}$). One might therefore speculate whether there is a deeper connection between the mechansim that gives mass to all elementary particles, and the force that acts on it.

\end{section}

\end{section}
\end{chapter}

\chapter{Summary and Outlook} \label{Outlook}

In this thesis Euclidean quantum gravity was analyzed from the viewpoint of the renormalization group flow equations. The analysis is based on methods introduced by J. Polchinski \cite{Pol} concerning the perturbative renormalization of field theories via flow equations.

As a first step, we gave the proof of perturbative renormalizability of scalar $\phi^3+\phi^4$ field theory in the framework of renormalization with flow equations. Some generalizations as compared to earlier works \cite{KKS}, \cite{Pol} have been included. 

We then proceeded by extending the concepts of renormalization via flow equations to effective field theories that have a finite cutoff. This was again done for a scalar field theory by imposing additional renormalization conditions for some of the nonrenormalizable couplings. The additional renormalization conditions have been named ''improvement conditions'', and it has been pointed out that our treatment in particular applies to nonrenormalizable theories that do not allow for \textit{any} renormalizable interactions. As a result of our analysis we have established that effective field theories are predictive at scales $\Lambda$ far below the cutoff $\Lambda_0$ with finite accuracy $(\Lambda/\Lambda_0)^{|D_{\rho_l}|+s}$. Here, $D_{\rho_l}$ denotes the canonical dimension of the least irrelevant coupling of the QFT, and $s$ is the ''improvement index'' that refers to the number of improvement conditions that have been imposed.

Turning to quantum gravity, the standard  covariant BRS quantization procedure for Euclidean Einstein gravity with a cosmological constant was reviewed. As a dynamical variable we used a perturbation of the metric density $\sqrt{g} \ g^{\mu \nu}$ around flat space. We introduced a momentum cutoff regularization for the generating functional of quantum gravity and discussed the resulting violation of the gauge invariance and hence the Slavnov-Taylor-identities (STI). Polchinski's renormalization group equation for Euclidean quantum gravity has been derived. 

Applying the methods that we developed for analyzing effective field theories with flow equations to quantum gravity, we disregarded in a first step of our analysis the violation of the STI. A set of arbitrary renormalization and improvement conditions has been imposed, and by inverting the renormalization group trajectory it has been shown that the improvement conditions force the UV cutoff of effective quantum gravity to be the Planck scale $M_P$. We then have established that for generic bare gravity actions the family of theories described by the arbitrary renormalization and improvement conditions is predictive at scales $\Lambda$ far below the Planck scale $M_P$ with finite accuracy $(\Lambda/M_P)^2$. 

Proceeding with the restoration of the STI, bare regularized BRS variations of the graviton and ghost fields have been introduced and the violated Slavnov-Taylor identities (vSTI) of quantum gravity have been worked out. Extending our concepts to composite operator renormalization, it has been found that the gravity BRS fields contain {nonrenormalizable parts}. By introducing renormalization and improvement conditions for them, it has then been established that  vertex functions carrying the BRS fields as operator insertions are known at scales $\Lambda$ far below the Planck scale $M_P$ with finite accuracy $(\Lambda/M_P)^2$. We furthermore have found that the violation of the STI can be described in terms of vertex functions carrying the BRS variation of the bare gravity action as a space-time integrated operator insertion. It has therefore been argued that the STI can be restored to {finite accuracy} if {one particular} set of arbitrary renormalization and improvement conditions for the couplings and BRS fields can be found such that the relevant and leading irrelevant parts of the vertex functions describing the violation of the STI are driven small as $(\Lambda/M_P)^2$ at scales $\Lambda$ far below the Planck scale $M_P$. 

Finally, we considered the no-cutoff limit $\Lambda_0 \rightarrow \infty$ of quantum gravity in the framework of renormalization with flow equations. The vertex functions of the gravity effective potential were expanded solely in the renormalizable couplings, and their boundedness and convergence  has been established in the limit $\Lambda_0 \rightarrow \infty$. Attempting to apply the same program to the vertex functions carrying the  BRS fields as operator insertions, we observed that the nonrenormalizable parts of the gravity BRS fields vanish in the no-cutoff limit if smallness of the bare BRS couplings is imposed. It could however be shown that if the latter constraint is dropped, convergence of the BRS vertex functions may still be proven. Proceeding with the restoration of the STI, we argued that for zero renormalized cosmological constant $\Lambda_K=0$ the theory will become free as $\Lambda_0 \rightarrow \infty$, and that the latter statement is compatible with gauge invariance. It was speculated whether a \textit{nonzero} cosmological constant $\Lambda_K \ne 0$ leads to a \textit{nonvanishing} value of the gravitational constant in the no-cutoff limit, and we pointed out that the gravitational coupling should then be determined by the cosmological constant. Finally, we conjectured that a similar situation might arise if  massive fields are coupled to gravity. This led to the speculation whether for $\Lambda_0 \rightarrow \infty$ the gravitational constant is given in terms of the cosmological constant and the masses of the elementary particles.

\medskip
Let us now briefly discuss some relations of our results to the work of others. We begin with a comparison to treatments of quantum gravity as an effective field theory in the ''conventional''  perturbative framework. Such an analysis has been carried out by J. Donoghue, N.E.J. Bjerrum-Bohr and others \cite{Dono1} \cite{BB2} \cite{BB3}. Generally speaking, the main advantage of their approach as compared to ours is that they may employ dimensional regularization, which is a symmetry respecting regulator. Hence, they do not get involved in the complicated discussion of restoring the violated Slavnov-Taylor identities. On the other hand, our treatment involving a specific number of improvement conditions which then lead to a defined predictivity of the effective theory is maybe more systematic and transparent from a conceptual point of view. Moreover, we employ the modern language of the renormalization group. 

Let us furthermore stress that we resctricted our considerations to \textit{pure} Euclidean gravity, whereas the focus of the above authors lies in the derivation of quantum corrections to Newton's potential. Thus, they mainly deal with the gravitational interaction of matter. 

As we have discussed, our results for pure effective quantum gravity are such that the vertex functions of the theory are known at the scale $\Lambda$ to an accuracy of $(\Lambda/M_P)^2$ where $M_P$ is the Plack scale. On the other hand, this means that already to second order in perturbation theory in the gravitational coupling $\lambda \sim M_P^{-1}$, the perturbative contributions will be of the same order as the indetermination of the vertex functions\footnote{Perturbative contributions that stem from the ''mixed'' couplings $\lambda \Lambda_K$,  $\lambda^2 \Lambda_K$ etc. appearing for nonvanishing cosmological constant $\Lambda_K$ will be neglectible because of the smallness of $\Lambda_K$.}. In particular, we will not be able to calculate quantum corrections because already the 1-loop correction to the graviton propagator (the ''vacuum polarization'') involves two $3$-graviton vertices.   

However, when deriving the quantum corrections to Newton's potential in \cite{Dono1} \cite{BB2}, the authors actually \textit{do} extract quantum corrections out of the graviton vacuum polarization diagram. This is not a contradiction to our results by the following reasoning. In our considerations we have always assumed that we may perform a derivative expansion of the gravity effective potential into local composite field operators. Equivalently, the momentum space vertex functions of the effective potential have been Taylor expanded around vanishing momenta $k_i=0$. This procedure was justified because we kept a nonvanishing effective IR cutoff. However, the low energy propagation of massless particles (such as the graviton\footnote{Modulo ''mass'' squares that appear for nonvanishing consmological constant $\Lambda_K$ in the graviton propagator. However, we have always kept the IR cutoff $\Lambda$ above $\Lambda_K$, $\Lambda^2 > |\Lambda_K|$.}) leads to nonanalytic contributions that cannot be expanded in a Taylor series. It can be shown that these contributions lead to quantum corrections that are dominant over the analytic ones, and it has been such quantum effects\footnote{The (Minkowski space) contribution is $\sim \ln(-k^2)$.} that have been extracted out of the 1-loop correction to the graviton propagator by the authors mentioned. 

We will now come to the discussion of the relation of our work to a \textit{nonperturbative} analysis of Euclidean quantum gravity with flow equations that has been performed by M. Reuter and others \cite{MR1}  \cite{MR3}  \cite{MR4}. As has been mentioned on various occasions, the Polchinski analysis works only for sufficiently small couplings. Hence, it implictly relies on the Gaussian fixed point in the space of couplings as does perturbation theory. However, the {nonperturbative} investigation of the renormalization group flow of gravity employing truncations of the space of actions seems to indicate that the UV behaviour of quantum gravity is governed by a nontrivial (i.e. non-Gaussian) fixed point \cite{MR2}. As long as we impose our effective field theory scenario for quantum gravity, meaning that we retain a finite UV cutoff $\Lambda_0=M_P$, this is again not in contradiction to our analysis because the regime of the conjectured non-Gaussian fixed point sets in at scales $\Lambda$ \textit{above} the Planck scale $M_P$, $\Lambda > M_P$ \cite{MR6}. At scales far \textit{below} the Planck scale, the theory is governed by the Gaussian fixed point and our results may be applied. These in particular imply that for generic bare gravity actions defined at the Planck scale $M_P$, the effective Lagrangian of quantum gravity is attracted towards a finite dimensional submanifold in the space of possible Lagrangians at scales $\Lambda << M_P$.  To be more precise, the analysis of section ($\ref{ToyM}$) has shown that for tiny renormalized values of the cosmological constant $\Lambda_K \sim 10^{-120} M_P^2$ and a renormalized gravitational constant $\lambda \sim M_P^{-1}$ the effective Lagrangian of quantum gravity (and hence all couplings associated with the higher field invariants) are effectively determined by the gravitational coupling\footnote{For larger values of the cosmological constant $\Lambda_K$, the effective Lagrangian is determined by $\lambda$ \textit{and} $\Lambda_K$ to an accuracy of $(\Lambda/M_P)^2$.} to an accuracy of $(\Lambda/M_P)^2$. This behaviour should  be also visible in a study of the RG trajectories of the running gravity couplings that is carried out by truncating the space of actions. However, due to the extreme algebraic complexity of the respective $\beta$-functions the authors mentioned have mostly considered an Einstein-Hilbert truncation that involves only the running cosmological constant and the running gravitational coupling. This makes it difficult to recover the convergence of the gravity couplings to a finite dimensional submanifold at scales $\Lambda << M_P$ from their results.

Let us also mention that our speculations on the no-cutoff limit of quantum gravity in section ($\ref{GravNoCut}$) still rely on the Gaussian fixed point. If such a no-cutoff limit (with nonvanishing gravitational constant $\lambda$) would indeed exist, it might therefore be in conflict with the results concerning the nontrivial fixed point that have been found by the above authors.

Finally, we would like to point out that the vertex functions of the effective potential for which the RG analysis has been performed in this work are more or less equivalent to the connected Greens functions. This has been shown in Appendix ($\ref{LZ}$). On the other hand, the nonperturbative RG analysis of M. Reuter et al. has been carried out for one particle irreducible (1PI) vertex functions. This is another important difference between their approach and ours.  

\medskip
To conclude with an outlook, we would like to stress once more that the methods developed in this work for investigating effective field theories with flow equations should be applicable to \textit{any} effective field theory, not just gravity.

Concering our analysis of effective quantum gravity with flow equations, it remains to show that there actually exists a set of arbitrary renormalization and improvement conditions for the couplings and the BRS fields such that the restoration of the Slavnov-Taylor identities with finite accuracy can be accomplished. The analogous procedure for Yang-Mills theory \cite{KM} \cite{Mull} suggests that this will be a rather tedious task.

Finally, it should be investigated whether for nonzero cosmological constant there does indeed exist a choice of arbitrary renormalization  conditions such that in the no-cutoff limit the Slavnov-Taylor identities can be restored, the requirement of small couplings can be met and, finally, a \textit{nonvanishing} value of the renormalized gravitational constant is obtained. One should carry out the same program for the case of a massive field coupled to the gravitational field.




\begin{appendix}

\chapter{}

\begin{section}{Canonical dimensions of fields}   \label{D}
The canoncial dimension $D_\phi$ of a field $\phi(x)$ of some QFT is determined by the kinetic term (or ''free'' part) of its action,
\begin{eqnarray}
S_f=- \frac{1}{2} \int_{xy} \phi(x) \Delta^{-1}(x-y) \phi(y)
\end{eqnarray}
where $\Delta^{-1}(x-y) $ denotes the inverse propagator. The action must be dimensionless\footnote{We employ $c=\hbar =1$.}, and hence $D_\phi$ follows from the number of space-time dimensions $d$ and the properties of $\Delta^{-1}(x-y)$. To be more precise, it turns out that if the Fourier transformed propagator $\tilde{\Delta}(k)$ has a large momentum behaviour
\begin{equation}
| \tilde{\Delta}(\alpha k) | \sim C \alpha^{-\sigma}, \  \ k\ne 0,  \ \ \alpha \rightarrow \infty,
\end{equation}
where $C$ is some positive constant and $\alpha$ is a parameter, then \cite{Z-J}
\begin{equation}
 D_\phi  = {\frac{1}{2}(d-\sigma)} .  \label{Dphix}
\end{equation}
Consequently, the Fourier transformed field $\tilde{\phi}(k)$ has canonical dimension
\begin{equation}
 D_{\tilde{\phi}}  = -{\frac{1}{2}(d+\sigma)}.  \label{Dphip}
\end{equation}
In this work, we will consider $d=4$ space-time dimensions and deal with  $1/k^2$ (momentum-space) propagators (meaning that $\sigma=2$). Thus we have
\begin{equation}
 D_\phi  = 1, \ \ \ \ D_{\tilde{\phi}} =-3.
\end{equation}

\end{section}

\begin{section}{Derivative expansion of the effective potential}  \label{MDV}
The effective potential $L(\phi, \Lambda)$ introduced in eq. ($\ref{Spol}$) can be expanded in powers of the fields $\phi(x)$:
\begin{eqnarray}
L(\phi, \Lambda) &=& \sum_{n=1}^\infty \frac{1}{n!} \int d^4x_1 ... d^4x_n L_n(x_1...x_n, \Lambda) \phi(x_1)...\phi(x_n) . \label{Lexps}
\end{eqnarray}
We call the expansion coefficients $L_n(x_1...x_n, \Lambda)$ position space vertex functions. They are obviously nonlocal objects. Formally, ($\ref{Lexps}$) can always be written as a local expression plus a nonlocal remainder term by expanding all fields $\phi(x_i)$ into power series around a common point $x$. Such a procedure is called derivative expansion, and we will now do it explicitly for the example
\begin{eqnarray}
\frac{1}{3!} \int d^4x_1 d^4x_2 d^4x_3 L_3(x_1, x_2, x_3, \Lambda) \phi(x_1) \phi(x_2) \phi(x_3) . \label{Exexp}
\end{eqnarray}
We choose $x_1$ as an expansion point for the fields $\phi(x_i)$:
\begin{eqnarray}
\phi(x_i) = \phi(x_1) + \partial_\mu \phi(x_1) (x_i - x_1)^\mu + \frac{1}{2} \partial_\mu \partial_\nu \phi(x_1) (x_i - x_1)^\mu (x_i - x_1)^\nu + ...
\end{eqnarray}
Furthermore, because of the homogeneity of space-time we have
\begin{eqnarray}
L_3(x_1, x_2, x_3, \Lambda) = L_3(\overline{x}_2, \overline{x}_3, \Lambda)
\end{eqnarray}
where $\overline{x}_2:=x_2-x_1$, $\overline{x}_3:=x_3-x_1$. Thus we may write
\begin{eqnarray}
 (\ref{Exexp})&=& \int_{x_1}  \phi(x_1)^3 \int_{\overline{x}_2 \overline{x}_3} L_3(\overline{x}_2, \overline{x}_3, \Lambda)  +  \int_{x_1} \phi(x_1)^2 \partial_\mu \phi(x_1) \sum_{i=2}^3  \int_{ \overline{x}_2 \overline{x}_3} \overline{x}_i^\mu L_3(\overline{x}_2, \overline{x}_3, \Lambda) \nonumber \\ && + \ \int_{x_1} \phi(x_1)^2 \partial_\mu \partial_\nu \phi(x_1) \sum_{i=2}^3 \frac{1}{2} \int_{\overline{x}_2 \overline{x}_3} \overline{x}_i^\mu \overline{x}_i^\nu L_3(\overline{x}_2, \overline{x}_3, \Lambda)  \nonumber \\ &&  + \ \int_{x_1} \phi(x_1) \partial_\mu \phi(x_1) \partial_\nu \phi(x_1)  \int_{\overline{x}_2 \overline{x}_3} \overline{x}_2^\mu \overline{x}_3^\nu L_3(\overline{x}_2, \overline{x}_3, \Lambda) + ... \ . \label{DExp1}
\end{eqnarray}
Let us analyze the different terms of ($\ref{DExp1}$). To do so, consider $G^{\mu \nu} \in SO(4)$. Then
\begin{eqnarray}
G^{\mu \nu} \int_{\overline{x}_2 \overline{x}_3}   \overline{x}^{\nu}_{i} L_3(\overline{x}_2, \overline{x}_3, \Lambda) &=&  \int_{\overline{x}_2 \overline{x}_3}    G^{\mu \nu} \overline{x}^{\nu}_{i} L_3(G \overline{x}_2, G \overline{x}_3, \Lambda) \nonumber \\ &=& \int_{\overline{x}_2 \overline{x}_3}    \overline{x}^{\nu}_{i} L_3(\overline{x}_2, \overline{x}_3, \Lambda).  \label{tenint}
\end{eqnarray}
Hence, the invariance of $L_3$ under the orthogonal group and the tensor structure of the integral in ($\ref{tenint}$) force
\begin{eqnarray}
\int_{\overline{x}_2 \overline{x}_3}  \overline{x}^{\nu}_{i} L_3(\overline{x}_2, \overline{x}_3, \Lambda)  =0.  \label{oddvan}
\end{eqnarray}
With a similar argument and the symmetry $L_3(\overline{x}_2, \overline{x}_3, \Lambda)=L_3(\overline{x}_3, \overline{x}_2, \Lambda)$ we conclude that
\begin{eqnarray}
-\int_{\overline{x_2} \overline{x_3}}  \ \overline{x}^{\mu}_{i} \overline{x}^{\nu}_{i} L_3(\overline{x}_2, \overline{x}_3, \Lambda) &=& \delta^{\mu \nu} \rho_{6}(\Lambda), \ \ \ i=2,3 \label {rcc6ps} \\
-\int_{\overline{x_2} \overline{x_3}}  \ \overline{x}^{\mu}_{i} \overline{x}^{\nu}_{j} L_3(\overline{x}_2, \overline{x}_3, \Lambda) &=& \delta^{\mu \nu} \rho_{7}(\Lambda), \ \ \  i \ne j = 2,3
\end{eqnarray}
where we introduced the running coupling constants $\rho_{6}(\Lambda)$ and $\rho_{7}(\Lambda)$. If, in addition, we define
\begin{eqnarray}
\rho_4(\Lambda):=  \int_{\overline{x}_2 \overline{x}_3} L_3(\overline{x}_2, \overline{x}_3, \Lambda),   \label{rcc4ps}
\end{eqnarray}
the expansion ($\ref{DExp1}$) becomes
\begin{eqnarray}
 (\ref{Exexp})&=& \rho_4(\Lambda) \int_{x_1}  \phi(x_1)^3 - \rho_{6}(\Lambda) \int_{x_1} \phi(x_1)^2 \partial_\mu \partial^\mu \phi(x_1) \nonumber \\ && \hspace{3.5cm} - \ \rho_{7}(\Lambda) \int_{x_1} \phi(x_1) \partial_\mu \phi(x_1) \partial^\mu \phi(x_1) + ... \ . \label{DExp2}
\end{eqnarray}
Note that the last two operators of ($\ref{DExp2}$) are not linearly independent\footnote{Contrary to what is stated in \cite{Wiec}.}, because integration by parts yields
\begin{eqnarray}
\int_{x_1} \phi(x_1)^2 \partial_\mu \partial^\mu \phi(x_1) = -2 \int_{x_1} \phi(x_1) \partial_\mu \phi(x_1) \partial^\mu \phi(x_1).   \label{lindep}
\end{eqnarray}
Hence, we arrive at
\begin{eqnarray}
 (\ref{Exexp})= \rho_4(\Lambda) \int_{x_1}  \phi(x_1)^3 + \rho_{6/7} \int_{x_1} \phi(x_1)^2 \partial_\mu \partial^\mu \phi(x_1)  \label{DExp3}
\end{eqnarray}
where
\begin{eqnarray}
\rho_{6/7} :=-\rho_{6}(\Lambda) + \frac{1}{2} \rho_{7}(\Lambda).
\end{eqnarray}
In analogy to what we have shown for the example ($\ref{Exexp}$), we may proceed for other vertex functions. Introducing the running coupling constants
\begin{eqnarray}
\rho_1(\Lambda) &:=& L_1(\Lambda) \label{rcc1ps} \\
\rho_2(\Lambda) &:=& \int_{\overline{x}_2} L_2(\overline{x}_2,\Lambda) \\
\rho_3(\Lambda) \ \delta^{\mu \nu} &:=& - \int_{\overline{x}_2}  \ \overline{x}_2^{\mu} \overline{x}_2^{\nu} L_2(\overline{x}_2, \Lambda) \\
\rho_5(\Lambda) &:=& \int_{\overline{x}_2 \overline{x}_3 \overline{x}_4} L_4(\overline{x}_2,\overline{x}_3,\overline{x}_4,\Lambda) \\
\rho_8(\Lambda) &:=& \int_{\overline{x}_2 \overline{x}_3 \overline{x}_4 \overline{x}_5} L_5(\overline{x}_2,\overline{x}_3,\overline{x}_4, \overline{x}_5, \Lambda) \label{rcc8ps}
\end{eqnarray}
then allows us to write
\begin{eqnarray}
L(\phi, \Lambda) &=& \int d^4x \Big( \rho_1(\Lambda)  \phi(x) +  \rho_2(\Lambda) \phi(x)^2 - \rho_3(\Lambda) \phi(x) \partial_\mu \partial^\mu \phi(x)   \nonumber \\ &&  +  \ \rho_4(\Lambda) \phi(x)^3 + \rho_5(\Lambda) \phi^4 + \rho_{6/7} \phi(x)^2 \partial_\mu \partial^\mu \phi(x)  + \rho_8(\Lambda) \phi^5 \Big)  + R^{(1)}(\phi) \nonumber \\   \label{Dexp3}
\end{eqnarray}
where $ R^{(1)}(\phi)$ is a nonlocal remainder term.  Eq. ($\ref{Dexp3}$) is a derivative expansion of the effective potential $L(\phi, \Lambda)$ into local composite field operators $\mathcal{O}_i(x, \phi)$ of canonical dimension\footnote{The field $\phi(x)$ has canonical dimension $D_\phi=1$ in $d=4$ space-time dimensions. See section ($\ref{GENR}$) for the determination of the canonical dimension of a field.} $D_{\mathcal{O}_i} \le 5$. The latter are associated with running coupling constants $\rho_{i}(\Lambda) $ of canonical dimension
\begin{eqnarray}
D_{\rho_i} = 4 - D_{\mathcal{O}_i} \ge -1.
\end{eqnarray}
Thus, ($\ref{Dexp3}$) can be written as
\begin{eqnarray}
L(\phi, \Lambda) &=& \sum_{D_{\rho_i } \ge -1} \rho_i(\Lambda) \int_x \mathcal{O}_i(x, \phi) +  R^{(1)}(\phi) . \label{effpotexp}
\end{eqnarray} 
Eq. ($\ref{effpotexp}$) corresponds to the case $s=1$ in the general expression ($\ref{effpot}$) for a derivative expansion of $L(\phi, \Lambda)$  given in section ($\ref{RGInt}$).

So far in this appendix, we have worked in position space. However, in chapters ($\ref{RenFlow}$) and ($\ref{EffFlow}$) the objects of interest are momentum space vertex functions $L_{n} (k_1,...,k_{n}, \Lambda)$. In eqns. ($\ref{rcc1}$)-($\ref{rcc5}$) and ($\ref{rcc6}$)-($\ref{rcc8}$) these are even used to define the running coupling constants $\rho_i(\Lambda)$ via Taylor expansions around $k_i=0$. In the following, we will show that the coupling constants of eqns. ($\ref{rcc1}$)-($\ref{rcc5}$) and ($\ref{rcc6}$)-($\ref{rcc8}$) are identical to those defined in eqns. ($\ref{rcc6ps}$)-($\ref{rcc4ps}$) and ($\ref{rcc1ps}$)-($\ref{rcc8ps}$).

To do so, we will again consider the example of $L_3(x_1,x_2,x_3)$. Its Fourier transform is given by
\begin{eqnarray}
\delta(k_1+k_2+k_3) \ L_3(k_1,k_2,k_3,\Lambda) &=& \int_{x_1 x_2 x_3} e^{i k_1 x_1} e^{i k_2 x_2} e^{i k_3 x_3} L_3(\overline{x}_2, \overline{x}_3, \Lambda) \nonumber \\ &=& \delta(k_1+k_2+k_3) \ \int_{\overline{x}_2, \overline{x}_3} e^{i k_2 \overline{x}_2} e^{i k_3 \overline{x}_3} L_3(\overline{x}_2, \overline{x}_3, \Lambda) \nonumber . \\
\end{eqnarray}
Integrating over $k_1$ yields
\begin{eqnarray}
 L_3(-k_2 -k_3, k_2, k_3,\Lambda) =  \int_{\overline{x}_2, \overline{x}_3} e^{i k_2 \overline{x}_2} e^{i k_3 \overline{x}_3} L_3(\overline{x}_2, \overline{x}_3, \Lambda) .
\end{eqnarray}
Therefore, we obtain the following identities:\footnote{Remember that $\partial^{\mu_1}_{i,1} L_3(\tilde{k}_1,\tilde{k}_2,\tilde{k}_3,\Lambda)|_{\tilde{k}_i=0} = \partial^{\mu_1}_{i} L_3(-\tilde{k}_2-\tilde{k}_3,\tilde{k}_2,\tilde{k}_3,\Lambda)|_{\tilde{k}_i=0}$.}
\begin{eqnarray}
L_3(k_1, k_2, k_3,\Lambda)|_{{k}_i=0}  &=&  \int_{\overline{x}_2, \overline{x}_3} L_3(\overline{x}_2, \overline{x}_3, \Lambda) \label{md0} \\
\partial^{\mu}_{i,1} L_3(\tilde{k}_1,\tilde{k}_2,\tilde{k}_3,\Lambda)|_{\tilde{k}_i=0} &=&  i \int_{\overline{x}_2 \overline{x}_3}   \overline{x}^{\mu}_{i} L_3(\overline{x}_2, \overline{x}_3, \Lambda) \label{md1}  \\
\partial^{\mu_1}_{i_1,1} \partial^{\mu_2}_{i_2,1} L_3(\tilde{k}_1,\tilde{k}_2,\tilde{k}_3,\Lambda)|_{\tilde{k}_i=0} &=& -  \int_{\overline{x_2} \overline{x_3}}  \ \overline{x}^{\mu_1}_{i_1} \overline{x}^{\mu_2}_{i_2} L_3(\overline{x}_2, \overline{x}_3, \Lambda) \label{md2} .
\end{eqnarray}
From ($\ref{md1}$) and ($\ref{oddvan}$) follows that odd numbers of momentum derivatives of vertex functions vanish. Furthermore, eqns. ($\ref{md0}$) and ($\ref{md2}$) show that the definitions of $\rho_4(\Lambda)$, $\rho_6(\Lambda)$ and $\rho_7(\Lambda)$ given in eqns. ($\ref{rcc4}$), ($\ref{rcc6}$) and ($\ref{rcc7}$) are indeed identical to those of eqns. ($\ref{rcc6ps}$)-($\ref{rcc4ps}$). The same can be shown for the remaining coupling constants $\rho_i(\Lambda)$. Hence, if we insert the Taylor expansions ($\ref{TE0}$), ($\ref{TE1}$) and ($\ref{TE2}$) respectively of the momentum space vertex functions $L_{n} (k_1,...,k_{n}, \Lambda)$ into the expansion ($\ref{Lexp}$) of the effective potential, we just obtain the momentum space version of the derivative expansion ($\ref{effpot}$).

To conclude this section, we demonstrate how the argument ($\ref{lindep}$) for the linear dependence of the operators associated with the couplings $\rho_6(\Lambda)$ and $\rho_7(\Lambda)$ translates into momentum space. If we insert the parts of the Taylor expansion ($\ref{TE2}$) associated with the couplings $\rho_6(\Lambda)$ and $\rho_7(\Lambda)$ into ($\ref{Lexp}$)\footnote{It is understood that the integration over $k_1$ has been carried out in ($\ref{Lexp}$).}, we obtain
\begin{eqnarray}
\int_{k_2 k_3} \Big( \rho_{6} \sum_{i=2}^3 k_i^2 + \rho_{7} \sum_{i \ne j=2}^3 k_i k_j \Big) \ \phi(k_2) \phi(k_3) \phi(-k_2 -k_3) .  \label{r7r8m}
\end{eqnarray}
Substituting  $k_2$ by $\tilde{k}_2= -k_2 -k_3$ we have 
\begin{eqnarray}
 \int_{k_3} \phi(k_3) \int_{k_2} k_2^2 \ \phi(k_2) \phi(-k_2 -k_3) = \int_{k_3} \phi(k_3) \int_{\tilde{k}_2} (-\tilde{k}_2-k_3)^2 \phi(\tilde{k}_2) \phi(- \tilde{k}_2 -k_3) \nonumber 
\end{eqnarray}
and therefore
\begin{eqnarray}
-\frac{1}{2}  \int_{k_2 k_3} k_2^2 \ \phi(k_2) \phi(k_3) \phi(-k_2 -k_3) =    \int_{k_2 k_3} k_2 k_3 \ \phi(k_2) \phi(k_3) \phi(-k_2 -k_3) . \nonumber \\
\end{eqnarray}
Thus eq. ($\ref{r7r8m}$) becomes
\begin{eqnarray}
(\ref{r7r8m}) &=& -2 \rho_{6/7}(\Lambda) \int_{k_2 k_3} k_2^2 \ \phi(k_2) \phi(k_3) \phi(-k_2 -k_3)    \label{gravm}
\end{eqnarray}
in full correspondence with the postition space equation ($\ref{DExp3}$). 

\end{section}

\begin{section}{Linearized renormalization group theory}  \label{LRG}
Let $S_e(\Lambda)$ be an effective action. Its dependence on the scale $\Lambda$ is given by a renormalization group equation (RGE)
\begin{equation}
- \Lambda \frac{d}{d \Lambda} S_e(\Lambda) = \mathcal{F}(S_e(\Lambda)) . \label{RG_1}
\end{equation}
At a \textit{fixed point} $S_e^{*}$ we have
\begin{equation}
- \Lambda \frac{d}{d \Lambda} S^{*}_e(\Lambda)= 0. 
\end{equation}
Now consider small deviations 
\begin{equation}
\delta S_e (\Lambda):= S_e(\Lambda) - \overline{S}_e(\Lambda)
\end{equation}
from a solution $\overline{S}_e(\Lambda)$ of eq. ($\ref{RG_1}$). They satisfy a linearized RGE
\begin{equation}
- \Lambda \frac{d}{d \Lambda} \delta S_e(\Lambda) = M( \overline{S}_e(\Lambda)) \delta S_e(\Lambda) .
\end{equation}
For the rest of this section, we choose $\overline{S}_e(\Lambda)= S_e^{*}$. Then $M$ becomes independent of the scale $\Lambda$, and the eigenvalue equation
\begin{equation}
M \mathcal{O}_i = \xi_i \mathcal{O}_i 
\end{equation}
defines scaling exponents $\xi_i$ and a set of eigenoperators $\mathcal{O}_i$ (which is assumed to be complete). We may expand $\delta S $ with respect to the $\mathcal{O}_i$:
\begin{equation}
\delta S (\Lambda)= \sum_i \mu_i (\Lambda) \mathcal{O}_i .
\end{equation}
The ''scaling fields'' $\mu_i(\Lambda)$ obey
\begin{equation}
- \Lambda \frac{d}{d \Lambda} \mu_i (\Lambda)  = \xi_i \mu_i (\Lambda)  \label{RG_EW}
\end{equation}
which has the solution
\begin{equation}
\mu_i (\Lambda)= \mu_i(\Lambda_0) \left( \frac{\Lambda}{\Lambda_0 } \right)^{-\xi_i} .
\end{equation}
Near the fixed point $S^{*}_e$, we have
\begin{eqnarray}
S_e (\Lambda) &=& S_e^{*} + \sum_i \mu_i (\Lambda) \mathcal{O}_i \\ &=& S_e^{*} + \sum_i \mu_i(\Lambda_0) \left( \frac{\Lambda}{\Lambda_0 } \right)^{-\xi_i} \mathcal{O}_i .
\end{eqnarray}
There are three different kinds of operators $\mathcal{O}_i $ associated with the eigenvalues $\xi_i $. They are called ''scaling operators'' and are classified as follows:
\begin{itemize}
\item $\xi_i > 0$: The associated scaling field $\mu_i$ is \textit{relevant} because it brings the action $S_e(\Lambda)$ away from the fixed point $S_e^{*}$ when $\frac{\Lambda}{\Lambda_0} \rightarrow 0$. 
\item $\xi_i < 0$: The associated scaling field $\mu_i$ is \textit{irrelevant} because it decays to zero when $\frac{\Lambda}{\Lambda_0} \rightarrow 0$. $S_e(\Lambda)$ converges towards $S_e^{*}$.
\item $\xi_i = 0$: The associated scaling field $\mu_i$ is \textit{marginal}. Then $S_e^{*} + \mu_i \mathcal{O}_i$ is again a fixed point for any $\mu_i$. The latter property may be destroyed beyond the linear order.
\end{itemize}  
For the Gaussian fixed point, that is in the limit of vanishing coupling, the corresponding eigenvalues of the RG transformation are given precisely by the canonical dimension of the couplings. In order to evaluate this somewhat  further, consider couplings $\rho_i(\Lambda)$ which have canonical dimensions of
\begin{equation}
[\rho_i]= D_{\rho_i} .
\end{equation}
We may define dimensionless couplings by
\begin{equation}
\lambda_i(\Lambda):= \Lambda^{-D_{\rho_i}} \rho_i(\Lambda).
\end{equation}
These couplings $\lambda_i$ depend on the scale $\Lambda$ according to RG equations
\begin{equation}
- \Lambda \frac{d}{d \Lambda} \lambda_i(\Lambda) = \beta_i(\lambda(\Lambda)). \label{RG_dl}
\end{equation}
Eq. ($\ref{RG_dl}$) is the analogue to ($\ref{RG_1}$). Again, we consider small deviations
\begin{equation}
\delta \lambda_i (\Lambda):= \lambda_i (\Lambda) - \overline{\lambda}_i(\Lambda)
\end{equation}
which obey a linearized RG equation
\begin{eqnarray}
- \Lambda \frac{d}{d \Lambda} \delta \lambda_i(\Lambda) &=& M_{ij} (\overline{\lambda}(\Lambda)) \delta \lambda_j(\Lambda) \label{RG_dlL} \\ M_{ij} (\overline{\lambda}(\Lambda)) &=& \frac{\partial}{\partial \lambda_j} \xi_i(\overline{\lambda}(\Lambda)) . \label{RG_M}
\end{eqnarray}
In the limit of vanishing coupling, that is $\overline{\lambda}_i(\Lambda) \rightarrow 0 $, the $\rho_i$ become independent of the scale\footnote{In perturbation theory, this can be understood by noting that the $\Lambda$-dependence of $\rho_i$ arises through contributions from diagrams. These vanish if the couplings go to zero.} $\Lambda$ and we have
\begin{eqnarray}
\delta \lambda_i(\Lambda) & \equiv & \mu_i(\Lambda) \\ M_{ij} (\overline{\lambda}(\Lambda)) & \equiv & \delta_{ij} D_{\rho_i} .
\end{eqnarray}
The associated scaling fields (or couplings) are then often called superrenormalizable, nonrenormalizable and renormalizable, respectively.

\end{section}

\begin{section}{The relation between the effective potential $L(\phi)$ and the generating functional $Z(J)$ of connected Greens functions}  \label{LZ}
Let $L(\phi, \Lambda, \Lambda_0)$ be a solution of the Polchinski RGE ($\ref{pol}$) where initial conditions $L(\phi, \Lambda_0, \Lambda_0)$ have been specified. As is explained in section ($\ref{RGInt}$), this solution leads to a generating functional 
\begin{equation}
W(J_\Lambda; \Lambda_0)= \int \mathcal{D} \phi e^{- \frac{1}{2} ( \phi, \Delta^{-1}_\Lambda \phi ) + L(\phi, \Lambda, \Lambda_0) + (J_\Lambda, \phi) }  \label{W2}
\end{equation}
that does not depend on the scale $\Lambda$. However, in order to avoid that $L(\phi, \Lambda, \Lambda_0)$ becomes a complicated functional of the source $J_\Lambda$, we had to impose the additional assumption
\begin{equation}
J_\Lambda(k)=0 \ \ \text{for} \ \ k^2 > \Lambda^2.   \label{Jvan}
\end{equation}
We will now briefly discuss how this assumption, which is problematic in practical calculations \cite{Z-J} and might also cause troubles when considering the Slavnov-Taylor-Identities for the generating functional $W(J)$ of a gauge theory, can be avoided. This will be done by relating the effective potential $L(\phi, \Lambda, \Lambda_0)$ to a generating functional for connected Greens functions $\overline{Z}(J;\Lambda, \Lambda_0 )$.

Note that while obtaining the solution $L(\phi, \Lambda, \Lambda_0)$, only field modes corresponding to momenta $\Lambda^2 < k^2 < \Lambda^2_0$ have been integrated out and thus contributed to $L(\phi, \Lambda, \Lambda_0)$. Hence, $\Lambda_0$ and $\Lambda$ can be viewn as effective UV and IR momentum cutoffs for the functional $L(\phi, \Lambda, \Lambda_0)$ respectively.

It is therefore possible to write $L(\phi, \Lambda, \Lambda_0)$ in the following way:
\begin{eqnarray}
e^{ L(\phi, \Lambda, \Lambda_0)} = \int \mathcal{D} \varphi \ e^{-\frac{1}{2} ( \varphi, \Delta^{-1}_{\Lambda, \Lambda_0} \varphi )} e^{L(\phi+ \varphi, \Lambda_0, \Lambda_0)}  \label{AltL}
\end{eqnarray}
where the regularized propagator $\Delta_{\Lambda, \Lambda_0}$ now has a built-in IR momentum cutoff $\Lambda$ (that supresses momenta $k^2< \Lambda^2$) in addition to the UV cutoff $\Lambda_0$. Indeed it can be shown \cite{Mull} that, if we take eq. ($\ref{pol}$)  as a defining relation for $L(\phi, \Lambda, \Lambda_0)$, the dependence of the latter on the IR cutoff $\Lambda$ is again given by Polchinski's equation. 

Let us now introduce the generating functional 
\begin{equation}
\overline{W}({J}; \Lambda, \Lambda_0)= \int \mathcal{D} \varphi e^{- \frac{1}{2} ( \varphi, \Delta^{-1}_{\Lambda, \Lambda_0} \varphi ) + L(\varphi, \Lambda_0, \Lambda_0) + ({J}, \varphi) }  \label{W3}
\end{equation}
which, contrary to ($\ref{W2}$), depends on \textit{both} scales $\Lambda$ and $\Lambda_0$, but whose source term ${J}$ is not supposed to obtain the condition ($\ref{Jvan}$). Obviously, we have
\begin{equation}
\overline{W}(J; 0, \Lambda_0) = W(J_{\Lambda_0}; \Lambda_0).
\end{equation}
By substituting $\varphi \rightarrow \phi-\varphi$ in eq. ($\ref{AltL}$) it follows that $\overline{W}({J}; \Lambda, \Lambda_0)$ and $L(\phi, \Lambda, \Lambda_0)$ are related by the following equation:
\begin{eqnarray}
e^{ L(\phi, \Lambda, \Lambda_0)} = e^{-\frac{1}{2} ( \phi, \Delta^{-1}_{\Lambda, \Lambda_0} \phi )} \ \overline{W}(\Delta^{-1}_{\Lambda, \Lambda_0} \phi; \Lambda, \Lambda_0) .   \label{WL}
\end{eqnarray}
In terms of the generating functional for connected Greens functions $\overline{Z}(J;\Lambda, \Lambda_0 ) = - \ln \big(\overline{W}(\overline{J}; \Lambda, \Lambda_0)\big)$ this becomes
\begin{eqnarray}
{ L(\phi, \Lambda, \Lambda_0)} = {-\frac{1}{2} ( \phi, \Delta^{-1}_{\Lambda, \Lambda_0} \phi )} - \overline{Z}(\Delta^{-1}_{\Lambda, \Lambda_0} \phi; \Lambda, \Lambda_0).  \label{ZJe}
\end{eqnarray}
Hence, we have found an easy relation beween the effective potential $L(\phi, \Lambda, \Lambda_0)$ and the generating functional for connected Greens functions $\overline{Z}(J;\Lambda, \Lambda_0 )$.  Moreover, it is now clear how to overcome the potential troublesome condition ($\ref{Jvan}$) for the source  $J_\Lambda$: the important quantity we have to calculate is $L(\phi, 0, \Lambda_0)$. Once we have done this, we easliy obtain the physical (regularized) generating functional $\overline{Z}(J;0, \Lambda_0 )$ via eq. ($\ref{ZJe}$) and the substitution
\begin{eqnarray}
\phi \rightarrow \Delta_{0, \Lambda_0} J .   \label{phij}
\end{eqnarray}

\end{section}

\begin{section}{Renormalization of operator insertions}   \label{OI}
In this section, we will briefly discuss the renormalization of composite fields, in particular from the viewpoint of the flow equations. We will follow the treatment of Refs \cite{Mull} and \cite{Z-J} and adapt it to our work.

Let $\phi(x)$ be a scalar field in $d=4$ space-time dimensions. A composite field $\mathcal{O}(x)$ is a polynomial formed of the field  $\phi(x)$ and of its space-time derivatives. Typical examples are
\begin{eqnarray}
\mathcal{O}(x) = \phi^2(x), \  \phi(x) \partial^2 \phi(x), \ \phi^4(x) ... 
\end{eqnarray}
We denote the canonical dimension of the composite field by $D_{\mathcal{O}}$. 

In principle, correlation functions involving operator insertions (OI), like
\begin{eqnarray}
\langle \mathcal{O}(x), \phi(x_1), ..., \phi(x_n)  \rangle  , \label{Oex}
\end{eqnarray}
can be obtained from the field correlation functions by letting various points $x_i$ coincide. However, in momentum space this procedure amounts to additional integrations, and hence new divergences appear. For the case of \textit{one} operator insertion as in the example ($\ref{Oex}$), it turns out that in order to cancel the upcoming divergences, counterterms for all local operators (permitted by the field and symmetry content of the QFT) that have canonical dimensions $D \le D_{\mathcal{O}}$ must be added. 

Thus, for an insertion of one $\phi^2(x)$ we would have to add counterterms\footnote{If the QFT is symmetric under $\phi \rightarrow - \phi$, we do not need the counterterm for the operator $\phi(x)$.} for the operators $\phi^2(x)$, $\phi(x)$ and a field independent term, arriving at some bare operator insertion
\begin{eqnarray}
\mathcal{O}^0(x) = R^0_2 \phi^2(x) + R^0_1 \phi(x) + R^0_0 .   \label{exBOI}
\end{eqnarray}
We will restrict our considerations to the case of {one} operator insertion. To discuss the renormalization of a composite field with flow equations, let us introduce a bare extended effective potential
\begin{eqnarray}
\tilde{L}(\phi, \gamma, \Lambda_0) := L(\phi, \Lambda_0) + \int_x \gamma(x) \mathcal{O}(x, \Lambda_0) \label{barepotI}
\end{eqnarray}
where $L(\phi, \Lambda_0)$ is some bare effective potatial $\grave{a}$ la eq. ($\ref{barepot}$), and $\gamma(x)$ is a source that couples to the bare operator insertion $\mathcal{O}(x, \Lambda_0)$. Solving the Polchinski RGE ($\ref{polm}$) employing ($\ref{barepotI}$) as initial condition then yields the running potential
\begin{eqnarray}
\tilde{L}(\phi, \gamma, \Lambda, \Lambda_0) .
\end{eqnarray}
Recall that the derivation of the Polchinski equation does {not} depend on translational invariance of the effective potential \cite{Gerhard}.

We now may define the generating functional for vertex functions with one operator insertion:
\begin{equation}
L_{(1)}(\phi, x, \Lambda, \Lambda_0) :=  \frac{\delta }{\delta \gamma(x)}  \tilde{L}(\phi, \gamma, \Lambda, \Lambda_0)|_{\gamma=0} .   \label{L1ps}
\end{equation}
In the following, we will work in momentum space. The Fourier transform of ($\ref{L1ps}$) is defined as
\begin{eqnarray}
L_{(1)}(\phi, q, \Lambda, \Lambda_0) := \int_x e^{i q x} L_{(1)}(\phi, x, \Lambda, \Lambda_0)  \label{L1ms}.
\end{eqnarray}
The functionals ($\ref{L1ps}$), ($\ref{L1ms}$) obey a RGE linear in $L_{(1)}$, as can be seen by functional differentiation of the  RGE ($\ref{polm}$) for $\tilde{L}(\phi, \gamma, \Lambda, \Lambda_0) $ with respect to $\gamma$:
\begin{eqnarray}
- \Lambda \frac{d}{d \Lambda} L  = \frac{1}{2} \int d^4k   (2 \pi)^4  \Lambda \frac{d}{d \Lambda} \Delta_\Lambda  \left(\frac{\delta L}{\delta \phi(k)} \frac{\delta L_{(1)}}{\delta \phi(-k)}   + \frac{\delta^2 L_{(1)}}{\delta \phi(k) \delta\phi(-k)} + A \delta(k) \frac{\delta L_{(1)}}{\delta \phi(k)}    \right) \nonumber  . \\ \label{polL1}
\end{eqnarray}
The (momentum-space) vertex functions with one operator insertion are the coefficients $L_{(1) n} $ of an expansion of ($\ref{L1ms}$) in powers of the fields $\phi$,
\begin{eqnarray}
L_{(1)}(\phi,q, \Lambda,\Lambda_0) = \sum_{n=1}^{\infty} \frac{1}{n!} \int \frac{d^4 k_1 ... d^4 k_{n}}{(2 \pi )^{4n-4}} L_{(1) n} (q, k_1,...,k_{n}, \Lambda) \delta^4 \big( q+ \sum_i k_i \big) \phi(k_1)... \phi(k_{n}) . \nonumber \\   \label{LexpI}
\end{eqnarray}
Let us now do some dimensional analysis. The canonical dimension of the bare position-space operator insertion $\mathcal{O}(x, \Lambda_0)$ has been denoted by $D_{\mathcal{O}}$, and from eq. ($\ref{L1ps}$) follows that we also have 
\begin{eqnarray}
[L_{(1)}(\phi, x, \Lambda, \Lambda_0) ] = D_{\mathcal{O}}.
\end{eqnarray}
The Fourier transformed functional ($\ref{L1ms}$) then has canonical dimension
\begin{eqnarray}
[L_{(1)}(\phi, q, \Lambda, \Lambda_0) ] = D_{\mathcal{O}}-4.
\end{eqnarray}
As is  discussed in Appendix ($\ref{L1ms}$), a propagator that goes like $1/k^2$ implies canonical dimension $D_{\phi(k)}=-3$ of the (momentum-space) field $\phi(k)$. Thus eq. ($\ref{LexpI}$) tells us that the canonical dimension of the momentum-space vertex functions with one operator insertion equals
\vspace{-0.5ex}
\begin{equation}
D_{ L_{(1) n} } = D_{\mathcal{O}} -n.     \label{DLnOI}
\end{equation}
Compare this to the dimension $D_{L_n}=4-n$ of the vertex functions without operator insertions introduced in section ($\ref{RGIs}$). 

By Taylor expanding the vertex functions $L_{(1) n} (q, k_1,...,k_{n}, \Lambda)$ around $q=k_i=0$ (see eqns. ($\ref{TE1}$), ($\ref{TE2}$)) we may define running coupling constants $R_i$:
\begin{eqnarray} 
R_0(\Lambda) &:=& L_{(1) 0}(0,\Lambda) \label{rcc0OI} \\
R_1(\Lambda) &:=& L_{(1) 1}(0,0, \Lambda) \label{rcc1OI} \\
R_2(\Lambda) &:=& L_{(1) 2}(0,0,0,\Lambda) \\
R_3(\Lambda) \ \delta^{\mu \nu} &:=& \partial^\mu_{q,1} \partial^\nu_{q,1} L_{(1)1}(q,k_1,\Lambda)|_{q=k_1=0}  \label{rcc3OI} \\
& \vdots &   \nonumber
\end{eqnarray}
In the following, we will distinguish between renormalizable couplings $R_a$ and nonrenormalizable couplings $R_n$ by their canonical dimensions:
\vspace{-0.4ex}
\begin{eqnarray} 
{R}_a: && D_{R_a} \ge 0 \\
{R}_n: && D_{R_l} < 0. 
\end{eqnarray}
Note that the canonical dimensions $D_{R_i}$ follow from the definitions ($\ref{rcc0OI}$)- ($\ref{rcc3OI}$). We introduce renormalization conditions for the renormalizable couplings $R_a$
\begin{equation} 
R_a^R := R_a(\Lambda_R), \ \ \ D_{R_a} \ge 0 \label{rccOI}
\end{equation}
where $\Lambda_R < \Lambda_0$ is some renormalization scale. This is of course related to our discussion of counterterms for operator insertions at the beginning of this section. 

Finally, initial conditions for the \textit{least} irrelevant couplings $R_l^0$ (having canonical dimensions\footnote{For simplicity, we assume that the field and symmetry content of the theory is such that the least irrelevant couplings indeed have $D_{R_l} = -1$.} $D_{R_l} = -1$) are specified at the bare scale:
\begin{equation} 
R_l^0 := R_l(\Lambda_0), \ \ \ D_{R_l} = -1.  \label{iniOI}
\end{equation}
Introducing the dimensionless coupling constants
\begin{eqnarray} 
\mathcal{R}_a^R (\Lambda) &=& \Lambda^{-D_{R_a}} R_a^R  \label{ROI1} \\
\mathcal{R}_l^0 (\Lambda) &=& \Lambda \ R_l^0  \label{ROI2}
\end{eqnarray}
we impose as an additional constraint to the renormalization and initial conditions that the dimensionless couplings be small on scales $\Lambda_R \le \Lambda \le \Lambda_0$:
\begin{eqnarray} 
\mathcal{R}_a^R (\Lambda) &\le& 1 \\ 
\mathcal{R}_l^0 (\Lambda)   &\le& 1  . \label{smallOI}
\end{eqnarray}
In the following, we will employ the notations of chapters ($\ref{RenFlow}$) and ($\ref{EffFlow}$). Expanding the vertex functions $L_{(1) n}$
in the renormalized renormalizable couplings $R_a^R$, $\rho_4^R$ and $\rho_5^R$ introduced in eqns. ($\ref{rccOI}$), ($\ref{c3}$) and ($\ref{c4}$) and in the bare nonrenormalizable couplings $R_l^0$ and $\rho_{\tilde{a}}^0, \  \tilde{a} =6...8$, of eqns.  ($\ref{iniOI}$)  and ($\ref{iniCNR}$) yields
\begin{eqnarray}
L_{(1)n} (q,k_1,...,k_{n}, \Lambda) & = & \sum_{r_1,...,r_{5}=0}^{\infty} \ \sum_{i}  R_i^{R/0} \ (\rho_4^R)^{r_1} (\rho_5^R)^{r_2} \nonumber \\ &&  \hspace{2.5cm} (\rho_{6}^0)^{r_3}  (\rho_{7}^0)^{r_4}  (\rho_{8}^0)^{r_5}  L_{(1)n}^{(i, r_1,...,r_5)} (q,k_1,...,k_{n}, \Lambda) \nonumber  \\  \label{GenPertI}
\end{eqnarray}
where $R_i^{R/0}$ means $R_a^R$ or $R_l^0$. Hence, because of the operator insertion each graph contributing to ($\ref{GenPertI}$) contains \textit{one} extra vertex associated with a renormalized coupling constant ($\ref{rccOI}$) \textit{or} a bare one ($\ref{iniOI}$). To distiguish between the  two possibilities, we introduce the symbols $\Theta_i^R$ and  $\Theta_i^0$:
\begin{eqnarray}
D_{R_i} \ge 0: && \Theta_i^R:=1 \  \wedge \  \Theta_i^0:=0  \\ D_{R_i} =-1:&&   \Theta_i^R:=0  \  \wedge \  \Theta_i^0:=1
\end{eqnarray}
for a graph $L_{(1)n}^{(i, r_1,...,r_5)}$ associated with a coupling $R_i^{R/0}$. Employing dimensionless vertex functions
\begin{eqnarray}
A_{(1)n}^{(i,r_1, ..., r_5)} (q,k_1,...,k_{n}, \Lambda) := \Lambda^{n-D_{\mathcal{O}} +r_1 + \Theta_i^R D_{R_i}- (r_3 +r_4 +r_5 + \Theta_i^0 )}  L^{(i, r_1, ..., r_5)}_{(1)n} (q,k_1,...,k_{n}, \Lambda) \nonumber \\  \label{Apert2I_0}.
\end{eqnarray}
and the dimensionless couplings ($\ref{ROI1}$), ($\ref{dCR}$), ($\ref{ROI2}$) and ($\ref{dCNR}$), eq. ($\ref{GenPertI}$) becomes 
\begin{eqnarray}
A_{(1)n} (q,k_1,...,k_{n}, \Lambda) &=& \sum_{r_1, ...,r_{5}=0}^{\infty} \ \sum_{i} \mathcal{R}_i^{R/0} \ (\lambda_4^R)^{r_1} (\lambda_5^R)^{r_2} \nonumber \\  &&  \hspace{2.5cm}  (\lambda_{6}^0)^{r_3} (\lambda_{7}^0)^{r_4} (\lambda_{8}^0)^{r_5}  A_{(1)n}^{(i, r_1,...,r_5)} (q,k_1,...,k_{n}, \Lambda). \nonumber \\ \label{Apert2I}
\end{eqnarray}
We may now reformulate the RGE ($\ref{polL1}$) in terms of the dimensionless vertex functions ($\ref{Apert2I}$). Applying the bounds ($\ref{q1}$) and ($\ref{q2}$) as well as the condition ($\ref{smallg}$) for the renormalization constant $A$ we arrive at the RG inequality (RGI)
\begin{eqnarray} 
&& \hspace{-0.7cm} \lVert \frac{d}{d \Lambda} \Lambda^{D_{\mathcal{O}}-n-r_1 -\Theta_i^R D_{R_i} + r_{NR} + \Theta_i^0} \ \partial^p A^{(i, r_1,..., r_5)}_{(1)n}(\Lambda)  \rVert \nonumber \\   && \qquad \qquad \le   c_{n,p} \ \Lambda^{D_{\mathcal{O}}-1-n-r_1-\Theta_i^R D_{R_i} + r_{NR} + \Theta_i^0 } \Bigg( \lVert \partial^p A^{(i,r_1,..., r_5)}_{(1)n+2}(\Lambda) \rVert + \lVert \partial^p A^{(i,r_1,..., r_5)}_{(1)n+1}(\Lambda) \rVert \nonumber \\ && \qquad \qquad \qquad  \qquad \qquad +  2 \sum_{...} \Lambda^{-p_1} \lVert \partial^{p_2} A_{l}^{(s_1,..., s_5)}(\Lambda) \rVert \lVert \partial^{p_3} A^{(i,r_1-s_1,..., r_5-s_5)}_{(1)n+2-l}(\Lambda) \rVert \Bigg) \nonumber \\ \label{RGI1_NRI}
\end{eqnarray} 
where $\sum_{...}$ is defined as in eq. ($\ref{sumabb}$). Observe that the RGI ($\ref{RGI1_NRI}$) is again linear in $A_{(1) n}$.

Eq. ($\ref{RGI1_NRI}$) is the analogon to the key RG inequality ($\ref{RGI1_NR}$). We may therefore proceed as in chapter ($\ref{EffFlow}$) and establish the boundedness of vertex functions with one operator insertion. Since the RGI ($\ref{RGI1_NRI}$) involves also vertex functions without operator insertion, the bounds ($\ref{BoundNR}$) that have been established in Theorem ($\ref{BoundThII}$) will have to be employed. We will just state the Theorem without proving it:

\begin{satz}[Boundedness of Vertex Functions with OI] \label{BoundThOI}
Given Theorem ($\ref{BoundThII}$), the renormalization conditions ($\ref{rccOI}$), the initial conditions ($\ref{iniOI}$) and assuming that
\begin{eqnarray}
||\partial^p A_{(1)n}^{(i, r_1,..., r_5)}(q, p_1,...,p_{n}, \Lambda_0)|| \le \Lambda_0^{-p}  \left( \frac{\Lambda_0}{\Lambda_R} \right)^{r_1+ \Theta_i^R D_{R_i}} Pln \left( \frac{\Lambda_0}{\Lambda_R} \right) \label{iniNRI}
\end{eqnarray} 
for $n+p \ge D_{\mathcal{O}} + 2$, to order $r_1,..., r_5$ in perturbation theory in the couplings $\lambda_4^R$, $\lambda_5^R$, $\lambda_{6}^0$,  $\lambda_{7}^0$ and $\lambda_{8}^0$
\begin{eqnarray}
&& ||\partial^p A_{(1)n}^{(i,r_1,..., r_5)}(p_1,...,p_{n}, \Lambda)|| \nonumber \\ &&  \qquad \qquad   \le \Lambda^{-p} \left( \frac{\Lambda}{\Lambda_R} \right)^{r_1+\Theta_i^R D_{R_i}} \left( \frac{\Lambda_0}{\Lambda} \right)^{r_{NR} + \Theta_i^0}   \left( \Theta_i^R \ Pln\left( \frac{\Lambda}{\Lambda_R} \right) +  \frac{\Lambda}{\Lambda_0}  Pln \left( \frac{\Lambda_0}{\Lambda_R} \right) \right) \nonumber \\ \label{BoundNRI}
\end{eqnarray}
where the index $i$ refers to an extra vertex associated with a renormalized coupling ${R}_i^R$ having canonical dimension $D_{R_i} \ge 0$ ($\Theta_i^R=1$, $\Theta_i^0=0$) or a bare one ${R}_i^0$ having canonical dimension $D_{R_i} = -1$ ($\Theta_i^0=1$, $\Theta_i^R=0$), and $\Lambda_R \le \Lambda \le \Lambda_0$.
\end{satz}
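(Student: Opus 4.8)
The plan is to mirror the proof of Theorem~(\ref{BoundThII}), replacing the key renormalization group inequality (\ref{RGI1_NR}) by its operator-insertion analogue (\ref{RGI1_NRI}), and using Theorem~(\ref{BoundThII}) itself to bound the vertex functions \emph{without} insertion that appear on the right-hand side of (\ref{RGI1_NRI}). The induction will again be carried out simultaneously in the overall order $r=r_1+\dots+r_5$ of perturbation theory and in the number of external legs $n$, exactly as in the proof of Theorem~(\ref{BoundTh}); this is possible because the RGI (\ref{RGI1_NRI}) is linear in $A_{(1)n}$, so on its right-hand side only (i) vertex functions $A_{(1)m}^{(i,\cdot)}$ with $m>n$ at the same order $r$, (ii) vertex functions $A_{(1)m}^{(i,\cdot)}$ at strictly lower order, and (iii) pure vertex functions $A_l^{(\cdot)}$ already controlled by Theorem~(\ref{BoundThII}), ever occur. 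The extra-vertex index $i$ is a spectator throughout: it is fixed once and for all, and the bookkeeping factor $\Lambda^{\Theta_i^R D_{R_i}}$ (resp.\ the factor $\Lambda_0/\Lambda$ carried by $\Theta_i^0$) simply rides along in every estimate, so the whole argument splits cleanly into the case $\Theta_i^R=1$ (renormalizable extra vertex, treated like a renormalizable coupling with a renormalization condition at $\Lambda_R$) and the case $\Theta_i^0=1$ (nonrenormalizable extra vertex, treated like a bare nonrenormalizable coupling specified at $\Lambda_0$).

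Concretely, I would first integrate (\ref{RGI1_NRI}) down from $\Lambda_0$ to $\Lambda$ to obtain the analogue of (\ref{RGI2}), and separately up from $\Lambda_R$ to $\Lambda$ to obtain the analogue of (\ref{RGI4}), then differentiate the down-integrated version with respect to $\Lambda_0$ to get the analogue of (\ref{RGI3}) — all of which the excerpt already invites the reader to do. The induction step then divides into the three regimes exactly as in Theorem~(\ref{BoundTh}): for $n+p\ge D_{\mathcal O}+2$ one uses the analogue of (\ref{RGI2}), inserting (\ref{iniNRI}) as the datum at $\Lambda_0$ and (\ref{BoundNRI}) as the induction hypothesis, and the $s$-integrals are evaluated with Lemma~(\ref{l}); the crossing case $n+p=D_{\mathcal O}+1$ (with $D_{\mathcal O}=5$ here, so $n+p=6$, matching the threshold in (\ref{iniNRIG}) in the gravity application) is handled the same way, taking care that the exponent $D_{\mathcal O}-1-n-p$ sits in the $m=-1$ branch of Lemma~(\ref{l}) and produces an extra logarithm rather than a power; and for $n+p\le D_{\mathcal O}$ one uses the up-integrated analogue of (\ref{RGI4}), where the initial data at $\Lambda_R$ are the renormalization conditions (\ref{rccOI}) for the $\Theta_i^R=1$ couplings $R_a$ and vanish for everything else at order $r_{NR}>0$, after which Taylor's theorem (\ref{TE1})/(\ref{TE2}) reconstructs $A_{(1)n}^{(i,\cdot)}(q,k_1,\dots,k_n,\Lambda)$ from its expansion coefficients around $q=k_i=0$, the remainder term always falling into the already-treated high-$(n+p)$ regime. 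The cases $\Theta_i^0=1$ and, within the low-$(n+p)$ regime, the improvement-type couplings $R_l$ with $D_{R_l}=-1$ are dealt with precisely as the bare nonrenormalizable couplings $\lambda_{\tilde a}^0$ were in Theorem~(\ref{BoundThII}).

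The main obstacle — and it is more a matter of careful bookkeeping than of genuine difficulty — is keeping the exponents straight. The dimensionless insertion vertex functions carry the shifted canonical dimension $D_{(1)n}=D_{\mathcal O}-n$ from (\ref{DLnOI}) rather than $4-n$, there is the additional weight $\Lambda^{-\Theta_i^R D_{R_i}+\Theta_i^0}$ from the single extra vertex, and the product term in (\ref{RGI1_NRI}) pairs an insertion factor with a pure-vertex factor whose bound (\ref{BoundNR}) contributes its own $(\Lambda_0/\Lambda)^{r_{NR}}$ and polynomial-in-log growth; one must check that the powers of $\Lambda$ and $\Lambda_0$ combine so that the $s$-integrals land in the right branch of Lemma~(\ref{l}) and reproduce exactly the claimed bound (\ref{BoundNRI}) — in particular that no spurious positive power of $\Lambda/\Lambda_R$ beyond $r_1+\Theta_i^R D_{R_i}$ is generated and that the $\Theta_i^0$ case yields precisely one factor $\Lambda_0/\Lambda$ and not more. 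Once the exponent arithmetic is verified at the three representative values of $n+p$, the induction closes for all $n$, and the proof is complete; the parallel statement for $A_{\tau n}$-type insertions follows by specializing to $\Theta_i^R=0$ and dropping the renormalization conditions (\ref{rccOI}).
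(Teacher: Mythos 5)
Your proposal is correct and follows exactly the route the paper indicates: the paper in fact omits the proof of Theorem (\ref{BoundThOI}) entirely (``We will just state the Theorem without proving it''), merely pointing to the linear RGI (\ref{RGI1_NRI}) together with the bounds of Theorem (\ref{BoundThII}) for the non-inserted factors, which is precisely the strategy you flesh out. Your handling of the three $n+p$ regimes, the spectator role of the extra-vertex index $i$, and the exponent bookkeeping for $\Theta_i^R$ versus $\Theta_i^0$ are all consistent with the analogous proofs of Theorems (\ref{BoundTh}) and (\ref{BoundThII}).
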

Let us introduce \textit{improvement conditions} for the least irrelevant couplings $R_l$:
\begin{eqnarray} 
R_l^{NR} := R_l(\Lambda_R), \ \ \ D_{R_l} = -1.  \label{ImproOI}
\end{eqnarray}
It is assumed that the improvement conditions ($\ref{ImproOI}$) are taken such that they are compatible\footnote{In analogy to the discussion of section ($\ref{InvSec}$), this will amount to the requirement that the deviations of the improvement conditions ($\ref{ImproOI}$) from the values the $R_l(\Lambda_R), \ D_{R_l} = -1$, take for vanishing initial conditions ($\ref{iniOI}$) be small. See Theorem ($\ref{invTh}$) for details.} with small initial conditions ($\ref{iniOI}$), where small is meant in the sense of eq. ($\ref{smallOI}$).

In order to investigate the predictivity of an effective field theory with one operator insertion, we proceed as in section ($\ref{Presec}$). Hence, the bare vertex functions are parametrized:
\begin{eqnarray}
 t \ \partial^p A_{(1)n}^{(i,r_1, ..., r_5)}( \Lambda_0), \ \ \ t \in [0,1], \ \ \ n+p \ge D_{\mathcal{O}} + 2    \label{shapeANRI}
\end{eqnarray}
where $\partial^p A_{(1)n}^{(i,r_1, ..., r_5)}(\Lambda_0)$ are small $\grave{a}$ la eq. ($\ref{iniNRI}$). This leads to running vertex functions depending on the paramenter $t$,
\begin{eqnarray}
 \partial^p A_{(1)n}^{(i,r_1, ..., r_5)}(q,p_1,...,p_{n}, \Lambda, \Lambda_0, t).
\end{eqnarray}
We now state the Theorem concerning the predictivity of effective field theories with one operator insertion. It is the analogon to Theorem ($\ref{PreTh}$).

\begin{satz}[Predictivity of Effective Field Theories with OI] \label{PreThI} Let there be renormalization conditions  ($\ref{rccOI}$) and improvement conditions ($\ref{ImproOI}$). Assume that to order $r_1, ..., r_5$ in perturbation theory in $\lambda_4^R$, $\lambda_5^R$, $\lambda_{6}^0$,  $\lambda_{7}^0$ and $\lambda_{8}^0$
\begin{equation}
||\partial^p A_{(1)n}^{(i, r_1,..., r_5)}(q, p_1,...,p_{n}, \Lambda)|| \le \Lambda^{-p}  \left( \frac{\Lambda}{\Lambda_R} \right)^{r_1+\Theta_i^R D_{R_i}}  \left( \frac{\Lambda_0}{\Lambda} \right)^{r_{NR}+ \Theta_i^0} Pln \left( \frac{\Lambda_0}{\Lambda_R} \right), \label{BoundWPI}
\end{equation}
and that for $n+p \ge D_{\mathcal{O}} + 2$
\begin{eqnarray}
|| \frac{d}{d t} \partial^p A_{(1)n}^{(i,r_1,..., r_5)}(q, p_1,...,p_{n}, \Lambda_0)|| \le \Lambda_0^{-p}  \left( \frac{\Lambda_0}{\Lambda_R} \right)^{r_1+\Theta_i^R D_{R_i}} Pln \left( \frac{\Lambda_0}{\Lambda_R} \right) . \label{ini2PI}
\end{eqnarray}
Given Theorems ($\ref{BoundThII}$), ($\ref{BoundThOI}$) and ($\ref{PreTh}$) we then have 
\begin{eqnarray}
|| \frac{d}{d t} \partial^p A_{(1)n}^{(i, r_1,..., r_5)}(q, p_1,...,p_{n}, \Lambda)|| \le \Lambda^{-p}  \left( \frac{\Lambda}{\Lambda_R} \right)^{r_1+\Theta_i^R D_{R_i}}  \left( \frac{\Lambda_0}{\Lambda} \right)^{r_{NR}+ \Theta_i^0} \left(\frac{\Lambda}{\Lambda_0}\right)^{2} Pln \left( \frac{\Lambda_0}{\Lambda_R} \right) \nonumber \\ \label{PreI}
\end{eqnarray}
where the index $i$ refers to an extra vertex associated with a renormalized coupling ${R}_i^R$ having canonical dimension $D_{R_i} \ge 0$ ($\Theta_i^R=1$, $\Theta_i^0=0$) or a bare one ${R}_i^0$ having canonical dimension $D_{R_i} = -1$ ($\Theta_i^0=1$, $\Theta_i^R=0$), and $\Lambda_R \le \Lambda \le \Lambda_0$.
\end{satz}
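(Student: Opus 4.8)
The plan is to transcribe, almost line for line, the proof of Theorem (\ref{PreTh}), replacing the quadratic renormalization group inequality (\ref{RGI1_NR}) by its linear operator-insertion analog (\ref{RGI1_NRI}). First I would note that the hypothesis (\ref{BoundWPI}) is a weakened form of the bound (\ref{BoundNRI}) already established in Theorem (\ref{BoundThOI}), so it is available on the right-hand sides below; the bounds on the insertion-free vertex functions $\lVert\partial^{p_2}A_l^{(s_1,\dots,s_5)}(\Lambda)\rVert$ and on their $t$-derivatives are supplied by Theorems (\ref{BoundThII}) and (\ref{PreTh}). The induction runs, as in Theorem (\ref{BoundTh}), in the overall order $r=r_1+\dots+r_5$ and, for fixed $r$, downward in the number of external legs $n$; the induction start consists of the set $r=0$ (where every $A_{(1)n}^{(i,r_1,\dots,r_5)}$ vanishes, being proportional to one of the couplings $R_i^{R/0}$) together with the set where $n$ exceeds the maximal leg number occurring at order $r$, which grows linearly in $r$. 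The scheme is consistent because in (\ref{RGI1_NRI}) every product term carries one factor $A_l$ of perturbative order $s$ with $1\le s\le r-1$, hence an accompanying insertion factor $A_{(1)}$ of order $r-s<r$, while the two non-product terms $A_{(1)n+2}$ and $A_{(1)n+1}$ occur at the same order $r$ but with strictly more legs.

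Next I would assemble the two integrated inequalities. Integrating (\ref{RGI1_NRI}) down from $\Lambda_0$ to $\Lambda$ and differentiating in the parameter $t$ of (\ref{shapeANRI}) produces an analog of (\ref{RGI3U}), in which --- exactly as there --- the boundary term from differentiating the upper integration limit is absent because the $t$-dependence lives only in the bare data at $\Lambda_0$. Integrating (\ref{RGI1_NRI}) up from $\Lambda_R$ to $\Lambda$ and differentiating in $t$ gives an analog of (\ref{RGI5U}); here the boundary term at $\Lambda_R$ is $\lVert\frac{d}{dt}\,\partial^p A_{(1)n}^{(i,r_1,\dots,r_5)}(\Lambda_R)\rVert$, which vanishes for precisely those Taylor coefficients (\ref{rcc0OI})--(\ref{rcc3OI}) that carry a renormalization condition (\ref{rccOI}) or an improvement condition (\ref{ImproOI}), since such couplings are held fixed at $\Lambda=\Lambda_R$ and thus do not depend on $t$.

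The induction step then splits as usual. For $n+p\ge D_{\mathcal O}+2$ I would use the down-inequality, insert (\ref{ini2PI}) as the datum at $\Lambda_0$, (\ref{PreI}) as the induction hypothesis and (\ref{BoundWPI}) in the product terms, and carry out the $s$-integrations with Lemma (\ref{l}), which returns the factor $(\Lambda/\Lambda_0)^2\,Pln(\Lambda_0/\Lambda_R)$ and hence (\ref{PreI}); the exponents $\Theta_i^R D_{R_i}$ and $\Theta_i^0$ ride through the integrations as inert spectators, precisely as $r_1$ and $r_{NR}$ do in Theorem (\ref{PreTh}). For $n+p\le D_{\mathcal O}+1$ I would use the up-inequality, whose boundary terms at $\Lambda_R$ vanish by the previous paragraph, solve the integrals again with Lemma (\ref{l}), and reconstruct the full $\frac{d}{dt}\,\partial^p A_{(1)n}^{(i,r_1,\dots,r_5)}$ from its low-order Taylor coefficients via (\ref{TE1}), (\ref{TE2}), the remainder terms falling under the already-treated case. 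Integrating (\ref{PreI}) over $t\in[0,1]$ and using the triangle inequality then yields the stated indetermination, exactly as in the passage from (\ref{Pre}) to (\ref{PreF}).

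The step I expect to be delicate is not analytic but combinatorial: verifying that the set of Taylor coefficients of $L_{(1)n}$ falling in the range $n+p\le D_{\mathcal O}+1$ coincides exactly with the couplings $R_i$ for which renormalization conditions (\ref{rccOI}) or improvement conditions (\ref{ImproOI}) have been imposed, so that every boundary term at $\Lambda_R$ in the up-inequality genuinely vanishes; and checking that the power-counting weight $\Lambda^{D_{\mathcal O}-n-r_1-\Theta_i^R D_{R_i}+r_{NR}+\Theta_i^0}$ of (\ref{RGI1_NRI}) is exactly the exponent that makes each side scale homogeneously, so that Lemma (\ref{l}) produces the $(\Lambda/\Lambda_0)^2$ suppression and not a weaker one. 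Everything else is a transcription of the arguments already carried out for Theorems (\ref{BoundTh}), (\ref{ConvTh}) and (\ref{PreTh}).
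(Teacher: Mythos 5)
Your proposal follows exactly the route the paper intends: the same double induction in $r$ and $n$, the down-integrated and $t$-differentiated analogue of (\ref{RGI3U}) for $n+p\ge D_{\mathcal O}+2$, the up-integrated analogue of (\ref{RGI5U}) with vanishing boundary terms at $\Lambda_R$ for the Taylor coefficients covered by (\ref{rccOI}) and (\ref{ImproOI}), and the splitting of the Leibniz-differentiated product terms so that Theorems (\ref{BoundThII}), (\ref{BoundThOI}) and (\ref{PreTh}) supply the factors not covered by the induction hypothesis. The one claim that is wrong is your induction start: $A_{(1)n}^{(i,r_1,\dots,r_5)}$ does \emph{not} vanish at $r=0$, because in the expansion (\ref{Apert2I}) the coupling $\mathcal{R}_i^{R/0}$ has already been factored out in front, so the zeroth-order coefficient is the (nonvanishing) normalized tree-level insertion vertex, in contrast to $A_n^{(0,\dots,0)}=0$ for the insertion-free functions. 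This does not break the architecture --- at $r=0$ the product terms in (\ref{RGI1_NRI}) are absent since $A_l^{(0,\dots,0)}=0$, so the flow is linear and homogeneous in $A_{(1)}$ and the order $r=0$ is handled by the same two-case integration as every other order --- but the base of the induction must be taken as the vanishing of $A_{(1)n}^{(i,r_1,\dots,r_5)}$ for $n$ above the maximal leg number at each fixed $r$ (including $r=0$), not as the vanishing of the whole zeroth order.
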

Eq. ($\ref{PreI}$) can be converted into an equation similar to eq. ($\ref{PreF}$) and thus states the indetermination of the running vertex functions $A_{(1)n}(\Lambda)$ as a result of the ignorance about the exact initial values ($\ref{shapeANRI}$). Please refer to the discussion of section ($\ref{Presec}$) for more details.

Finally, we would also like to discuss the convergence of the vertex functions with operator insertion to a no-cutoff limit. To do so, we impose the renormalization conditions ($\ref{rccOI}$) and no improvement conditions. Furthermore, we expand the vertex functions $A_{(1)n}(\Lambda)$ in perturbation theory solely in the renormalizable couplings $\lambda_4^R$ and $\lambda_5^R$, which can also be seen as the $0th$ order in perturbation theory in the bare nonrenormalizable couplings $\lambda_{6}^0$,  $\lambda_{7}^0$ and $\lambda_{8}^0$. Finally, each contributing graph contains one extra vertex associated with a renormalized coupling constant ($\ref{rccOI}$), but \textit{not} with a bare nonrenormalizable one ($\ref{iniOI}$). We then obtain the following Theorem:

\begin{satz}[Convergence of Vertex Functions with OI] \label{ConvThI} Let there be renormalization conditions ($\ref{rccOI}$). Assume that to order $r_1, r_2$ in perturbation theory in $\lambda_4^R$ and $\lambda_5^R $ 
\begin{eqnarray}
||\partial^p A_{(1)n}^{(i, r_1, r_2)}(q, p_1,...,p_{n}, \Lambda)|| \le \Lambda^{-p}  \left( \frac{\Lambda}{\Lambda_R} \right)^{r_1+  D_{R_i}} Pln \left( \frac{\Lambda_0}{\Lambda_R} \right), \label{BoundWI}
\end{eqnarray}
and that for $n+p \ge D_{\mathcal{O}} + 1$
\begin{eqnarray}
|| \Lambda_0 \frac{d}{d \Lambda_0} \partial^p A_{(1)n}^{(i,r_1, r_2)}(q, p_1,...,p_{n}, \Lambda_0)|| \le \Lambda_0^{-p}  \left( \frac{\Lambda_0}{\Lambda_R} \right)^{r_1+ D_{R_i}} Pln \left( \frac{\Lambda_0}{\Lambda_R} \right) . \label{ini2I}
\end{eqnarray}
Given Theorems ($\ref{BoundTh}$), ($\ref{BoundThOI}$) and ($\ref{ConvTh}$) we then have
\begin{eqnarray}
||\Lambda_0 \frac{d}{d \Lambda_0} \partial^p A_{(1)n}^{(i,r_1, r_2)}(q, p_1,...,p_{n}, \Lambda)|| \le \Lambda^{-p}  \left( \frac{\Lambda}{\Lambda_R} \right)^{r_1+ D_{R_i}} \frac{\Lambda}{\Lambda_0} Pln \left( \frac{\Lambda_0}{\Lambda_R} \right)  \label{ConvI}
\end{eqnarray}
where the index $i$ refers to an extra vertex associated with a renormalized coupling ${R}_i^R$ having canonical dimension $D_{R_i} \ge 0$, and  $\Lambda_R \le \Lambda \le \Lambda_0$.
\end{satz}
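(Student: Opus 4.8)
The plan is to run, essentially verbatim, the argument of Theorem ($\ref{ConvTh}$), the only new ingredient being the extra vertex carried by the operator insertion, which shifts the canonical power counting by $D_{R_i}$ (here always with $\Theta_i^R=1$, since we expand only in $\lambda_4^R$, $\lambda_5^R$ and retain exactly one factor $\mathcal{R}_i^R$ per graph). First I would note that the hypothesis ($\ref{BoundWI}$) is nothing but the $r_{NR}=0$, $\Theta_i^0=0$ specialization of the bound ($\ref{BoundNRI}$) of Theorem ($\ref{BoundThOI}$), hence already available, and that ($\ref{ini2I}$) is the assumed bare initial datum. The governing inequality for $\Lambda_0\frac{d}{d\Lambda_0}\partial^p A_{(1)n}^{(i,r_1,r_2)}$ is obtained by taking the RGI ($\ref{RGI1_NRI}$) in the present case, integrating it over $\Lambda$, and differentiating the result with respect to $\Lambda_0$ exactly as ($\ref{RGI1}$) was turned into ($\ref{RGI3}$) and ($\ref{RGI5}$). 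Because ($\ref{RGI1_NRI}$) is \emph{linear} in $A_{(1)}$, the ``quadratic'' term on its RHS always consists of one factor $A_{l}^{(s_1,s_2)}$ \emph{without} insertion and one factor $A_{(1)n+2-l}$ \emph{with} insertion; the former is controlled by Theorems ($\ref{BoundTh}$) and ($\ref{ConvTh}$), so the equation for $\Lambda_0\frac{d}{d\Lambda_0}A_{(1)}$ is effectively linear and inhomogeneous with bounded, convergent coefficients.

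The induction is carried out simultaneously in the overall perturbative order $r=r_1+r_2$ and the number of external legs $n$, along the scheme of Theorem ($\ref{BoundTh}$): the terms on the RHS of ($\ref{RGI1_NRI}$) satisfy $r_{LHS}>r_{RHS}$, or $r_{LHS}=r_{RHS}$ with $n_{LHS}<n_{RHS}$, and $A_{(1)n}^{(0,0)}=0$. The induction step splits as usual. For $n+p\ge D_{\mathcal O}+1$ I would integrate the RGI \emph{down} from $\Lambda_0$ to $\Lambda$, insert ($\ref{BoundWI}$) and ($\ref{ini2I}$) at $\Lambda_0$ and ($\ref{ConvI}$) as induction hypothesis, and evaluate the $s$-integrals with Lemma ($\ref{l}$); the surplus power of $\Lambda_0^{-1}$ produced by $\frac{d}{d\Lambda_0}$, together with the convergence factor $\Lambda/\Lambda_0$ in the hypothesis, reproduces ($\ref{ConvI}$), the extra exponent $D_{R_i}$ riding along unchanged. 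For $n+p\le D_{\mathcal O}$ I would integrate \emph{up} from $\Lambda_R$ to $\Lambda$; here the renormalization conditions ($\ref{rccOI}$) are the initial data at $\Lambda_R$, and since these couplings are held \emph{fixed} there, $\frac{d}{d\Lambda_0}R_a(\Lambda_R)=0$ kills the boundary term. The remaining momentum dependence of $A_{(1)n}^{(i,r_1,r_2)}(q,k_1,\dots,k_n,\Lambda)$ is reconstructed from the Taylor expansions ($\ref{TE0}$), ($\ref{TE1}$) (now also around $q=0$), whose remainder terms lie in the already-treated range $n+p\ge D_{\mathcal O}+1$. This closes the induction and establishes ($\ref{ConvI}$).

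Finally, integrating ($\ref{ConvI}$) over $\Lambda_0$ once more with Lemma ($\ref{l}$) yields a Cauchy estimate of the form $\lVert A_{(1)n}^{(i,r_1,r_2)}(\Lambda,\Lambda_0')-A_{(1)n}^{(i,r_1,r_2)}(\Lambda,\Lambda_0'')\rVert$ bounded by a $(\Lambda/\Lambda_R)^{r_1+D_{R_i}}$-prefactor times a difference of $\Lambda/\Lambda_0$-type terms, hence the existence of the no-cutoff limit $A_{(1)n}^{cont\,(i,r_1,r_2)}(\Lambda):=\lim_{\Lambda_0\to\infty}A_{(1)n}^{(i,r_1,r_2)}(\Lambda,\Lambda_0)$, with rate of convergence read off directly from ($\ref{ConvI}$).

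The main obstacle is purely the bookkeeping of the shifted power counting: one must verify that the exponent $r_1+D_{R_i}$ attached to $\Lambda/\Lambda_R$ propagates consistently through every term of ($\ref{RGI1_NRI}$) — in particular that the single insertion vertex is never duplicated (which is exactly why the linearity of the flow equation for $L_{(1)}$ is essential) and that, in the relevant range $n+p\le D_{\mathcal O}$, the $n,p$ values dictated by the definitions ($\ref{rcc0OI}$)--($\ref{rcc3OI}$) of the couplings $R_i$ are precisely those for which the $\Lambda_0$-derivative of the renormalization condition vanishes. Beyond that, the proof is a verbatim transcription of the scalar argument of Theorem ($\ref{ConvTh}$), requiring no analytic input other than Lemma ($\ref{l}$) and the previously established Theorems ($\ref{BoundTh}$), ($\ref{BoundThOI}$) and ($\ref{ConvTh}$).
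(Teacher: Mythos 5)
Your proposal is correct and coincides with the argument the paper intends: Theorem (\ref{ConvThI}) is stated without an explicit proof, the intended derivation being exactly the verbatim transcription of the proof of Theorem (\ref{ConvTh}) to the linear insertion RGI (\ref{RGI1_NRI}) — integrate down/up, differentiate in $\Lambda_0$, use the renormalization conditions (\ref{rccOI}) as $\Lambda_0$-independent data at $\Lambda_R$ for $n+p\le D_{\mathcal{O}}$ and the bare data for $n+p\ge D_{\mathcal{O}}+1$, with Theorems (\ref{BoundTh}), (\ref{BoundThOI}) and (\ref{ConvTh}) controlling the non-inserted factors in the bilinear term — which is precisely what you carry out. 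The only cosmetic imprecision is the claim $A_{(1)n}^{(0,0)}=0$: the lowest-order inserted vertex function is the tree-level insertion itself, which is nonzero but $\Lambda_0$-independent because its extra coupling $R_i^R$ is fixed at $\Lambda_R$, so the induction start for the $\Lambda_0$-derivative still holds.
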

It will turn out that we will also have to consider space-time integrated operator insertions
\begin{eqnarray}
L_{(1)}(\phi, \Lambda_0):= \int_x \mathcal{O}(x, \Lambda_0),   \label{SIOI}
\end{eqnarray}
leading to a bare extended effective potential
\begin{eqnarray}
{L}_\chi(\phi, \Lambda_0) := L(\phi, \Lambda_0) + \chi L_{(1)}(\phi, \Lambda_0) \label{barepotI2}
\end{eqnarray}
with $\chi \in \mathbb{R}$. For the corresponding running potential ${L}_ \chi(\phi,\Lambda, \Lambda_0)$ we find
\begin{eqnarray}
L_{(1)}(\phi, \Lambda, \Lambda_0) := \frac{d}{d \chi} {L}_\chi(\phi,\Lambda, \Lambda_0)|_{\chi=0} &=& \int_x L_{(1)}(x, \phi, \Lambda, \Lambda_0) \nonumber \\ &=& L_{(1)}(q, \phi, \Lambda, \Lambda_0)|_{q=0} 
\end{eqnarray}
where $ L_{(1)}(x, \phi, \Lambda, \Lambda_0)$ and $L_{(1)}(q, \phi, \Lambda, \Lambda_0)$ have been defined in eqns. ($\ref{L1ps}$) and ($\ref{L1ms}$) respectively. It is therefore clear that the generating functional $L_{(1)}(\phi, \Lambda, \Lambda_0)$ obeys a space-time integrated analogon of the RGE ($\ref{polL1}$), and that the vertex functions
\begin{eqnarray}
\delta^4(k_1 + ... + k_n ) L_{(1)n}(k_1, ..., k_n, \Lambda) = (2 \pi )^{4n} \frac{\delta}{\delta \phi(k_1)} ... \frac{\delta}{\delta \phi(k_n)}  L_{(1)}(\phi, \Lambda) \big|_{\phi=0}    \label{LexpGI}
\end{eqnarray}
again have canonical dimension
\begin{eqnarray}
D_{ L_{(1) n} } = D_{\mathcal{O}} -n.
\end{eqnarray}
We may therefore proceed for space-time integrated operator insertions as we have done in Theorems ($\ref{BoundThOI}$), ($\ref{PreThI}$) and ($\ref{ConvThI}$). In particular, we will somewhat sloppily talk about the canonical dimension of the insertion $L_{(1)}$ as given by $D_{\mathcal{O}}$.

\end{section}

\begin{section}[Conventions for the gravity action]{Conventions for the gravity action and the sign of the cosmological constant}  \label{Conv}
For the metric signature $(-1,+1,+1,+1)$ and the definitions
\begin{eqnarray}
R_{\mu \nu} &=& R^{\alpha}_{\ \mu \alpha \nu} \\
R &=& R^\mu_{\ \mu} 
\end{eqnarray}
of the Ricci tensor and the curvature scalar, the  action for gravity with a cosmological constant coupled to matter is \cite{Carroll} \cite{MTW} 
\begin{eqnarray}
S &=& S_{EH} + S_{M} \\ &=& \frac{1}{{\lambda^2}} \int d^4 x \sqrt{-g} \left(- {4\Lambda_K} + {2} R \right) + S_M
\end{eqnarray}
where $\lambda^2 := 32 \pi G$ with $G$ being Newton's constant. Employing the variational principle
\begin{eqnarray}
\delta_{g_{\mu \nu}} S =0
\end{eqnarray}
one obtains the field equations 
\begin{eqnarray}
R^{\mu \nu} - \frac{1}{2} g^{\mu \nu} R + \Lambda_K g^{\mu \nu} = \frac{1}{4} \lambda^2 T^{\mu \nu}  \label{EEQ}
\end{eqnarray}
where $T^{\mu \nu}$ is the energy-momentum tensor defined by
\begin{eqnarray}
\delta_{g_{\mu \nu}} S_{M} =  - \frac{1}{2} \int d^4x \sqrt{-g} T^{\mu \nu}  \delta {g_{\mu \nu}} .   \label{EMT}
\end{eqnarray}
$T^{\mu \nu}$ is symmetric in ${\mu, \nu}$ by definition. Note that with the Bianchi identity 
\begin{eqnarray}
D_\lambda R^\rho_{\ \sigma \mu \nu} + D_\nu R^\rho_{\ \sigma \lambda \mu } + D_\mu R^\rho_{\ \sigma \nu \lambda} =0
\end{eqnarray}
and the condition for a metric connection $D_\rho g_{\mu \nu}=0$ the field equation ($\ref{EEQ}$) implies that
\begin{eqnarray}
D_\nu T^{\mu \nu} =0.  \label{MEQ}
\end{eqnarray}
Here, $D_\mu$ denotes the covariant derivative. Consider the expansion
\begin{eqnarray}
g_{\mu \nu}= \eta_{\mu \nu} + \lambda h_{\mu \nu}
\end{eqnarray}
where $\eta_{\mu \nu}$ is the Minkowski metric. In the weak field approximation, eq. ($\ref{EMT}$) implies that gravity can be coupled to matter via the term
\begin{eqnarray}
- \frac{1}{2} \int d^4x \ \tilde{T}^{\mu \nu}  h_{\mu \nu}
\end{eqnarray}
where $\tilde{T}^{\mu \nu}$ is the flat spacetime energy-momentum tensor of all matter fields. Moreover, eq.  ($\ref{MEQ}$)  reduces to $0th$ order in $h_{\mu \nu}$ to 
\begin{eqnarray}
\partial_\nu \tilde{T}^{\mu \nu} =0.  \label{MEQW}
\end{eqnarray}
Thus, $\tilde{T}^{\mu \nu}$ can be viewn as a physical ''source'' that is coupled to the gravitational field $h_{\mu \nu}$.

In the following, we will investigate somewhat further the implications of the cosmological constant $\Lambda_K$. For $T^{\mu \nu}=0$ eq. ($\ref{EEQ}$) can be written as
\begin{eqnarray}
R^{\mu \nu} - \frac{1}{2} g^{\mu \nu} R  = - \Lambda_K g^{\mu \nu} .   \label{EEQV}
\end{eqnarray}
Thus the cosmological constant can be viewn as a kind of energy-momentum tensor associated with the vacuum. Since the $T^{00}$ component of the energy-momentum tensor is the energy density, we see that for $\Lambda_K>0$ the cosmological constant corresponds to a \textit{positive} energy density of the vacuum. 

Taking the trace of eq. ($\ref{EEQV}$) leads to
\begin{eqnarray}
 R  = 4 \Lambda_K .   \label{MS}
\end{eqnarray}
This means that there are solutions of ($\ref{EEQV}$) that are maximally symmetric, i.e. that have 10 Killing vectors \cite{Bronstein2}. The two different spaces associated with these solutions for $\Lambda_K>0$  and $\Lambda_K<0$ are known as de Sitter and anti de Sitter space, respectively.

An unregularized generating functional of pure quantum Einstein gravity as given by the action $S_{EH}$ can be formally defined as 
\begin{eqnarray}
W(\mathcal{J}) = \int \mathcal{D} h_{\mu \nu} \mathcal{D} C^\mu \mathcal{D} \overline{C}_\mu \ e^{ i \left( S_{tot} + S_{\mathcal{J}} \right)} 
\end{eqnarray}
where
\begin{equation}
S_{tot}(h, C, \overline{C})= S_{EH}(h) + S_{GF}(h )+ S_{GH}(h, C, \overline{C})  .
\end{equation}
$S_{GF}$ and $ S_{GH}$ are gauge fixing and ghost terms, whereas $S_{\mathcal{J}}$ is a general source term (see chapter ($\ref{PolGR}$) for details).

In order to employ a cutoff regularization and to consider the renormalization group flow of quantum gravity, we have to perform a Wick rotation to Euclidean space. This amounts to the substitution \cite{Ramond}
\begin{eqnarray}
x_0 \rightarrow -i x_0 . 
\end{eqnarray}
For our metric signature we then have
\begin{eqnarray}
d^4x &\rightarrow& -i d^4x^E \\
\partial_\mu \partial^\mu & \rightarrow & \partial^E_\mu \partial^E_\mu.
\end{eqnarray}
The Euclidean generating functional is therefore, in accordance with Ref \cite{MR1},
\begin{eqnarray}
W^E(\mathcal{J}) = \int \mathcal{D} h_{\mu \nu} \mathcal{D} C^\mu \mathcal{D} \overline{C}_\mu \ e^{ \left( S^E_{tot} + S^E_{\mathcal{J}} \right)} .
\end{eqnarray}
Since we will always work in Euclidean space, we will leave out the index $E$ from now on and we will also reemploy the summation convention with up and down indices.

\end{section}

\begin{section}{The contravariant metric density}   \label{cgdens}
For convenience of the reader, the properties of the contravariant Euclidean metric density 
\begin{eqnarray}
\tilde{g}^{\mu \nu}:= \sqrt{g} \ g^{\mu \nu} \label{defgdens}
\end{eqnarray}
with $g = \det g_{\mu \nu} $ are reviewed. We consider a coordinate transformation
\begin{eqnarray}
x^\alpha{'}(x^\mu) . \label{CT}
\end{eqnarray}
The (inverse) metric tensor transfoms under ($\ref{CT}$) as
\begin{eqnarray}
g^{\alpha \beta}{'} &=& \frac{\partial x^\alpha{'}}{\partial x^\mu} \frac{\partial x^\beta{'} }{\partial x^\nu} g^{\mu \nu}  \\
g{'}_{\alpha \beta} &=& \frac{\partial x^\mu}{\partial x^\alpha{'}} \frac{\partial x^\nu }{ \partial x^\beta{'}} g_{\mu \nu} .
\end{eqnarray}
With the properties of determinants follows the transformation law for $\tilde{g}^{\mu \nu}$:
\begin{eqnarray}
\tilde{g}^{\alpha \beta}{'}= \left| \det \frac{\partial x^\alpha{'}}{\partial x^\mu} \right|^{-1} \frac{\partial x^\alpha{'}}{\partial x^\mu} \frac{\partial x^\beta{'} }{\partial x^\nu} g^{\mu \nu} .
\end{eqnarray}
This is the transformation law \cite{Bronstein2} for a rank 2 contravariant tensor density of weight $+1$.

We will now derive the Lie derivative of $\tilde{g}^{\mu \nu}$. We have
\begin{eqnarray}
\mathcal{L}_X \tilde{g}^{\mu \nu} = \big( \mathcal{L}_X \sqrt{g} \big) g^{\mu \nu} + \sqrt{g} \mathcal{L}_X  g^{\mu \nu} 
\end{eqnarray}
where \cite{Nakahara}
\begin{eqnarray}
\mathcal{L}_X  g^{\mu \nu} = X^\rho \partial_\rho g^{\mu \nu} - g^{\nu \rho}  \partial_\rho X^\mu - g^{\mu \rho}  \partial_\rho X^\nu .
\end{eqnarray}
Therefore it remains to derive $\mathcal{L}_X \sqrt{g}$. With $\det g_{\mu \nu} = e^{ tr \ln g_{\mu \nu}}$ and the chain rule we find
\begin{eqnarray}
\mathcal{L}_X \sqrt{g} = \frac{1}{2} \sqrt{g} \ g^{\mu \nu} \mathcal{L}_X  g_{\mu \nu} .  \label{lieg}
\end{eqnarray}
Using
\begin{eqnarray}
 \mathcal{L}_X  g_{\mu \nu} = X^\rho \partial_\rho g_{\mu \nu} + g_{\nu \rho}  \partial_\mu X^\rho + g^{\mu \rho}  \partial_\nu X^\rho
\end{eqnarray}	
and, analogous to ($\ref{lieg}$), 
\begin{eqnarray}
 \frac{1}{2} \sqrt{g} \ g^{\mu \nu} X^\rho \partial_\rho  g_{\mu \nu} =X^\rho \partial_\rho  \sqrt{g} ,
\end{eqnarray}
 this can be written as
\begin{eqnarray}
\mathcal{L}_X \sqrt{g} = X^\rho \partial_\rho \sqrt{g} + \sqrt{g} \partial_\rho X^\rho .
\end{eqnarray}
Putting it all together we finally arrive at
\begin{eqnarray}
\mathcal{L}_X \tilde{g}^{\mu \nu} = X^\rho \partial_\rho \tilde{g}^{\mu \nu} + \tilde{g}^{\mu \nu} \partial_\rho X^\rho - \tilde{g}^{\rho \nu} \partial_\rho  X^\mu - \tilde{g}^{\mu \rho} \partial_\rho X^\nu. 
\end{eqnarray}
To conclude, we will show that the covariant derivative of $\tilde{g}^{\mu \nu}$ vanishes. For a metric connection, we have $D_\rho g_{\mu \nu}=D_\rho g^{\mu \nu}=0$. Thus, with the Leibnitz rule and the definition ($\ref{defgdens}$) 
\begin{eqnarray}
D_\rho \tilde{g}^{\mu \nu} = \big( D_\rho \sqrt{g} \big) g^{\mu \nu} .
\end{eqnarray}
In complete analogy to ($\ref{lieg}$) it can be shown that
\begin{eqnarray}
D_\rho \sqrt{g} = \frac{1}{2} \sqrt{g} \ g^{\mu \nu} D_\rho   g_{\mu \nu} .
\end{eqnarray}
Therefore we indeed obtain
\begin{eqnarray}
D_\rho \tilde{g}^{\mu \nu} = 0 .
\end{eqnarray}

\end{section}

\end{appendix}

\nocite{BB}
\nocite{Sun}
\nocite{Morris}
\nocite{Bag}
\nocite{Donoint}

\addtocontents{toc}{\bigskip \textbf{Bibliography}} 
\bibliography{LiteraturVZD}

\begin{thebibliography}{10}

\bibitem{Adler}
Stephen~L. Adler.
\newblock Einstein gravity as a symmetry breaking effect in quantum field
  theory.
\newblock {\em Rev. Mod. Phys.}, 54:729, 1982.

\bibitem{Bag}
C.~Bagnuls and C.~Bervillier.
\newblock Exact renormalization group equations: An introductory review.
\newblock {\em Phys. Rept.}, 348:91, 2001.

\bibitem{BB2}
N.~E.~J Bjerrum-Bohr, John~F. Donoghue, and Barry~R. Holstein.
\newblock Quantum gravitational corrections to the nonrelativistic scattering
  potential of two masses.
\newblock {\em Phys. Rev.}, D67:084033, 2003.

\bibitem{BB}
Niels Emil~Jannik Bjerrum-Bohr.
\newblock {\em Quantum Gravity as an Effective Field Theory}.
\newblock 2003.
\newblock Cand. Scient. Thesis.

\bibitem{BB3}
Niels Emil~Jannik Bjerrum-Bohr, John~F. Donoghue, and Barry~R. Holstein.
\newblock Quantum corrections to the schwarzschild and kerr metrics.
\newblock {\em Phys. Rev.}, D68:084005, 2003.

\bibitem{CLM}
D.~M. Capper, G.~Leibbrandt, and M.~Ramon~Medrano.
\newblock Calculation of the graviton selfenergy using dimensional
  regularization.
\newblock {\em Phys. Rev.}, D8:4320--4331, 1973.

\bibitem{Carroll}
Sean~M. Carroll.
\newblock Lecture notes on general relativity.
\newblock 1997.

\bibitem{MTW}
J.A.~Wheeler C.W.~Misner, K.S.~Thorne.
\newblock {\em Gravitation}.
\newblock W.H. Freeman and Company, 1973.

\bibitem{Dono1}
John~F. Donoghue.
\newblock General relativity as an effective field theory: The leading quantum
  corrections.
\newblock {\em Phys. Rev.}, D50:3874--3888, 1994.

\bibitem{Donoint}
John~F. Donoghue.
\newblock Introduction to the effective field theory description of gravity.
\newblock 1995.

\bibitem{Gaba}
Gregory Gabadadze and Andrei Gruzinov.
\newblock Graviton mass or cosmological constant?
\newblock {\em Phys. Rev.}, D72:124007, 2005.

\bibitem{Bronstein2}
D.Ziegler und E.~Zeidler G.Grosche, V.Ziegler.
\newblock {\em Teubner-Taschenbuch der Mathematik Teil 2}.
\newblock B.G. Teubner, 1995.

\bibitem{Goro}
Marc~H. Goroff and Augusto Sagnotti.
\newblock The ultraviolet behavior of einstein gravity.
\newblock {\em Nucl. Phys.}, B266:709, 1986.

\bibitem{KK5}
G.~Keller and Christoph Kopper.
\newblock Perturbative renormalization of qed via flow equations.
\newblock {\em Phys. Lett.}, B273:323--332, 1991.

\bibitem{KK3}
G.~Keller and Christoph Kopper.
\newblock Perturbative renormalization of composite operators via flow
  equations. 1.
\newblock {\em Commun. Math. Phys.}, 148:445--468, 1992.

\bibitem{KK4}
G.~Keller and Christoph Kopper.
\newblock Perturbative renormalization of composite operators via flow
  equations. 2. short distance expansion.
\newblock {\em Commun. Math. Phys.}, 153:245--276, 1993.

\bibitem{KKS}
G.~Keller, Christoph Kopper, and M.~Salmhofer.
\newblock Perturbative renormalization and effective lagrangians in phi**4 in
  four-dimensions.
\newblock {\em Helv. Phys. Acta}, 65:32--52, 1992.

\bibitem{KM}
Christoph Kopper and Volkhard~F. M\"uller.
\newblock Renormalization proof for spontaneously broken yang-mills theory with
  flow equations.
\newblock {\em Commun. Math. Phys.}, 209:477--516, 2000.

\bibitem{Krauss}
Lawrence~M. Krauss and Michael~S. Turner.
\newblock The cosmological constant is back.
\newblock {\em Gen. Rel. Grav.}, 27:1137--1144, 1995.

\bibitem{MR2}
O.~Lauscher and M.~Reuter.
\newblock Asymptotic safety in quantum einstein gravity: Nonperturbative
  renormalizability and fractal spacetime structure.
\newblock 2005.

\bibitem{MR3}
O.~Lauscher and M.~Reuter.
\newblock Fractal spacetime structure in asymptotically safe gravity.
\newblock {\em JHEP}, 10:050, 2005.

\bibitem{MackART}
Gerhard Mack.
\newblock {\em Allgemeine Relativit\"ats- und Eichtheorie}.
\newblock Skript zur Vorlesung an der Universit\"at Hamburg \; WS 1997/98 \;
  http://voltaire.desy.de.

\bibitem{Gerhard}
Gerhard Mack.
\newblock Private conversation with g. mack.

\bibitem{BDJ}
Hans~Joos Manfred~B\"ohm, Ansgar~Denner.
\newblock {\em Gauge Theories of the Strong and Electroweak Interaction}.
\newblock Teubner Verlag, 2001.

\bibitem{Mano}
Aneesh~V. Manohar.
\newblock Effective field theories.
\newblock 1996.

\bibitem{Morris}
Tim~R. Morris.
\newblock Elements of the continuous renormalization group.
\newblock {\em Prog. Theor. Phys. Suppl.}, 131:395--414, 1998.

\bibitem{Mull}
Volkhard~F. M\"uller.
\newblock Perturbative renormalization by flow equations.
\newblock {\em Rev. Math. Phys.}, 15:491, 2003.

\bibitem{Nakahara}
Mikio Nakahara.
\newblock {\em Geometry, topology and physics}.
\newblock IOP Publising Ltd, 1990.

\bibitem{Novello}
M.~Novello.
\newblock A new look into the graviton mass.
\newblock {\em Int. J. Mod. Phys.}, D13:1405--1412, 2004.

\bibitem{Peskin}
Michael~E. Peskin and Daniel~V. Schroeder.
\newblock {\em An Introduction to Quantum Field Theory}.
\newblock 1995.
\newblock Addison-Wesley.

\bibitem{Pol}
Joseph Polchinski.
\newblock Renormalization and effective lagrangians.
\newblock {\em Nucl. Phys.}, B231:269--295, 1984.

\bibitem{Ramond}
Pierre Ramond.
\newblock {\em Field theory : a modern primer}.
\newblock Redwood City, Calif [u.a.] : Addison-Wesley [u.a.], 1990.

\bibitem{MR1}
M.~Reuter.
\newblock Nonperturbative evolution equation for quantum gravity.
\newblock {\em Phys. Rev.}, D57:971--985, 1998.

\bibitem{MR5}
M.~Reuter and E.~Tuiran.
\newblock Quantum gravity effects in rotating black holes.
\newblock 2006.

\bibitem{MR4}
M.~Reuter and H.~Weyer.
\newblock Quantum gravity at astrophysical distances?
\newblock {\em JCAP}, 0412:001, 2004.

\bibitem{MR6}
M.~Reuter and H.~Weyer.
\newblock On the possibility of quantum gravity effects at astrophysical
  scales.
\newblock 2007.

\bibitem{Sak}
A.~D. Sakharov.
\newblock Vacuum quantum fluctuations in curved space and the theory of
  gravitation.
\newblock {\em Sov. Phys. Dokl.}, 12:1040--1041, 1968.

\bibitem{Simon1}
Jonathan~Z. Simon.
\newblock Higher derivative lagrangians, nonlocality, problems and solutions.
\newblock {\em Phys. Rev.}, D41:3720, 1990.

\bibitem{Stelle}
K.~S. Stelle.
\newblock Renormalization of higher derivative quantum gravity.
\newblock {\em Phys. Rev.}, D16:953--969, 1977.

\bibitem{Sun}
Raman Sundrum.
\newblock Fat euclidean gravity with small cosmological constant.
\newblock {\em Nucl. Phys.}, B690:302--330, 2004.

\bibitem{Hooft}
Gerard 't~Hooft and M.~J.~G. Veltman.
\newblock One loop divergencies in the theory of gravitation.
\newblock {\em Annales Poincare Phys. Theor.}, A20:69--94, 1974.

\bibitem{Vain}
A.~I. Vainshtein.
\newblock To the problem of nonvanishing gravitation mass.
\newblock {\em Phys. Lett.}, B39:393--394, 1972.

\bibitem{Damvelt}
H.~van Dam and M.~J.~G. Veltman.
\newblock Massive and massless yang-mills and gravitational fields.
\newblock {\em Nucl. Phys.}, B22:397--411, 1970.

\bibitem{Veltman}
M.~J.~G. Veltman.
\newblock Quantum theory of gravitation.
\newblock In *Les Houches 1975, Proceedings, Methods In Field Theory*,
  Amsterdam 1976, 265-327.

\bibitem{Wein1}
Steven Weinberg.
\newblock {\em The Quantum Theory of Fields, Vol I}.
\newblock Cambridge University Press, 1996.

\bibitem{Wein2}
Steven Weinberg.
\newblock What is quantum field theory, and what did we think it was?
\newblock 1996.

\bibitem{Wiec}
Christian Wieczerkowski.
\newblock Symanzik's improved actions from the viewpoint of the renormalization
  group.
\newblock {\em Commun. Math. Phys.}, 120:149, 1988.

\bibitem{Zak}
V.~I. Zakharov.
\newblock Linearized gravitation theory and the graviton mass.
\newblock {\em JETP Lett.}, 12:312, 1970.

\bibitem{zee}
A.~Zee.
\newblock {\em Quantum Field Theory in a Nutshell}.
\newblock 2003.
\newblock Princeton University Press.

\bibitem{Z-J}
Jean Zinn-Justin.
\newblock {\em Quantum Field Theory and Critical Phenomena}.
\newblock Clarendon Press, 2002.

\end{thebibliography}

\newpage
\thispagestyle{empty}
\vspace*{5ex}
\LARGE
\noindent
\textbf{Danksagung}

\vspace{2ex}
\normalsize
\noindent
F\"ur die Vergabe des interessanten Themas sowie die gute und freundliche Betreuung, insbesondere w\"ahrend der Endphase meiner Arbeit, m\"ochte ich mich herzlich bei meinem Betreuer Prof. Dr. Gerhard Mack bedanken. 

\smallskip
\noindent
Weiterhin danke ich Herrn Prof. Dr. Martin Reuter herzlich f\"ur hilfreiche Diskussionen und die Gelegenheit, meine Arbeit an der Universit\"at Mainz vorzustellen.

\smallskip
\noindent
Dank f\"ur Hilfe, Gespr\"ache vielerlei Art sowie konstruktive Kritik geb\"uhrt dar\"uber hinaus Thorsten Pr\"ustel, Florian Schwennsen, Olaf Hohm, Michael Olschewsky, Michael R\"ohrs, Martin Hentschinski, Iman Benmachiche und Niels Emil Jannik Bjerrum-Bohr.

\smallskip
\noindent
Ich danke der Deutschen Forschungsgemeinschaft (DFG) f\"ur die finanzielle F\"orderung meiner Arbeit im Rahmen des Graduiertenkollegs ''Zuk\"unftige Entwicklungen in der Teilchenphysik''.

\smallskip
\noindent
Zuletzt danke ich meinen Eltern sehr herzlich f\"ur ihre vielf\"altige Unterst\"utzung.

\end{document}